\documentclass[11pt, letterpaper, leqno]{article}
\usepackage{array}
\usepackage[margin=1in]{geometry}    
\usepackage{graphicx}
\usepackage{algorithm, algpseudocode}
\usepackage{hyperref}
\usepackage{booktabs}

\usepackage{amsmath,amssymb,amsthm}
\numberwithin{equation}{section}
\usepackage{titlesec}

\hypersetup{
  colorlinks=true,
  linkcolor=blue,
  citecolor=blue,
  urlcolor=blue
}

\titleformat{\section}
  {\normalfont\bfseries}              
  {\thesection}                      
  {1em}                               
  {}
\titlespacing*{\section}
  {0pt}
  {12pt plus 4pt minus 2pt}           
  {6pt plus 2pt minus 2pt}

\titleformat{\subsection}[runin]
  {\normalfont\bfseries}
  {\thesubsection}
  {1em}                             
  {}                                 

\titleformat{\subsubsection}[runin]
  {\normalfont\bfseries}
  {}                                
  {0pt}
  {}

\newtheoremstyle{scplain}
  {12pt plus 4pt minus 4pt}   
  {12pt plus 4pt minus 4pt}  
  {\itshape}               
  {0pt}                      
  {\scshape}                  
  {.}                         
  {5pt plus 1pt minus 1pt}    
  {}

\theoremstyle{scplain}

\newtheorem{theorem}{Theorem}[section]
\newtheorem{lemma}[theorem]{Lemma}
\newtheorem{corollary}[theorem]{Corollary}
\newtheorem{definition}[theorem]{Definition}

\newtheorem{proposition}[theorem]{Proposition}
\newtheorem*{theorem*}{Theorem}

\title{
Robust Polynomial Freiman--Ruzsa from Corrupted Set Observations
}
\author{
Cheng Peng\thanks{School of Information Engineering, Chang'an University, Email: \texttt{pengcheng@chd.edu.cn}}
\thanks{Large language models were used extensively in the preparation
of this manuscript for language editing, structural reorganization,
consistency checks, and assistance with \LaTeX{} source review and
numerical bookkeeping.  All mathematical statements and proofs were
reviewed by the author, who bears full responsibility for the content.}
}
\date{}
\begin{document}

\maketitle

\begin{abstract}
We study structural recovery from an exact but adversarially corrupted
set observation over \(\mathbb F_2^n\).  A hidden nonempty set \(A\)
satisfies \(|A+A|\leq K|A|\), while the algorithm receives
deterministic membership access and independent exact uniform samples
only from a set \(B\) satisfying
\(|A\triangle B|\leq\eta|A|\).  For
\(\eta\leq cK^{-1/2}\), we give a randomized FPT-form algorithm which,
with high probability, outputs a subspace \(V\) satisfying
\(|V|\leq|A|\) and \(\mathcal N_V(A)\leq K^{O(1)}\).  For every
supplied \(\eta<1\), writing \(\varepsilon=1-\eta\), we also give an
observation-only algorithm that outputs
\(O(\sqrt K\,\varepsilon^{-2}\log(3/\varepsilon))\) subspaces.  For
every hidden set compatible with \(B,K,\eta\), some list entry has size
at most that hidden set and covering number
\(\operatorname{poly}(K,\varepsilon^{-1})\).  The sample complexity is
polynomial, while the direct query and running-time bounds are XP.

Every nonempty compatibility class also admits, nonconstructively, one
common subspace \(V\) such that \(|V|\leq|A|\) and
\(\mathcal N_V(A)\leq2K(1-\eta)^{-1}P_{\rm PFR}(K)\) simultaneously
for every compatible hidden set \(A\).  An exact two-subspace
construction forces common covering cost
\(\Theta((1-\eta)^{-1/2})\), leaving quantitative and algorithmic
list-to-single gaps.  We further show that the \(K^{-1/2}\)
contamination scale is optimal up to constants for the one-core,
size-only lifting mechanism used in the single-output argument.

The proofs combine a persistent randomized
Balog--Szemer\'edi--Gowers procedure producing a fixed implicit
small-doubling subset on the \(\sqrt{\alpha}\) retained-mass scale,
conditionally exact finite product sampling, size-oblivious
algorithmic PFR, and deterministic lifting.
\end{abstract}

\newpage
\tableofcontents
\newpage

\section{Introduction}
\label{sec:introduction}

A central theme of additive combinatorics is that a finite set with
few sums must be controlled by algebraic structure.  Over
\(\mathbb F_2^n\), the Polynomial Freiman--Ruzsa theorem gives a
particularly clean form of this principle: if a nonempty set
\(A\subseteq\mathbb F_2^n\) satisfies
\begin{equation}
|A+A|\leq K|A|,
\label{eq:intro-hidden-doubling-v2}
\end{equation}
then \(A\) is covered by \(K^{O(1)}\) cosets of a subspace whose
cardinality is at most \(|A|\)~\cite{Ruz99,TaoVu06,GT09,Sanders12,GGMT25}.
Recent work also gives algorithms for finding such a subspace when
the small-doubling set itself is available through membership and
uniform-sampling access~\cite{ACDG26,CSBADG26}.

This paper studies the same structural question when the
small-doubling set is hidden.  The algorithm receives exact membership
and independent exact uniform-sampling access only to an observed set
\[
B\subseteq\mathbb F_2^n
\]
for which
\begin{equation}
|A\triangle B|\leq\eta|A|.
\label{eq:intro-contamination-v2}
\end{equation}
The discrepancy is adversarial: points of \(A\) may be deleted and
arbitrary outliers may be inserted.  The oracle itself is not noisy.
It answers membership in \(B\) deterministically and samples exactly
from the uniform distribution on \(B\).  The uncertainty lies in
which hidden small-doubling set is represented by that observation.

Our aim is not to reconstruct \(A\) point by point.  We seek a basis
for a subspace \(V\leq\mathbb F_2^n\) satisfying
\begin{equation}
|V|\leq|A|,
\qquad
\mathcal N_V(A)\leq Q,
\label{eq:intro-structural-target-v2}
\end{equation}
where \(\mathcal N_V(A)\) is the minimum number of \(V\)-cosets needed
to cover \(A\).  The subspace records the algebraic directions of the
hidden structure; the translates locating \(A\) in the quotient need
not be recoverable from the observation.

The problem has two distinct forms.  In the first, there is one actual
hidden set \(A\), and the output only has to serve that set.  In the
second, a radius \(\eta<1\) is supplied and the same observation may
represent any member of the compatibility class
\[
\mathfrak A(B;K,\eta)
=
\left\{
A\neq\varnothing:
|A+A|\leq K|A|,
\quad
|A\triangle B|\leq\eta|A|
\right\}.
\]
For such an ambiguous observation, there are two natural quantifier
orders:
\begin{equation}
\forall A\in\mathfrak A(B;K,\eta)
\qquad
\exists i=i(A),
\label{eq:intro-list-quantifiers-v2}
\end{equation}
corresponding to one observation-dependent list whose serving entry
may depend on the hidden set, and
\begin{equation}
\exists V
\qquad
\forall A\in\mathfrak A(B;K,\eta),
\label{eq:intro-common-quantifiers-v2}
\end{equation}
corresponding to one subspace that serves the entire class.  The
difference between these two statements is central to the paper.

\subsection{Main results}
\label{subsec:intro-main-results-v2}

\paragraph{Single-output recovery.}
There is an absolute constant \(c>0\) such that, whenever
\(|A+A|\leq K|A|\) and
\(|A\triangle B|\leq cK^{-1/2}|A|\), an observation-only randomized
algorithm outputs, with high probability, a subspace \(V\) satisfying
\(|V|\leq|A|\) and \(\mathcal N_V(A)\leq K^{O(1)}\).  Its sample,
membership-query, and ordinary running-time bounds have FPT form in
\(K\), and the realized contamination ratio is not an input.  The
formal statement is
Theorem~\ref{thm:main-robust-hidden-set-pfr-v2}, proved in
Section~\ref{sec:one-core-robust-pfr-v2}.

\paragraph{Lists for ambiguous observations.}
Suppose that \(\mathfrak A(B;K,\eta)\) is nonempty.  For a supplied
radius \(\eta<1\), write \(\varepsilon:=1-\eta\).  From the observation
alone, a randomized algorithm outputs, with high probability,
\(O(\sqrt K\,\varepsilon^{-2}\log(3/\varepsilon))\) subspaces such
that, simultaneously for every
\(A\in\mathfrak A(B;K,\eta)\), some entry \(V_i\) satisfies
\(|V_i|\leq|A|\) and
\(\mathcal N_{V_i}(A)\leq
\operatorname{poly}(K,\varepsilon^{-1})\).
The list is common to the observation, while the serving index may
depend on the hidden set.  The sample complexity is polynomial; the
direct membership-query implementation and ordinary running time are
XP.  The precise bounds are stated in
Theorem~\ref{thm:main-list-compatible-pfr-v2} and proved in
Section~\ref{sec:list-compatible-pfr-v2}.

\paragraph{Common structure and limitations.}
Every nonempty compatibility class admits, nonconstructively, one
subspace \(V_{B,K,\eta}\) satisfying
\[
|V_{B,K,\eta}|\leq|A|,
\qquad
\mathcal N_{V_{B,K,\eta}}(A)
\leq
\frac{2K}{1-\eta}P_{\rm PFR}(K)
\]
simultaneously for all compatible \(A\).  For every integer \(t\geq1\), a two-subspace construction with
\(1-\eta=2^{-t}\) has exact optimal common covering cost
\(2^{\lceil t/2\rceil}\), and hence forces
\(\Theta((1-\eta)^{-1/2})\) cost along an infinite sequence of radii.
These results leave a quantitative exponent gap and an algorithmic
list-to-single gap; see
Theorems~\ref{thm:main-common-compatible-subspace-v2} and
\ref{thm:main-two-subspace-ambiguity-v2}.

Finally, the \(K^{-1/2}\) range of the single-output theorem is sharp
up to constants for the particular one-core, size-only lifting
mechanism used in its proof.  Along \(K=r^2\), no universal
small-doubling core guarantee can exceed the \(K^{-1/2}\) retained
fraction, and the corresponding purity calculation is positive only
below \(\eta<1/(\sqrt K-1)\).  This is a limitation of that mechanism,
not an oracle or information-theoretic lower bound; the formal
statement is Theorem~\ref{thm:main-one-core-barrier-v2}.

\subsection{Ideas of the proofs}
\label{subsec:intro-proof-ideas-v2}

The main obstruction is that the observation need not have small
doubling.  A small number of adversarial insertions can make
\(|B+B|/|B|\) very large, so clean-input PFR cannot simply be run on
\(B\).  Additive energy is more stable.  Let
\[
C:=A\cap B.
\]
Then
\[
|C|\geq(1-\eta)|A|,
\qquad
C+C\subseteq A+A,
\qquad
|B|\leq(1+\eta)|A|.
\]
Cauchy--Schwarz therefore gives
\begin{equation}
\frac{E(B)}{|B|^3}
\geq
\frac{(1-\eta)^4}{(1+\eta)^3K}.
\label{eq:intro-observed-energy-v2}
\end{equation}
Thus the observation retains substantial normalized energy even when
its doubling has been destroyed.

\paragraph{From energy to one fixed core.}
The first algorithmic step is a persistent randomized BSG theorem.
From exact sample-and-query access to a set \(S\) and an energy
parameter \(\alpha\), it returns either \(\mathsf{FAIL}\) or a finite
record defining one fixed subset \(Y\subseteq S\).  Every nonfailure
output is valid, except with the prescribed soundness error, without
assuming an energy promise; if
\[
E(S)\geq\alpha|S|^3,
\]
then a valid output is returned with high probability.  The resulting
set satisfies
\[
|Y|\geq c_{\rm BSG}\sqrt\alpha\,|S|,
\qquad
|Y+Y|\leq C_{\rm BSG}\alpha^{-4}|Y|.
\]

Two aspects of this statement matter for composition.  First, the
\(\sqrt\alpha\) retained-mass scale is what ultimately permits
contamination of order \(K^{-1/2}\).  Second, the randomness is frozen
when the record is constructed: every later adaptive membership query
refers to the same set \(Y\).  A fixed-length rejection procedure then
supplies any predetermined finite block of conditionally exact,
independent uniform samples from that realized core.  Sections
\ref{sec:deterministic-sifting-v2} and
\ref{sec:persistent-bsg-v2} develop the deterministic and randomized
parts of this construction.

\paragraph{From the core to a subspace.}
The core is passed to a size-oblivious form of algorithmic PFR.  This
routine does not require \(|Y|\), fixes its sample budget before the
samples are drawn, and remains valid after an adaptive past once the
set, its membership predicate, and a fresh exact product sample block
are fixed.  It returns a subspace \(H\) with
\[
|H|\leq|Y|,
\qquad
\mathcal N_H(Y)\leq P_{\rm PFR}(L)
\]
when \(|Y+Y|\leq L|Y|\).  Section~\ref{sec:size-oblivious-pfr-v2}
proves this form of PFR.

To transfer the cover back to the hidden set, define
\begin{equation}
\Delta(\gamma,\eta)
:=
\gamma-\max\{\gamma,1-\gamma\}\eta.
\label{eq:intro-purity-margin-v2}
\end{equation}
If \(Y\subseteq B\) occupies a \(\gamma\) fraction of the observation,
then
\[
|A\cap Y|
\geq
[\Delta(\gamma,\eta)]_+|A|.
\]
When \(\Delta(\gamma,\eta)>0\), one coset of \(H\) contains a large
part of \(A\cap Y\).  Ruzsa covering transfers the \(H\)-cover of
\(Y\) to a cover of \(A\), and a codimension-one correction enforces
the size condition.  The resulting bound is
\[
|V|\leq|A|,
\qquad
\mathcal N_V(A)
\leq
\frac{2K\,\mathcal N_H(Y)}
{\Delta(\gamma,\eta)}.
\]
This deterministic lifting is the core of
Section~\ref{sec:one-core-robust-pfr-v2}.

\paragraph{Many cores for an ambiguous observation.}
For a supplied radius \(\eta<1\), one core need not contain enough
points of every compatible hidden set.  The list algorithm instead
forms residuals
\[
R_0=B,
\qquad
R_{j+1}=R_j\setminus Y_j,
\]
and repeatedly extracts disjoint persistent cores.  Promise-free
soundness is essential here, because the residuals are chosen
adaptively and not every residual is known in advance to have high
energy.

A density test determines normal stopping.  If peeling continues,
each successful core removes a fixed fraction of the current residual,
so the number of cores is bounded.  At termination, every compatible
hidden set has little mass left in the final residual.  Since its
intersection with \(B\) initially has size at least
\((1-\eta)|A|\), disjointness gives a lower bound on
\[
\sum_j|A\cap Y_j|.
\]
Averaging then supplies an index \(j=j(A)\).  Applying PFR and lifting
to every fixed core produces a single observation-dependent list with
the quantifier order
\(\forall A\,\exists j(A)\).  The recursive evaluation of later
residual predicates through earlier core predicates is the source of
the XP running time.

\paragraph{Common structure and the two limitations.}
The common-subspace existence theorem follows by a different and much
shorter argument.  Choose a minimum-cardinality set
\(A_\star\in\mathfrak A(B;K,\eta)\) and apply existential PFR to
\(A_\star\).  Any other compatible set \(A\) has forced overlap
\[
|A\cap A_\star\cap B|
\geq
\frac{1-\eta}{2}
\bigl(|A|+|A_\star|\bigr).
\]
One-coset lifting transfers the subspace cover of \(A_\star\) to every
compatible \(A\), while minimality gives the simultaneous size
condition.  This proves existence but does not reveal
\(A_\star\) from the observation.

The one-core limitation and the ambiguity lower bound use separate
examples.  An affine-block construction has normalized energy
\(1/K\) but confines every prescribed-small-doubling subset to the
\(K^{-1/2}\) scale.  A direct insertion construction shows that the
size-only purity bound in
\eqref{eq:intro-purity-margin-v2} is exact.  Their combination limits
the one-core size-only proof mechanism.  By contrast, two transverse
extensions of a common subspace give the exact
\((1-\eta)^{-1/2}\) simultaneous-covering cost and the corresponding
observation-only indistinguishability statement.

\subsection{Related work and organization}
\label{subsec:intro-related-organization-v2}

Freiman--Ruzsa theory and its finite-field variants provide the
structural background for this work~\cite{Ruz99,TaoVu06,GT09}.
Sanders obtained quasipolynomial Bogolyubov--Ruzsa bounds, and the
Polynomial Freiman--Ruzsa theorem of Gowers, Green, Manners, and Tao
established polynomial bounds over \(\mathbb F_2^n\)
~\cite{Sanders12,GGMT25}.  Algorithmic forms of PFR, including
classical and quantum results, were developed in
\cite{ACDG26,CSBADG26}; the size-oblivious routine used here is a
composition-oriented form of that clean-input machinery.

The passage from additive energy to a large small-doubling subset
originates in the Balog--Szemer\'edi--Gowers theorem
~\cite{BS94,Gow98}.  Quantitative refinements include
\cite{GT09,Schoen15}; the retained-mass scale used here comes from the
optimal-order theorem of Reiher and Schoen~\cite{RS24}.  Our
contribution at this stage is algorithmic and representational: from
sample-and-query access, we produce one fixed implicit subset, with
nonfailure validity separated from high-energy success and with an
exact finite sampling bridge for downstream use.

There is a broader literature on accessing additive and quadratic
structure without enumerating the ambient space, including quadratic
Goldreich--Levin algorithms and sampling-based almost-periodicity
methods~\cite{Sam07,TW14,CS10,BSRZTW14,BC26}.  The equivalence between
polynomial Freiman--Ruzsa statements and polynomial inverse theorems
for the \(U^3\) norm is developed in~\cite{GT10,Lovett12}.  Constructive
BSG arguments also occur in algorithms for explicitly represented
sets and in fine-grained complexity
~\cite{CL15,CVWX23,ABF24,JX24,FJX25}.  Related oracle work studies
sumset-size estimation under a perturbation guarantee~\cite{DNS22}.
PFR-type structure has also been used in extractor constructions,
worst-case-to-average-case reductions, and sparsification
~\cite{ZBS11,AGGS22,BNOV25}.  These results have different input
representations or structural targets.  The present paper isolates
the problem of recovering PFR-type directions from an exact but
adversarially corrupted set observation.

Section~\ref{sec:model-main-results-v2} gives the formal model,
complexity conventions, and complete statements of the main results.
Sections~\ref{sec:deterministic-sifting-v2} and
\ref{sec:persistent-bsg-v2} prove the persistent BSG theorem.
Section~\ref{sec:size-oblivious-pfr-v2} develops the clean-input PFR
routine used downstream, and
Section~\ref{sec:one-core-robust-pfr-v2} proves the single-output
robust theorem.  Section~\ref{sec:list-compatible-pfr-v2} gives the
iterated-core list construction.  Sections
\ref{sec:one-core-limits-v2} and
\ref{sec:common-compatible-structure-v2} establish the one-core
limitation, the common-subspace theorem, and the ambiguity lower
bound.  The appendices collect probability tools, resource bounds,
the PFR implementation details, the restricted-homomorphism
verification, and the implementation of the iterated-core
construction.

\section{Model and main results}
\label{sec:model-main-results-v2}

This section fixes the observation model and states the principal
results.  The hidden-set conclusions come first.  The algorithmic
ingredients used to prove them are summarized afterwards.

Throughout,
\[
G:=\mathbb F_2^n.
\]
All sets are subsets of \(G\), and hence finite.  We write \(\log\)
for the base-two logarithm and \(\ln\) for the natural logarithm.  For
sets \(S,T\subseteq G\), put
\[
S+T:=\{s+t:s\in S,\ t\in T\}.
\]
For a nonempty set \(S\), its additive energy is
\begin{equation}
E(S)
:=
\left|
\left\{
(x_1,x_2,x_3,x_4)\in S^4:
 x_1+x_2=x_3+x_4
\right\}
\right|.
\label{eq:model-additive-energy-v2}
\end{equation}
If \(V\leq G\) is a subspace, define
\begin{equation}
\mathcal N_V(S)
:=
\min\bigl\{|T|:S\subseteq T+V\bigr\}.
\label{eq:model-cover-number-v2}
\end{equation}
Thus \(\mathcal N_V(S)\) is the minimum number of \(V\)-cosets
needed to cover \(S\).

\subsection{Set observations and structural outputs}
\label{subsec:model-observations-v2}

\begin{definition}[Exact sample-and-query access]
\label{def:sample-query-access-v2}
Exact sample-and-query access to a nonempty set \(S\subseteq G\)
consists of
\begin{enumerate}
\item a deterministic membership oracle
      \[
      \mathsf{Mem}_S(x)=\mathbf 1_S(x);
      \]
\item an oracle returning independent samples with law \(U_S\), the
      uniform distribution on \(S\).
\end{enumerate}
\end{definition}

The accessible oracle is exact: membership answers are deterministic,
and every supplied sample is an independent uniform draw from one
fixed set.  Corruption refers instead to the set-theoretic discrepancy
between an inaccessible structured set and the observation.

In the single-instance model, there is one hidden nonempty set
\(A\subseteq G\) satisfying
\begin{equation}
|A+A|\leq K|A|,
\label{eq:model-hidden-doubling-v2}
\end{equation}
and the algorithm receives exact sample-and-query access only to an
observed set \(B\).  The hypothesis may assert
\begin{equation}
|A\triangle B|\leq\eta|A|,
\label{eq:model-contamination-v2}
\end{equation}
but the algorithm is not given \(A\), \(|A|\), membership access to
\(A\), or the realized contamination ratio.  Deletions and insertions
may be placed adversarially.

For the list and common-subspace results, a radius \(\eta\in[0,1)\)
is supplied.  The same observation may then represent any set in
\begin{equation}
\mathfrak A(B;K,\eta)
:=
\left\{
A\subseteq G:
\begin{array}{l}
A\neq\varnothing,\\
|A+A|\leq K|A|,\\
|A\triangle B|\leq\eta|A|
\end{array}
\right\}.
\label{eq:model-compatibility-class-v2}
\end{equation}
The supplied radius defines the class; it need not equal the realized
ratio for any particular member.  If the class is nonempty, then
\(B\neq\varnothing\), because every compatible \(A\) satisfies
\[
|A\cap B|\geq(1-\eta)|A|>0.
\]

The algorithmic output is a basis for a subspace.  The covering
translates witnessing \(\mathcal N_V(A)\) are not part of the output,
and the guarantees involve the inaccessible hidden set.  The target
conditions are
\begin{equation}
|V|\leq|A|,
\qquad
\mathcal N_V(A)\leq Q.
\label{eq:model-structural-output-v2}
\end{equation}
The size condition prevents the trivial choice \(V=G\); the covering
condition records that \(A\) occupies only a controlled number of
locations in the quotient \(G/V\).

For a supplied compatibility radius, two output notions will be used.

\begin{definition}[List-compatible output]
\label{def:list-compatible-output-v2}
A finite nonempty list of subspaces
\[
V_0,\ldots,V_{m-1}\leq G
\]
is a \((K,\eta,Q)\)-list-compatible output for \(B\) if
\begin{equation}
\forall A\in\mathfrak A(B;K,\eta)
\qquad
\exists i=i(A)\in\{0,\ldots,m-1\}
\label{eq:model-list-quantifiers-v2}
\end{equation}
such that
\[
|V_i|\leq|A|,
\qquad
\mathcal N_{V_i}(A)\leq Q.
\]
The list is common to the observation, but the serving index may
depend on the hidden set and need not be identified by the algorithm.
\end{definition}

\begin{definition}[Common-subspace output]
\label{def:common-subspace-output-v2}
A subspace \(V\leq G\) is a common \((K,\eta,Q)\)-output for \(B\)
if, simultaneously for every
\[
A\in\mathfrak A(B;K,\eta),
\]
one has
\[
|V|\leq|A|,
\qquad
\mathcal N_V(A)\leq Q.
\]
\end{definition}

The distinction between
\(\forall A\,\exists i(A)\) and \(\exists V\,\forall A\) is central
to the results below.

\subsection{Robust recovery from one observation}
\label{subsec:model-single-output-v2}

The first result concerns one actual hidden set.  The algorithm knows
\(K\), but it is not given the contamination level.

\begin{theorem}[Robust hidden-set PFR]
\label{thm:main-robust-hidden-set-pfr-v2}
There are an absolute constant \(c_{\rm rob}>0\), a nondecreasing
polynomial \(P_{\rm rob}\), and a nondecreasing computable function
\(F_{\rm rob}\) with the following property.

Let \(K\geq1\), let \(A,B\subseteq G\) with \(A\neq\varnothing\),
and suppose that
\begin{equation}
|A+A|\leq K|A|,
\qquad
|A\triangle B|
\leq
\frac{c_{\rm rob}}{\sqrt K}|A|.
\label{eq:model-robust-hypotheses-v2}
\end{equation}
Given \(K\), \(\rho\in(0,1)\), and exact sample-and-query access to
\(B\), a randomized algorithm which is not given \(A\), \(|A|\), or
the contamination level outputs, with probability at least
\(1-\rho\), a basis for a subspace \(V\leq G\) satisfying
\begin{equation}
|V|\leq|A|,
\qquad
\mathcal N_V(A)\leq P_{\rm rob}(K).
\label{eq:model-robust-output-v2}
\end{equation}
The numbers of samples from \(B\), membership queries to \(B\), and
the ordinary running time are each bounded by
\begin{equation}
F_{\rm rob}(K)
\operatorname{poly}\!\left(n,\log\frac1\rho\right).
\label{eq:model-robust-complexity-v2}
\end{equation}
\end{theorem}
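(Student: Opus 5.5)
The plan composes the three ingredients sketched in the introduction: energy retention, the persistent BSG core, and size-oblivious PFR followed by deterministic lifting. Put \(\eta:=c_{\rm rob}K^{-1/2}\) and \(C:=A\cap B\). Then \(|C|\geq(1-\eta)|A|\), \(C+C\subseteq A+A\), and \(|B|\leq(1+\eta)|A|\). Cauchy--Schwarz gives \(E(C)\geq|C|^4/|C+C|\), and hence
\[
\frac{E(B)}{|B|^3}\geq\frac{(1-\eta)^4}{(1+\eta)^3K}\geq\frac{1}{4K}
\]
once \(c_{\rm rob}\) is small. We therefore run the persistent BSG procedure on \(S=B\) with the data-independent parameter \(\alpha:=1/(4K)\), so no knowledge of \(\eta\) or \(|A|\) is needed. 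By the persistent BSG theorem, with probability at least \(1-\rho/2\) it returns a record defining a fixed \(Y\subseteq B\) with
\[
|Y|\geq\gamma|B|,\quad \gamma:=\tfrac12c_{\rm BSG}K^{-1/2},\qquad |Y+Y|\leq L|Y|,\quad L:=C_{\rm BSG}4^4K^4.
\]
We may assume \(\gamma\leq 1/2\); if the theorem's constant could push \(\gamma\) above \(1/2\), we truncate, which only helps.

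Next, the fixed-length rejection sampler produces a fresh, conditionally exact product block of uniform samples from \(Y\), using membership in \(Y\) evaluated through the record plus queries to \(B\). We pass this block to the size-oblivious PFR routine, which with probability at least \(1-\rho/2\) returns \(H\) with \(|H|\leq|Y|\) and \(\mathcal N_H(Y)\leq P_{\rm PFR}(L)=K^{O(1)}\). The sampler's expected overhead is \(|B|/|Y|\leq 2/\gamma=O(\sqrt K)\), so a fixed budget suffices. All costs are therefore of the form \(F(K)\operatorname{poly}(n,\log(1/\rho))\), and a union bound gives overall success \(1-\rho\).

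Finally comes the deterministic lifting. Since \(|Y\setminus A|\leq|B\setminus A|\) and \(|Y|\geq\gamma|B|\), the purity computation gives
\[
|A\cap Y|\geq\Delta(\gamma,\eta)|A|.
\]
With \(\gamma\leq1/2\) the margin is \(\Delta=\gamma-(1-\gamma)\eta\geq\gamma-\eta\). Choosing \(c_{\rm rob}\leq c_{\rm BSG}/4\) yields \(\Delta\geq\gamma/2=\Omega(K^{-1/2})\). By pigeonhole, some coset \(x+H\) contains at least \(\Delta|A|/\mathcal N_H(Y)\) points of \(A\). We then apply Ruzsa covering to \(A\) and the set \(A_0:=A\cap(x+H)\): because \(|A+A_0|\leq|A+A|\leq K|A|\), the set \(A\) is covered by at most \(K|A|/|A_0|\) translates of \(A_0-A_0\subseteq H\). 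This gives \(\mathcal N_H(A)\leq K\mathcal N_H(Y)/\Delta=K^{O(1)}\).

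The size condition needs care: we only know \(|H|\leq|Y|\leq|B|\leq(1+\eta)|A|\), so \(H\) may exceed \(|A|\). If it does, we pass to a codimension-one subspace \(V\leq H\) (choose any hyperplane of the output basis). Then \(|V|=|H|/2\leq|A|\), since \(1+\eta<2\), and the cover at most doubles, giving the stated \(2K\mathcal N_H(Y)/\Delta\). The algorithm cannot test whether \(|H|>|A|\), so it always halves when \(H\neq\{0\}\). This is harmless: either way \(|V|\leq|A|\) and the covering bound only doubles.

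I expect the main obstacle to be the composition rather than the combinatorics. Membership in \(Y\) must remain a fixed predicate across the adaptive queries made by the sampler and by PFR. The BSG record has to freeze its randomness, and the rejection sampler must deliver exactly uniform independent samples conditioned on the realized \(Y\). A further subtlety is that the persistent BSG theorem must be applied at the energy level \(\alpha=1/(4K)\), which is certified only by the hidden-set bound, so correctness rests on its high-energy success clause.
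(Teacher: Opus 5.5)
Your proposal is correct and follows the paper's proof essentially step for step: energy retention via Cauchy--Schwarz, persistent BSG at the data-independent level $\alpha=\Theta(1/K)$, the fixed-block exact sampling bridge, size-oblivious PFR on the core, the purity margin $\Delta\geq\gamma/2$, Ruzsa covering from one heavy coset, and the unconditional codimension-one correction. The paper uses $\alpha=1/(2K)$ and $c_{\rm rob}=\min\{1/16,c_{\rm BSG}/(2\sqrt2)\}$, which is only a cosmetic difference; the one slip is bookkeeping, since splitting $\rho$ as $\rho/2$ for BSG and $\rho/2$ for PFR leaves no budget for exhaustion of the rejection-sampling blocks, so allocate, e.g., $\rho/4$ each to BSG soundness, BSG completeness, the sampling bridge, and PFR, as the paper does.
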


The proof is given in Section~\ref{sec:one-core-robust-pfr-v2}.
That section also gives an explicit polynomial choice for
\(P_{\rm rob}\) and a supplied-radius form based on the exact
positivity condition for one-core lifting.

The theorem recovers structural information rather than the hidden
set itself.  It also does not assert that the observation has small
doubling; adversarial insertions may make \(|B+B|/|B|\) large.

\subsection{Lists for ambiguous observations}
\label{subsec:model-list-result-v2}

For a supplied radius \(\eta<1\), a single observation can be
consistent with several hidden sets.  Write
\[
\varepsilon:=1-\eta.
\]
Let \(P_{\rm PFR}\) be the polynomial in
Theorem~\ref{thm:size-oblivious-pfr-v2}.

\begin{theorem}[List-compatible PFR]
\label{thm:main-list-compatible-pfr-v2}
There are absolute constants
\(C_{\rm list},C_{\rm list,L}>0\) and a nondecreasing computable
function \(G_{\rm list}\) with the following property.

Let \(K\geq1\), \(\eta\in[0,1)\), \(\rho\in(0,1)\), and
\(B\subseteq G\), and suppose that
\[
\mathfrak A(B;K,\eta)\neq\varnothing.
\]
Given \(K,\eta,\rho\) and exact sample-and-query access to \(B\), a
randomized algorithm always terminates and returns either
\(\mathsf{FAIL}\) or bases for a finite list
\[
V_0,\ldots,V_{m-1}.
\]
With probability at least \(1-\rho\), it does not return
\(\mathsf{FAIL}\), and
\begin{equation}
1\leq m
\leq
C_{\rm list}\sqrt K\,\varepsilon^{-2}
\log\frac3\varepsilon.
\label{eq:model-list-length-v2}
\end{equation}
On the same event, simultaneously for every
\(A\in\mathfrak A(B;K,\eta)\), there is an index \(i=i(A)\) such that
\begin{equation}
|V_i|\leq|A|
\label{eq:model-list-size-v2}
\end{equation}
and
\begin{equation}
\mathcal N_{V_i}(A)
\leq
C_{\rm list}K^{3/2}\varepsilon^{-3}
\log\frac3\varepsilon\,
P_{\rm PFR}\!\left(
C_{\rm list,L}K^4\varepsilon^{-16}
\right).
\label{eq:model-list-cover-v2}
\end{equation}
The number of uniform samples from \(B\) is polynomial in
\[
K,\quad \varepsilon^{-1},\quad n,\quad \log\frac1\rho,
\]
while the membership-query complexity and ordinary running time are
at most
\begin{equation}
G_{\rm list}(K,\varepsilon^{-1})
\left(1+n+\log\frac1\rho\right)^{
O\left(\sqrt K\,\varepsilon^{-2}\log(3/\varepsilon)\right)
}.
\label{eq:model-list-XP-v2}
\end{equation}
\end{theorem}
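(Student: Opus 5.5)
We follow the iterated-core (peeling) scheme outlined in the introduction. We assume the persistent randomized BSG theorem, the size-oblivious PFR routine (Theorem~\ref{thm:size-oblivious-pfr-v2}), and the deterministic one-coset lifting lemma from Section~\ref{sec:one-core-robust-pfr-v2}.

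\emph{Setup.} Put \(R_0=B\). Every compatible \(A\) has \(C_A:=A\cap B\) with \(|C_A|\geq\varepsilon|A|\) and \(|C_A+C_A|\leq K|A|\leq K\varepsilon^{-1}|C_A|\). We also have \(|B|\leq(1+\eta)|A|\leq 2|A|\), so \(|C_A|\geq(\varepsilon/2)|B|\).

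\emph{Stopping rule.} At stage \(j\) we decide whether to stop by a density test. We estimate \(|R_j|/|B|\) from samples of \(B\), evaluating membership in \(R_j\) through membership in \(B\) and the earlier core predicates. We stop once the estimate falls below \(\delta:=c\,\varepsilon^{2}\) (exponent to be tuned). Otherwise \(|R_j|\geq\delta|B|/2\), and we run persistent BSG on \(R_j\) with \(\alpha:=c'\varepsilon^{4}/K\).

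\emph{Energy while some compatible set has mass left.} Fix the threshold \(\tau:=\varepsilon/4\). Suppose some compatible \(A\) has \(|A\cap R_j|\geq\tau|A|\). By Cauchy--Schwarz applied to \(A\cap R_j\),
\[
E(R_j)\geq E(A\cap R_j)\geq\frac{\tau^4|A|^4}{K|A|}.
\]
Using \(|R_j|\leq|B|\leq2|A|\), this gives \(E(R_j)\gtrsim\varepsilon^4K^{-1}|R_j|^3\), so BSG succeeds with high probability.

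\emph{Termination and list length.} Each returned core satisfies \(|Y_j|\geq c_{\rm BSG}\sqrt\alpha\,|R_j|\gtrsim\varepsilon^2K^{-1/2}|R_j|\). The residual therefore shrinks geometrically, and it falls below \(\delta|B|\) after at most
\[
m\lesssim\sqrt K\,\varepsilon^{-2}\log(1/\delta)\asymp\sqrt K\,\varepsilon^{-2}\log(3/\varepsilon)
\]
steps, which is \eqref{eq:model-list-length-v2}. Peeling also stops legitimately if BSG fails only when no compatible set retains \(\tau\)-mass. Promise-free soundness guarantees that every non-FAIL core is a genuine small-doubling set with
\[
|Y_j+Y_j|\leq C_{\rm BSG}\alpha^{-4}|Y_j|=O(K^4\varepsilon^{-16})|Y_j|.
\]
This is the \(L\) appearing inside \(P_{\rm PFR}\) in \eqref{eq:model-list-cover-v2}.

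\emph{Averaging over cores.} Let \(R_m\) be the final residual and fix any compatible \(A\). Its final residual mass satisfies
\[
|A\cap R_m|\leq\max\{\tau|A|,\ \delta|B|\}\leq\varepsilon|A|/2.
\]
Since the cores are disjoint, \(\sum_j|A\cap Y_j|\geq(\varepsilon/2)|A|\). Pigeonhole then gives an index \(j(A)\) with \(|A\cap Y_j|\geq\varepsilon|A|/(2m)\).

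\emph{Covering bound.} For each core we run size-oblivious PFR on \(Y_j\), feeding it a fresh exact sample block produced by the rejection bridge, and obtain \(H_j\) with \(|H_j|\leq|Y_j|\) and \(\mathcal N_{H_j}(Y_j)\leq P_{\rm PFR}(L)\). For the lift we need one coset of \(H_j\) to contain a \(\geq\varepsilon|A|/(2mP_{\rm PFR})\) portion of \(A\); Ruzsa covering then gives
\[
\mathcal N_{H_j}(A)\leq \frac{2K\,m\,P_{\rm PFR}(L)}{\varepsilon}\lesssim K^{3/2}\varepsilon^{-3}\log(3/\varepsilon)\,P_{\rm PFR}(L),
\]
matching \eqref{eq:model-list-cover-v2}.

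\emph{Size condition.} The bound \(|H_j|\leq|Y_j|\) does not give \(|H_j|\leq|A|\), since \(|Y_j|\leq|B|\leq2|A|\). Passing to a codimension-one subspace \(V_j\leq H_j\) fixes this and at most doubles the covering number. Because the algorithm outputs all \(V_j\), the index need not be identified.

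\emph{Complexity and failure probability.} Samples from \(B\) are obtained by rejection from samples of \(B\) filtered through the residual predicates, at polynomial cost in \(1/\delta\) and in \(1/\sqrt\alpha\). Membership in \(R_j\) and \(Y_j\) must be evaluated recursively through all earlier core predicates, each of which itself issues \(\operatorname{poly}(n,\log(1/\rho))\) queries. Nesting \(m\) levels produces the XP exponent \(O(m)\) in \eqref{eq:model-list-XP-v2}. A union bound over the \(m\) stages and the density tests, with per-stage error \(\rho/(3m)\), gives total failure probability at most \(\rho\).

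\emph{Main obstacle.} The main difficulty is the stopping logic. When BSG returns FAIL, or the density test stops, we must still certify that every compatible \(A\) has little residual mass, even though the class is unknown. The fix is the energy lower bound above: as long as some compatible \(A\) retains \(\tau\)-mass, BSG on \(R_j\) succeeds with high probability, so FAIL can only occur on the low-probability event. The remaining care is in the parameter choices. We must balance \(\delta\), \(\tau\) and \(\alpha\) to land exactly on the \(\varepsilon^{-2}\), \(\varepsilon^{-3}\) and \(\varepsilon^{-16}\) exponents, and we must make sure that the adaptively chosen residuals are handled by promise-free soundness.
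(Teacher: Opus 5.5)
Your proposal is correct and is essentially the paper's proof. The shared steps are residual peeling with a density test, promise-free persistent BSG whose \(\mathsf{FAIL}\) certifies low residual energy (hence small residual mass for every compatible \(A\)), disjoint-core averaging, direct Ruzsa lifting from \(|A\cap Y_j|\) plus the codimension-one correction, and recursive predicates giving the XP exponent. Only constants differ: your \(\tau=\varepsilon/4\), \(\delta\asymp\varepsilon^2\) versus the paper's \(q=2\varepsilon/3\), \(\beta_{\rm res}\asymp\varepsilon\).

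The one thing to make explicit is a hard cap on the number of rounds at a deterministic \(M\asymp\sqrt K\,\varepsilon^{-2}\log(3/\varepsilon)\) fixed in advance (the paper's \(M_\star\)). Your per-stage budgets \(\rho/(3m)\), the ``always terminates'' clause, and the worst-case resource bounds all need a slot count known before execution. With the cap you also need the third terminal case: \(M\) valid peels already force \(|R_M|\leq\delta|B|/2\).
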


The proof is given in Section~\ref{sec:list-compatible-pfr-v2}.
The output has the quantifier order in
\eqref{eq:model-list-quantifiers-v2}; one entry is not required to
serve the entire class.  If
\(\varepsilon^{-1}\leq K^d\) for fixed \(d\), then the list length and
covering bound are \(K^{O_d(1)}\).  The direct recursive-oracle
implementation nevertheless remains XP.

\subsection{Common structure and its limits}
\label{subsec:model-common-limits-v2}

The list theorem is algorithmic.  A stronger common object exists for
every nonempty compatibility class, but the proof does not construct
it from the observation.

\begin{theorem}[A common subspace]
\label{thm:main-common-compatible-subspace-v2}
Let \(B\subseteq G\), \(K\geq1\), and \(\eta\in[0,1)\), and suppose
that \(\mathfrak A(B;K,\eta)\neq\varnothing\).  Then there is a
subspace \(V_{B,K,\eta}\leq G\) such that, simultaneously for every
\(A\in\mathfrak A(B;K,\eta)\),
\begin{equation}
|V_{B,K,\eta}|\leq|A|
\label{eq:model-common-size-v2}
\end{equation}
and
\begin{equation}
\mathcal N_{V_{B,K,\eta}}(A)
\leq
\frac{2K}{1-\eta}P_{\rm PFR}(K).
\label{eq:model-common-cover-v2}
\end{equation}
\end{theorem}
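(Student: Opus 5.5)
The plan is to choose a minimum-cardinality member \(A_\star\in\mathfrak A(B;K,\eta)\), which exists because the class is a nonempty family of subsets of the finite group \(G\). We apply existential PFR (Theorem~\ref{thm:size-oblivious-pfr-v2}, used only for its structural conclusion) to \(A_\star\). This gives a subspace \(H\) with \(|H|\leq|A_\star|\) and \(\mathcal N_H(A_\star)\leq P_{\rm PFR}(K)\). We then set \(V_{B,K,\eta}:=H\). The size condition \eqref{eq:model-common-size-v2} is immediate from minimality: for every compatible \(A\) we have \(|H|\leq|A_\star|\leq|A|\). No codimension-one correction is needed, because minimality already transfers the size bound.

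For the covering bound, fix a compatible \(A\) and put \(D:=A\cap A_\star\cap B\). Since \(|A\setminus B|\leq\eta|A|\) and \(|A_\star\setminus B|\leq\eta|A_\star|\), we get \(|A\cap B|\geq(1-\eta)|A|\) and \(|A_\star\cap B|\geq(1-\eta)|A_\star|\). Also \(|B|\leq\min\{(1+\eta)|A|,(1+\eta)|A_\star|\}\). The key inequality is a sharper count. Since \(|A\triangle B|\leq\eta|A|\), we have \(|B\setminus A|\leq\eta|A|-|A\setminus B|\), and similarly for \(A_\star\). Write \(|A\cap B|=a\) and \(|A_\star\cap B|=a_\star\). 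Then
\[
|D|\geq a+a_\star-|B|,\qquad |B|=a+|B\setminus A|\leq a+\eta|A|-(|A|-a),
\]
and the symmetric bound holds with \(A_\star\). Averaging the two upper bounds for \(|B|\) gives
\[
|D|\geq \tfrac12\bigl[(1-\eta)|A|+(1-\eta)|A_\star|\bigr]
\geq\tfrac{1-\eta}{2}(|A|+|A_\star|).
\]
This is the forced-overlap bound quoted in the introduction. I expect it to be the only delicate arithmetic step, and I would double-check it carefully.

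Next comes one-coset lifting. Since \(D\subseteq A_\star\) is covered by at most \(P_{\rm PFR}(K)\) cosets of \(H\), pigeonhole gives a coset \(x+H\) with \(|A\cap(x+H)|\geq|D|/P_{\rm PFR}(K)\). Let \(X:=A\cap(x+H)\). By Ruzsa covering, take a maximal \(T\subseteq A\) such that the translates \(t+X\) are disjoint. Then \(A\subseteq T+X+X\subseteq T+H\), because \(X+X\subseteq H\) in characteristic two. Disjointness gives \(|T||X|\leq|A+X|\leq|A+A|\leq K|A|\). Hence
\[
\mathcal N_H(A)\leq|T|\leq\frac{K|A|\,P_{\rm PFR}(K)}{|D|}
\leq\frac{2K|A|P_{\rm PFR}(K)}{(1-\eta)(|A|+|A_\star|)}
\leq\frac{2K}{1-\eta}P_{\rm PFR}(K),
\]
which is \eqref{eq:model-common-cover-v2}. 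The main obstacle is the overlap bound in the previous paragraph: we must make sure the overlap is measured against \(|A|\) itself, which the \(|A|+|A_\star|\) form guarantees regardless of how \(|A|\) and \(|A_\star|\) compare. Everything else is standard. The argument is nonconstructive only through the choice of \(A_\star\).
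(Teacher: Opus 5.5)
Your proposal is correct and follows the paper's proof essentially step for step: choose a minimum-cardinality \(A_\star\), apply existential PFR to it, use the forced overlap \(|A\cap A_\star\cap B|\geq\frac{1-\eta}{2}(|A|+|A_\star|)\), lift by one-coset pigeonholing plus Ruzsa covering, and get the size bound from minimality. The only differences are cosmetic: you write out the Ruzsa covering argument, and you use the full slice \(A\cap(x+H)\) in place of \(C\cap(x+H)\), which changes nothing.
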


The proof in Section~\ref{sec:common-compatible-structure-v2}
chooses a smallest member of the hidden compatibility class and
applies clean-input PFR to that set.  It establishes
\(\exists V\,\forall A\), but it supplies no observation-only method
for finding \(V\).

The dependence on \(1-\eta\) cannot be removed.  In fact, the exact
cost is known for a simple two-subspace construction.

\begin{theorem}[Exact two-subspace ambiguity]
\label{thm:main-two-subspace-ambiguity-v2}
For every integer \(t\geq1\), there are equal-size subspaces
\(H_0,H_1\leq G\) and one nonempty observation \(B\subseteq G\) such
that
\begin{equation}
|H_i+H_i|=|H_i|,
\qquad
|H_i\triangle B|
=
(1-2^{-t})|H_i|
\qquad(i\in\{0,1\}),
\label{eq:model-ambiguity-data-v2}
\end{equation}
and
\begin{equation}
\min_{\substack{V\leq G\\ |V|\leq|H_0|}}
\max\left\{
\mathcal N_V(H_0),
\mathcal N_V(H_1)
\right\}
=
2^{\lceil t/2\rceil}.
\label{eq:model-ambiguity-tradeoff-v2}
\end{equation}
\end{theorem}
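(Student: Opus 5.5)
The construction I plan to use is two transverse extensions of one common subspace, with the observation equal to that common subspace. Fix a subspace \(W\leq G\); \(W=\{0\}\) would already do, but any \(W\) works. Choose \(t\)-dimensional subspaces \(U_0,U_1\) with \(W,U_0,U_1\) independent, i.e.\ \(\dim(W+U_0+U_1)=\dim W+2t\); this needs \(n\geq\dim W+2t\). Put
\[
H_i:=W\oplus U_i,\qquad B:=W.
\]
Then \(|H_0|=|H_1|=2^t|W|\), and \(|H_i+H_i|=|H_i|\) because \(H_i\) is a subspace. Since \(B\subseteq H_i\), we get \(|H_i\triangle B|=|H_i|-|W|=(1-2^{-t})|H_i|\), which is \eqref{eq:model-ambiguity-data-v2}. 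By independence, \(H_0\cap H_1=W\). For any subspace \(V\) we have \(\mathcal N_V(H)=|H+V|/|V|=|H|/|H\cap V|\), and the rest of the proof computes this quantity.

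For the upper bound, let \(s:=\lfloor t/2\rfloor\). Take \(U_0'\leq U_0\) of dimension \(s\) and \(U_1'\leq U_1\) of dimension \(t-s=\lceil t/2\rceil\), and set \(V:=W\oplus U_0'\oplus U_1'\). Then \(|V|=2^t|W|=|H_0|\). Independence gives \(V\cap H_0=W\oplus U_0'\) and \(V\cap H_1=W\oplus U_1'\). Hence
\[
\mathcal N_V(H_0)=2^{t-s}=2^{\lceil t/2\rceil},\qquad \mathcal N_V(H_1)=2^{s}\leq 2^{\lceil t/2\rceil}.
\]

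For the lower bound, let \(V\) be any subspace with \(|V|\leq|H_0|\). Put \(X_i:=V\cap H_i\), so \(X_0\cap X_1=V\cap W\). The sum \(X_0+X_1\) lies in \(V\) and has cardinality \(|X_0||X_1|/|X_0\cap X_1|\). Therefore
\[
|X_0||X_1|\leq |V|\,|V\cap W|\leq 2^t|W|\cdot|W|=2^{-t}|H_0|^2 .
\]
It follows that
\[
\max_i \mathcal N_V(H_i)=\max_i\frac{|H_i|}{|X_i|}\geq\frac{|H_0|}{\sqrt{|X_0||X_1|}}\geq 2^{t/2}.
\]
Each \(\mathcal N_V(H_i)\) is an index of subspaces in \(\mathbb F_2^n\), hence a power of two. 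So the maximum is at least \(2^{\lceil t/2\rceil}\), which matches the upper bound and proves \eqref{eq:model-ambiguity-tradeoff-v2}.

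The main step is the product inequality \(|X_0||X_1|\leq|V||V\cap W|\). It uses the exact identity \(H_0\cap H_1=W\) and the size cap on \(V\), so I will state the independence of \(W,U_0,U_1\) explicitly. The integer rounding from \(2^{t/2}\) to \(2^{\lceil t/2\rceil}\) depends on all covering numbers being powers of two, and I will point that out as well. The only side condition is \(n\geq 2t+\dim W\), which holds after enlarging the ambient dimension or taking \(W=\{0\}\).
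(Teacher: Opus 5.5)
Your proposal is correct and follows essentially the same route as the paper. The paper uses the same construction (two transverse \(t\)-dimensional extensions of a common subspace, with \(B\) equal to that subspace) and the same balanced choice \(V=W\oplus U_0'\oplus U_1'\) for the upper bound. Its lower bound is your modular-dimension count, written additively as \(c_0+c_1\geq t\) rather than as \(|X_0||X_1|\leq|V|\,|V\cap W|\) followed by rounding to a power of two.
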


The proof is given in
Section~\ref{sec:common-compatible-structure-v2}.  Since
\(1-\eta=2^{-t}\), the example forces common covering cost
\[
\Theta\bigl((1-\eta)^{-1/2}\bigr)
\]
along an infinite sequence of radii.  The upper bound in
\eqref{eq:model-common-cover-v2} has dependence
\(O((1-\eta)^{-1})\), leaving a quantitative exponent gap.
Moreover, below the optimal common budget the two hidden instances
expose the same oracle for \(B\), which yields the randomized
indistinguishability statement in
Corollary~\ref{cor:ambiguity-indistinguishability-v2}.

The \(K^{-1/2}\) range of
Theorem~\ref{thm:main-robust-hidden-set-pfr-v2} is also the natural
limit of the particular one-core, size-only mechanism used in its
proof.

\begin{theorem}[Limit of one-core size-only lifting]
\label{thm:main-one-core-barrier-v2}
Let
\(L:[1,\infty)\to[1,\infty)\) be any prescribed small-doubling
budget.  For every integer \(r\geq2\), every \(\delta>0\), and
\(K=r^2\), there is a nonempty set \(S\subseteq G\) such that
\[
E(S)\geq K^{-1}|S|^3
\]
and every nonempty \(Y\subseteq S\) with
\[
|Y+Y|\leq L(K)|Y|
\]
satisfies
\[
|Y|\leq(1+\delta)K^{-1/2}|S|.
\]
Consequently, a proof which first extracts one core using only a
universal retained-fraction guarantee and then infers its clean mass
only from its relative size can certify a positive universal clean
fraction only below
\begin{equation}
\eta<\frac1{\sqrt K-1}
\qquad(K=r^2).
\label{eq:model-one-core-barrier-v2}
\end{equation}
\end{theorem}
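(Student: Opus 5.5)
The plan is to build \(S\) as a disjoint union of \(r\) affine blocks in mutually independent directions. Concretely, I would fix \(m\geq1\), to be chosen large in terms of \(r,\delta,L(K)\), and take \(n\) large enough that \(G=\mathbb F_2^n\) contains linearly independent vectors \(x_1,\ldots,x_r\) and subspaces \(W_1,\ldots,W_r\) of dimension \(m\) such that
\[
W_1\oplus\cdots\oplus W_r\oplus\langle x_1,\ldots,x_r\rangle
\]
is a direct sum. Put \(M:=2^m\) and
\[
S:=\bigcup_{i=1}^r (x_i+W_i),
\]
so that \(|S|=rM\), the blocks are disjoint, and \(K^{-1/2}|S|=M\).

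\emph{Energy.} I only need a lower bound, so I count the quadruples lying entirely inside one block. For each \(i\) there are \(M^3\) of them, since \(x_i+W_i\) is a coset. Hence
\[
E(S)\geq rM^3=\frac{|S|^3}{r^2}=K^{-1}|S|^3.
\]

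\emph{Core bound.} Let \(Y\subseteq S\) be nonempty with \(|Y+Y|\leq L(K)|Y|\), and write \(Y_i:=Y\cap(x_i+W_i)\), ordered so that \(|Y_1|\geq|Y_2|\geq\cdots\). The key structural fact is the following. For \(i\neq j\), the map \((y,y')\mapsto y+y'\) from \(Y_i\times Y_j\) into \(x_i+x_j+W_i+W_j\) is injective, because \(W_i\cap W_j=0\) and the sum is direct. Therefore
\[
|Y_1|\,|Y_2|\leq|Y+Y|\leq L(K)\,|Y|\leq L(K)\,rM.
\]
I then split into two cases.
\begin{itemize}
\item If \(|Y_1|\geq\sqrt M\), then \(|Y_2|\leq L(K)r\sqrt M\). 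Using \(|Y_1|\leq M\), this gives \(|Y|\leq M+(r-1)L(K)r\sqrt M\).
\item If \(|Y_1|<\sqrt M\), then \(|Y|\leq r\sqrt M\).
\end{itemize}
Choosing \(m\) so that \(r^2L(K)\leq\delta\sqrt M\) yields \(|Y|\leq(1+\delta)M=(1+\delta)K^{-1/2}|S|\) in both cases. The main obstacle is arranging that two large parts force a large sumset. A naive construction with a common subspace \(W\) for all blocks fails, since \(W\cup(x+W)\) has doubling \(3/2\). This is why the blocks must use transverse subspaces, which make the cross sums full products.

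\emph{Consequence.} Suppose a proof extracts one core \(Y\subseteq B\) using only a universal retained-fraction guarantee \(|Y|\geq\gamma|B|\) from an energy hypothesis at scale \(1/K\), and infers clean mass only through the size-only purity margin \eqref{eq:intro-purity-margin-v2}. The construction above, applied with \(B=S\), shows that any such universal guarantee must satisfy \(\gamma\leq(1+\delta)K^{-1/2}\) for every \(\delta>0\), hence \(\gamma\leq K^{-1/2}<1/2\). In this range \(\Delta(\gamma,\eta)=\gamma-(1-\gamma)\eta\), which is positive iff \(\eta<\gamma/(1-\gamma)\). Since \(\gamma/(1-\gamma)\) is increasing in \(\gamma\), it is at most \(K^{-1/2}/(1-K^{-1/2})=1/(\sqrt K-1)\), which gives \eqref{eq:model-one-core-barrier-v2}.

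To justify that the size-only bound cannot be improved, I would cite the insertion construction mentioned in the introduction. An adversary who knows \(Y\) can place all of \(B\setminus A\) inside \(Y\), so that \(|A\cap Y|\) equals \([\Delta(\gamma,\eta)]_+|A|\) up to rounding. This shows that the certified clean fraction is exactly \([\Delta]_+\), so no positive universal clean fraction can be certified at or beyond this threshold.
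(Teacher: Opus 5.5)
Your proposal is correct and follows essentially the paper's proof. It uses the same transverse affine-block set with within-block energy giving \(E(S)\ge r^{-2}|S|^3\), and the same cross-block injectivity to bound small-doubling cores. The paper sums all cross products, \(|Y+Y|\ge\sum_{i<j}|Y_i||Y_j|\ge\frac12|Y|\,(|Y|-|W_i|)\), which gives \(|Y|\le|W_i|+2L(K)\) and needs only blocks of size \(2L(K)/\delta\); your two-largest-blocks case split needs blocks of size about \((r^2L(K)/\delta)^2\) but is equally valid. Both arguments then apply the insertion extremizer for the size-only purity bound to get \(\eta<\gamma/(1-\gamma)\le1/(\sqrt K-1)\).

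One harmless slip: for \(r=2\) you only have \(\gamma\le K^{-1/2}=1/2\), not \(<1/2\). Since \(\Delta(\gamma,\eta)=\gamma-(1-\gamma)\eta\) already holds for \(\gamma\le1/2\), the conclusion is unaffected.
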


The detailed statement is
Theorem~\ref{thm:one-core-barrier-v2}.  It is not an oracle or
information-theoretic lower bound for robust PFR.  The list theorem
bypasses it by using disjoint residual cores and changing the
quantifier order, rather than by finding a larger universally useful
single core.

\subsection{Algorithmic ingredients}
\label{subsec:model-algorithmic-ingredients-v2}

The hidden-set results use three ingredients.

\paragraph{A persistent BSG procedure.}
Theorem~\ref{thm:persistent-bsg-v2} turns normalized energy
\(E(S)\geq\alpha|S|^3\) into one fixed implicit subset
\(Y_D\subseteq S\) satisfying, for absolute constants
\(c_{\rm BSG},C_{\rm BSG}>0\),
\[
|Y_D|\geq c_{\rm BSG}\sqrt\alpha\,|S|,
\qquad
|Y_D+Y_D|\leq C_{\rm BSG}\alpha^{-4}|Y_D|.
\]
Every nonfailure output is valid except with the prescribed soundness
error, even without an energy promise, while high energy guarantees
success with high probability.  Once the record \(D\) is fixed, all
later membership queries refer to the same set \(Y_D\).  The theorem
also gives worst-case polynomial resource bounds in
\(\alpha^{-1}\), \(n\), and the logarithms of the failure parameters.

For any predetermined finite sample demand, the fixed predicate for
\(Y_D\) can be combined with bounded rejection blocks to obtain
conditionally exact independent uniform samples from \(Y_D\).
The precise statement is
Proposition~\ref{prop:output-exact-sampling-v2}.

\paragraph{Size-oblivious algorithmic PFR.}
Theorem~\ref{thm:size-oblivious-pfr-v2} gives a clean-input PFR
algorithm that does not require the input size.  On a nonempty set
\(A\) with \(|A+A|\leq K|A|\), it outputs with high probability a
subspace \(V\) such that
\[
|V|\leq|A|,
\qquad
\mathcal N_V(A)\leq P_{\rm PFR}(K).
\]
Its sample budget is fixed before sampling and is
\(O(n+\log(1/\rho))\); its membership-query and ordinary running-time
bounds have FPT form in \(K\).  Corollary~\ref{cor:pfr-adaptive-invocation-v2}
states the conditional form used after an adaptive past.

\paragraph{Deterministic lifting.}
Define
\[
\Delta(\gamma,\eta)
:=
\gamma-\max\{\gamma,1-\gamma\}\eta.
\]
If \(Y\subseteq B\), \(|Y|\geq\gamma|B|\),
\(|A\triangle B|\leq\eta|A|\), \(|A+A|\leq K|A|\), and a subspace
\(H\) satisfies \(|H|\leq|Y|\) and \(\mathcal N_H(Y)\leq P\), then
Theorem~\ref{thm:one-core-lifting-v2} constructs \(V\leq H\) with
\[
|V|\leq|A|,
\qquad
\mathcal N_V(A)
\leq
\frac{2KP}{\Delta(\gamma,\eta)}
\]
whenever \(\Delta(\gamma,\eta)>0\).

The single-output theorem applies these ingredients once.  The list
theorem repeats them on disjoint residuals and obtains a serving core
for each compatible hidden set by averaging its clean mass across the
list.

\subsection{Complexity and basic tools}
\label{subsec:model-conventions-tools-v2}

All algorithmic statements are bounded-time: every invocation has
finite deterministic resource budgets and always terminates, possibly
with \(\mathsf{FAIL}\) where stated.  Samples, membership queries, and
ordinary running time are counted separately.

For structural parameters \(\boldsymbol\kappa\), a bound of the form
\begin{equation}
f(\boldsymbol\kappa)
\operatorname{poly}\!\left(1+n+\log\frac1\rho\right)
\label{eq:model-FPT-form-v2}
\end{equation}
is called an \emph{FPT-form bound}: the polynomial exponent is
absolute.  A bound of the form
\begin{equation}
g(\boldsymbol\kappa)
\left(1+n+\log\frac1\rho\right)^{h(\boldsymbol\kappa)}
\label{eq:model-XP-form-v2}
\end{equation}
is called an \emph{XP bound}.  A polynomial list length or covering
budget is a statement about the structural output; it does not by
itself imply an FPT implementation.

We repeatedly use two elementary additive facts.  Cauchy--Schwarz
gives
\begin{equation}
E(S)
\geq
\frac{|S|^4}{|S+S|}
\label{eq:model-energy-CS-v2}
\end{equation}
for every nonempty \(S\).  Ruzsa's covering lemma says that if
\(X,C\subseteq G\), \(C\neq\varnothing\), and
\[
|X+C|\leq L|C|,
\]
then there is a set \(T\) with \(|T|\leq L\) such that
\begin{equation}
X\subseteq T+C+C.
\label{eq:model-Ruzsa-covering-v2}
\end{equation}
In particular, if \(C+C\subseteq V\) for a subspace \(V\), then
\(\mathcal N_V(X)\leq L\).

Conditional concentration, sequential failure accounting, and the
fixed-block exact-sampling lemma are collected in
Appendix~\ref{app:probability-tools-v2}.  Later appendices contain the
implementation details for persistent BSG, size-oblivious PFR, and
the iterated-core list construction.

\section{Deterministic sifting}
\label{sec:deterministic-sifting-v2}

Let \(S\subseteq\mathbb F_2^n\) be nonempty and write \(N:=|S|\).
For \(d\in\mathbb F_2^n\), set
\[
r(d):=|\{(x,y)\in S^2:x+y=d\}|=|S\cap(S+d)|
\quad\text{and}\quad
p(d):=\frac{r(d)}{N}.
\]
The first and second moments of this normalized representation
function are
\begin{equation}
\sum_d p(d)=N,
\qquad
\sum_d p(d)^2=\frac{E(S)}{N^2}.
\label{eq:sifting-moments}
\end{equation}

Fix \(\alpha\in(0,1]\) and put \(h:=\sqrt{\alpha}\).  We prove an
optimal-order Balog--Szemer\'edi--Gowers statement in a form that can
be used algorithmically in the next section.  The proof has two
parts.  We first give deterministic conditions that force a set to
have small doubling; these criteria are valid for an arbitrary input
set \(S\).  We then assume \(E(S)\geq\alpha N^3\) and show that the
energy hypothesis puts noticeable weight on fibres or centres
satisfying the criteria.

The high-multiplicity case uses a two-step path count and gives the
stronger doubling bound \(O(\alpha^{-3})\).  In the complementary
case, a dyadic argument finds a visible low-multiplicity layer; a
four-step path count then gives \(O(\alpha^{-4})\).  The high--low
architecture and path counts follow the argument of
Reiher--Schoen~\cite{RS24}.  The dyadic visible-layer formulation is
chosen so that the low branch can later be found from samples.

\subsection{Two deterministic criteria}
\label{subsec:sifting-criteria-v2}

Whenever \(Y\subseteq S\) is nonempty and \(s\in Y+Y\), fix the
lexicographically first ordered representation
\[
s=y_1(s)+y_2(s),\qquad y_1(s),y_2(s)\in Y.
\]
This convention makes the recovery maps in the two path counts
unambiguous.

\begin{lemma}[High-multiplicity criterion]
\label{lem:high-criterion-v2}
Let \(Q\subseteq\mathbb F_2^n\) and \(Y\subseteq X\subseteq S\).
Assume
\begin{equation}
|Q|\leq 2^{13}h^{-2}N,\qquad
|X|\geq\frac34hN,\qquad
|Y|\geq\frac12|X|,
\label{eq:high-criterion-size-data}
\end{equation}
and suppose that every \(y\in Y\) has at most \(3|X|/8\) partners
\(w\in X\) for which \(y+w\notin Q\).  Then
\[
|Y|\geq\frac38hN
\quad\text{and}\quad
|Y+Y|\leq \frac{2^{33}}9\,h^{-6}|Y|.
\]
\end{lemma}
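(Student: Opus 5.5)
First, |Y| ≥ ½|X| ≥ (3/8)hN, which is immediate from the size data in \eqref{eq:high-criterion-size-data}. The substance is the doubling bound. I will bound |Y+Y| with a two-step path count through Q. Fix s ∈ Y+Y and write it as s = y_1(s)+y_2(s), using the lexicographic convention. Every w ∈ X gives
\[
s=(y_1+w)+(y_2+w).
\]
Call w \emph{good for s} if both y_1+w ∈ Q and y_2+w ∈ Q.

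By hypothesis, each of y_1 and y_2 has at most 3|X|/8 partners w ∈ X with y_i+w ∉ Q. So at least |X| − 2·(3/8)|X| = |X|/4 elements w ∈ X are good for s. Each good w yields a representation s = q_1+q_2 with q_1,q_2 ∈ Q, namely q_i = y_i(s)+w.

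Next I show the map (s,w) ↦ (q_1,q_2) is injective. From (q_1,q_2) we recover s = q_1+q_2. That determines y_1(s) and y_2(s) through the fixed convention, and then w = q_1+y_1(s). Hence
\[
|Y+Y|\cdot\frac{|X|}{4}\;\leq\;|Q|^2 .
\]

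Now I substitute the size bounds:
\[
|Y+Y|\leq \frac{4|Q|^2}{|X|}\leq \frac{4\cdot 2^{26}h^{-4}N^2}{|X|}.
\]
Using |X| ≥ (3/4)hN gives |Y+Y| ≤ (2^{30}/3) h^{-5} N. To convert N into |Y|, use N ≤ (8/3)h^{-1}|Y|, which follows from |Y| ≥ (3/8)hN. This yields
\[
|Y+Y|\leq \frac{2^{30}}{3}\cdot\frac83\,h^{-6}|Y|=\frac{2^{33}}{9}h^{-6}|Y|,
\]
which is exactly the claimed constant.

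The only delicate step is checking that the recovery map is injective. It is also why the lexicographic convention was fixed beforehand: it makes y_1(s) and y_2(s) functions of s alone, so the pair (q_1,q_2) determines w. One caveat: the good-partner bound as stated needs both y_1 and y_2 to lie in Y, and they do. No energy assumption enters, so the criterion is deterministic, as the section intends.
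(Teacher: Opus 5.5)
Your proof is correct and follows the paper's argument exactly: the same count of at least $|X|/4$ good partners $w$ for the canonical endpoints, the same injection $(s,w)\mapsto(y_1(s)+w,\,w+y_2(s))\in Q^2$, and the same substitution giving $|Y+Y|\leq(2^{30}/3)h^{-5}N\leq(2^{33}/9)h^{-6}|Y|$.
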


\begin{proof}
Fix \(s\in Y+Y\) and abbreviate its canonical endpoints by
\(y_1,y_2\).  At most \(3|X|/8\) points \(w\in X\) fail
\(y_1+w\in Q\), and the same bound holds for \(y_2+w\in Q\).
Thus at least \(|X|/4\) points satisfy both conditions.

For each such pair \((s,w)\), put
\(q_1:=y_1+w\) and \(q_2:=w+y_2\).  The map
\((s,w)\mapsto(q_1,q_2)\in Q^2\) is injective: its image determines
\(s=q_1+q_2\), hence the fixed pair \(y_1(s),y_2(s)\), and then
\(w=y_1(s)+q_1\).  Consequently
\[
|Y+Y|\,\frac{|X|}{4}\leq |Q|^2.
\]
Using \eqref{eq:high-criterion-size-data} gives
\[
|Y+Y|\leq \frac{2^{30}}3h^{-5}N
\leq \frac{2^{33}}9\,h^{-6}|Y|,
\]
while the asserted lower bound for \(|Y|\) is immediate.
\end{proof}

For the low branch, a set of differences \(Q\) defines a symmetric
relation
\[
R_Q:=\{(x,y)\in S^2:x+y\in Q\},
\qquad
\delta(Q):=\frac{|R_Q|}{N^2}.
\]
For \(x,y\in S\), its \(Q\)-codegree is
\[
c_Q(x,y):=
|\{z\in S:x+z\in Q\ \text{and}\ y+z\in Q\}|.
\]

\begin{lemma}[Low-multiplicity criterion]
\label{lem:low-criterion-v2}
Let \(Q\subseteq\mathbb F_2^n\) and let \(\delta_0,\tau,\kappa>0\).
Suppose that a symmetric graph
\(\mathcal G\subseteq S^2\) and sets \(Y\subseteq X\subseteq S\)
satisfy
\begin{align}
\delta_0&\geq\frac78h,
&
\delta_0\tau^2&\geq 2^{-16}h^2,
&
|Q|&\leq\frac{2\delta_0 N}{h\tau},
\label{eq:low-criterion-layer-data}\\
|X|&\geq\frac{3\delta_0 N}{8},
&
|Y|&\geq\frac12|X|,
&
\kappa&\geq 2^{-12}\delta_0^2.
\label{eq:low-criterion-local-data}
\end{align}
Assume further that every edge of \(\mathcal G\) has \(Q\)-codegree
at least \(\kappa N\), and that every \(y\in Y\) has at most
\(3|X|/8\) non-neighbours in \(X\).  Then
\[
|Y|\geq\frac{21}{128}hN
\quad\text{and}\quad
|Y+Y|\leq \frac{2^{69}}9\,h^{-8}|Y|.
\]
\end{lemma}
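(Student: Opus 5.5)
The plan is a four-step path count that parallels the proof of Lemma~\ref{lem:high-criterion-v2}, with one $\mathcal G$-edge inserted in the middle of each path. First, the size bound for $Y$: from $|Y|\geq\frac12|X|$, $|X|\geq\frac38\delta_0N$ and $\delta_0\geq\frac78h$ we get $|Y|\geq\frac{3}{16}\cdot\frac78hN=\frac{21}{128}hN$, as claimed.

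Next, fix $s\in Y+Y$ with canonical endpoints $y_1=y_1(s)$ and $y_2=y_2(s)$. Each $y_i$ has at most $3|X|/8$ non-neighbours in $X$, so at least $|X|/4$ points $w\in X$ are $\mathcal G$-neighbours of both endpoints. For each such $w$:
\begin{itemize}
\item the edge $(y_1,w)$ has $Q$-codegree at least $\kappa N$, so there are at least $\kappa N$ points $z_1$ with $y_1+z_1\in Q$ and $w+z_1\in Q$;
\item likewise there are at least $\kappa N$ points $z_2$ with $w+z_2\in Q$ and $z_2+y_2\in Q$.
\end{itemize}
This yields at least $\frac{|X|}{4}\kappa^2N^2$ tuples $(s,w,z_1,z_2)$ for each $s\in Y+Y$.

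Map each tuple to
\[
(q_1,q_2,q_3,q_4)=(y_1+z_1,\;z_1+w,\;w+z_2,\;z_2+y_2)\in Q^4 .
\]
This map is injective. The sum $q_1+q_2+q_3+q_4=y_1+y_2=s$ determines $s$, hence the canonical pair $(y_1,y_2)$. Then $z_1=y_1+q_1$, $w=z_1+q_2$ and $z_2=w+q_3$ are recovered in turn. Therefore
\[
|Y+Y|\cdot\frac{|X|}{4}\,\kappa^2N^2\leq|Q|^4 .
\]

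Finally, substitute the hypotheses: $|Q|^4\leq 16\delta_0^4N^4h^{-4}\tau^{-4}$, $|X|\geq\frac38\delta_0N$, and $\kappa^2\geq2^{-24}\delta_0^4$. This gives
\[
|Y+Y|\leq\frac{16\cdot32\cdot2^{24}}{3}\,\delta_0^{-1}h^{-4}\tau^{-4}N .
\]
The hypothesis $\delta_0\tau^2\geq2^{-16}h^2$ gives $\tau^{-4}\leq2^{32}\delta_0^2h^{-4}$. Combined with $\delta_0\leq1$ (since $|R_Q|\leq N^2$; I will take $\delta_0$ at most $1$ if this is not implicit), this yields $|Y+Y|\lesssim 2^{65}h^{-8}N/3$. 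Dividing by $|Y|\geq\frac{21}{128}hN$ would give $h^{-9}$, which is off by one power of $h$. So the leftover factor $\delta_0$ must be kept rather than bounded by $1$. Reorganizing, we have $|Y+Y|\leq\frac{2^{65}}{3}\,\delta_0\,h^{-8}N$, and we divide by $|Y|\geq\frac{3}{16}\delta_0N$ instead. This gives $|Y+Y|\leq\frac{2^{69}}{9}h^{-8}|Y|$, exactly the stated constant.

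The main obstacle is this last bookkeeping: one must normalize by $|Y|$ through $\delta_0$ rather than through $h$, and track that the exponents of $\delta_0$ cancel. The counting and injectivity steps are routine.
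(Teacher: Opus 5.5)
Your proposal is correct and follows the paper's proof essentially step for step: the same four-step path count through a common $\mathcal G$-neighbour $w$ and two codegree witnesses $z_1,z_2$, the same injectivity argument via the canonical endpoints, and the same intermediate bound $|Y+Y|\leq\frac{2^{33}}{3}\,\delta_0^{-1}h^{-4}\tau^{-4}N\leq\frac{2^{65}}{3}\,\delta_0h^{-8}N$. Your final normalization, dividing by $|Y|\geq\frac{3}{16}\delta_0N$ rather than by $|Y|\geq\frac{21}{128}hN$, is exactly how the paper obtains the constant $2^{69}/9$. The detour through $\delta_0\leq1$ is not needed.
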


\begin{proof}
The size estimate follows directly from
\eqref{eq:low-criterion-layer-data} and
\eqref{eq:low-criterion-local-data}.  For the doubling estimate, fix
\(s\in Y+Y\) with canonical endpoints \(y_1,y_2\).  At least
\(|X|/4\) points \(w\in X\) are adjacent in \(\mathcal G\) to both
endpoints.  For each such \(w\), there are at least \(\kappa N\)
choices of \(z_1\in S\) and \(\kappa N\) choices of \(z_2\in S\)
such that
\[
y_1+z_1,\quad w+z_1,\quad w+z_2,\quad y_2+z_2
\]
all lie in \(Q\).

Write these four elements as \(q_1,q_2,q_3,q_4\), in the displayed
order.  The map
\[
(s,w,z_1,z_2)\longmapsto(q_1,q_2,q_3,q_4)
\]
is injective.  Indeed, the image first gives
\(s=q_1+q_2+q_3+q_4\), then the canonical endpoints, and finally
\[
z_1=y_1(s)+q_1,\qquad
w=q_2+z_1,\qquad
z_2=q_3+w.
\]
It follows that
\begin{equation}
|Y+Y|\,\frac{|X|}{4}(\kappa N)^2\leq |Q|^4.
\label{eq:low-four-path-count-v2}
\end{equation}
Substituting the bounds in
\eqref{eq:low-criterion-layer-data}--\eqref{eq:low-criterion-local-data}
into \eqref{eq:low-four-path-count-v2} yields
\[
|Y+Y|
\leq \frac{2^{33}}3\frac{N}{h^4\tau^4\delta_0}
\leq \frac{2^{65}}3\delta_0 N h^{-8}
\leq \frac{2^{69}}9\,h^{-8}|Y|.
\]
\end{proof}

The constants in the two criteria are deliberately chosen to match
the output of the empirical validation and cleaning steps in
Section~4.  The ideal fibres and centres constructed below satisfy
strictly stronger inequalities.  In either case the criterion gives
a set of size at least \((21/128)hN\) and doubling at most
\((2^{69}/9)h^{-8}\); the high branch retains the stronger exponent
\(h^{-6}\).

\subsection{Visible low-multiplicity layers}
\label{subsec:visible-layer-v2}

The low branch must select useful differences from a profile with as
many as \(2^n\) coordinates.  The following elementary lemma replaces
a global ordering of that profile by a search over only
\(O(1+\log(1/h))\) dyadic thresholds.

\begin{lemma}[Dyadic visible layer]
\label{lem:visible-layer-v2}
Let \(h\in(0,1]\), \(R>0\), and let \((x_i)_{i\in I}\) be a finite
family in \([0,1]\) such that
\[
\sum_i x_i^2\geq\frac{31}{32}R
\quad\text{and}\quad
\sum_i x_i\leq\frac{R}{h}.
\]
For \(t\in(0,1]\), define \(M(t):=\sum_{x_i\geq t}x_i\).  Then there
is a dyadic number
\[
\tau\in\mathcal T_h:=
\left\{2^{-j}:0\leq j\leq
\left\lceil\log_2\frac{2^8}{h}\right\rceil\right\}
\]
for which
\begin{equation}
M(\tau)\geq\frac{15}{16}R,
\qquad
M(\tau)\tau^2\geq\frac{hR}{2^{15}}.
\label{eq:visible-layer-conclusions-v2}
\end{equation}
\end{lemma}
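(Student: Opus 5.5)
The plan is a contradiction argument on dyadic layers: assume that no \(\tau\in\mathcal T_h\) satisfies both inequalities in \eqref{eq:visible-layer-conclusions-v2}, and show that then \(\sum_i x_i^2\) is too small. The two facts I would use are that \(M(t)\) is nonincreasing in \(t\), and that \(0\leq x_i\leq 1\).

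First I would find thresholds at which the mass condition holds. Put \(J:=\lceil\log_2(2^8/h)\rceil\) and \(\tau_{\min}:=2^{-J}\leq h/2^8\). The entries below \(\tau_{\min}\) contribute at most
\[
\sum_{x_i<\tau_{\min}}x_i^2\leq\tau_{\min}\sum_i x_i\leq \frac{h}{2^8}\cdot\frac Rh=\frac{R}{256}.
\]
Since \(x_i\leq1\), we have \(M(t)\geq\sum_{x_i\geq t}x_i^2\). Hence
\[
M(\tau_{\min})\geq\Bigl(\frac{31}{32}-\frac1{256}\Bigr)R=\frac{247}{256}R>\frac{15}{16}R .
\]
So the set of \(j\in\{0,\dots,J\}\) with \(M(2^{-j})\geq\frac{15}{16}R\) is nonempty. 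By monotonicity of \(M\) it is an interval \(\{j_1,\dots,J\}\).

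Next, suppose for contradiction that \(M(2^{-j})\,2^{-2j}<hR/2^{15}\) for every \(j\geq j_1\). I split \(\sum_i x_i^2\) into three pieces.
\begin{itemize}
\item Entries with \(x_i\geq 2^{-j_1+1}\) contribute at most \(M(2^{-j_1+1})<\frac{15}{16}R\), using \(x_i^2\leq x_i\) and the minimality of \(j_1\). This piece is empty if \(j_1=0\).
\item Entries below \(\tau_{\min}\) contribute at most \(R/256\), as computed above.
\item The remaining entries lie in the layers \(\Lambda_j:=[2^{-j},2^{-j+1})\) for \(j_1\leq j\leq J\), and must therefore carry at least \(\bigl(\frac{31}{32}-\frac{15}{16}-\frac1{256}\bigr)R=\frac{7}{256}R\).
\end{itemize}

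To bound the mass in these layers, write \(m_j:=\sum_{x_i\in\Lambda_j}x_i\). Then \(\sum_{x_i\in\Lambda_j}x_i^2\leq 2^{-j+1}m_j\), and I would bound \(m_j\) in two ways. On one side, \(m_j\leq M(2^{-j})<hR\,2^{2j}/2^{15}\) by the contradiction hypothesis. On the other side, \(m_j\leq R/h\) by the \(\ell^1\) assumption. Each layer therefore contributes at most
\[
\min\Bigl\{\frac{2hR\,2^{j}}{2^{15}},\ \frac{2^{-j+1}R}{h}\Bigr\}.
\]
The two bounds cross near \(2^{j}\approx2^{7.5}/h\), and both sides are geometric in \(j\). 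The total should therefore be at most a small absolute multiple of \(2^{-6.5}R\).

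The main obstacle is the constant bookkeeping in this last step: a crude estimate gives roughly \(0.04R\), which exceeds \(\frac7{256}R\). I would first try to sharpen the layer estimate by summation by parts. Write \(\sum_{x_i\geq\tau_{\min}}x_i^2\leq\sum_j 2^{-j+1}m_j\) and telescope the \(m_j\) against \(M(2^{-j})\). This should save the factor \(2\) coming from \(x_i<2^{-j+1}\), and it lets the top-end terms be charged to the \(M(2^{-j_1+1})\) piece. I would also use the slack between \(\frac{247}{256}R\) and \(\frac{15}{16}R\) in the first step. If the stated constants \(2^{15}\) and \(2^8\) still leave a gap, I would redo the calculation with the exact bound \(2^J<2^9/h\) and the precise geometric sums, rather than the rough factor-of-two estimates used above.
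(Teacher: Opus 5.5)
Your preliminary steps are correct: the tail bound $R/256$, the estimate $M(\tau_{\min})\ge\frac{247}{256}R$, the interval $\{j_1,\dots,J\}$, and the reduction to the quadratic condition. But the proposal stops exactly at the inequality that must produce the contradiction, and the estimate you actually write down does not produce it. Your own bound for the layers is about $0.04R$ against a budget of $\frac{7}{256}R$, and the repair is only sketched, with a hedge. As it stands this is a plan rather than a proof.

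The factor $2$ you lose comes from replacing each $m_j$ termwise by the cumulative quantity $M(2^{-j})$. By Abel summation, $\sum_j2^{-j+1}m_j=\sum_j2^{-j}M(2^{-j})$ up to a boundary term, so the termwise bound charges each $M(2^{-j})$ with weight $2^{-j+1}$ instead of $2^{-j}$. The telescoping you mention is therefore the missing step, not an optional sharpening. In continuous form it is the layer-cake identity $\sum_ix_i^2=\int_0^1M(t)\,dt$, which is the one ingredient the paper's proof rests on.

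With it, your dyadic plan closes with room to spare. Monotonicity gives $\int_{2^{-j}}^{2^{-j+1}}M\le2^{-j}M(2^{-j})$, and $\int_0^{2^{-J}}M\le2^{-J}R/h$. If no $\tau\in\mathcal T_h$ works, then $M(2^{-j})\le\frac{15}{16}R+\min\{R/h,\,2^{2j}hR/2^{15}\}$ for $1\le j\le J$. Since $2^{J+1}>2^{15/2}/h$, the tail $2^{-J}R/h$ is exactly the part $j>J$ of the resulting series. Hence, writing $2^{j^\ast}:=2^{15/2}/h$,
\[
\begin{aligned}
\sum_i x_i^2=\int_0^1M(t)\,dt
&<\frac{15}{16}R+\sum_{j\in\mathbb Z}\min\Bigl\{\frac{2^{-j}R}{h},\frac{2^{j}hR}{2^{15}}\Bigr\}\\
&=\frac{15}{16}R+2^{-15/2}R\sum_{j\in\mathbb Z}2^{-|j-j^\ast|}
\le\frac{15}{16}R+3\cdot2^{-15/2}R<\frac{31}{32}R,
\end{aligned}
\]
because $\sum_{j\in\mathbb Z}2^{-|j-j^\ast|}\le3$ for every real $j^\ast$ and $3<2^{5/2}$. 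The $\tau$ obtained lies in $\mathcal T_h$ by construction.

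Your other fallback, exact bookkeeping without telescoping, can also be pushed through if the tail is bounded by $2^{-J}R/h$ rather than $R/256$. However, the worst case, around $h=2^{-1/2}$, leaves a margin below $10^{-3}R$.

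The paper runs the same identity over continuous $t$, with the stronger threshold $hR/2^{13}$ and the bound $\int_0^1\min\{A,Z/t^2\}\,dt\le2\sqrt{AZ}$. It then rounds $t$ down to a dyadic $\tau$, losing the factor $4$ that turns $2^{13}$ into $2^{15}$. Finally it uses $M(\tau)\le R/h$ to get $\tau\ge h/2^{15/2}$, i.e.\ $\tau\in\mathcal T_h$. Your grid-first route avoids the rounding and the membership check, at the cost of handling the range below $2^{-J}$ separately.
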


\begin{proof}
The layer-cake identity gives
\[
\int_0^1 M(t)\,dt=\sum_i x_i^2.
\]
We first claim that some \(t\in(0,1]\) satisfies
\(M(t)\geq15R/16\) and \(M(t)t^2\geq hR/2^{13}\).
Otherwise, with \(B:=15R/16\) and \(Z:=hR/2^{13}\), we would have
\[
M(t)\leq
\max\left\{B,\min\left\{\frac Rh,\frac{Z}{t^2}\right\}\right\}
\leq
B+\min\left\{\frac Rh,\frac{Z}{t^2}\right\}.
\]
For positive \(A,Z\),
\[
\int_0^1\min\left\{A,\frac{Z}{t^2}\right\}dt
\leq2\sqrt{AZ},
\]
as one sees by splitting at \(t=\sqrt{Z/A}\).  Hence
\[
\int_0^1M(t)\,dt
\leq\frac{15}{16}R+
2\sqrt{\frac Rh\frac{hR}{2^{13}}}
<\frac{31}{32}R,
\]
contradicting the assumed second moment.

Choose a dyadic \(\tau\) with \(t/2<\tau\leq t\).  Monotonicity of
\(M\) preserves the first bound, and the second loses at most a
factor four, giving \eqref{eq:visible-layer-conclusions-v2}.  Finally,
\(M(\tau)\leq R/h\), so the quadratic bound implies
\(\tau\geq h/2^{15/2}\geq h/2^8\), which places \(\tau\) in
\(\mathcal T_h\).
\end{proof}

We now apply the lemma to the low part of the representation
function.  Define
\[
\mathcal H_h:=\{d:p(d)\geq h\},
\qquad
\mathcal L_h:=\{d:p(d)<h\}.
\]
Assume from now on that \(E(S)\geq h^2N^3\).  By
\eqref{eq:sifting-moments}, at least one of
\begin{equation}
\sum_{d\in\mathcal H_h}p(d)^2\geq\frac{h^2N}{32},
\qquad
\sum_{d\in\mathcal L_h}p(d)^2\geq\frac{31h^2N}{32}
\label{eq:sifting-energy-dichotomy-v2}
\end{equation}
must hold.

\begin{lemma}[A visible low layer]
\label{lem:visible-low-layer-v2}
Assume the second alternative in
\eqref{eq:sifting-energy-dichotomy-v2}.  Then some
\(\tau\in\mathcal T_h\) has the following property.  If
\[
Q_\tau:=\{d\in\mathcal L_h:p(d)\geq h\tau\}
\quad\text{and}\quad
\delta_\tau:=\delta(Q_\tau),
\]
then
\begin{align}
\delta_\tau&\geq\frac{15}{16}h,
&
\delta_\tau\tau^2&\geq\frac{h^2}{2^{15}},
&
h\tau&\geq\frac{h^2}{2^8},
&
|Q_\tau|&\leq\frac{\delta_\tau N}{h\tau}.
\label{eq:visible-low-layer-data-v2}
\end{align}
Every \(d\in Q_\tau\) also satisfies \(h\tau\leq p(d)<h\).
\end{lemma}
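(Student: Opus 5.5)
The plan is to apply Lemma~\ref{lem:visible-layer-v2} to the rescaled low part of the representation function. Index the family by \(I:=\mathcal L_h\) and set \(x_d:=p(d)/h\) for \(d\in\mathcal L_h\). Since \(0\leq p(d)<h\) on \(\mathcal L_h\), each \(x_d\) lies in \([0,1)\). I will take \(R:=N\).

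First I check the two hypotheses of Lemma~\ref{lem:visible-layer-v2}. The second alternative in \eqref{eq:sifting-energy-dichotomy-v2} gives
\[
\sum_{d\in\mathcal L_h}x_d^2=h^{-2}\sum_{d\in\mathcal L_h}p(d)^2\geq\tfrac{31}{32}N .
\]
The first moment in \eqref{eq:sifting-moments} gives
\[
\sum_{d\in\mathcal L_h}x_d\leq h^{-1}\sum_d p(d)=N/h .
\]
So the lemma yields \(\tau\in\mathcal T_h\) with \(M(\tau)\geq\frac{15}{16}N\) and \(M(\tau)\tau^2\geq hN/2^{15}\).

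The main step is to translate \(M(\tau)\) into \(\delta_\tau\). The condition \(x_d\geq\tau\) is equivalent to \(p(d)\geq h\tau\), so the indices counted by \(M(\tau)\) are exactly \(Q_\tau\). Since \(|R_Q|=\sum_{d\in Q}r(d)=N\sum_{d\in Q}p(d)\), we get \(\delta(Q)=N^{-1}\sum_{d\in Q}p(d)\). Therefore
\[
M(\tau)=h^{-1}\sum_{d\in Q_\tau}p(d)=\frac{N\delta_\tau}{h}.
\]

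The remaining items then follow in order.
\begin{itemize}
\item \(M(\tau)\geq\frac{15}{16}N\) gives \(\delta_\tau\geq\frac{15}{16}h\).
\item \(M(\tau)\tau^2\geq hN/2^{15}\) gives \(\delta_\tau\tau^2\geq h^2/2^{15}\).
\item For \(h\tau\geq h^2/2^8\), membership in \(\mathcal T_h\) alone only gives \(\tau\geq h/2^9\) because of the ceiling. Instead I reuse the final step of the proof of Lemma~\ref{lem:visible-layer-v2}. Combining \(M(\tau)\leq R/h=N/h\) with \(M(\tau)\tau^2\geq hN/2^{15}\) gives \(\tau^2\geq h^2/2^{15}\), so \(\tau\geq h/2^{15/2}\geq h/2^8\). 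Multiplying by \(h\) gives the claim.
\item Each \(d\in Q_\tau\) has \(p(d)\geq h\tau\), so \(|Q_\tau|\,h\tau\leq\sum_{Q_\tau}p(d)=\delta_\tau N\). This gives the bound on \(|Q_\tau|\).
\item The final assertion \(h\tau\leq p(d)<h\) holds by the definitions of \(Q_\tau\) and \(\mathcal L_h\).
\end{itemize}

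The only delicate point is the lower bound on \(\tau\) with the constant \(2^8\) rather than \(2^9\). It is handled by using the bound \(M(\tau)\leq N/h\) directly, as above, rather than relying on \(\tau\in\mathcal T_h\).
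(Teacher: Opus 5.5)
Your proposal is correct and follows the paper's proof: you rescale \(x_d=p(d)/h\) on \(\mathcal L_h\), apply Lemma~\ref{lem:visible-layer-v2} with \(R=N\), identify \(M(\tau)=N\delta_\tau/h\) via \(\delta(Q)=N^{-1}\sum_{d\in Q}p(d)\), and read off the four bounds and the multiplicity window. Your rederivation of \(\tau\geq h/2^{15/2}\) from \(M(\tau)\leq N/h\) is exactly the ``lower bound for \(\tau\)'' that the paper cites from inside the proof of the dyadic lemma. You are also right that membership in \(\mathcal T_h\) alone would only give \(\tau>h/2^9\).
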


\begin{proof}
For \(d\in\mathcal L_h\), let \(x_d:=p(d)/h\).  The low-energy
assumption and \eqref{eq:sifting-moments} give
\[
\sum_{d\in\mathcal L_h}x_d^2\geq\frac{31}{32}N,
\qquad
\sum_{d\in\mathcal L_h}x_d\leq\frac Nh.
\]
Apply Lemma~\ref{lem:visible-layer-v2} with \(R=N\).  Since
\[
\delta_\tau
=\frac1N\sum_{d\in Q_\tau}p(d)
=\frac{hM(\tau)}N,
\]
the first two conclusions of
\eqref{eq:visible-low-layer-data-v2} follow from
\eqref{eq:visible-layer-conclusions-v2}, while the third follows from
the lower bound for \(\tau\).  Finally,
\[
|Q_\tau|h\tau
\leq\sum_{d\in Q_\tau}p(d)=\delta_\tau N,
\]
which proves the cardinality bound.  The multiplicity window follows
from the definition of \(Q_\tau\).
\end{proof}

The inequality \(h\tau\geq h^2/2^8\) is not used in the deterministic
four-path count.  Its role is deferred to the next section: it keeps
the selected representation threshold at scale \(\Omega(\alpha)\),
where it can be estimated with polynomially many samples.

\subsection{The high-multiplicity branch}
\label{subsec:high-branch-v2}

For \(d\in\mathbb F_2^n\), let \(X_d:=S\cap(S+d)\), so that
\(|X_d|=r(d)\).  If \(Q\) is a set of differences, define
\[
B_Q(d):=
|\{(x,y)\in X_d^2:x+y\notin Q\}|.
\]
The following identity is the counting step behind the high branch.

\begin{lemma}[Weak-pair identity]
\label{lem:weak-pair-identity-v2}
For every \(Q\subseteq\mathbb F_2^n\),
\[
\sum_d B_Q(d)=\sum_{q\notin Q}r(q)^2.
\]
\end{lemma}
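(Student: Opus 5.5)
The proof is a double-counting argument: I will expand both sides as counts of the same set of triples and match them via a bijection.

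\emph{Expanding the left side.} By definition,
\[
\sum_d B_Q(d)=\bigl|\{(d,x,y): x,y\in X_d,\ x+y\notin Q\}\bigr|.
\]
The condition \(x\in X_d=S\cap(S+d)\) means exactly that \(x\in S\) and \(x+d\in S\); the same holds for \(y\). So the left side counts triples \((d,x,y)\) with
\[
x,\ y,\ x+d,\ y+d\in S
\qquad\text{and}\qquad
x+y\notin Q .
\]

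\emph{Reparametrizing.} Put \(q:=x+y\), and set
\[
a:=x,\qquad b:=y,\qquad a':=x+d,\qquad b':=y+d .
\]
Then \((a,b)\) and \((a',b')\) both lie in \(S^2\). Moreover
\[
a'+b'=x+y+2d=x+y=q ,
\]
using characteristic two, so \((a,b)\) and \((a',b')\) are two representations of the same \(q\). Conversely, given \(q\notin Q\) and two pairs \((a,b),(a',b')\in S^2\) with \(a+b=a'+b'=q\), set
\[
x:=a,\qquad y:=b,\qquad d:=a+a'.
\]
Then \(y+d=b+a+a'=q+a'=b'\), so all four points \(x,y,x+d,y+d\) lie in \(S\), and \(x+y=q\notin Q\). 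These two maps are mutually inverse, which gives a bijection.

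\emph{Concluding.} The number of ordered pairs in \(S^2\) summing to \(q\) is \(r(q)\). Hence the number of pairs of such representations is \(r(q)^2\), and summing over \(q\notin Q\) yields the identity.

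The only point needing care is the correspondence itself. One must check that \((a',b')\) is determined by \(d\) together with \((x,y)\), and that, conversely, \(d\) is recovered as \(a+a'\). It is also worth noting the roles of the two orderings. The pair \((x,y)\) is ordered because \(X_d^2\) consists of ordered pairs, and \(r(q)\) counts ordered representations, so the two conventions match.
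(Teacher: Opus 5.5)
Your proposal is correct and follows essentially the same route as the paper: both rewrite \(\sum_d B_Q(d)\) as a count of triples \((d,x,y)\) with \(x,y,x+d,y+d\in S\) and \(x+y\notin Q\), and match these with ordered pairs of ordered representations of \(q=x+y\), recovering \(d\) as \(x+x'\). Your explicit check that the two maps are mutually inverse is just a more detailed write-up of the paper's one-line converse.
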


\begin{proof}
The left side counts triples \((d,x,y)\) with
\(x,y,x+d,y+d\in S\) and \(x+y\notin Q\).  For a fixed
\(q=x+y\notin Q\), such a triple is equivalent to an ordered pair of
ordered representations
\[
q=x+y=(x+d)+(y+d).
\]
Conversely, two ordered representations \(q=x+y=x'+y'\) determine
the unique \(d=x+x'=y+y'\).  Thus the contribution of \(q\) is
\(r(q)^2\).
\end{proof}

Set
\[
\vartheta_{\rm H}:=\frac{h^2}{2^{12}},
\qquad
Q_{\rm H}:=\{q:p(q)\geq\vartheta_{\rm H}\}.
\]
The first-moment identity and the definition of \(Q_{\rm H}\) give
\begin{equation}
|Q_{\rm H}|\leq 2^{12}h^{-2}N,
\qquad
\sum_{q\notin Q_{\rm H}}r(q)^2
\leq \vartheta_{\rm H}N^3.
\label{eq:high-popular-estimates-v2}
\end{equation}
We choose fibres according to the energy-biased distribution
\[
\mu_E(d):=\frac{r(d)^2}{E(S)}.
\]

\begin{proposition}[Weighted high-branch conclusion]
\label{prop:weighted-high-branch-v2}
Assume the first alternative in
\eqref{eq:sifting-energy-dichotomy-v2}.  Then a
\(\mu_E\)-random difference has probability at least
\[
\frac{7}{2^8}h^2=\frac{7}{2^8}\alpha
\]
of producing, after the cleaning below, a set satisfying
Lemma~\ref{lem:high-criterion-v2}.

More precisely, let
\[
\mathcal D_{\rm H}:=
\left\{d\in\mathcal H_h:
B_{Q_{\rm H}}(d)\leq \frac{r(d)^2}{16}\right\}.
\]
For \(d\in\mathcal D_{\rm H}\), put \(X:=X_d\) and
\[
Y_{\rm H}(d):=
\left\{x\in X:
|\{w\in X:x+w\notin Q_{\rm H}\}|
\leq\frac{|X|}{4}\right\}.
\]
Then \(|X|\geq hN\), \(|Y_{\rm H}(d)|\geq3|X|/4\), and
\[
\mu_E(\mathcal D_{\rm H})\geq\frac{7}{2^8}h^2.
\]
\end{proposition}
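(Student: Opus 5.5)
We have to prove three things: \(|X|\geq hN\) for \(d\in\mathcal D_{\rm H}\), \(|Y_{\rm H}(d)|\geq 3|X|/4\), and \(\mu_E(\mathcal D_{\rm H})\geq 7h^2/2^8\). The first two are immediate; the third is a Markov-type weighting argument built on Lemma~\ref{lem:weak-pair-identity-v2} and \eqref{eq:high-popular-estimates-v2}.

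\emph{The easy parts.} Since \(d\in\mathcal H_h\) gives \(p(d)\geq h\), we get \(|X|=r(d)\geq hN\). For the cleaning bound, let \(b(x)\) be the number of \(w\in X\) with \(x+w\notin Q_{\rm H}\). Then \(\sum_{x\in X}b(x)=B_{Q_{\rm H}}(d)\leq |X|^2/16\). Points outside \(Y_{\rm H}(d)\) have \(b(x)>|X|/4\), so by Markov there are fewer than \(|X|/4\) of them, and hence \(|Y_{\rm H}(d)|\geq 3|X|/4\).

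\emph{The measure bound.} Write
\[
\mu_E(\mathcal D_{\rm H})=\mu_E(\mathcal H_h)-\mu_E(\mathcal H_h\setminus\mathcal D_{\rm H}).
\]
For the first term, the first alternative in \eqref{eq:sifting-energy-dichotomy-v2} together with \(E(S)\leq N^3\) (from \(r\leq N\) and \(\sum r=N^2\)) gives
\[
\mu_E(\mathcal H_h)=\frac{N^2\sum_{\mathcal H_h}p^2}{E(S)}\geq \frac{h^2N^3/32}{E(S)}.
\]
For the second term, every \(d\) outside \(\mathcal D_{\rm H}\) has \(r(d)^2<16B_{Q_{\rm H}}(d)\). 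Summing and applying Lemma~\ref{lem:weak-pair-identity-v2} and \eqref{eq:high-popular-estimates-v2},
\[
\sum_{d\in\mathcal H_h\setminus\mathcal D_{\rm H}} r(d)^2< 16\sum_d B_{Q_{\rm H}}(d)=16\sum_{q\notin Q_{\rm H}}r(q)^2\leq 16\vartheta_{\rm H}N^3=\frac{h^2N^3}{2^8}.
\]
Therefore
\[
\mu_E(\mathcal D_{\rm H})\geq \frac{N^3}{E(S)}\Bigl(\frac{h^2}{32}-\frac{h^2}{2^8}\Bigr)=\frac{7h^2}{2^8}\cdot\frac{N^3}{E(S)}\geq \frac{7}{2^8}h^2,
\]
again using \(E(S)\leq N^3\). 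This proves the measure bound.

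\emph{Connecting to the criterion.} Apply Lemma~\ref{lem:high-criterion-v2} with \(Q=Q_{\rm H}\), \(X=X_d\) and \(Y=Y_{\rm H}(d)\). The size data hold: \(|Q_{\rm H}|\leq2^{12}h^{-2}N\leq 2^{13}h^{-2}N\), \(|X|\geq hN\geq\tfrac34hN\), and \(|Y|\geq\tfrac34|X|\geq\tfrac12|X|\). The partner condition also holds, since each \(y\in Y\) has at most \(|X|/4\leq 3|X|/8\) bad partners. These inequalities have slack, which matches the remark that the ideal fibres satisfy strictly stronger conditions than the criteria require.

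The only step needing care is the weak-pair accounting, namely that excluding non-\(\mathcal D_{\rm H}\) fibres costs at most \(2^{-8}h^2\) of energy mass. This comes out exactly with the chosen \(\vartheta_{\rm H}\), so the constant \(7/2^8\) is tight for this argument.
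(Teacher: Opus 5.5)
Your proof is correct and follows the paper's argument essentially step for step. Like the paper, you bound the squared-representation mass on $\mathcal H_h$ below by $h^2N^3/32$, discard the fibres with $r(d)^2<16B_{Q_{\rm H}}(d)$ using the weak-pair identity and \eqref{eq:high-popular-estimates-v2}, normalize by $E(S)\le N^3$, and obtain $|Y_{\rm H}(d)|\ge 3|X|/4$ by the same Markov bound on weak degrees. The only difference is cosmetic: you carry out the subtraction in $\mu_E$-measure rather than in raw squared mass.
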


\begin{proof}
The high-energy alternative gives
\[
\sum_{d\in\mathcal H_h}r(d)^2\geq\frac{h^2N^3}{32}.
\]
If \(d\in\mathcal H_h\setminus\mathcal D_{\rm H}\), then
\(r(d)^2<16B_{Q_{\rm H}}(d)\).  Lemma
\ref{lem:weak-pair-identity-v2} and
\eqref{eq:high-popular-estimates-v2} therefore imply
\[
\sum_{d\in\mathcal H_h\setminus\mathcal D_{\rm H}}r(d)^2
<16\sum_dB_{Q_{\rm H}}(d)
\leq\frac{h^2N^3}{2^8}.
\]
Subtracting leaves at least \(7h^2N^3/2^8\) squared-representation
mass on \(\mathcal D_{\rm H}\).  Since \(E(S)\leq N^3\), this proves
the stated \(\mu_E\)-mass.

For \(d\in\mathcal D_{\rm H}\), the inclusion
\(d\in\mathcal H_h\) gives \(|X_d|=r(d)\geq hN\).  Moreover,
the sum of the weak degrees over \(X_d\) is
\(B_{Q_{\rm H}}(d)\leq|X_d|^2/16\).  Fewer than \(|X_d|/4\) points
can therefore have weak degree exceeding \(|X_d|/4\), so the cleaning
retains at least three quarters of the fibre.  Together with
\eqref{eq:high-popular-estimates-v2}, these stronger estimates imply
the hypotheses of Lemma~\ref{lem:high-criterion-v2}.
\end{proof}

\subsection{The low-multiplicity branch}
\label{subsec:low-branch-v2}

We use a weighted relation-sifting lemma.  Let
\(R\subseteq S^2\) be symmetric, write
\[
\Gamma_R(a):=\{x\in S:(a,x)\in R\},
\qquad
d_R(a):=|\Gamma_R(a)|,
\qquad
\delta_R:=\frac{|R|}{N^2},
\]
and define \(c_R(x,y):=|\Gamma_R(x)\cap\Gamma_R(y)|\).
The squared-degree mass
\[
Z_R:=\sum_a d_R(a)^2
\]
satisfies \(Z_R\geq\delta_R^2N^3\) by Cauchy--Schwarz.  We use the
probability distribution \(\mu_R(a):=d_R(a)^2/Z_R\).

\begin{lemma}[Weighted relation sifting]
\label{lem:weighted-relation-sifting-v2}
Let \(R\subseteq S^2\) be symmetric with density \(\delta_R>0\), and
put \(\kappa:=\delta_R^2/2^{10}\).  Suppose a fixed symmetric graph
\(\mathcal G\subseteq S^2\) contains every pair with
\(c_R(x,y)\geq2\kappa N\).

For \(a\in S\), let \(B_{\mathcal G}(a)\) be the number of ordered
nonedges of \(\mathcal G\) inside \(\Gamma_R(a)^2\).  Then
\[
\mu_R\left\{a:
d_R(a)\geq\frac{\delta_RN}{2}
\ \text{and}\
B_{\mathcal G}(a)\leq\frac{d_R(a)^2}{16}
\right\}
\geq\frac{15}{32}.
\]
\end{lemma}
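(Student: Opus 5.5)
We bound the \(\mu_R\)-mass of the two bad events separately: the low-degree vertices and the vertices whose neighbourhood contains many ordered nonedges of \(\mathcal G\).

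\emph{Low degree.} Every \(a\) with \(d_R(a)<\delta_RN/2\) has \(d_R(a)^2<\delta_R^2N^2/4\). Summing over at most \(N\) such points gives squared-degree mass below \(\delta_R^2N^3/4\). Since \(Z_R\geq\delta_R^2N^3\), the low-degree set has \(\mu_R\)-mass at most \(1/4\).

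\emph{Many nonedges.} Here we use the standard double-counting identity
\[
\sum_a B_{\mathcal G}(a)
=\sum_{(x,y)\notin\mathcal G}c_R(x,y).
\]
Both sides count triples \((a,x,y)\) with \(x,y\in\Gamma_R(a)\) and \((x,y)\notin\mathcal G\). This uses the symmetry of \(R\): the condition \(x\in\Gamma_R(a)\) is the same as \(a\in\Gamma_R(x)\). By hypothesis, every ordered nonedge \((x,y)\) has \(c_R(x,y)<2\kappa N\). There are at most \(N^2\) such pairs, so
\[
\sum_a B_{\mathcal G}(a)<2\kappa N^3=\frac{\delta_R^2N^3}{2^9}.
\]
Now suppose \(B_{\mathcal G}(a)>d_R(a)^2/16\). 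Then \(d_R(a)^2<16B_{\mathcal G}(a)\), and summing over such \(a\) gives squared-degree mass below
\[
16\cdot\frac{\delta_R^2N^3}{2^9}=\frac{\delta_R^2N^3}{32}\leq\frac{Z_R}{32}.
\]
So this bad set has \(\mu_R\)-mass below \(1/32\).

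\emph{Conclusion.} The good set therefore has \(\mu_R\)-mass at least \(1-\tfrac14-\tfrac1{32}=\tfrac{23}{32}\geq\tfrac{15}{32}\). This is the same weighted accounting as in Proposition~\ref{prop:weighted-high-branch-v2}, with the weak-pair identity of Lemma~\ref{lem:weak-pair-identity-v2} replaced by the codegree identity.

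The only step needing care is the codegree identity, specifically that symmetry of \(R\) turns \(|\Gamma_R(x)\cap\Gamma_R(y)|\) into a count over centres \(a\). Whether diagonal pairs \((x,x)\) belong to \(\mathcal G\) does not matter, because the bound \(c_R<2\kappa N\) is applied to every ordered nonedge, diagonal or not. The constants leave slack. If the intended convention instead bounds the low-degree set more crudely, for example by \(1/2\), the total bad mass is still \(\tfrac12+\tfrac1{32}=\tfrac{17}{32}\), which again leaves the stated \(15/32\).
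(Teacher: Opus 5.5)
Your proof is correct and follows the paper's argument: a union bound over the low-degree centres and the centres with many nonedges, with the latter controlled by the same double-counting identity $\sum_a B_{\mathcal G}(a)=\sum_{(x,y)\notin\mathcal G}c_R(x,y)<2\kappa N^3$. The one difference is the low-degree estimate. You bound that mass by $N(\delta_RN/2)^2$ and get $1/4$, whereas the paper uses $d_R(a)^2\le\frac{\delta_RN}{2}d_R(a)$ together with the first moment and gets $1/2$, which is why its final constant is exactly $15/32$.
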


\begin{proof}
Centres with \(d_R(a)<\delta_RN/2\) carry at most half of the
squared-degree mass, because
\[
\sum_{d_R(a)<\delta_RN/2}d_R(a)^2
\leq\frac{\delta_RN}{2}\sum_a d_R(a)
=\frac{\delta_R^2N^3}{2}
\leq\frac{Z_R}{2}.
\]

For the nonedge condition, double counting gives
\[
\sum_a B_{\mathcal G}(a)
=
\sum_{(x,y)\notin\mathcal G}c_R(x,y)
<2\kappa N^3,
\]
where the final inequality uses the contrapositive of the graph
assumption.  Hence the centres with
\(B_{\mathcal G}(a)>d_R(a)^2/16\) carry squared-degree mass less than
\(32\kappa N^3= \delta_R^2N^3/32\leq Z_R/32\).
A union bound under \(\mu_R\) leaves mass at least
\(1-1/2-1/32=15/32\).
\end{proof}

Assume now the low-energy alternative in
\eqref{eq:sifting-energy-dichotomy-v2}, and let
\(\tau,Q_\tau,\delta_\tau\) be supplied by
Lemma~\ref{lem:visible-low-layer-v2}.  Set
\[
R_\tau:=R_{Q_\tau},
\qquad
\kappa_\tau:=\frac{\delta_\tau^2}{2^{10}}.
\]
Let \(\mathcal G_\tau\subseteq S^2\) be any fixed symmetric graph
satisfying the two implications
\begin{equation}
c_{Q_\tau}(x,y)\geq2\kappa_\tau N
\ \Longrightarrow\
(x,y)\in\mathcal G_\tau
\ \Longrightarrow\
c_{Q_\tau}(x,y)\geq\kappa_\tau N.
\label{eq:low-graph-sandwich-v2}
\end{equation}
The exact threshold graph
\(\{(x,y):c_{Q_\tau}(x,y)\geq\kappa_\tau N\}\) is one example.

\begin{proposition}[Weighted low-branch conclusion]
\label{prop:weighted-low-branch-v2}
Under the preceding assumptions, a
\(\mu_{R_\tau}\)-random centre has probability at least \(15/32\) of
producing, after the cleaning below, a set satisfying
Lemma~\ref{lem:low-criterion-v2}.

Indeed, for a centre \(a\) supplied by
Lemma~\ref{lem:weighted-relation-sifting-v2}, let
\(X:=\Gamma_{R_\tau}(a)\) and define
\[
Y_{\rm L}(a):=
\left\{x\in X:
|\{w\in X:(x,w)\notin\mathcal G_\tau\}|
\leq\frac{|X|}{4}\right\}.
\]
Then
\[
|X|\geq\frac{\delta_\tau N}{2},
\qquad
|Y_{\rm L}(a)|\geq\frac34|X|,
\]
and these stronger estimates imply all the hypotheses of
Lemma~\ref{lem:low-criterion-v2} with \(Q=Q_\tau\),
\(\delta=\delta_\tau\), \(\kappa=\kappa_\tau\), and
\(\mathcal G=\mathcal G_\tau\).
\end{proposition}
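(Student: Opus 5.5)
The argument has three steps: apply Lemma~\ref{lem:weighted-relation-sifting-v2} to the relation \(R=R_\tau\), bound the cleaned set by Markov, and check the hypotheses of Lemma~\ref{lem:low-criterion-v2} one at a time.

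\textbf{Step 1: the sifting lemma.} The relation \(R_\tau=R_{Q_\tau}\) is symmetric, because \(x+y=y+x\). Its density is \(\delta_R=\delta(Q_\tau)=\delta_\tau>0\) by Lemma~\ref{lem:visible-low-layer-v2}. Note that \(c_R(x,y)=|\Gamma_R(x)\cap\Gamma_R(y)|\) is exactly \(c_{Q_\tau}(x,y)\), and that \(\kappa=\delta_R^2/2^{10}=\kappa_\tau\). The first implication in \eqref{eq:low-graph-sandwich-v2} is precisely the graph hypothesis of the lemma. Hence the lemma gives \(\mu_{R_\tau}\)-mass at least \(15/32\) of centres \(a\) with
\[
d_R(a)=|X|\geq\delta_\tau N/2,
\qquad
B_{\mathcal G_\tau}(a)\leq|X|^2/16.
\]

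\textbf{Step 2: Markov cleaning.} As in the high branch, \(B_{\mathcal G_\tau}(a)\) is the sum over \(x\in X\) of the number of non-neighbours of \(x\) in \(X\). Fewer than \(|X|/4\) points can therefore have more than \(|X|/4\) non-neighbours, so \(|Y_{\rm L}(a)|\geq 3|X|/4\).

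\textbf{Step 3: the hypotheses of Lemma~\ref{lem:low-criterion-v2}.} Take \(\delta_0=\delta_\tau\).
\begin{itemize}
\item From \eqref{eq:visible-low-layer-data-v2}, \(\delta_\tau\geq\frac{15}{16}h\geq\frac78h\).
\item Also \(\delta_\tau\tau^2\geq h^2/2^{15}\geq 2^{-16}h^2\).
\item The bound \(|Q_\tau|\leq\delta_\tau N/(h\tau)\) is within the allowed \(2\delta_\tau N/(h\tau)\).
\item We have \(|X|\geq\delta_\tau N/2\geq 3\delta_\tau N/8\).
\item We have \(|Y|\geq\frac34|X|\geq\frac12|X|\).
\item We have \(\kappa_\tau=\delta_\tau^2/2^{10}\geq 2^{-12}\delta_\tau^2\).
\item Every edge of \(\mathcal G_\tau\) has \(Q_\tau\)-codegree at least \(\kappa_\tau N\), by the second implication in \eqref{eq:low-graph-sandwich-v2}.
\item Each \(y\in Y_{\rm L}(a)\) has at most \(|X|/4\leq 3|X|/8\) non-neighbours in \(X\).
\end{itemize}
The lemma's condition on \(\mathcal G\) concerns non-neighbours within \(X\), where \(X\) here is \(\Gamma_R(a)\subseteq S\). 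The codegree in that lemma is the \(Q\)-codegree over \(S\), which matches \(c_{Q_\tau}\).

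The only point needing care is the identification in Step 1: that the \(R\)-common neighbourhood equals the \(Q_\tau\)-codegree. This holds because \(z\in\Gamma_R(x)\) exactly when \(x+z\in Q_\tau\) with \(z\in S\). Everything else is direct comparison of constants, and each required inequality holds with slack.
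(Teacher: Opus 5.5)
Your proposal is correct and is essentially the paper's own argument. Like the paper, you invoke Lemma~\ref{lem:weighted-relation-sifting-v2} through the left-hand implication of \eqref{eq:low-graph-sandwich-v2} and the identity \(c_{R_\tau}=c_{Q_\tau}\), clean via the Markov count on ordered nonedges, and read off the hypotheses of Lemma~\ref{lem:low-criterion-v2} from \eqref{eq:visible-low-layer-data-v2}, the right-hand implication, and \(\kappa_\tau=2^{-10}\delta_\tau^2\); your constant-by-constant check just spells out what the paper compresses into one line.
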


\begin{proof}
The completeness implication in
\eqref{eq:low-graph-sandwich-v2} allows us to apply
Lemma~\ref{lem:weighted-relation-sifting-v2}, because the relation
codegree for \(R_\tau\) is exactly \(c_{Q_\tau}\).  For a resulting
good centre \(a\), its neighbourhood \(X\) has the asserted size and
contains at most \(|X|^2/16\) ordered graph nonedges.  Thus fewer than
\(|X|/4\) vertices have nonedge degree exceeding \(|X|/4\), proving
the lower bound for \(Y_{\rm L}(a)\).

The layer estimates are exactly
\eqref{eq:visible-low-layer-data-v2}; the right-hand implication in
\eqref{eq:low-graph-sandwich-v2} supplies the genuine codegree lower
bound required by Lemma~\ref{lem:low-criterion-v2}; and
\(\kappa_\tau=2^{-10}\delta_\tau^2\) by definition.
\end{proof}

\subsection{Weighted sifting}
\label{subsec:weighted-sifting-theorem-v2}

We can now collect the two branches in the form needed by the
randomized construction.

\begin{theorem}[Weighted deterministic sifting]
\label{thm:weighted-deterministic-sifting-v2}
Let \(S\subseteq\mathbb F_2^n\) be nonempty and suppose that
\(E(S)\geq\alpha|S|^3\), with \(\alpha\in(0,1]\).  Put
\(h:=\sqrt\alpha\).  Then at least one of the following alternatives
holds.

\begin{enumerate}
\item[\textnormal{(H)}]
Under the energy-biased distribution \(\mu_E\), a set of differences
of probability at least \(7\alpha/2^8\) gives, after the cleaning in
Proposition~\ref{prop:weighted-high-branch-v2}, a set \(Y\subseteq S\)
such that
\[
|Y|\geq\frac38h|S|,
\qquad
|Y+Y|\leq\frac{2^{33}}9\,\alpha^{-3}|Y|.
\]

\item[\textnormal{(L)}]
There is a threshold \(\tau\in\mathcal T_h\) and a difference layer
\(Q_\tau\) satisfying \eqref{eq:visible-low-layer-data-v2}.  For every
fixed symmetric graph satisfying
\eqref{eq:low-graph-sandwich-v2}, the squared-degree distribution
\(\mu_{R_\tau}\) assigns probability at least \(15/32\) to centres
whose cleaning in Proposition~\ref{prop:weighted-low-branch-v2}
produces a set \(Y\subseteq S\) such that
\[
|Y|\geq\frac{21}{128}h|S|,
\qquad
|Y+Y|\leq\frac{2^{69}}9\,\alpha^{-4}|Y|.
\]
\end{enumerate}

In particular, one of the two branches contains a set \(Y\subseteq S\)
with
\[
|Y|\geq\frac{21}{128}\sqrt\alpha\,|S|,
\qquad
|Y+Y|\leq\frac{2^{69}}9\,\alpha^{-4}|Y|.
\]
\end{theorem}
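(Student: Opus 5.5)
The theorem is an assembly of the preceding propositions, so the plan is to split on the energy dichotomy \eqref{eq:sifting-energy-dichotomy-v2} and feed each alternative into the corresponding branch result.

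First, note that the energy hypothesis \(E(S)\geq\alpha N^3=h^2N^3\) combines with the second-moment identity in \eqref{eq:sifting-moments} to give \(\sum_d p(d)^2\geq h^2N\). Since \(\mathcal H_h\) and \(\mathcal L_h\) partition the differences, at least one of the two inequalities in \eqref{eq:sifting-energy-dichotomy-v2} holds; otherwise the sum would be below \((1/32+31/32)h^2N\).

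\emph{High alternative.} Here Proposition~\ref{prop:weighted-high-branch-v2} gives the set \(\mathcal D_{\rm H}\) with \(\mu_E(\mathcal D_{\rm H})\geq 7h^2/2^8=7\alpha/2^8\). I would check that for each \(d\in\mathcal D_{\rm H}\) the pair \(X=X_d\), \(Y=Y_{\rm H}(d)\) with \(Q=Q_{\rm H}\) meets the hypotheses of Lemma~\ref{lem:high-criterion-v2}:
\[
|Q_{\rm H}|\leq 2^{12}h^{-2}N\leq 2^{13}h^{-2}N,\qquad |X|\geq hN\geq\tfrac34hN,\qquad |Y|\geq\tfrac34|X|\geq\tfrac12|X|.
\]
Moreover, each \(y\in Y\) has at most \(|X|/4\leq 3|X|/8\) weak partners. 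The lemma then yields \(|Y|\geq\frac38h|S|\) and \(|Y+Y|\leq\frac{2^{33}}9h^{-6}|Y|=\frac{2^{33}}9\alpha^{-3}|Y|\), which is (H).

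\emph{Low alternative.} Lemma~\ref{lem:visible-low-layer-v2} supplies \(\tau\in\mathcal T_h\) and \(Q_\tau\) satisfying \eqref{eq:visible-low-layer-data-v2}. Fix any graph satisfying \eqref{eq:low-graph-sandwich-v2}. Proposition~\ref{prop:weighted-low-branch-v2}, built on Lemma~\ref{lem:weighted-relation-sifting-v2}, gives \(\mu_{R_\tau}\)-mass at least \(15/32\) of good centres. The only real check is that the stated data meet the weaker hypotheses of Lemma~\ref{lem:low-criterion-v2} with \(\delta_0=\delta_\tau\):
\[
\delta_\tau\geq\tfrac{15}{16}h\geq\tfrac78h,\qquad \delta_\tau\tau^2\geq 2^{-15}h^2\geq2^{-16}h^2,\qquad |Q_\tau|\leq\frac{\delta_\tau N}{h\tau}\leq\frac{2\delta_\tau N}{h\tau},
\]
\[
|X|\geq\tfrac{\delta_\tau N}2\geq\tfrac{3\delta_\tau N}{8},\qquad \kappa_\tau=2^{-10}\delta_\tau^2\geq2^{-12}\delta_\tau^2,\qquad |Y|\geq\tfrac34|X|,
\]
and nonedge degrees are at most \(|X|/4\). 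The lemma gives \(|Y|\geq\frac{21}{128}h|S|\) and doubling at most \(\frac{2^{69}}9h^{-8}=\frac{2^{69}}9\alpha^{-4}\).

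\emph{Final claim.} This requires that the good set be nonempty in whichever branch holds. Positive probability under \(\mu_E\) or \(\mu_{R_\tau}\) ensures this, and the exact threshold graph shows that a graph satisfying \eqref{eq:low-graph-sandwich-v2} exists. The (H) bounds are stronger than the final ones, since \(\frac38\geq\frac{21}{128}\) and \(\frac{2^{33}}9\alpha^{-3}\leq\frac{2^{69}}9\alpha^{-4}\) for \(\alpha\leq1\).

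There is no real obstacle. The only care needed is in the routine constant comparisons above, and in observing that \(\delta_{R_\tau}=\delta_\tau>0\), so that \(\mu_{R_\tau}\) is well defined.
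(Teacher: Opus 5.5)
Your proposal is correct and follows the paper's proof. You split on the energy dichotomy \eqref{eq:sifting-energy-dichotomy-v2}; in the high case you apply Proposition~\ref{prop:weighted-high-branch-v2} with Lemma~\ref{lem:high-criterion-v2}, and in the low case Lemma~\ref{lem:visible-low-layer-v2}, Proposition~\ref{prop:weighted-low-branch-v2} and Lemma~\ref{lem:low-criterion-v2}; you then absorb the \(\alpha^{-3}\) bound into \(\alpha^{-4}\) using \(\alpha\leq1\). The constant comparisons you spell out are the ones the paper leaves inside the two propositions, and your remarks on nonemptiness of the good sets and the existence of a sandwich graph are harmless additions.
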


\begin{proof}
The energy dichotomy \eqref{eq:sifting-energy-dichotomy-v2} gives the
high or low alternative.  In the first case apply
Proposition~\ref{prop:weighted-high-branch-v2} and
Lemma~\ref{lem:high-criterion-v2}; in the second apply
Lemma~\ref{lem:visible-low-layer-v2},
Proposition~\ref{prop:weighted-low-branch-v2}, and
Lemma~\ref{lem:low-criterion-v2}.  Since \(\alpha\leq1\), the
high-branch bound \(O(\alpha^{-3})\) is also bounded by the displayed
\(O(\alpha^{-4})\) estimate.
\end{proof}

The two proposal probabilities have different scales for a genuine
reason.  In the low branch, after the relation is fixed, good centres
have constant mass under the squared-degree distribution.  In the
high branch, good fibres carry \(\Omega(\alpha N^3)\)
squared-representation mass, while the total energy may be as large
as \(N^3\), so an unconditional probability of order \(\alpha\) is
the natural guarantee.

The next section estimates the representation layers and codegrees
from samples.  It will use the weighted conclusions above for
completeness, while validating every accepted candidate against the
deterministic criteria for soundness.

\section{Persistent randomized BSG}
\label{sec:persistent-bsg-v2}

Let \(S\subseteq\mathbb F_2^n\) be nonempty, write \(N:=|S|\), and
fix \(\alpha\in(0,1]\).  As in the preceding section, put
\(h:=\sqrt\alpha\).

We now turn the deterministic argument of
Section~\ref{sec:deterministic-sifting-v2} into a randomized
algorithm.  The main issue is consistency.  Once the algorithm
returns a set, every later membership query must refer to that same
random set.  We therefore sample all data defining the output before
returning it.  Fresh randomness is used to test proposed fibres or
centres, but these tests are not part of the final membership
predicate.

The algorithm estimates the representation function once, constructs
one high-popularity set and finitely many low-multiplicity layers,
proposes candidates from the weighted distributions of
Section~\ref{sec:deterministic-sifting-v2}, and tests them using fresh
samples.  An independent cleaning sample defines the final subset.

Set
\[
c_{\rm BSG}:=\frac{21}{128},\qquad
C_{\rm BSG}:=\frac{2^{69}}9,\qquad
C_{\rm BSG,H}:=\frac{2^{33}}9.
\]

\begin{theorem}[Persistent randomized BSG]
\label{thm:persistent-bsg-v2}
There is a randomized algorithm with the following properties.
It receives deterministic membership access to a nonempty set
\(S\subseteq\mathbb F_2^n\), independent exact uniform samples from
\(S\), and parameters
\(\alpha,\rho_{\rm snd},\rho_{\rm cmp}\in(0,1)\).
It always terminates and returns either \(\mathsf{FAIL}\) or a finite
record \(D\).

Every nonfailure record \(D\) defines one deterministic set
\(Y_D\subseteq S\).  For an arbitrary input set \(S\), without an
energy assumption,
\begin{equation}
\Pr\!\left[
D\neq\mathsf{FAIL}
\ \text{and}\
\left(
|Y_D|<c_{\rm BSG}\sqrt\alpha\,|S|
\ \text{or}\
|Y_D+Y_D|>C_{\rm BSG}\alpha^{-4}|Y_D|
\right)
\right]
\leq \rho_{\rm snd}.
\label{eq:persistent-bsg-soundness-v2}
\end{equation}
If the output has a high-branch tag, then on the same soundness event
it satisfies the stronger estimate
\[
|Y_D+Y_D|\leq C_{\rm BSG,H}\alpha^{-3}|Y_D|.
\]

If, in addition,
\begin{equation}
E(S)\geq\alpha|S|^3,
\label{eq:persistent-bsg-energy-v2}
\end{equation}
then
\begin{equation}
\Pr[D=\mathsf{FAIL}]\leq\rho_{\rm cmp}.
\label{eq:persistent-bsg-completeness-v2}
\end{equation}

Once \(D\) is fixed, membership in \(Y_D\) is deterministic, and
every later query refers to this same set.  If
\[
\Lambda:=
1+n+\log\frac1\alpha+
\log\frac1{\rho_{\rm snd}}+
\log\frac1{\rho_{\rm cmp}},
\]
then the compilation uses
\[
O(\alpha^{-2}\Lambda^4)
\quad\text{samples from \(S\), and}\quad
O(\alpha^{-4}\Lambda^5)
\quad\text{membership queries to \(S\)}.
\]
One membership query to \(Y_D\) uses
\(O(\alpha^{-4}\Lambda^3)\) membership queries to \(S\), and the
record length is \(O(n\alpha^{-2}\Lambda)\) bits.  All these bounds
are worst-case over the random choices of the algorithm.
\end{theorem}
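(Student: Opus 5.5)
The algorithm mirrors Theorem~\ref{thm:weighted-deterministic-sifting-v2}. All random data defining the output are drawn and frozen first; candidate selection then uses fresh, independent randomness; and soundness comes from validating every accepted candidate against the deterministic criteria of Lemmas~\ref{lem:high-criterion-v2} and \ref{lem:low-criterion-v2}, never from the energy hypothesis.

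\emph{Step 1 (estimating representations).} Draw a block of \(O(\alpha^{-2}\Lambda)\) samples from \(S\). For a difference \(d\), estimate \(p(d)=|S\cap(S+d)|/N\) as the fraction of samples \(x\) with \(x+d\in S\), using one membership query per sample. Since every threshold we need (\(\vartheta_{\rm H}=h^2/2^{12}\), \(h\tau\geq h^2/2^8\), \(h\)) lives at scale \(\Omega(\alpha)\), Chernoff plus a union bound over all \(2^n\) differences (hence the \(n\) in \(\Lambda\)) gives accuracy \(c\alpha\) simultaneously for every \(d\), outside an event of probability \(\ll\rho_{\rm snd}\). We then define frozen sets by rounded tests, \(\widehat Q_{\rm H}\) and \(\widehat Q_\tau\) for \(\tau\in\mathcal T_h\). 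On the good estimation event these are sandwiched between the ideal sets at slightly shifted thresholds, which is why the criteria lemmas were stated with slack (\(2^{13}\) versus \(2^{12}\), \(3|X|/8\) versus \(|X|/4\), \(15/16\) versus \(7/8\)). The low-branch graph \(\mathcal G_\tau\) is defined by estimating \(c_{Q_\tau}(x,y)/N\) through the frozen sample block, testing membership of \(x+z\) and \(y+z\) in \(\widehat Q_\tau\). After a union bound over pairs, this realizes a graph obeying the sandwich \eqref{eq:low-graph-sandwich-v2}, with the cutoff placed between \(\kappa\) and \(2\kappa\).

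\emph{Step 2 (proposals).} For the high branch, a \(\mu_E\) sample is obtained as \(d=x_1+x_2\) from an exact additive quadruple \((x_1,x_2,x_3,x_4)\). We generate such quadruples by rejection: draw \(x_1,x_2,x_3\) and test whether \(x_1+x_2+x_3\in S\). For the low branch, \(\mu_{R_\tau}\) centres are drawn by sampling a pair \(x,y\) and an \(a\), accepting when both \((a,x)\) and \((a,y)\) lie in \(R_\tau\). Under the energy hypothesis the success probabilities are \(\Omega(\alpha)\) for the high branch by Proposition~\ref{prop:weighted-high-branch-v2} and constant for the low branch. Hence \(O(\alpha^{-1}\Lambda)\) attempts per branch and per dyadic \(\tau\) suffice for completeness.

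\emph{Step 3 (validation and cleaning).} Each proposed fibre or centre \(X\) is tested with fresh samples. We estimate \(|X|/N\) and, for sampled \(x\in X\), the weak degree of \(x\). A candidate is accepted only if the empirical values pass the ideal thresholds with margin. Then an independent cleaning sample \(W\subseteq X\) is drawn, and the record \(D\) consists of the branch tag, the difference or centre, the frozen estimation sample, and \(W\). Membership of \(y\) in \(Y_D\) is defined deterministically: \(y\in X\), and the weak-degree count of \(y\) against \(W\) is at most \(\tfrac{5}{16}|W|\). This predicate costs \(O(\alpha^{-4}\Lambda^3)\) queries in the low branch because of nested codegree estimates.

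\emph{Main obstacle: promise-free soundness.} The cleaned set is defined by empirical degrees, yet Lemma~\ref{lem:high-criterion-v2} needs the true bound ``at most \(3|X|/8\) non-partners'' for every \(y\in Y_D\) together with \(|Y_D|\geq|X|/2\). I would handle this by a union bound over all \(y\in S\): this costs only \(n\) in the exponent, because \(|W|=O(\Lambda)\) samples make every individual degree estimate accurate to \(1/16\) with probability \(1-2^{-n}\rho_{\rm snd}\). Then every \(y\) passing the empirical \(5/16\) test has true weak degree at most \(3/8\), while all ideal-good points (true degree at most \(1/4\)) pass. The size condition \(|Y_D|\geq|X|/2\) is certified by the acceptance test, using a sample estimate of the fraction of \(X\) passing. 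Since the criteria lemmas are deterministic, any accepted record outside the bad estimation events is valid regardless of \(E(S)\), which gives \eqref{eq:persistent-bsg-soundness-v2}. Completeness under \eqref{eq:persistent-bsg-energy-v2} follows from the mass bounds of Theorem~\ref{thm:weighted-deterministic-sifting-v2}, because ideal-good candidates pass the tests with high probability. Resource counts are obtained by multiplying the block sizes.
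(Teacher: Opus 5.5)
The overall architecture is the paper's: a frozen representation sketch, exact weighted proposals, fresh validation, independent cleaning with uniform accuracy over all $2^n$ points, and soundness from the deterministic criteria. Your high branch goes through as written. The gap is in the low branch: nothing in your plan certifies the \emph{layer-level} hypotheses of Lemma~\ref{lem:low-criterion-v2}.

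That lemma needs one number $\delta_0$, known to the algorithm, with
\begin{itemize}
\item $\delta_0\geq\frac78h$, $\delta_0\tau^2\geq2^{-16}h^2$, and $|\widehat Q_\tau|\leq2\delta_0N/(h\tau)$ for the compiled layer actually used;
\item the degree threshold (certifying $|X_a|>\frac38\delta_0N$) and the graph cutoff ($\kappa\geq2^{-12}\delta_0^2$) tied to this same $\delta_0$.
\end{itemize}
The sandwich \eqref{eq:low-layer-sandwich-v2} only gives $|\widehat Q_\tau|\leq\frac32\delta N/(h\tau)$, where $\delta$ is the \emph{unknown} true relation density of $\widehat Q_\tau$. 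You search every $\tau\in\mathcal T_h$ on an input with no energy promise, so a searched layer need not satisfy $\delta\gtrsim h$ or $\delta\tau^2\gtrsim h^2$.

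Your thresholds (``ideal thresholds with margin'', a cutoff ``between $\kappa$ and $2\kappa$'') are expressed through $\delta_\tau,\kappa_\tau$ of the visible layer. The algorithm neither knows these nor can locate that layer. If you instead use thresholds depending only on $h$, the count \eqref{eq:low-four-path-count-v2} no longer yields $C_{\rm BSG}\alpha^{-4}$. The upper bound on $|Q|^4$ scales with $\delta^4$, while your lower bounds on $|X|$ and $\kappa$ would not.

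The missing step is the paper's layer-density test. Estimate each layer's relation density from a fresh sample, take the lower confidence value $\delta_{\tau,0}$, and retain only layers passing \eqref{eq:density-test-v2}. On the accuracy event $\delta_{\tau,0}\leq\delta\leq\frac{57}{56}\delta_{\tau,0}$, which gives $|\widehat Q_\tau|<2\delta_{\tau,0}N/(h\tau)$. The visible layer is retained, since $\widehat Q_\tau\supseteq Q_\tau$ gives it true density at least $\frac{15}{16}h$. The $15/16$ versus $7/8$ slack you cite is exactly the room for this estimate; it is not a consequence of the sandwich.

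Three further points need repair.

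First, $\mathcal G_\tau$ must be built from a fresh codegree sample, drawn after $\widehat Q_\tau$ and $\delta_0$ are fixed. If you reuse the representation block, the summands $\mathbf 1_{\widehat Q_\tau}(x+Z_i)\mathbf 1_{\widehat Q_\tau}(y+Z_i)$ depend on the whole block through $\widehat Q_\tau$. Hoeffding plus a union bound over pairs is then not justified.

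Second, low-branch completeness is not inherited from Theorem~\ref{thm:weighted-deterministic-sifting-v2}. Your proposal law, the sets $X_a$, and the graph all refer to $R_{\widehat Q_\tau}\supseteq R_{Q_\tau}$, so a centre good for the ideal relation need not be good for the compiled one. You must rerun the relation-sifting argument for $R_{\widehat Q_\tau}$ with cutoff $\kappa_0=\delta_0^2/2^{12}\leq\delta^2/2^{12}$, as in Lemma~\ref{lem:compiled-good-mass-v2}.

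Third, ``multiplying the block sizes'' does not by itself give the stated worst-case bounds. Suppose you validate every accepted one of your $O(\alpha^{-1}\Lambda)$ low proposal attempts per layer. Suppose also you add a sampled test of $|Y_D|/|X|$, each probe costing a $Y_D$-query of $O(\alpha^{-4}\Lambda^3)$. Either of these can exceed $O(\alpha^{-4}\Lambda^5)$ membership queries. The paper requests only $O(\Lambda)$ low candidates per layer, because good centres have constant mass under the exact squared-degree law. It also needs no size test: validation certifies $\omega_X<1/8$, so by Markov at least half of $X$ has true bad degree at most $1/4$. All those points pass the $5/16$ cleaning threshold on the uniform-accuracy event, giving $|Y_D|\geq|X|/2$ for free.
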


\subsection{The construction}
\label{subsec:persistent-bsg-algorithm-v2}

The exact block lengths and error allocations are given in
Appendix~\ref{app:persistent-bsg-details-v2}.  The following
description records the logical order of the algorithm.

\begin{algorithm}[t]
\caption{\textsc{PersistentBSG}}
\label{alg:persistent-bsg-v2}
\begin{algorithmic}[1]
\Require membership access and uniform samples from \(S\);
         \(\alpha,\rho_{\rm snd},\rho_{\rm cmp}\in(0,1)\)
\State draw one representation sample block and define the high set
       \(\widehat Q_{\rm H}\) and every dyadic layer
       \(\widehat Q_\tau\)
\State estimate the relation density of every dyadic layer and retain
       the layers passing the density test
\State for each retained layer, draw one codegree sample and define a
       fixed graph \(\mathcal G_\tau\)
\State generate prescribed finite lists of high and low candidates
       from the exact weighted proposal laws
\For{the candidates in a fixed deterministic order}
    \State test the density of the associated set \(X\)
    \State using fresh exact samples from \(X\), test its average
           bad-pair rate
    \If{both tests pass}
        \State draw an independent cleaning sample from \(X\)
        \If{all required finite rejection blocks succeed}
            \State \Return the finite record defining the cleaned set
        \EndIf
    \EndIf
\EndFor
\State \Return \(\mathsf{FAIL}\)
\end{algorithmic}
\end{algorithm}

All proposal loops and all rejection samplers have deterministic
finite lengths.  If a rejection block is exhausted, the current
candidate is discarded.  Exhaustion can cause the algorithm to miss a
good candidate, but it cannot cause an invalid set to be returned.

\subsection{Approximating the relevant relations}
\label{subsec:persistent-bsg-approximations-v2}

For \(q\in\mathbb F_2^n\), recall that
\[
p(q)=\Pr_{Z\sim U_S}[Z+q\in S].
\]
Draw independent \(Z_1,\ldots,Z_{m_{\rm rep}}\sim U_S\) and define
\[
\widehat p(q):=
\frac1{m_{\rm rep}}\sum_{i=1}^{m_{\rm rep}}
\mathbf 1_S(Z_i+q).
\]
The stored sample makes \(q\mapsto\widehat p(q)\) a fixed
deterministic function on the whole group.

Let
\[
\vartheta_{\rm H}:=\frac{h^2}{2^{12}},
\qquad
\widehat Q_{\rm H}:=
\{q:\widehat p(q)\geq 3\vartheta_{\rm H}/4\}.
\]
For the dyadic family
\[
\mathcal T_h:=
\left\{
2^{-j}:0\leq j\leq
\left\lceil\log_2\frac{2^8}{h}\right\rceil
\right\},
\]
put
\[
\widehat Q_\tau:=
\{q:\widehat p(q)\geq 3h\tau/4\}
\qquad(\tau\in\mathcal T_h).
\]

For a fixed layer \(\widehat Q_\tau\), write
\[
\delta_\tau^{\rm true}:=
\Pr_{A,X\sim U_S}[A+X\in\widehat Q_\tau].
\]
A fresh sample gives an estimate \(\widetilde\delta_\tau\), and we
use the lower confidence value
\[
\delta_{\tau,0}:=
\max\{0,\widetilde\delta_\tau-h/2^7\}.
\]
We retain the layer when
\begin{equation}
\delta_{\tau,0}\geq\frac78h
\quad\text{and}\quad
\delta_{\tau,0}\tau^2\geq\frac{h^2}{2^{16}}.
\label{eq:density-test-v2}
\end{equation}

\begin{lemma}[Representation and layer approximation]
\label{lem:representation-layer-approx-v2}
With the block sizes of
Appendix~\ref{subsec:app-global-sketches-v2}, the following event has
probability at least
\(1-\rho_{\rm rep}-\rho_{\rm den}\).

First,
\[
|\widehat p(q)-p(q)|\leq\frac{h^2}{2^{16}}
\qquad\text{for every }q\in\mathbb F_2^n.
\]
On this event,
\begin{align}
\{q:p(q)\geq\vartheta_{\rm H}\}
&\subseteq \widehat Q_{\rm H}
\subseteq
\left\{q:p(q)\geq\frac23\vartheta_{\rm H}\right\},
&
|\widehat Q_{\rm H}|
&\leq 2^{13}h^{-2}N,
\label{eq:high-layer-sandwich-v2}\\
\{q:p(q)\geq h\tau\}
&\subseteq \widehat Q_\tau
\subseteq
\left\{q:p(q)\geq\frac23h\tau\right\}
\qquad(\tau\in\mathcal T_h).
\label{eq:low-layer-sandwich-v2}
\end{align}
Moreover,
\begin{equation}
\sum_{q\notin\widehat Q_{\rm H}}r(q)^2
\leq \vartheta_{\rm H}N^3.
\label{eq:compiled-unpopular-energy-v2}
\end{equation}

Second, all relation-density estimates are simultaneously accurate:
\[
|\widetilde\delta_\tau-\delta_\tau^{\rm true}|
\leq\frac{h}{2^7}
\qquad(\tau\in\mathcal T_h).
\]
Every retained layer satisfies
\begin{equation}
\delta_{\tau,0}
\leq\delta_\tau^{\rm true}
\leq\frac{57}{56}\delta_{\tau,0},
\qquad
|\widehat Q_\tau|
\leq
\frac{2\delta_{\tau,0}N}{h\tau}.
\label{eq:certified-layer-data-v2}
\end{equation}

Finally, under the low-energy alternative of
Theorem~\ref{thm:weighted-deterministic-sifting-v2}, the threshold
supplied by Lemma~\ref{lem:visible-low-layer-v2} is retained by the
density test.
\end{lemma}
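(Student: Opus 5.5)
The proof splits into a uniform concentration step for \(\widehat p\), deterministic consequences of that bound, a concentration step for the layer densities, and a check that the good threshold passes the density test.

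\textbf{Uniform accuracy of \(\widehat p\).} Hoeffding over \(\mathbb F_2^n\) directly would cost a union bound over \(2^n\) points, which is acceptable only because \(m_{\rm rep}\) is allowed a factor \(n\). For fixed \(q\), \(\widehat p(q)\) is an average of \(m_{\rm rep}\) i.i.d.\ indicators with mean \(p(q)\). Hoeffding therefore gives deviation greater than \(h^2/2^{16}\) with probability at most \(2\exp(-2m_{\rm rep}h^4/2^{32})\). A union bound over all \(2^n\) differences, with \(m_{\rm rep}=\Theta(h^{-4}(n+\log(1/\rho_{\rm rep})))=\Theta(\alpha^{-2}\Lambda)\), gives the first event with probability at least \(1-\rho_{\rm rep}\). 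I will check that the appendix block size is of this form.

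\textbf{Deterministic consequences on this event.} These follow from \(h^2/2^{16}\leq\vartheta_{\rm H}/16\) and \(h^2/2^{16}\leq h\tau/12\), the latter because \(\tau\geq h/2^8\) on \(\mathcal T_h\); the second inequality I would double-check. If \(p(q)\geq\vartheta_{\rm H}\), then \(\widehat p(q)\geq\frac{15}{16}\vartheta_{\rm H}\geq\frac34\vartheta_{\rm H}\). Conversely, \(\widehat p(q)\geq\frac34\vartheta_{\rm H}\) gives \(p(q)\geq(\frac34-\frac1{16})\vartheta_{\rm H}\geq\frac23\vartheta_{\rm H}\). The \(\tau\)-layers are handled the same way. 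The cardinality bound uses the first moment: \(|\widehat Q_{\rm H}|\cdot\frac23\vartheta_{\rm H}\leq\sum_q p(q)=N\), so \(|\widehat Q_{\rm H}|\leq\frac32\cdot2^{12}h^{-2}N\leq2^{13}h^{-2}N\). For \eqref{eq:compiled-unpopular-energy-v2}, the complement of \(\widehat Q_{\rm H}\) lies inside \(\{p<\vartheta_{\rm H}\}\). Hence
\[
\sum_{q\notin\widehat Q_{\rm H}}r(q)^2\leq\vartheta_{\rm H}N\sum_q r(q)=\vartheta_{\rm H}N^3.
\]

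\textbf{Density estimates.} Conditional on the stored representation sample, each \(\widehat Q_\tau\) is a fixed set, and \(\widetilde\delta_\tau\) is an average of fresh i.i.d.\ indicators \(\mathbf 1[A_i+X_i\in\widehat Q_\tau]\), evaluated through \(\widehat p\) and hence through membership queries. Hoeffding plus a union bound over the \(|\mathcal T_h|=O(\log(1/h))\) layers gives accuracy \(h/2^7\) with failure probability at most \(\rho_{\rm den}\). A union bound over the two stages yields total probability \(1-\rho_{\rm rep}-\rho_{\rm den}\).

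For a retained layer, \(\delta_{\tau,0}\leq\delta^{\rm true}_\tau\) is immediate. Also \(\delta^{\rm true}_\tau\leq\widetilde\delta_\tau\leq\delta_{\tau,0}+h/2^6\). Since \(\delta_{\tau,0}\geq\frac78h\), we have \(h/2^6\leq\frac{8}{7\cdot64}\delta_{\tau,0}=\frac1{56}\delta_{\tau,0}\), which gives the factor \(57/56\). For the size bound, \(\widehat Q_\tau\subseteq\{p\geq\frac23h\tau\}\). Therefore
\[
|\widehat Q_\tau|\,\tfrac23h\tau\leq\sum_{q\in\widehat Q_\tau}p(q)=\delta^{\rm true}_\tau N\leq\tfrac{57}{56}\delta_{\tau,0}N.
\]
Since \(\frac32\cdot\frac{57}{56}<2\), this gives \(|\widehat Q_\tau|\leq2\delta_{\tau,0}N/(h\tau)\).

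\textbf{Retention of the good threshold (the main obstacle).} The difficulty is that \(\widehat Q_\tau\) is not \(Q_\tau\): it lacks the upper cutoff \(p<h\), and its threshold is perturbed. I will use only monotonicity. By the sandwich, \(\widehat Q_\tau\supseteq\{p\geq h\tau\}\supseteq Q_\tau\), so \(\delta^{\rm true}_\tau\geq\delta(Q_\tau)\geq\frac{15}{16}h\) by Lemma~\ref{lem:visible-low-layer-v2}. Then \(\delta_{\tau,0}\geq\delta^{\rm true}_\tau-h/2^6\geq(\frac{15}{16}-\frac1{64})h=\frac{59}{64}h\geq\frac78h\). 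For the second condition, write \(\delta(Q_\tau)-h/2^6\geq\delta(Q_\tau)(1-\frac{16}{15\cdot64})\geq\frac{59}{60}\delta(Q_\tau)\). This gives \(\delta_{\tau,0}\tau^2\geq\frac{59}{60}\cdot h^2/2^{15}\geq h^2/2^{16}\). The extra high-multiplicity mass in \(\widehat Q_\tau\) only helps here, which is why the downstream criterion is phrased with \(\widehat Q_\tau\) and the certified lower value \(\delta_{\tau,0}\).
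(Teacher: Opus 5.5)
Your proof is correct and follows the paper's argument step for step. It uses a uniform Hoeffding bound over the $2^n$ translation predicates, first-moment bounds for $|\widehat Q_{\rm H}|$ and $|\widehat Q_\tau|$, the two-sided $h/2^6$ density loss giving $57/56$, and the monotonicity $Q_\tau\subseteq\{p\ge h\tau\}\subseteq\widehat Q_\tau$ with a $1/60$ relative loss for retention. Two cosmetic fixes: the smallest element of $\mathcal T_h$ is only guaranteed to exceed $h/2^9$, not $h/2^8$, which still gives representation error below $h\tau/2^7\le h\tau/12$; and for a retained layer the chain should read $\delta^{\rm true}_\tau\le\widetilde\delta_\tau+h/2^7=\delta_{\tau,0}+h/2^6$, rather than passing through $\delta^{\rm true}_\tau\le\widetilde\delta_\tau$.
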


\begin{proof}
The uniform estimate for \(\widehat p\) follows from
Lemma~\ref{lem:uniform-finite-family-v2}, applied to the \(2^n\)
translation predicates.
Since the error is \(\vartheta_{\rm H}/16\), the high inclusions
follow immediately.  The cardinality estimate follows from
\[
\frac23\vartheta_{\rm H}|\widehat Q_{\rm H}|
\leq
\sum_{q\in\widehat Q_{\rm H}}p(q)
\leq N.
\]
Outside \(\widehat Q_{\rm H}\) one has \(p(q)<\vartheta_{\rm H}\),
which gives \eqref{eq:compiled-unpopular-energy-v2} after summing
\(p(q)^2\leq\vartheta_{\rm H}p(q)\).

For a searched low threshold, \(h\tau\geq h^2/2^8\), so the same
representation error is at most \(2^{-8}h\tau\).  This proves
\eqref{eq:low-layer-sandwich-v2}.  The density estimates follow from the same lemma, now applied to the
\(J_h\) searched layers.

If a layer is retained, then
\(\delta_{\tau,0}\leq\delta_\tau^{\rm true}\leq
\delta_{\tau,0}+h/64\).  Since
\(\delta_{\tau,0}\geq7h/8\), this is at most
\((57/56)\delta_{\tau,0}\).  Every element of
\(\widehat Q_\tau\) has \(p(q)\geq(2/3)h\tau\), and hence
\[
|\widehat Q_\tau|
\leq
\frac{3\delta_\tau^{\rm true}N}{2h\tau}
<
\frac{2\delta_{\tau,0}N}{h\tau}.
\]

For the visible threshold of
Lemma~\ref{lem:visible-low-layer-v2}, the ideal layer is contained in
\(\widehat Q_\tau\), so
\[
\delta_\tau^{\rm true}\geq\frac{15}{16}h,
\qquad
\delta_\tau^{\rm true}\tau^2\geq\frac{h^2}{2^{15}}.
\]
Subtracting the two-sided density error leaves
\(\delta_{\tau,0}>7h/8\).  The relative loss is at most \(1/60\),
so the quadratic condition in \eqref{eq:density-test-v2} also holds.
\end{proof}

Fix a retained layer and abbreviate
\[
Q:=\widehat Q_\tau,\qquad
\delta_0:=\delta_{\tau,0},\qquad
\kappa_0:=\frac{\delta_0^2}{2^{12}}.
\]
Draw independent \(W_1,\ldots,W_{m_{\rm code}}\sim U_S\) and set
\[
\widehat\gamma_Q(x,y):=
\frac1{m_{\rm code}}\sum_{i=1}^{m_{\rm code}}
\mathbf 1_Q(x+W_i)\mathbf 1_Q(y+W_i).
\]
The corresponding fixed graph is
\[
\mathcal G_\tau:=
\{(x,y)\in S^2:\widehat\gamma_Q(x,y)\geq3\kappa_0/2\}.
\]

\begin{lemma}[Codegree graph]
\label{lem:codegree-graph-v2}
With the block size of
Appendix~\ref{subsec:app-global-sketches-v2}, except with probability
\(\rho_{\rm code}\), simultaneously for every \(x,y\in S\),
\[
\left|
\widehat\gamma_Q(x,y)-\frac{c_Q(x,y)}N
\right|
\leq\frac{\kappa_0}{4}.
\]
On this event,
\begin{equation}
c_Q(x,y)\geq2\kappa_0N
\ \Longrightarrow\
(x,y)\in\mathcal G_\tau
\ \Longrightarrow\
c_Q(x,y)\geq\kappa_0N.
\label{eq:codegree-graph-sandwich-v2}
\end{equation}
\end{lemma}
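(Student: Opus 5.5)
The argument has two steps: a uniform concentration bound for the stored codegree estimator over all pairs in \(S^2\), followed by a short deterministic derivation of the sandwich \eqref{eq:codegree-graph-sandwich-v2}.

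Throughout, condition on everything fixed before the codegree sample is drawn. This includes the representation block, which fixes \(Q=\widehat Q_\tau\), and the density block, which fixes \(\delta_0\) and hence \(\kappa_0=\delta_0^2/2^{12}\). After this conditioning, \(Q\) and \(\kappa_0\) are deterministic, and \(W_1,\ldots,W_{m_{\rm code}}\) are fresh independent draws from \(U_S\).

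\textbf{Concentration for a fixed pair.} Fix a pair \((x,y)\in S^2\). The summands \(\mathbf 1_Q(x+W_i)\mathbf 1_Q(y+W_i)\) are i.i.d.\ \(\{0,1\}\)-valued random variables. Each has mean
\[
\Pr_{W\sim U_S}[x+W\in Q,\ y+W\in Q]=\frac{c_Q(x,y)}N,
\]
because \(c_Q(x,y)\) counts exactly those \(z\in S\) with \(x+z\in Q\) and \(y+z\in Q\).

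\textbf{Uniformity over pairs.} Apply Lemma~\ref{lem:uniform-finite-family-v2} to the family of \(|S|^2\leq 4^n\) predicates \(W\mapsto\mathbf 1_Q(x+W)\mathbf 1_Q(y+W)\), with additive accuracy \(\kappa_0/4\) and failure probability \(\rho_{\rm code}\). Hoeffding plus a union bound shows the required block length is \(m_{\rm code}=O(\kappa_0^{-2}(n+\log(1/\rho_{\rm code})))\), which is the form presumably fixed in the appendix.

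The one step needing care is that \(\kappa_0\) depends on \(\delta_0\), which is itself random. There are two ways to handle this. One is to size the block using a deterministic lower bound on \(\kappa_0\): every retained layer has \(\delta_0\geq 7h/8\), so \(\kappa_0\geq 49h^2/2^{18}\), and the block size can be fixed in advance using \(h\) alone. The other is to apply the lemma conditionally, once the layer and \(\delta_0\) are fixed. Either way, the uniform bound \(|\widehat\gamma_Q(x,y)-c_Q(x,y)/N|\leq\kappa_0/4\) holds simultaneously for all pairs except with probability \(\rho_{\rm code}\). If all retained layers are treated at once, the union bound over at most \(J_h\) layers is absorbed into the allocation of \(\rho_{\rm code}\).

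\textbf{The sandwich.} On this event the two implications are immediate.
\begin{itemize}
\item[(i)] Suppose \(c_Q(x,y)\geq 2\kappa_0N\). Then \(\widehat\gamma_Q(x,y)\geq 2\kappa_0-\kappa_0/4=7\kappa_0/4\geq 3\kappa_0/2\), so \((x,y)\in\mathcal G_\tau\).
\item[(ii)] Suppose \((x,y)\in\mathcal G_\tau\), so \(\widehat\gamma_Q(x,y)\geq 3\kappa_0/2\). Then \(c_Q(x,y)/N\geq 3\kappa_0/2-\kappa_0/4=5\kappa_0/4\geq\kappa_0\).
\end{itemize}

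Finally, \(\mathcal G_\tau\) is symmetric because \(\widehat\gamma_Q(x,y)=\widehat\gamma_Q(y,x)\). The only delicate point in the whole argument is the measurability and independence bookkeeping in the uniformity step: \(Q\) and \(\kappa_0\) must be frozen before the \(W_i\) are used.
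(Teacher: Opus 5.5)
Your proposal is correct and follows essentially the same route as the paper: condition on the past so that \(Q\) and \(\kappa_0\) are fixed, apply Lemma~\ref{lem:uniform-finite-family-v2} to the at most \(2^{2n}\) pair predicates with accuracy \(\kappa_0/4\) (this is exactly how the layer-specific \(m_{\rm code}\) in the appendix is sized), and then compare the thresholds \(2\kappa_0\), \(3\kappa_0/2\), \(\kappa_0\). One minor bookkeeping remark: in the paper \(\rho_{\rm code}\) is a per-layer budget, and the union over at most \(J_h\) layers is charged separately as \(J_h\rho_{\rm code}\) in the soundness ledger, not absorbed into \(\rho_{\rm code}\) itself.
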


\begin{proof}
Condition on the previously sampled data, which fixes
\(Q,\delta_0,\kappa_0\).  The codegree sample is fresh, and each
summand is Bernoulli with mean \(c_Q(x,y)/N\).  Lemma~\ref{lem:uniform-finite-family-v2}, applied to at most
\(2^{2n}\) ordered pairs, gives the uniform estimate.  The two implications follow by comparing the thresholds
\(2\kappa_0\), \(3\kappa_0/2\), and the error \(\kappa_0/4\).
\end{proof}

The candidate distributions used in Section~3 also admit simple exact
sampling rules.

\begin{proposition}[Exact weighted proposals]
\label{prop:exact-weighted-proposals-v2}
The following one-trial procedures have the stated laws.

\begin{enumerate}
\item Draw \(U,V,W\sim U_S\), put \(D:=U+V\), and accept \(D\) when
\(W+D\in S\).  The trial accepts with probability \(E(S)/N^3\), and,
conditioned on acceptance,
\[
\Pr[D=d]=\frac{r(d)^2}{E(S)}.
\]

\item Fix \(Q\subseteq\mathbb F_2^n\).  Draw \(A,X,Y\sim U_S\) and
accept \(A\) when \(A+X,A+Y\in Q\).  If
\(d_Q(a):=|\{x\in S:a+x\in Q\}|\) and
\(Z_Q:=\sum_a d_Q(a)^2\), then the trial accepts with probability
\(Z_Q/N^3\), and, conditioned on acceptance,
\[
\Pr[A=a]=\frac{d_Q(a)^2}{Z_Q}.
\]
If the relation density of \(Q\) is \(\delta\), then
\(Z_Q/N^3\geq\delta^2\).
\end{enumerate}
\end{proposition}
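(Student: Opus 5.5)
Both parts reduce to a direct computation of the joint law of one trial followed by conditioning on acceptance. The sampled variables are independent and uniform on \(S\), so every probability is a count divided by a power of \(N\). For each procedure we will (i) compute \(\Pr[\text{output}=v,\ \text{accept}]\) for each fixed value \(v\), (ii) sum over \(v\) to get the acceptance probability, and (iii) divide to obtain the conditional law.

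\emph{Part 1.} Fix \(d\in\mathbb F_2^n\). The event \(D=d\) means \(U+V=d\), that is \(V=U+d\); this requires \(U\in S\cap(S+d)\). So \(\Pr[D=d]=r(d)/N^2\), since \(r(d)=|\{(u,v)\in S^2:u+v=d\}|\). Independently of \((U,V)\), the acceptance event \(W+d\in S\) has probability \(|S\cap(S+d)|/N=r(d)/N\). Therefore
\[
\Pr[D=d,\ \text{accept}]=\frac{r(d)^2}{N^3}.
\]
Summing over \(d\), and using \(\sum_d r(d)^2=E(S)\) (each energy quadruple \(x_1+x_2=x_3+x_4\) is a pair of representations of a common \(d\)), the acceptance probability is \(E(S)/N^3\). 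Dividing gives the conditional law \(r(d)^2/E(S)=\mu_E(d)\). For this division to make sense we need \(E(S)\geq N^2>0\), which holds because the quadruples \((x,y,x,y)\) all contribute.

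\emph{Part 2.} Fix \(a\in S\). Given \(A=a\), the variables \(X\) and \(Y\) are independent uniform on \(S\), and each lies in \(\{x\in S:a+x\in Q\}\) with probability \(d_Q(a)/N\). Hence
\[
\Pr[A=a,\ \text{accept}]=\frac1N\Bigl(\frac{d_Q(a)}N\Bigr)^2=\frac{d_Q(a)^2}{N^3}.
\]
Summing over \(a\) gives acceptance probability \(Z_Q/N^3\), and conditioning gives \(d_Q(a)^2/Z_Q\). Points outside \(S\) are never drawn, so they receive probability zero in both the joint law and the conditional law. If \(Z_Q=0\), the trial never accepts and the conditional statement is vacuous; it is the density bound below that guarantees \(Z_Q>0\) whenever \(\delta>0\).

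For that bound, note that \(\sum_{a\in S}d_Q(a)=|R_Q|=\delta N^2\). Cauchy--Schwarz over the \(N\) terms \(a\in S\) then gives \(Z_Q\geq(\delta N^2)^2/N=\delta^2N^3\), which is the stated inequality \(Z_Q/N^3\geq\delta^2\).

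Nothing here is a genuine obstacle. The only steps that need care are the identity \(\sum_d r(d)^2=E(S)\) and the convention that \(d_Q\) is summed only over \(a\in S\), so that it matches the relation density \(\delta(Q)\) defined in Section~\ref{sec:deterministic-sifting-v2}.
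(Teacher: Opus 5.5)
Your proof is correct and is essentially the paper's argument: compute the joint probability of output and acceptance as \(r(d)^2/N^3\), respectively \(d_Q(a)^2/N^3\), sum to get the acceptance probability, normalize to get the conditional law, and obtain \(Z_Q\geq\delta^2N^3\) from Cauchy--Schwarz applied to \(\sum_{a\in S}d_Q(a)=\delta N^2\). Two of your remarks are small clarifications that the paper leaves implicit: that \(E(S)\geq N^2>0\), and that \(Z_Q\) must be read as a sum over \(a\in S\), matching the definition of \(Z_R\) in Section~3.
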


\begin{proof}
For the first sampler,
\[
\Pr[D=d\ \text{and accept}]
=
\frac{r(d)}{N^2}\frac{r(d)}N
=
\frac{r(d)^2}{N^3}.
\]
Summing over \(d\) gives the acceptance probability, and normalizing
gives the conditional law.

For the second sampler, the probability of drawing \(A=a\) and
accepting is \(d_Q(a)^2/N^3\).  The final lower bound follows from
Cauchy--Schwarz:
\[
Z_Q\geq\frac1N\left(\sum_a d_Q(a)\right)^2
=\delta^2N^3.
\]
\end{proof}

Each candidate request uses a fixed number of independent trials and
returns the first accepted value.  By
Lemma~\ref{lem:fixed-block-first-success-v2}, conditioned on success
the output has exactly the law in
Proposition~\ref{prop:exact-weighted-proposals-v2}.  The precise
block lengths are recorded in
Appendix~\ref{subsec:app-candidate-blocks-v2}.

\subsection{Proposing, testing, and cleaning candidates}
\label{subsec:persistent-bsg-validation-v2}

The two branches use the same validation pattern.  Their candidate
sets and bad-pair predicates are:
\[
\begin{array}{c|c|c}
\text{branch} & \text{candidate set \(X\)} & \mathsf{Bad}(x,w)\\ \hline
\text{high, candidate \(d\)}
&
X_d:=S\cap(S+d)
&
\mathbf 1[x+w\notin\widehat Q_{\rm H}]
\\[1mm]
\text{low, candidate \(a\) in layer \(\tau\)}
&
X_a:=\{x\in S:a+x\in\widehat Q_\tau\}
&
\mathbf 1[(x,w)\notin\mathcal G_\tau].
\end{array}
\]

We first test the density of \(X\) inside \(S\).  If this test passes,
we obtain exact uniform samples from \(X\) by fixed-length rejection
sampling from \(S\), and estimate the average bad-pair rate
\[
\omega_X:=
\Pr_{x,w\sim U_X}[\mathsf{Bad}(x,w)=1].
\]
The candidate is accepted when the empirical bad-pair rate is at most
\(3/32\).

\begin{lemma}[Candidate validation]
\label{lem:candidate-validation-v2}
Condition on all randomness revealed before a candidate is fixed.
With the fresh validation block sizes of
Appendix~\ref{subsec:app-validation-v2}, the following statements
hold, each with conditional error at most \(\zeta\).

For a high candidate \(d\), the degree test accepts at threshold
\(7h/8\).  It accepts with high probability when
\(|X_d|\geq hN\), while, except with probability \(\zeta\), every
candidate that passes satisfies
\[
|X_d|>\frac34hN.
\]

For a low candidate \(a\) associated with a retained layer of
certified density \(\delta_0\), the degree test accepts at threshold
\(7\delta_0/16\).  It accepts with high probability when
\(|X_a|\geq\delta_0N/2\), while, except with probability \(\zeta\),
every candidate that passes satisfies
\[
|X_a|>\frac{3\delta_0N}{8}.
\]

In either branch, conditional on the success of the exact
\(X\)-sampling blocks, the bad-pair test has the following gap:
\[
\omega_X\leq\frac1{16}
\quad\Longrightarrow\quad
\Pr[\text{reject}]\leq\zeta,
\]
whereas, except with probability \(\zeta\), every candidate that
passes satisfies \(\omega_X<1/8\).
\end{lemma}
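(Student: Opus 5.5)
The plan is to treat each test as the comparison of an empirical mean of conditionally i.i.d. Bernoulli variables against a fixed threshold, and to apply a Hoeffding or multiplicative Chernoff bound from Appendix~\ref{app:probability-tools-v2}. Everything revealed before the candidate is fixed is conditioned on: the representation sketch, \(\widehat Q_{\rm H}\), the layers \(\widehat Q_\tau\) with their certified densities \(\delta_0\), and the graphs \(\mathcal G_\tau\). Under this conditioning the sets \(X_d\) and \(X_a\) and the predicate \(\mathsf{Bad}\) are deterministic, so \(|X|/N\) and \(\omega_X\) are fixed numbers. The fresh validation blocks are independent of the past, so each test statistic is an average of independent Bernoulli variables with exactly these means. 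No union bound over the group is needed; only the single fixed candidate is tested.

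\textbf{Degree tests.} For a high candidate \(d\), draw \(m\) samples \(Z_i\sim U_S\) and let the statistic be the fraction with \(Z_i+d\in S\). Its mean is \(p(d)=|X_d|/N\). The threshold \(7h/8\) lies at additive distance \(h/8\) from both \(h\) and \(3h/4\). Hoeffding with \(m=O(h^{-2}\log(1/\zeta))\) therefore gives two conclusions, each with error \(\zeta\). Acceptance fails with probability at most \(\zeta\) when \(p(d)\geq h\). If \(p(d)\leq 3h/4\), acceptance has probability at most \(\zeta\); passing thus certifies \(|X_d|>\tfrac34 hN\).

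The low test is the same argument with indicator \(a+Z_i\in\widehat Q_\tau\). The threshold \(7\delta_0/16\) sits between \(3\delta_0/8\) and \(\delta_0/2\), with gap \(\delta_0/16\geq 7h/128\) because \(\delta_0\geq 7h/8\) for a retained layer. The block size is again \(O(h^{-2}\log(1/\zeta))\). A multiplicative Chernoff bound would allow \(O(h^{-1}\log(1/\zeta))\), but Hoeffding suffices for the statement.

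\textbf{Bad-pair test.} After the degree test passes, \(|X|\geq \tfrac38\delta_0N\) or \(|X|\geq\tfrac34 hN\), so \(X\) has density \(\Omega(h)\) in \(S\). Conditional on success of the fixed-length rejection blocks, Lemma~\ref{lem:fixed-block-first-success-v2} says the outputs are exact independent uniform samples from \(X\). One subtlety must be handled here. The success event depends only on which trials land in \(X\), and conditioning on it leaves the accepted values i.i.d. uniform on \(X\); this is what the fixed-block lemma provides. Pair the samples into \(m'\) independent pairs \((x_j,w_j)\). Then \(\mathsf{Bad}(x_j,w_j)\) are i.i.d. Bernoulli\((\omega_X)\). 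The threshold \(3/32\) is at distance \(1/32\) from both \(1/16\) and \(1/8\), so Hoeffding with \(m'=O(\log(1/\zeta))\) yields both directions of the gap. Evaluating \(\mathsf{Bad}\) requires only membership in \(\widehat Q_{\rm H}\), which is a deterministic function of the stored sketch, or evaluation of \(\widehat\gamma_Q\), which also uses stored samples. The predicate is therefore fixed, as required.

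The main obstacle I expect is the conditioning bookkeeping rather than the concentration estimates. The argument must state precisely that the validation randomness is independent of the sigma-algebra fixing \(X\) and \(\mathsf{Bad}\). It must also show that conditioning on the success of rejection sampling does not bias the samples. The soundness statements should be phrased as bounds on the probability of passing given that the true quantity is bad, rather than conditioning on passing. I would state each claim in the form ``\(\Pr[\text{pass}\mid\mathcal F]\leq\zeta\) whenever the fixed parameter is on the bad side,'' which is exactly the form required by the sequential failure accounting downstream.
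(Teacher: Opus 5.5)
Your conditioning framework and your bad-pair argument are fine and agree with the paper. The paper applies the additive (Hoeffding) gap test to the complementary good-pair indicators with both gaps equal to $1/32$, which gives exactly $m_{\rm pair}=\lceil 512\log(2/\zeta)\rceil$.

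The gap is in the degree tests. The lemma is stated for the block sizes of Appendix~\ref{subsec:app-validation-v2}, namely
\[
m_{\rm deg,H}=\left\lceil128h^{-1}\log\frac{2}{\zeta}\right\rceil,
\qquad
m_{\rm deg,L}(\delta_0)=\left\lceil256\delta_0^{-1}\log\frac{2}{\zeta}\right\rceil.
\]
At these sizes your Hoeffding step does not deliver error $\zeta$. For the high test, the one-sided Hoeffding bound for a deviation of $h/8$ is at most $\exp(-2m_{\rm deg,H}(h/8)^2)=\exp(-4h\log(2/\zeta))$ up to rounding. This tends to $1$, not to something at most $\zeta$, as $h\to0$. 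The low test behaves the same way, with exponent $2\delta_0\log(2/\zeta)$. So your remark that ``Hoeffding suffices for the statement'' is incorrect. What you have proved is a variant of the lemma with degree blocks larger by a factor of order $h^{-1}$.

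The missing ingredient is the one you set aside. The Bernoulli means are themselves of order $q=h$ (respectively $q=\delta_0$), so their variance is $O(q)$. Relative-error Chernoff bounds then resolve a constant relative gap with $O(q^{-1}\log(1/\zeta))$ samples. The paper uses Lemma~\ref{lem:multiplicative-gap-test-v2} with these parameters:
\begin{itemize}
\item $(a,\theta,b)=(3/4,7/8,1)$ and $q=h$, giving constants $104$ and $128$;
\item $(a,\theta,b)=(3/8,7/16,1/2)$ and $q=\delta_0$, giving constants $208$ and $256$.
\end{itemize}
This reproduces the appendix sizes exactly.

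If you keep Hoeffding, you must also enlarge the appendix blocks to $\Theta(\alpha^{-1}\log(1/\zeta))$. You would then have to drop the bound $m_{\rm deg,H},m_{\rm deg,L}=O(\alpha^{-1/2}\Lambda)$ in \eqref{eq:app-validation-asymptotics-v2}. The headline totals in Theorem~\ref{thm:persistent-bsg-v2} would still hold after re-summing the resource table, but that is a different lemma from the one stated.

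A minor point: your assertion that $X$ has density $\Omega(h)$ after the degree test holds only on the validation-soundness event. You do not need it here, because the lemma conditions on success of the sampling blocks, and the fixed-block lemma gives the exact law for any positive acceptance probability.
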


\begin{proof}
After conditioning, all validation samples are fresh.  Apply
Lemma~\ref{lem:multiplicative-gap-test-v2} to the two degree tests and
Lemma~\ref{lem:additive-gap-test-v2} to the complementary
good-pair indicators in the bad-pair test.  The
resulting constants are recorded in
Corollary~\ref{cor:persistent-bsg-gap-constants-v2} and
Appendix~\ref{subsec:app-validation-v2}.
\end{proof}

Validation controls an average bad-pair rate.  The final output needs
a pointwise guarantee for every point that it accepts.

\begin{lemma}[Independent cleaning]
\label{lem:independent-cleaning-v2}
Let \(X\subseteq S\) be nonempty, and let
\(\mathsf{Bad}:X^2\to\{0,1\}\) be fixed with
\[
\mathbb E_{x,w\sim U_X}\mathsf{Bad}(x,w)\leq\frac18.
\]
Draw fresh independent \(W_1,\ldots,W_{m_{\rm clean}}\sim U_X\) and
define
\[
Y:=
\left\{
x\in X:
\frac1{m_{\rm clean}}
\sum_{i=1}^{m_{\rm clean}}\mathsf{Bad}(x,W_i)
\leq\frac5{16}
\right\}.
\]
With the block size of
Appendix~\ref{subsec:app-validation-v2}, except with probability
\(\rho_{\rm clean}\),
\begin{equation}
|Y|\geq\frac12|X|,
\qquad
\Pr_{W\sim U_X}[\mathsf{Bad}(y,W)=1]\leq\frac38
\quad(y\in Y).
\label{eq:cleaning-output-v2}
\end{equation}
\end{lemma}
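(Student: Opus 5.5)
The plan is to split the two conclusions by introducing the pointwise bad-degree function
\[
b(x):=\Pr_{W\sim U_X}[\mathsf{Bad}(x,W)=1]\qquad(x\in X),
\]
whose average over \(U_X\) is at most \(1/8\) by hypothesis, and the empirical degree \(\widehat b(x):=m_{\rm clean}^{-1}\sum_i\mathsf{Bad}(x,W_i)\). Since the \(W_i\) are fresh and \(\mathsf{Bad}\) is fixed, for each fixed \(x\) the variable \(\widehat b(x)\) is an average of \(m_{\rm clean}\) independent Bernoulli variables with mean \(b(x)\). The argument then reduces to two events: a uniform event controlling all points with large \(b\), and an averaged event controlling the mass of points that are wrongly discarded.

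For the pointwise guarantee, I would apply Lemma~\ref{lem:uniform-finite-family-v2} (Hoeffding plus a union bound) to the at most \(2^n\) predicates \(w\mapsto\mathsf{Bad}(x,w)\), \(x\in X\), with additive accuracy \(1/16\). With \(m_{\rm clean}=O(n+\log(1/\rho_{\rm clean}))\), except with probability \(\rho_{\rm clean}/2\) every \(x\) satisfies \(|\widehat b(x)-b(x)|\le1/16\). On this event, any \(x\) with \(b(x)>3/8\) has \(\widehat b(x)>5/16\) and is excluded, giving the second half of \eqref{eq:cleaning-output-v2}.

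The same uniform event also yields the size bound. Any \(x\) with \(b(x)\le1/4\) has \(\widehat b(x)\le5/16\), so \(Y\supseteq\{x:b(x)\le1/4\}\). By Markov's inequality,
\[
\Pr_{x\sim U_X}\bigl[b(x)>1/4\bigr]\le\frac{1/8}{1/4}=\frac12,
\]
so \(|Y|\ge|X|/2\). Note that Markov gives the required bound exactly, with no slack, which is why the thresholds \(1/4<5/16<3/8\) are spaced symmetrically at distance \(1/16\). The whole failure probability can therefore be assigned to the single uniform event, with no separate averaged event needed.

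The main point needing care is the legitimacy of the union bound over \(X\). Because \(X\) and \(\mathsf{Bad}\) are fixed before the \(W_i\) are drawn, conditioning on the past makes the cleaning block a fresh i.i.d.\ \(U_X\) sample, and it is exact conditional on success of the rejection block, by Lemma~\ref{lem:fixed-block-first-success-v2}. This justifies treating the \(W_i\) as i.i.d.\ \(U_X\). Apart from that, the proof is routine once the block size is matched to the Hoeffding constant \(2\exp(-2m/256)\cdot2^n\le\rho_{\rm clean}\) recorded in Appendix~\ref{subsec:app-validation-v2}.
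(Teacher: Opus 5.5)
Your proof is correct and is essentially the paper's argument: one application of Lemma~\ref{lem:uniform-finite-family-v2} to the at most \(2^n\) bad-degree predicates at additive accuracy \(1/16\) both includes every point of true bad degree at most \(1/4\) and excludes every point of true bad degree above \(3/8\), and Markov's inequality then shows that at least half of \(X\) has true bad degree at most \(1/4\). The only blemish is the leftover \(\rho_{\rm clean}/2\) allocation from your initially planned two-event split; as you yourself observe, it is unnecessary, since the single uniform event carries the whole budget \(\rho_{\rm clean}\).
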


\begin{proof}
Lemma~\ref{lem:uniform-finite-family-v2}, applied to the family of bad
degree predicates indexed by \(x\in X\), gives a simultaneous
additive error at most \(1/16\).  Thus every point of true bad degree at most \(1/4\) enters
\(Y\), while every point in \(Y\) has true bad degree at most \(3/8\).
By Markov's inequality, at least half the points of \(X\) have true
bad degree at most \(1/4\).
\end{proof}

The output record stores only the data used by the final membership
predicate.  Validation samples are not stored.

For a high candidate \(d\), the record contains the representation
sample, \(d\), and the cleaning sample.  It defines
\[
Y_D:=
\left\{
x\in S:
x+d\in S,\quad
\frac1{m_{\rm clean}}\sum_i
\mathbf 1[x+W_i\notin\widehat Q_{\rm H}]
\leq\frac5{16}
\right\}.
\]

For a low candidate \(a\) in layer \(\tau\), the record contains the
representation sample, \(\tau\), the certified value \(\delta_0\),
the codegree sample, \(a\), and the cleaning sample.  It defines
\[
Y_D:=
\left\{
x\in S:
a+x\in\widehat Q_\tau,\quad
\frac1{m_{\rm clean}}\sum_i
\mathbf 1[(x,W_i)\notin\mathcal G_\tau]
\leq\frac5{16}
\right\}.
\]
In both cases, once the record is fixed, this is a deterministic
membership predicate.

\begin{lemma}[Every successful output satisfies a deterministic criterion]
\label{lem:output-path-v2}
Assume that the representation, density, codegree, validation, and
cleaning conclusions used along the realized execution are all
correct.  If Algorithm~\ref{alg:persistent-bsg-v2} returns a
nonfailure record \(D\), then \(Y_D\) satisfies one of the two
criteria in Section~\ref{subsec:sifting-criteria-v2}.  Consequently,
\[
|Y_D|\geq c_{\rm BSG}hN,\qquad
|Y_D+Y_D|\leq C_{\rm BSG}h^{-8}|Y_D|.
\]
A high-branch output also satisfies
\[
|Y_D+Y_D|\leq C_{\rm BSG,H}h^{-6}|Y_D|.
\]
\end{lemma}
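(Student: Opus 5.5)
The plan is to split on the branch tag of the returned record. In each branch we exhibit the sets \(Q\subseteq X\supseteq Y\) (and, in the low branch, the graph \(\mathcal G\) and the parameters \(\delta_0,\tau,\kappa\)). We then check the hypotheses of Lemma~\ref{lem:high-criterion-v2} or Lemma~\ref{lem:low-criterion-v2} using only the certified conclusions of the sketches, validation and cleaning. No energy assumption is used anywhere. The final bounds then follow from the criteria, since \(2^{33}/9\le 2^{69}/9\), \(h\le1\) gives \(h^{-6}\le h^{-8}\), and \(21/128\le 3/8\).

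\textbf{High branch.} Take \(Q:=\widehat Q_{\rm H}\) and \(X:=X_d\). By \eqref{eq:high-layer-sandwich-v2}, \(|Q|\le 2^{13}h^{-2}N\). The passed degree test in Lemma~\ref{lem:candidate-validation-v2} gives \(|X|>\tfrac34 hN\). The passed bad-pair test gives \(\omega_X<1/8\), so Lemma~\ref{lem:independent-cleaning-v2} applies. Hence \(Y_D\), which is exactly the cleaned set with the high predicate, satisfies \(|Y_D|\ge\tfrac12|X|\). Every \(y\in Y_D\) has bad degree at most \(3/8\), that is, at most \(3|X|/8\) partners \(w\in X\) with \(y+w\notin Q\). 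These are precisely the hypotheses of Lemma~\ref{lem:high-criterion-v2}. That lemma yields \(|Y_D|\ge\tfrac38hN\) and doubling \(\tfrac{2^{33}}9h^{-6}\).

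\textbf{Low branch.} Take \(Q:=\widehat Q_\tau\), \(\delta_0:=\delta_{\tau,0}\), \(X:=X_a\), \(\mathcal G:=\mathcal G_\tau\) and \(\kappa:=\kappa_0=\delta_0^2/2^{12}\). The density test \eqref{eq:density-test-v2} gives \(\delta_0\ge\tfrac78h\) and \(\delta_0\tau^2\ge2^{-16}h^2\). The bound \eqref{eq:certified-layer-data-v2} gives \(|Q|\le 2\delta_0N/(h\tau)\). The degree test gives \(|X|>3\delta_0N/8\), and \(\kappa\ge2^{-12}\delta_0^2\) holds by definition. Lemma~\ref{lem:codegree-graph-v2} gives the right-hand implication of \eqref{eq:codegree-graph-sandwich-v2}, so every edge of \(\mathcal G\) has \(Q\)-codegree at least \(\kappa_0N\). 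Symmetry of \(\mathcal G_\tau\) is immediate because \(\widehat\gamma_Q\) is symmetric. Cleaning, as in the high branch, gives \(|Y_D|\ge\tfrac12|X|\), and every \(y\in Y_D\) has at most \(3|X|/8\) non-neighbours in \(X\). Lemma~\ref{lem:low-criterion-v2} then applies and gives size \(\tfrac{21}{128}hN\) and doubling \(\tfrac{2^{69}}9h^{-8}\).

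The main obstacle is bookkeeping rather than mathematics. The stored predicate for \(Y_D\) must coincide exactly with the cleaned set of Lemma~\ref{lem:independent-cleaning-v2} for the same bad predicate and the same \(X\). All the invoked guarantees must hold for the adaptively chosen candidate that was actually returned. Conditioning on the revealed past handles this, because each guarantee is stated conditionally on fixed \(Q\) and \(\mathcal G\) with fresh samples, and the assumption of the lemma is precisely that all these events hold along the realized execution. A further point to check is that the cleaning guarantee also requires the exact \(X\)-samples. Rejection exhaustion only discards a candidate, so any returned record came from successful blocks and its samples have the exact law.
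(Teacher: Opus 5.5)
Your proposal is correct and follows the paper's proof essentially line by line. You split on the branch tag, take $|\widehat Q_{\rm H}|\le 2^{13}h^{-2}N$ or the certified layer data from the representation and density events, the lower bound on $|X|$ from the degree test, and $|Y_D|\ge|X|/2$ with pointwise bad degree at most $3/8$ from cleaning, and then invoke Lemma~\ref{lem:high-criterion-v2} or Lemma~\ref{lem:low-criterion-v2}. Your additional checks are all correct and are points the paper leaves implicit: the symmetry of $\mathcal G_\tau$, the fact that the stored predicate for $Y_D$ coincides with the cleaned set, and the comparison of constants for the ``consequently'' clause.
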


\begin{proof}
Suppose first that the algorithm returns through the high branch.
Lemma~\ref{lem:representation-layer-approx-v2} gives
\(|\widehat Q_{\rm H}|\leq2^{13}h^{-2}N\).
Validation gives \(|X_d|>3hN/4\) and average bad-pair rate below
\(1/8\).  The independent cleaning lemma then gives
\(|Y_D|\geq|X_d|/2\), and every \(y\in Y_D\) has at most
\(3|X_d|/8\) bad partners.  These are exactly the hypotheses of
Lemma~\ref{lem:high-criterion-v2}.

Now suppose that the algorithm returns through a low layer.  The
density test and Lemma~\ref{lem:representation-layer-approx-v2} give
\[
\delta_0\geq\frac78h,\qquad
\delta_0\tau^2\geq\frac{h^2}{2^{16}},\qquad
|\widehat Q_\tau|\leq
\frac{2\delta_0N}{h\tau}.
\]
Validation and cleaning give
\[
|X_a|>\frac{3\delta_0N}{8},\qquad
|Y_D|\geq\frac12|X_a|,
\]
and every point of \(Y_D\) has at most \(3|X_a|/8\)
non-neighbours in \(\mathcal G_\tau\).
Finally, Lemma~\ref{lem:codegree-graph-v2} says that every graph edge
has true \(\widehat Q_\tau\)-codegree at least
\(\kappa_0N\), where
\(\kappa_0=2^{-12}\delta_0^2\).
Thus all hypotheses of Lemma~\ref{lem:low-criterion-v2} hold.
\end{proof}

\subsection{Proof of the main theorem}
\label{subsec:persistent-bsg-correctness-v2}

We first record that the useful candidates from Section~3 remain
visible after replacing the ideal relations by their sampled
approximations.

\begin{lemma}[Good candidates have noticeable proposal mass]
\label{lem:compiled-good-mass-v2}
Assume the representation event of
Lemma~\ref{lem:representation-layer-approx-v2}.

If the high-energy alternative of
Theorem~\ref{thm:weighted-deterministic-sifting-v2} holds, then the
energy-biased distribution assigns probability at least
\(7\alpha/2^8\) to differences \(d\) satisfying
\[
|X_d|\geq hN,\qquad
\Pr_{x,w\sim U_{X_d}}
[x+w\notin\widehat Q_{\rm H}]
\leq\frac1{16}.
\]

If the low-energy alternative holds, let \(\tau\) be the visible
threshold.  On the corresponding codegree event, the squared-degree
distribution for \(\widehat Q_\tau\) assigns probability at least
\(63/128\) to centres \(a\) satisfying
\[
|X_a|\geq\frac{\delta_\tau^{\rm true}N}{2}
\geq\frac{\delta_{\tau,0}N}{2},
\]
and whose average nonedge rate in \(\mathcal G_\tau\) is at most
\(1/16\).
\end{lemma}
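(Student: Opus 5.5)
The proof has two parts, one per branch, and in each it reruns the deterministic sifting argument of Section~\ref{sec:deterministic-sifting-v2} with the sampled relations in place of the ideal ones. The point to check is that the weighted mass bounds survive this replacement, because the sampled sets only ever contain the ideal ones and the relevant error terms stay small.

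\emph{High branch.} Work on the representation event of Lemma~\ref{lem:representation-layer-approx-v2}. Repeat the proof of Proposition~\ref{prop:weighted-high-branch-v2} with \(\widehat Q_{\rm H}\) replacing \(Q_{\rm H}\).
\begin{enumerate}
\item Lemma~\ref{lem:weak-pair-identity-v2} holds for every \(Q\). Together with \eqref{eq:compiled-unpopular-energy-v2} it gives \(\sum_d B_{\widehat Q_{\rm H}}(d)\leq\vartheta_{\rm H}N^3=h^2N^3/2^{12}\).
\item Differences \(d\in\mathcal H_h\) with \(B_{\widehat Q_{\rm H}}(d)>r(d)^2/16\) therefore carry squared-representation mass less than \(h^2N^3/2^8\).
\item Subtracting this from the high-energy mass \(h^2N^3/32\) leaves at least \(7h^2N^3/2^8\). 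Dividing by \(E(S)\leq N^3\) gives \(\mu_E\)-probability at least \(7\alpha/2^8\).
\end{enumerate}
Each such \(d\) has \(|X_d|=r(d)\geq hN\). Since \(B_{\widehat Q_{\rm H}}(d)/|X_d|^2\) is exactly \(\Pr_{x,w\sim U_{X_d}}[x+w\notin\widehat Q_{\rm H}]\), this probability is at most \(1/16\), as required.

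\emph{Low branch.} Let \(\tau\) be the visible threshold and \(Q_\tau\) its ideal layer. By \eqref{eq:low-layer-sandwich-v2}, \(Q_\tau\subseteq\{p\geq h\tau\}\subseteq\widehat Q_\tau\). Apply Lemma~\ref{lem:weighted-relation-sifting-v2} to the symmetric relation \(R:=R_{\widehat Q_\tau}\), whose density is \(\delta_R=\delta_\tau^{\rm true}\); its proposal law \(\mu_R\) is exactly the squared-degree law for \(\widehat Q_\tau\). The lemma needs the graph \(\mathcal G_\tau\) to contain every pair with \(c_R(x,y)\geq 2\kappa N\), where \(\kappa=(\delta_\tau^{\rm true})^2/2^{10}\).
\begin{enumerate}
\item The codegree event gives, via \eqref{eq:codegree-graph-sandwich-v2}, that \(\mathcal G_\tau\) contains every pair with \(c_{\widehat Q_\tau}\geq2\kappa_0N\), where \(\kappa_0=\delta_{\tau,0}^2/2^{12}\).
\item Since \(\delta_{\tau,0}\leq\delta_\tau^{\rm true}\) on the retained layer, we have \(2\kappa_0N\leq 2\kappa N\). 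Hence every pair of codegree at least \(2\kappa N\) is an edge, and the hypothesis holds.
\end{enumerate}
The lemma then gives \(\mu_R\)-mass at least \(15/32\) on centres with \(|X_a|=d_R(a)\geq\delta_\tau^{\rm true}N/2\geq\delta_{\tau,0}N/2\) and with at most \(d_R(a)^2/16\) ordered nonedges in \(X_a^2\), i.e.\ nonedge rate at most \(1/16\).

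\emph{Main obstacle: the constant \(63/128\).} The claimed mass is \(63/128\), slightly larger than \(15/32=60/128\), so the same argument must be tightened. The slack comes from \(\kappa_0\) being smaller than \(\kappa\) by a factor of about \(4\).
\begin{enumerate}
\item In the double count of nonedges, bound \(\sum_a B_{\mathcal G}(a)<2\kappa_0N^3\). Then the bad-nonedge centres carry squared-degree mass below \(32\kappa_0N^3\leq\delta_R^2N^3/128\leq Z_R/128\), instead of \(Z_R/32\).
\item The low-degree centres still lose at most \(Z_R/2\).
\item The total mass left is therefore \(1-1/2-1/128=63/128\).
\end{enumerate}
I would verify this refined union bound carefully: that the nonedge threshold is measured against \(2\kappa_0\) rather than \(2\kappa\), and that \(Z_R\geq\delta_R^2N^3\) is used with the true density.
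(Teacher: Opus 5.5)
Your proposal is correct and follows the paper's proof. The high branch reruns the weak-pair count with \(\widehat Q_{\rm H}\) and \eqref{eq:compiled-unpopular-energy-v2}, and the low branch combines the half-mass loss from low-degree centres with the refined count of nonedges of codegree below \(2\kappa_0N\), giving \(32\kappa_0N^3\le Z_R/128\) and hence \(63/128\); your intermediate \(15/32\) application of Lemma~\ref{lem:weighted-relation-sifting-v2} is valid but superseded by this refined count.
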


\begin{proof}
For the high branch, the weak-pair identity from
Lemma~\ref{lem:weak-pair-identity-v2} and
\eqref{eq:compiled-unpopular-energy-v2} show that high fibres with
bad-pair rate above \(1/16\) carry squared-representation mass at most
\(h^2N^3/2^8\).  The high-energy alternative carries at least
\(h^2N^3/32\), leaving at least \(7h^2N^3/2^8\) on the stated good
fibres.  Since \(E(S)\leq N^3\), their energy-biased probability is
at least \(7\alpha/2^8\).

For the low branch, write \(\delta:=\delta_\tau^{\rm true}\) and
\(\delta_0:=\delta_{\tau,0}\).  Centres of degree below
\(\delta N/2\) carry at most half of the squared-degree mass.  By the
left-hand implication in
\eqref{eq:codegree-graph-sandwich-v2}, every graph nonedge has
codegree below \(2\kappa_0N\).  The double-counting argument of
Lemma~\ref{lem:weighted-relation-sifting-v2} shows that centres with
more than a \(1/16\) fraction of nonedges carry at most
\(32\kappa_0N^3\) squared-degree mass.  Since
\(\kappa_0=\delta_0^2/2^{12}\) and \(\delta_0\leq\delta\), this is at
most \(1/128\) of the total squared-degree mass.  The remaining mass
is at least \(1-1/2-1/128=63/128\).
\end{proof}

\begin{proof}[Proof of Theorem~\ref{thm:persistent-bsg-v2}]
The exact parameter choices are stated in
Appendix~\ref{subsec:app-parameter-allocation-v2}.  Let
\[
J_h:=|\mathcal T_h|,
\qquad
R_\star:=R_{\rm H}+J_hR_{\rm L},
\qquad
\rho_\wedge:=\min\{\rho_{\rm snd},\rho_{\rm cmp}\}.
\]

\paragraph{Soundness.}
No energy assumption is made.  Let \(\mathcal E_{\rm snd}\) be the
intersection of the representation and density events, all codegree
events, all validation-soundness events, and all cleaning events for
the finitely many possible slots.  Conditional on the complete past, each validation or cleaning sample
is fresh.  Corollary~\ref{cor:adaptive-slot-accounting-v2} therefore
gives
\begin{align}
\Pr[\mathcal E_{\rm snd}^c]
&\leq
\rho_{\rm rep}
+\rho_{\rm den}
+J_h\rho_{\rm code}
+2R_\star\zeta
+R_\star\rho_{\rm clean}
\nonumber\\
&=
\frac{21}{200}\rho_\wedge
<\rho_{\rm snd}.
\label{eq:soundness-ledger-v2}
\end{align}
Proposal or rejection-block exhaustion is not included here: it
discards a candidate and cannot produce an output.

On \(\mathcal E_{\rm snd}\), every nonfailure output satisfies the
hypotheses of Lemma~\ref{lem:output-path-v2}.  This proves
\eqref{eq:persistent-bsg-soundness-v2}.

\paragraph{Completeness.}
Assume \eqref{eq:persistent-bsg-energy-v2}.  By
Theorem~\ref{thm:weighted-deterministic-sifting-v2}, the high- or
low-energy alternative holds.

In the high case, each successful proposal request returns an exact
energy-biased candidate, and
Lemma~\ref{lem:compiled-good-mass-v2} gives good-candidate
probability at least \(7\alpha/2^8\).  The choice
\[
R_{\rm H}:=
\left\lceil
64\alpha^{-1}\log\frac{40}{\rho_{\rm cmp}}
\right\rceil
\]
makes the probability of missing every good high candidate at most
\(\rho_{\rm cmp}/40\).

In the low case, the visible layer is retained on the global
approximation event.  Each successful proposal request for that layer
has the exact squared-degree law, and
Lemma~\ref{lem:compiled-good-mass-v2} gives good-centre probability at
least \(63/128\).  The choice
\[
R_{\rm L}:=
\left\lceil
3\log\frac{40J_h}{\rho_{\rm cmp}}
\right\rceil
\]
makes the probability of missing every good centre in the visible
layer at most \(\rho_{\rm cmp}/(40J_h)\), and hence at most
\(\rho_{\rm cmp}/40\).

Every good candidate satisfies the completeness premises of
Lemma~\ref{lem:candidate-validation-v2}.  If its finite rejection
blocks succeed, it passes validation and produces a cleaned record
except on the allocated validation and cleaning events.  The complete
failure accounting is
\[
\begin{array}{l|c}
\text{failure source} & \text{upper bound}\\ \hline
\text{global approximation, validation-soundness, and cleaning events}
& 21\rho_{\rm cmp}/200\\
\text{no good candidate proposed}
& \rho_{\rm cmp}/40\\
\text{proposal-block exhaustion}
& \rho_{\rm cmp}/100\\
\text{candidate-sampling exhaustion}
& \rho_{\rm cmp}/100\\
\text{cleaning-sampling exhaustion}
& \rho_{\rm cmp}/100\\
\text{validation-completeness failures}
& \rho_{\rm cmp}/50
\end{array}
\]
because
\[
2R_\star\zeta
=
\frac{\rho_\wedge}{50}
\leq\frac{\rho_{\rm cmp}}{50}.
\]
The sum is at most
\[
\frac{9}{50}\rho_{\rm cmp}<\rho_{\rm cmp},
\]
which proves \eqref{eq:persistent-bsg-completeness-v2}.

\paragraph{Complexity.}
Appendix~\ref{subsec:app-resource-accounting-v2} proves the displayed
sample, query, record-length, and per-query bounds.  Every loop and
rejection block has a deterministic length, so these are worst-case
rather than expected bounds.
\end{proof}

\subsection{Exact sampling from the output set}
\label{subsec:output-sampling-v2}

The record provides deterministic membership in \(Y_D\), but later
algorithms also need a prescribed finite block of exact uniform
samples from this same set.

\begin{proposition}[Exact finite sampling from the output]
\label{prop:output-exact-sampling-v2}
Fix a nonfailure record satisfying
\[
|Y_D|\geq c_{\rm BSG}\sqrt\alpha\,|S|.
\]
Let \(t\geq1\) be fixed before sampling begins, and let
\(\rho_{\rm samp}\in(0,1)\).  For each requested sample, use an
independent block of
\[
T:=
\left\lceil
\frac{1}{c_{\rm BSG}\sqrt\alpha}
\log\frac{t}{\rho_{\rm samp}}
\right\rceil
\]
uniform samples from \(S\), returning the first point accepted by the
membership predicate of \(Y_D\).  If any block is exhausted, return
\(\mathsf{FAIL}\).

The procedure fails with probability at most \(\rho_{\rm samp}\).
Conditioned on success, the returned tuple has law
\begin{equation}
U_{Y_D}^{\otimes t}.
\label{eq:output-product-law-v2}
\end{equation}
It uses at most \(tT\) samples from \(S\) and
\[
O\!\left(
t\alpha^{-9/2}\Lambda^3
\log\frac{et}{\rho_{\rm samp}}
\right)
\]
membership queries to \(S\).
\end{proposition}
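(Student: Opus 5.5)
The proof treats each requested sample as a fixed-length block of rejection trials and then multiplies over the independent blocks. Write \(p:=|Y_D|/|S|\); by hypothesis \(p\geq c_{\rm BSG}\sqrt\alpha\).

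\textbf{Single block.} The record \(D\) is fixed before sampling, and the membership predicate of \(Y_D\) is deterministic. So within one block the trials are i.i.d.\ uniform draws from \(S\), each accepted independently with probability \(p\). The fixed-block first-success lemma (Lemma~\ref{lem:fixed-block-first-success-v2}) then gives two facts. First, conditioned on some trial accepting, the first accepted point has law \(U_{Y_D}\). This also follows directly: the event that the first success occurs at trial \(k\) and equals \(y\) has probability \((1-p)^{k-1}|S|^{-1}\), which does not depend on \(y\in Y_D\). Second, the block is exhausted with probability
\[
(1-p)^T\leq e^{-pT}\leq e^{-c_{\rm BSG}\sqrt\alpha\,T}\leq \frac{\rho_{\rm samp}}{t},
\]
by the choice of \(T\). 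The last inequality needs \(\log\) read as the natural logarithm; with the paper's base-two convention \(\log\geq\ln\), so the bound only improves.

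\textbf{Product over blocks.} The \(t\) blocks use disjoint, independent sample sequences, and \(t\) is fixed in advance. Hence the success events and the returned points are jointly independent across blocks. A union bound gives total failure probability at most \(t\cdot\rho_{\rm samp}/t=\rho_{\rm samp}\). Conditioning on the intersection of the independent success events, the outputs remain independent, each with law \(U_{Y_D}\). Their joint conditional law is therefore \(U_{Y_D}^{\otimes t}\).

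The step needing most care is this conditional-independence claim. Conditioning on a product of independent events preserves the product structure, and nothing adaptive enters, because \(D\) and \(t\) are fixed beforehand. If \(D\) itself came from an adaptive past, the argument is applied conditionally on that past, using only the fresh blocks.

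\textbf{Resources.} There are at most \(tT\) samples from \(S\). Each trial makes one membership query to \(Y_D\), which costs \(O(\alpha^{-4}\Lambda^3)\) queries to \(S\) by Theorem~\ref{thm:persistent-bsg-v2}. The total is
\[
tT\cdot O(\alpha^{-4}\Lambda^3)=O\!\left(t\alpha^{-1/2}\log\tfrac{et}{\rho_{\rm samp}}\cdot\alpha^{-4}\Lambda^3\right),
\]
where the ceiling is absorbed using \(\log(et/\rho_{\rm samp})\geq1\). This is exactly the stated \(O(t\alpha^{-9/2}\Lambda^3\log(et/\rho_{\rm samp}))\).
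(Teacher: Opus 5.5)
Your proof is correct and follows essentially the same route as the paper: the per-block exhaustion bound $(1-p)^T\le e^{-pT}\le\rho_{\rm samp}/t$, a union bound over the $t$ blocks, exact uniformity of the first accepted point, independence of the blocks and their block-local success events for the product law, and the per-query cost from Theorem~\ref{thm:persistent-bsg-v2} for the resource bound. You also observe that the base-two logarithm in $T$ only helps, since $\log_2 x\ge\ln x$ for $x\ge1$; the paper leaves this detail implicit.
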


\begin{proof}
Condition on the fixed record \(D\).  Each draw from \(S\) lands in
\(Y_D\) with probability at least
\(c_{\rm BSG}\sqrt\alpha\).  Thus one block is exhausted with
probability at most
\[
(1-c_{\rm BSG}\sqrt\alpha)^T
\leq
e^{-c_{\rm BSG}\sqrt\alpha T}
\leq
\frac{\rho_{\rm samp}}t.
\]
A union bound proves the failure estimate.  In a successful block,
the first accepted point is exactly uniform on \(Y_D\).  Independence
of the blocks and their block-local success events gives
\eqref{eq:output-product-law-v2}.  The query bound follows from the
per-query cost in Theorem~\ref{thm:persistent-bsg-v2}.
\end{proof}

All samples and membership queries in the proposition refer to the
same realized set \(Y_D\).  Later compositions invoke it on the
structural-good event and charge the soundness error of
Theorem~\ref{thm:persistent-bsg-v2} separately.

\section{Size-oblivious algorithmic PFR}
\label{sec:size-oblivious-pfr-v2}

The previous section supplies membership access and finite exact
samples from a fixed small-doubling set, but not its cardinality.
Here the input is a nonempty set
\(A\subseteq\mathbb F_2^n\) with
\[
|A+A|\leq K|A|.
\]
Knowing \(K\), membership in \(A\), and a prescribed exact sample
block is enough to find a subspace no larger than \(A\) whose cosets
cover \(A\) polynomially many times.

We use the clean-input Algorithmic PFR framework of
Castro--Silva, Bri\"et, Arunachalam, Dutt, and Gur
\cite{CSBADG26}.  Two changes are needed here: the input size is
unknown, and all random choices must have deterministic finite
budgets.  We therefore localize the set, estimate a safe size range,
sample the linear model by a bounded-time exact procedure, and
validate the affine map before using it.

\begin{theorem}[Size-oblivious algorithmic PFR]
\label{thm:size-oblivious-pfr-v2}
There are a nondecreasing polynomial
\(P_{\rm PFR}:[1,\infty)\to[1,\infty)\) and a nondecreasing
computable function \(F_{\rm PFR}\) such that the following holds.

Let \(A\subseteq\mathbb F_2^n\) be nonempty and suppose that
\begin{equation}
|A+A|\leq K|A|
\qquad(K\geq1).
\label{eq:pfr-input-doubling-v2}
\end{equation}
Given \(K\), a failure parameter \(\rho\in(0,1)\), deterministic
membership access to \(A\), and independent exact uniform samples
from \(A\), there is a randomized algorithm which is not given
\(|A|\) and which, with probability at least \(1-\rho\), outputs a
basis for a subspace \(V\leq\mathbb F_2^n\) satisfying
\begin{equation}
|V|\leq|A|,
\qquad
\mathcal N_V(A)\leq P_{\rm PFR}(K).
\label{eq:pfr-output-v2}
\end{equation}

The number of samples requested from \(A\) is fixed before they are
drawn and is
\[
s_{\rm PFR}(n,\rho)
=
O\!\left(n+\log\frac1\rho\right).
\]
The number of membership queries and the ordinary running time are at
most
\[
F_{\rm PFR}(K)
\operatorname{poly}\!\left(n,\log\frac1\rho\right),
\]
where one may take
\[
F_{\rm PFR}(K)=2^{O(K)}K^{O(\log K)}.
\]
\end{theorem}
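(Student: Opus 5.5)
We follow the clean-input algorithmic PFR pipeline of \cite{CSBADG26}, which has three stages: locate a dense model of \(A\) in a low-dimensional quotient, sample a linear (affine) map that is approximately a homomorphism on \(A\), and read off the subspace as its kernel. We modify it so that it never uses \(|A|\) and every random loop has a fixed finite budget. The plan has four steps.

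\textbf{Step 1: localize and estimate a safe size range without knowing \(|A|\).} Draw the fixed block of \(s=O(n+\log(1/\rho))\) samples from \(A\) in advance. Let \(W_0\) be the span of differences of a prefix of these samples. By a standard covering argument, \(O(\log K)\)-many random elements of \(A\) already span a coset space containing a constant fraction of \(A\), and more samples give all of \(\operatorname{span}(A-A)\) with probability \(1-\rho/4\) once \(s\gtrsim n+\log(1/\rho)\). Working inside this span, we estimate \(\log_2|A|\) to within an additive \(O(\log K)\) by birthday-type collision counts of the samples modulo candidate subspaces, tested with membership queries. This yields a window of \(O(\log K)\) candidate dimensions \(k\). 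Each candidate \(k\) will be tried, and the one to output will be chosen by validation.

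\textbf{Step 2: find the structured map.} By existential PFR over \(\mathbb F_2\) \cite{GGMT25}, there is a subspace \(H\) with \(|H|\le|A|\) and \(\mathcal N_H(A)\le K^{O(1)}\). Equivalently, in the \cite{CSBADG26} formulation, there is a linear map \(\phi:\mathbb F_2^n\to\mathbb F_2^k\) whose restriction to a dense subset of \(A\) is affine with kernel of controlled size. Following \cite{CSBADG26}, we obtain such a map via a \(2^{O(K)}K^{O(\log K)}\)-time search: we enumerate quotients generated by randomly chosen sample differences and verify the restricted homomorphism using membership queries. This search is the source of \(F_{\rm PFR}\). To keep the sample budget at \(O(n+\log(1/\rho))\), the exhaustive part of the search uses only membership queries and the already-fixed sample block, never fresh samples. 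The bounded-time exact sampling lemma of Appendix~\ref{app:probability-tools-v2} replaces any expected-time rejection loops.

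\textbf{Step 3: validate and correct.} For each candidate \(V=\ker\phi\), we test two things using the fixed sample block and membership queries. First, we test that the sample points occupy at most \(P_{\rm PFR}(K)\) cosets of \(V\) with large mass. Ruzsa covering (\eqref{eq:model-Ruzsa-covering-v2}) then upgrades a dense-coset statement to a full cover \(\mathcal N_V(A)\le K^{O(1)}\), using \(|A+A|\le K|A|\). Second, we test that \(|V|\le|A|\); since \(|A|\) is unknown, we certify it by exhibiting a coset of \(V\) whose intersection with \(A\) has estimated density \(>1/2\) relative to \(|V|\), or fall back on the codimension-one correction (halving \(V\)). Halving at most doubles the covering number, so it can be charged to \(P_{\rm PFR}\). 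A union bound over the \(O(\log K)\cdot F\) candidates, with error \(\rho/4\) each stage, gives success probability \(1-\rho\).

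\textbf{Main obstacle.} The hardest step is the size condition \(|V|\le|A|\) without access to \(|A|\). Sampling only shows that \(A\) is concentrated on few \(V\)-cosets, which gives \(|V|\gtrsim|A|/P\), the wrong direction. I would first try to certify \(|V\cap(A+x)|\ge|V|/2\) for some coset via a collision estimate, which together with the fact that a subspace meeting \(A\) in half of itself implies \(|V|\le2|A|\). Then one codimension-one step gives \(|V|\le|A|\). If the collision estimate requires \(|A|\) to precision beyond the \(O(\log K)\) window, I would instead drop codimension \(O(\log K)\) (the window width), which costs only a factor \(K^{O(1)}\) in the covering number, absorbed into \(P_{\rm PFR}\).
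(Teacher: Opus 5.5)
Your outline (localize, obtain a size scale, import the restricted-homomorphism machinery, validate, truncate) matches the paper's at the top level. However, the step that actually makes the routine size-oblivious is missing.

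Birthday-type collision counting cannot estimate $|A|$, or the number of occupied cosets of a subspace, from $O(n+\log(1/\rho))$ samples. Seeing collisions needs on the order of the square root of the support size, which can be exponential in $n$, whereas the theorem's sample budget does not even depend on $K$. Your localization claims are also false as stated. A span of $O(\log K)$ points has only $K^{O(1)}$ elements. When $A$ is a subspace plus one extra point, no $O(n+\log(1/\rho))$ samples span $\operatorname{span}(A-A)$. Only the weaker Lemma~\ref{lem:pfr-localization-v2} is true and needed: the span $U$ of the samples contains half of $A$.

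The missing idea is the spanning estimate \eqref{eq:pfr-spanning-bound-v2}. The set $A_0=A\cap U$ spans $U$ and has doubling at most $L=2K$, hence density at least $2^{-2L}$ in the \emph{explicitly known} subspace $U$. So $2^{O(K)}\log(1/\rho)$ self-generated uniform points of $U$ plus membership queries give $\underline N\le|A_0|\le\widehat N<4|A_0|$, with no further samples from $A$. This scale is needed long before the final size condition:
\begin{itemize}
\item It fixes the model dimension $m$ so that a random surjection $\pi:U\to\mathbb F_2^m$ is Freiman-good on $A_0$ (via $|4A_0|\le L^4|A_0|$) while $2^m\le C_{\rm mod}L^4|A_0|$. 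That is exactly what turns $E(A_0)\ge|A_0|^3/L$ into the ambient-normalized premise $E(S_\pi)\ge 2^{3m}/\mathcal K_\star(L)$ of Theorem~\ref{thm:pfr-rh-primitive-v2}.
\item The spanning estimate also gives $\dim U\le\log_2|A_0|+2L$, so one may impose $m\ge\dim U-O(K)$. Then every fibre of $\pi$ has $2^{O(K)}$ points, which (together with the density estimate) is where the $2^{O(K)}$ factor in $F_{\rm PFR}$ comes from.
\end{itemize}
Your Step~2 (``enumerate quotients generated by random sample differences'') is not the mechanism of the imported lemma, and it comes with no correctness or complexity argument. In the paper, the routine is run on the canonical inverse $g_\pi$ over $S_\pi=\pi(A_0)$. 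The output is the image $H=\chi_U^{-1}(\operatorname{Im}M)$ of the returned linear map, so $|H|\le 2^m$ holds automatically.

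Step~3 also fails as written. Checking coset masses at scale $1/P_{\rm PFR}(K)$ with the fixed block of samples from $A$ needs $\operatorname{poly}(K)$ more samples than a $K$-independent budget allows. Union-bounding over $F_{\rm PFR}(K)$ candidates likewise needs about $\log F_{\rm PFR}(K)$ more samples. Moreover, that block was already used to generate the candidates, so concentration does not apply to them. The paper validates each affine candidate on a fresh block of uniform points of $\mathbb F_2^m$, using membership queries only (Lemma~\ref{lem:pfr-agreement-validation-v2}).

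Your size certificate is also miscalibrated. An admissible subspace with $|V|\le|A|$ and $\mathcal N_V(A)\le P$ is only guaranteed a coset of relative density at least $1/P$, and possibly not much more. For instance, for a random density-$1/K$ subset of a subspace, every admissible $V$ has all coset densities at most about $1/K$. So a threshold of $1/2$ rejects valid candidates, and one halving cannot help when nothing bounds $|V|/|A|$.

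Dropping $O(\log K)$ dimensions is sound only given an a priori bound $|V|\le\operatorname{poly}(K)|A|$. In the paper that bound is $|H|\le 2^m\le C_{\rm mod}L^4|A_0|$. Combined with $\underline N\ge|A_0|/4$ (when $|A_0|\ge4$), truncating $H$ to dimension $\min\{\dim H,\lfloor\log_2\underline N\rfloor\}$ gives $|V|\le\underline N\le|A|$ at index cost at most $C_{\rm tr}L^4$ (Lemma~\ref{lem:pfr-truncation-v2}).
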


The theorem is promise based.  The algorithm does not attempt to
verify \eqref{eq:pfr-input-doubling-v2}; that promise is used both to
produce a good affine candidate and to turn validated agreement into
a coset cover.

\subsection{The algorithm}
\label{subsec:pfr-algorithm-v2}

The routine has one preparation stage and then a fixed number of
independent trials.

\begin{algorithm}[t]
\caption{\textsc{SizeObliviousPFR}}
\label{alg:size-oblivious-pfr-v2}
\begin{algorithmic}[1]
\Require membership access and exact uniform samples from \(A\);
         \(K\geq1\); failure parameter \(\rho\)
\State draw the prescribed sample block from \(A\), and let \(U\) be
       its linear span
\State set \(A_0:=A\cap U\), and estimate the density of \(A_0\)
       inside the explicitly known subspace \(U\)
\State from the estimate, compute a lower size scale
       \(\underline N\) and a dyadic upper scale \(\widehat N\)
\For{a prescribed number of independent trials}
    \State construct a bounded-time full-rank linear model
           \(\pi:U\to\mathbb F_2^m\)
    \State define membership and canonical-inverse access to
           \(S_\pi:=\pi(A_0)\)
    \State invoke the restricted-homomorphism routine on the image model
    \State validate the returned affine map using fresh uniform points
           of \(\mathbb F_2^m\)
    \If{the validation test passes}
        \State convert the agreement to a subspace \(H\), truncate
               \(H\) using \(\underline N\), and \Return the result
    \EndIf
\EndFor
\State \Return \(\mathsf{FAIL}\)
\end{algorithmic}
\end{algorithm}

The first validated subspace is returned.  There is no majority vote:
different successful trials need not produce comparable subspaces.

\subsection{Localization and safe size scales}
\label{subsec:pfr-localization-size-v2}

A small-doubling set may be exponentially sparse in the ambient
space.  A short sample block nevertheless captures at least half of
it inside an explicitly known subspace.

Draw \(t_{\rm loc}\) independent samples
\(X_1,\ldots,X_{t_{\rm loc}}\sim U_A\) and put
\[
U:=\operatorname{span}(X_1,\ldots,X_{t_{\rm loc}}),
\qquad
A_0:=A\cap U.
\]

\begin{lemma}[Size-oblivious localization]
\label{lem:pfr-localization-v2}
For every \(\rho_{\rm loc}\in(0,1)\), the choice
\[
t_{\rm loc}
=
\left\lceil
8\left(n+\log\frac1{\rho_{\rm loc}}\right)
\right\rceil
\]
satisfies
\[
\Pr\!\left[|A_0|<\frac12|A|\right]
\leq\rho_{\rm loc}.
\]
The algorithm obtains an explicit basis of \(U\), and
\(\operatorname{span}(A_0)=U\).
\end{lemma}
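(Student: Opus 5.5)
The statement to prove is Lemma~\ref{lem:pfr-localization-v2}. The plan is a sequential span-growth argument. Write \(U_k:=\operatorname{span}(X_1,\ldots,X_k)\), with \(U_0=\{0\}\). Call step \(k+1\) \emph{progressive} if either \(|A\cap U_k|\geq\frac12|A|\) already holds, or \(X_{k+1}\notin U_k\). In the second case the dimension strictly increases.

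Conditioned on \(X_1,\ldots,X_k\), suppose \(|A\cap U_k|<\frac12|A|\). Then the fresh sample satisfies
\[
\Pr[X_{k+1}\notin U_k]=1-\frac{|A\cap U_k|}{|A|}>\frac12 .
\]
So each step is progressive with conditional probability at least \(1/2\), whatever the past. The event \(|A_0|<\frac12|A|\) at the end forces \(|A\cap U_k|<\frac12|A|\) for every \(k\), since \(A\cap U_k\) is monotone in \(k\). Every dimension increase is a progressive step, and there can be at most \(n\) of them. Hence the bad event requires at most \(n\) progressive steps among \(t:=t_{\rm loc}\) trials.

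By the conditional bound, the number of progressive steps stochastically dominates \(\mathrm{Bin}(t,1/2)\). This is a standard coupling, or one can use the martingale form from Appendix~\ref{app:probability-tools-v2}. It therefore suffices to bound \(\Pr[\mathrm{Bin}(t,1/2)\leq n]\). A crude count works:
\[
\Pr[\mathrm{Bin}(t,1/2)\leq n]\leq\binom{t}{\leq n}2^{-t}\leq 2^{\,tH(n/t)-t}.
\]
With \(t\geq8(n+\log(1/\rho_{\rm loc}))\) we have \(n/t\leq1/8\) and \(H(1/8)<0.55\). This gives a bound of \(2^{-0.45t}\leq2^{-3.6\log(1/\rho_{\rm loc})}\leq\rho_{\rm loc}\). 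If the entropy constant looks tight, a Chernoff bound gives the alternative \(e^{-t(1/2-n/t)^2\cdot 2}\leq e^{-2(3/8)^2t}\), which also suffices since \(2(9/64)\cdot 8>1\) in natural-log units after converting \(\log\) to base two. The dominance step is the only delicate part, and a clean way to handle it is to stop the process once half the mass is captured.

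The two deterministic claims are straightforward. The algorithm computes an explicit basis of \(U\) by Gaussian elimination on the stored samples. Each \(X_i\in A\cap U=A_0\), so \(U=\operatorname{span}(X_i)\subseteq\operatorname{span}(A_0)\subseteq U\), which gives \(\operatorname{span}(A_0)=U\).
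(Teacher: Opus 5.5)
Your proof is correct and follows essentially the same argument as the paper. In both, each new sample enlarges the span with conditional probability above $1/2$ while less than half of $A$ has been captured. The count of such steps is then coupled from below with $\mathrm{Bin}(t_{\rm loc},1/2)$ and controlled by a lower-tail bound. Your explicit entropy and Hoeffding calculations simply supply the numerics that the paper compresses into ``a lower-tail Chernoff bound.''
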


\begin{proof}
Let \(U_j:=\operatorname{span}(X_1,\ldots,X_j)\).  As long as
\(|A\cap U_j|<|A|/2\), the next sample lies outside \(U_j\) with
conditional probability greater than \(1/2\), and every such sample
increases the dimension by one.  Couple these growth indicators from
below by independent Bernoulli\((1/2)\) variables.  If localization
has still failed after \(t_{\rm loc}\) samples, their sum is at most
\(\dim U\leq n\).  A lower-tail Chernoff bound gives the claimed
probability.  The sampled points belong to \(A_0\) and span \(U\).
\end{proof}

On the localization event,
\[
|A_0+A_0|
\leq |A+A|
\leq 2K|A_0|.
\]
For the rest of the section, set
\[
L:=2K.
\]

We use the finite-field spanning estimate
\begin{equation}
B\subseteq W,\quad
\operatorname{span}(B)=W,\quad
|B+B|\leq L|B|
\quad\Longrightarrow\quad
|W|\leq\frac{2^{2L}}{2L}|B|.
\label{eq:pfr-spanning-bound-v2}
\end{equation}
This is Theorem~5.1 of \cite{CSBADG26}.  Applied to \(A_0\subseteq U\),
it gives
\[
\theta_0:=\frac{|A_0|}{|U|}
\geq 2^{-2L}.
\]

The algorithm knows a basis of \(U\), so it can sample uniformly from
\(U\) without using further samples from \(A\).  Estimate \(\theta_0\)
by membership queries to \(A\), obtaining \(\widehat\theta\), and set
\[
N_-:=\frac23\widehat\theta|U|,
\qquad
N_+:=\frac43\widehat\theta|U|,
\]
\[
\underline N:=\max\{1,\lfloor N_-\rfloor\},
\qquad
\widehat N:=
\begin{cases}
1,&N_+=0,\\
2^{\lceil\log_2N_+\rceil},&N_+>0.
\end{cases}
\]

\begin{lemma}[Safe localized size scales]
\label{lem:pfr-size-scales-v2}
For every \(\rho_{\rm size}\in(0,1)\), using
\[
2^{O(L)}\log\frac1{\rho_{\rm size}}
\]
uniform samples from \(U\) and the same number of membership queries
to \(A\), one has, except with probability
\(\rho_{\rm size}\),
\begin{equation}
1\leq\underline N\leq|A_0|
\leq\widehat N<4|A_0|.
\label{eq:pfr-size-scale-sandwich-v2}
\end{equation}
If \(|A_0|\geq4\), then also
\[
\underline N\geq\frac14|A_0|.
\]
\end{lemma}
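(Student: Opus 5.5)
The plan is a multiplicative Chernoff estimate for \(\widehat\theta\), followed by a deterministic check that the rounding rules give \eqref{eq:pfr-size-scale-sandwich-v2}. Draw \(m\) independent uniform points \(Z_1,\ldots,Z_m\in U\), which needs only the explicit basis of \(U\). Query \(\mathbf 1_A(Z_i)\), and note that \(Z_i\in A\) if and only if \(Z_i\in A_0\), because \(Z_i\in U\). Put \(\widehat\theta:=m^{-1}\sum_i\mathbf 1_A(Z_i)\). Each indicator is Bernoulli with mean \(\theta_0=|A_0|/|U|\). We work on the localization event, so that \(A_0\neq\varnothing\) spans \(U\) and \(|A_0+A_0|\leq L|A_0|\). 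Then \eqref{eq:pfr-spanning-bound-v2} gives \(\theta_0\geq2^{-2L}\).

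The multiplicative Chernoff bound gives
\[
\Pr\bigl[|\widehat\theta-\theta_0|>\tfrac14\theta_0\bigr]\leq2\exp(-c\,m\theta_0).
\]
Hence \(m=\lceil C\,2^{2L}\log(2/\rho_{\rm size})\rceil=2^{O(L)}\log(1/\rho_{\rm size})\) makes this at most \(\rho_{\rm size}\). This accounts for the stated sample and query budget, one membership query per point. If the lemma is meant unconditionally rather than on the localization event, I would read \(\rho_{\rm size}\) as conditional on it, as the later failure accounting presumably does.

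On the good event \(\tfrac34\theta_0\leq\widehat\theta\leq\tfrac54\theta_0\), multiply through by \(|U|\). This gives \(N_-\leq\tfrac23\cdot\tfrac54|A_0|=\tfrac56|A_0|\) and \(N_-\geq\tfrac12|A_0|\). It also gives \(N_+\geq\tfrac43\cdot\tfrac34|A_0|=|A_0|\) and \(N_+\leq\tfrac53|A_0|\).

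\begin{itemize}
\item \(N_+>0\) here, so \(\widehat N=2^{\lceil\log_2N_+\rceil}\in[N_+,2N_+)\). This gives \(|A_0|\leq\widehat N<\tfrac{10}{3}|A_0|<4|A_0|\).
\item \(\lfloor N_-\rfloor\leq N_-\leq\tfrac56|A_0|\), and \(|A_0|\geq1\). Together these give \(1\leq\underline N\leq|A_0|\).
\item If \(|A_0|\geq4\), then \(N_-\geq2\), so \(\lfloor N_-\rfloor\geq N_--1\geq\tfrac12|A_0|-1\geq\tfrac14|A_0|\).
\end{itemize}

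The constants \(2/3\) and \(4/3\) leave slack against the relative error \(1/4\), so I would choose that error first.

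The only real obstacle is the \(2^{O(L)}\) sample count. It depends entirely on the lower bound \(\theta_0\geq2^{-2L}\), which requires \(\operatorname{span}(A_0)=U\) and the doubling bound for \(A_0\). Both come from Lemma~\ref{lem:pfr-localization-v2}. So the proof must be stated on, or conditioned on, the localization event, and the two failure probabilities are added later.
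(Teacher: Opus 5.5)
Your proposal is correct and follows the paper's proof: a multiplicative Chernoff bound giving \(\tfrac34\theta_0\leq\widehat\theta\leq\tfrac54\theta_0\) from \(O(\theta_0^{-1}\log(1/\rho_{\rm size}))\) fresh points of \(U\), the bound \(\theta_0\geq2^{-2L}\) from the spanning estimate, and then the deterministic integer and dyadic rounding checks. You also state explicitly that \(\rho_{\rm size}\) is a failure probability conditional on the localization event; the paper leaves this implicit, but it is exactly how its failure ledger adds the localization and size-estimate errors.
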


\begin{proof}
A multiplicative Chernoff bound gives
\[
\frac34\theta_0
\leq\widehat\theta
\leq\frac54\theta_0
\]
using \(O(\theta_0^{-1}\log(1/\rho_{\rm size}))\) samples.
The definitions then give
\[
\frac12|A_0|
\leq N_-\leq|A_0|
\leq N_+\leq\frac53|A_0|.
\]
Integer and dyadic rounding give
\eqref{eq:pfr-size-scale-sandwich-v2}.  If \(|A_0|\geq4\), rounding
\(N_-\geq|A_0|/2\) down loses at most another factor two.
\end{proof}

We use \(\widehat N\) to choose the model dimension and
\(\underline N\) to truncate the final subspace.

\subsection{Bounded-time linear models}
\label{subsec:pfr-linear-model-v2}

Let \(r:=\dim U\), and fix a coordinate isomorphism
\[
\chi_U:U\longrightarrow\mathbb F_2^r.
\]
Choose
\[
m_0:=
\left\lceil
\log_2\widehat N+4\log_2L+c_{\rm mod}
\right\rceil,
\qquad
b_L:=\lceil2L\rceil+4,
\]
and put
\[
C_{\rm mod}:=2^{c_{\rm mod}+3}.
\]
Then set
\begin{equation}
m:=\max\{m_0,r-b_L\}.
\label{eq:pfr-safe-model-dimension-v2}
\end{equation}
The second term is a safeguard: on every execution, every fiber of the
model will contain at most \(2^{b_L}=2^{O(K)}\) points.

If \(m\geq r\), use the fixed injection
\[
\pi(u):=(\chi_U(u),0^{m-r}).
\]
If \(m<r\), the algorithm must sample a uniform surjection
\(U\to\mathbb F_2^m\).  We use the following finite procedure.

\begin{lemma}[Exact bounded-time surjection sampling]
\label{lem:pfr-bounded-surjection-v2}
Assume \(m<r\).  Draw five independent uniformly random
\(m\times r\) binary matrices and choose the first one of full row
rank.  If none has full row rank, return \(\mathsf{FAIL}\).

The procedure fails with probability at most \(1/32\).  Conditional
on success, the selected matrix is exactly uniform among all
surjective linear maps
\(\mathbb F_2^r\to\mathbb F_2^m\).
\end{lemma}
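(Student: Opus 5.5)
The plan is to prove the two claims separately: first the failure bound, by counting full-rank matrices, and then exact uniformity of the selected matrix, by symmetry and conditioning.

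\emph{Failure bound.} A uniformly random \(m\times r\) binary matrix \(M\) with \(m<r\) has full row rank exactly when its \(m\) rows are linearly independent. I will expose the rows one at a time. Given that the first \(i\) rows are independent, the next row lies in their span with probability \(2^{i-r}\). A union bound over the rows therefore gives
\[
\Pr[\text{rank }M<m]
\leq\sum_{i=0}^{m-1}2^{i-r}
<2^{m-r}\leq\frac12,
\]
where the last step uses \(m\leq r-1\). The five matrices are independent, so all of them are rank deficient with probability at most \((1/2)^5=1/32\). This is the whole failure estimate.

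\emph{Exact law.} A linear map \(\mathbb F_2^r\to\mathbb F_2^m\) is surjective precisely when its matrix has full row rank. Each individual matrix is uniform on all \(m\times r\) matrices, so conditioned on having full row rank it is uniform on the set \(\mathcal S\) of surjections. For the selection rule, consider the event that the \(j\)-th matrix is the first one of full rank, so the earlier ones are rank deficient. Independence shows that, on this event, the \(j\)-th matrix is uniform on \(\mathcal S\), independently of the earlier draws. Each such event therefore yields the uniform law on \(\mathcal S\), and a mixture of identical laws is again that law. So, conditioned on success, the selected matrix is exactly uniform on \(\mathcal S\). This is the same first-success principle as Lemma~\ref{lem:fixed-block-first-success-v2}, and I will simply cite it.

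There is no real obstacle here. The only point to handle with care is that the selection step introduces no bias: the first-success argument must condition on which index succeeds, rather than on the realized values of the earlier rejected matrices.
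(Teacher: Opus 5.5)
Your proposal is correct and matches the paper's proof in substance. The paper writes the full-rank probability as the exact product $\prod_{i=0}^{m-1}(1-2^{i-r})$ and bounds it below by $1-\sum_j 2^{-j}>1/2$; this is the same estimate as your row-by-row union bound $\sum_{i<m}2^{i-r}<2^{m-r}\le 1/2$, followed by the fifth power. For uniformity, the paper shows that each full-row-rank $B$ is the first success with probability $2^{-mr}\sum_{j=1}^{5}(1-p_{\rm rank})^{j-1}$, independent of $B$, which is your condition-on-the-index and mix argument compressed into one formula.
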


\begin{proof}
A uniformly random \(m\times r\) matrix has full row rank with
probability
\[
\prod_{i=0}^{m-1}(1-2^{i-r})
=
\prod_{j=r-m+1}^{r}(1-2^{-j}).
\]
Since \(r-m\geq1\), the product is at least
\[
\prod_{j=2}^{m+1}(1-2^{-j})
\geq
1-\sum_{j=2}^{m+1}2^{-j}
>
\frac12.
\]
Thus five independent matrices all fail with probability below
\(2^{-5}=1/32\).

For any full-row-rank matrix \(B\), the probability that \(B\) is the
first successful draw is
\[
2^{-mr}\sum_{j=1}^5(1-p_{\rm rank})^{j-1},
\]
which is independent of \(B\).  Conditioning on at least one success
therefore gives the uniform distribution on full-row-rank matrices.
\end{proof}

For a uniform surjection and a fixed nonzero \(z\in\mathbb F_2^r\),
\begin{equation}
\Pr[\pi(z)=0]
=
\frac{2^{r-m}-1}{2^r-1}
\leq2^{-m}.
\label{eq:pfr-surjection-kernel-v2}
\end{equation}
Call the model \emph{Freiman-good} when
\[
\ker\pi\cap(4A_0)=\{0\}.
\]
Then \(\pi\) is injective on \(A_0+A_0\) and is a Freiman
\(2\)-isomorphism on \(A_0\).

\begin{lemma}[A safe dense model]
\label{lem:pfr-safe-model-v2}
Assume the localization and size-scale events.  Then
\[
m=m_0,
\qquad
2^m\leq C_{\rm mod}L^4|A_0|
\]
for an absolute constant \(C_{\rm mod}\).  Conditional on successful
model generation, the map \(\pi\) is Freiman-good with probability at
least \(3/4\).
\end{lemma}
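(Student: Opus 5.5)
The plan has three parts: show that the safeguard term in \eqref{eq:pfr-safe-model-dimension-v2} is never active on the good events, bound \(2^{m_0}\) using \(\widehat N<4|A_0|\), and control \(\ker\pi\cap(4A_0)\) by a union bound. Throughout, I work on the intersection of the localization event of Lemma~\ref{lem:pfr-localization-v2} and the size-scale event of Lemma~\ref{lem:pfr-size-scales-v2}. On that intersection, \(|A_0+A_0|\leq L|A_0|\), \(\operatorname{span}(A_0)=U\), and \eqref{eq:pfr-size-scale-sandwich-v2} holds.

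\emph{Step 1: \(m=m_0\).} I apply the spanning estimate \eqref{eq:pfr-spanning-bound-v2} to \(A_0\subseteq U\). This gives \(2^r=|U|\leq 2^{2L}|A_0|/(2L)\), and hence \(r\leq 2L+\log_2|A_0|\). Since \(b_L\geq 2L+4\), it follows that \(r-b_L\leq\log_2|A_0|-4<\log_2|A_0|\). On the other hand, \(\widehat N\geq|A_0|\), \(L\geq 2\), and \(c_{\rm mod}\geq0\), so \(m_0\geq\log_2\widehat N\geq\log_2|A_0|\). Therefore \(m_0\geq r-b_L\), and the maximum in \eqref{eq:pfr-safe-model-dimension-v2} equals \(m_0\).

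\emph{Step 2: the size of the model.} Removing the ceiling in the definition of \(m_0\) costs at most a factor \(2\). Combined with \(\widehat N<4|A_0|\), this gives
\[
2^{m}=2^{m_0}<2\cdot2^{c_{\rm mod}}L^4\widehat N<2^{c_{\rm mod}+3}L^4|A_0|=C_{\rm mod}L^4|A_0|.
\]
For use in Step 3, I also record the complementary lower bound \(2^{m_0}\geq 2^{c_{\rm mod}}L^4\widehat N\geq 2^{c_{\rm mod}}L^4|A_0|\).

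\emph{Step 3: Freiman-goodness.} If \(m\geq r\), the model \(\pi\) is the fixed injection, so \(\ker\pi=\{0\}\) and there is nothing to prove. Otherwise, by Lemma~\ref{lem:pfr-bounded-surjection-v2}, conditional on successful generation \(\pi\) is a uniform surjection, and I can apply \eqref{eq:pfr-surjection-kernel-v2} to each nonzero \(z\in 4A_0\subseteq U\). The union bound gives
\[
\Pr\bigl[\ker\pi\cap(4A_0)\neq\{0\}\bigr]\leq |4A_0|\,2^{-m}.
\]
Next, Plünnecke--Ruzsa applied to \(|A_0+A_0|\leq L|A_0|\) gives \(|4A_0|\leq L^4|A_0|\). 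Combining this with the lower bound from Step 2, the failure probability is at most \(2^{-c_{\rm mod}}\). This is at most \(1/4\) once the absolute constant is fixed with \(c_{\rm mod}\geq2\).

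Finally, I justify the remark following the definition of Freiman-good. If \(\pi(a+b)=\pi(c+d)\) with \(a,b,c,d\in A_0\), then \(a+b+c+d\in\ker\pi\cap 4A_0=\{0\}\). Hence \(\pi\) is injective on \(A_0+A_0\) and is a Freiman \(2\)-isomorphism on \(A_0\). The only nonroutine input in the whole argument is the Plünnecke bound \(|4A_0|\leq L^4|A_0|\); everything else is bookkeeping of the constants \(b_L\) and \(c_{\rm mod}\). I expect the main point needing care to be Step 1, where it must be checked that \(b_L\) is large enough that the safeguard term \(r-b_L\) never exceeds \(m_0\).
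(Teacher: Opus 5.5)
Your proposal is correct and follows the paper's proof essentially step for step: the spanning estimate shows the safeguard $r-b_L$ is inactive, $\widehat N<4|A_0|$ gives the model-size bound, and Pl\"unnecke plus a union bound over $4A_0\setminus\{0\}$ with $c_{\rm mod}\geq 2$ gives Freiman-goodness. Your explicit lower bound $2^{m_0}\geq 2^{c_{\rm mod}}L^4|A_0|$ spells out a step that the paper leaves implicit in the inequality $|4A_0|2^{-m}\leq 2^{-c_{\rm mod}}$.
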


\begin{proof}
The spanning estimate gives
\[
r\leq\log_2|A_0|+2L.
\]
Hence
\[
r-b_L\leq\log_2|A_0|-4
<\log_2\widehat N\leq m_0,
\]
so the safeguard is inactive.  The upper estimate
\(\widehat N<4|A_0|\) gives the model-size bound.

If \(m\geq r\), the fixed injection is automatically Freiman-good.
Otherwise, conditional on model-generation success, the map is a
uniform surjection by
Lemma~\ref{lem:pfr-bounded-surjection-v2}.  Pl\"unnecke's inequality
gives \(|4A_0|\leq L^4|A_0|\).  Using
\eqref{eq:pfr-surjection-kernel-v2} and a union bound,
\[
\Pr[\ker\pi\cap(4A_0\setminus\{0\})\neq\varnothing]
\leq |4A_0|2^{-m}
\leq2^{-c_{\rm mod}}.
\]
Choose \(c_{\rm mod}\geq2\).
\end{proof}

The image model must be computable on every transcript, whether or
not the map is Freiman-good.  Set
\[
S_\pi:=\pi(A_0).
\]
For \(x\in S_\pi\), let \(f_\pi(x)\) be the first point of
\(A_0\cap\pi^{-1}(x)\) in a fixed total order, and put
\[
g_\pi(x):=\chi_U(f_\pi(x)).
\]
Outside \(S_\pi\), set \(f_\pi(x)=g_\pi(x)=0\).

\begin{lemma}[Image-model queries]
\label{lem:pfr-image-oracles-v2}
For every model produced by the algorithm, membership in \(S_\pi\)
and the value \(g_\pi(x)\) can be computed deterministically using at
most \(2^{b_L}\) membership queries to \(A\) and
\(2^{O(K)}\operatorname{poly}(n)\) ordinary time.
\end{lemma}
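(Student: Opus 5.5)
The plan is to reduce both queries to one deterministic enumeration of a single fibre of the model, \(\pi^{-1}(x)\cap U\), and to show that this fibre always has at most \(2^{b_L}\) points because of the safeguard in \eqref{eq:pfr-safe-model-dimension-v2}. The algorithm holds an explicit basis of \(U\), the coordinate map \(\chi_U\), and the matrix of \(\pi\). Given \(x\in\mathbb F_2^m\), it will first decide whether \(x\in\pi(U)\). If not, then \(x\notin S_\pi\), and we output nonmembership with \(g_\pi(x)=0\). If so, we take one preimage \(u_0\) and compute a basis of \(\ker\pi\). This is Gaussian elimination on an \(m\times r\) matrix, costing \(\operatorname{poly}(n)\) time and no queries.

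The fibre is then the explicit coset \(u_0+\ker\pi\). We enumerate it in the fixed total order, which we take to be the order induced by \(\chi_U\) on \(U\). The paper fixes ``a fixed total order''; any efficiently comparable order works if we enumerate the whole fibre and take the minimum of the accepted points. For each point we query \(\mathsf{Mem}_A\); since the fibre lies in \(U\), membership in \(A\) and membership in \(A_0=A\cap U\) coincide there. Then \(x\in S_\pi\) exactly when some queried point is accepted, and \(f_\pi(x)\) is the first accepted point in the order. Applying \(\chi_U\) gives \(g_\pi(x)\). The whole procedure is deterministic once \(\pi\) is fixed, as the statement requires for every transcript.

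The key step is to bound the fibre size on every execution, including executions where localization or size estimation failed. I do not expect to invoke Lemma~\ref{lem:pfr-safe-model-v2} here, since that lemma holds only on the good events. The bound instead comes from the model construction alone.

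\emph{Case \(m\geq r\).} Here \(\pi\) is the fixed injection, so every fibre has at most one point.

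\emph{Case \(m<r\).} Here \(\pi\) is a full-row-rank matrix and therefore surjective, so \(\dim\ker\pi=r-m\). Since \(m\geq r-b_L\) by \eqref{eq:pfr-safe-model-dimension-v2}, we get \(r-m\leq b_L\). Each fibre is a coset of \(\ker\pi\) and so has exactly \(2^{r-m}\leq 2^{b_L}\) points.

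This gives the query bound \(2^{b_L}\). The time bound follows because each point of the fibre costs \(\operatorname{poly}(n)\) to generate and compare, and \(2^{b_L}=2^{\lceil 2L\rceil+4}=2^{O(K)}\) because \(L=2K\). The only real obstacle is making sure the worst-case bound does not rely on the Freiman-good event or the size-scale event; the safeguard term \(r-b_L\) handles exactly this.
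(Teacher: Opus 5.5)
Your proposal is correct and follows the paper's proof: solve \(\pi(u)=x\) by linear algebra, enumerate the coset \(u_0+\ker\pi\) while querying \(A\), and bound the fibre size by \(2^{b_L}\) using the safeguard \(m\geq r-b_L\). Your split into the injective case and the full-row-rank case spells out why the safeguard gives \(\dim\ker\pi\leq b_L\) on every transcript, independently of the good events; the paper leaves this implicit.
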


\begin{proof}
The choice \(m\geq r-b_L\) gives
\(\dim\ker\pi\leq b_L\).  Solve \(\pi(u)=x\) by linear algebra.  If
there is no solution in \(U\), then \(x\notin S_\pi\).  Otherwise
enumerate the affine fiber, test membership in \(A\), and return the
first accepted point.  The fiber contains at most \(2^{b_L}\) points.
\end{proof}

\subsection{Restricted homomorphisms and fresh validation}
\label{subsec:pfr-rh-validation-v2}

The only external algorithmic ingredient used below is
\cite[Lemma~5.10]{CSBADG26}.  Appendix~\ref{app:rh-hypotheses-v2}
matches its hypotheses and oracle model to the present construction.

\begin{theorem}[Algorithmic restricted homomorphism]
\label{thm:pfr-rh-primitive-v2}
There are a polynomial \(P_{\rm RH}\) and a nondecreasing computable
function
\[
F_{\rm RH}(\mathcal K)=\mathcal K^{O(\log(2\mathcal K))}
\]
with the following property.

Let \(S'\subseteq\mathbb F_2^m\) and
\(f:S'\to\mathbb F_2^r\).  Suppose
\[
\left|
\left\{
(x_1,x_2,x_3,x_4)\in(S')^4:
\begin{array}{l}
x_1+x_2=x_3+x_4,\\
f(x_1)+f(x_2)=f(x_3)+f(x_4)
\end{array}
\right\}
\right|
\geq\frac{2^{3m}}{\mathcal K}.
\]
Given membership-query access to \(S'\) and value-query access to
\(f\), a randomized algorithm returns, with probability at least
\(0.7\), a linear map
\(M:\mathbb F_2^m\to\mathbb F_2^r\) and \(v\in\mathbb F_2^r\)
such that
\begin{equation}
|\{x\in S':f(x)=Mx+v\}|
\geq\frac{2^m}{P_{\rm RH}(\mathcal K)}.
\label{eq:pfr-rh-agreement-v2}
\end{equation}
Its query and time bounds are
\(F_{\rm RH}(\mathcal K)\operatorname{poly}(m+r)\).
\end{theorem}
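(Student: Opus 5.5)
This theorem is imported from \cite[Lemma~5.10]{CSBADG26}; it is not proved from scratch. The proposal is to check that the stated form is a faithful instance of that lemma. The plan follows the standard restricted-homomorphism argument and pins down the parameters. The hypothesis says that the graph \(\Gamma_f:=\{(x,f(x)):x\in S'\}\subseteq\mathbb F_2^{m+r}\) has additive energy at least \(2^{3m}/\mathcal K\). Since \(|\Gamma_f|=|S'|\leq2^m\), this gives normalized energy \(E(\Gamma_f)\geq|\Gamma_f|^3/\mathcal K\) and also \(|S'|\geq2^m/\mathcal K^{1/3}\), so \(S'\) is dense in the model space.

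\emph{Step 1 (BSG on the graph).} An algorithmic Balog--Szemer\'edi--Gowers step on \(\Gamma_f\) produces a subset \(\Gamma'\subseteq\Gamma_f\). It has size \(\mathcal K^{-O(1)}2^m\) and doubling \(\mathcal K^{O(1)}\). Membership in \(\Gamma'\) needs only membership queries to \(S'\) and value queries to \(f\). The persistent procedure of Theorem~\ref{thm:persistent-bsg-v2}, applied with \(\alpha=\mathcal K^{-1}\), is one admissible implementation. Samples from \(\Gamma_f\) come from rejection sampling uniform points of \(\mathbb F_2^m\), which costs \(O(\mathcal K^{1/3})\) trials per sample by the density bound.

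\emph{Step 2 (PFR on the graph).} Apply existential PFR to \(\Gamma'\). This gives a subspace \(H\leq\mathbb F_2^{m+r}\) with \(|H|\leq|\Gamma'|\) and \(\mathcal K^{O(1)}\) cosets covering \(\Gamma'\). Pigeonholing over these cosets gives one coset \(c+H\) containing a \(\mathcal K^{-O(1)}\) fraction of \(\Gamma'\).

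\emph{Step 3 (the affine map).} Each point of \(\Gamma'\) has a distinct first coordinate. Hence \(H\cap(\{0\}\times\mathbb F_2^r)\) has size \(\mathcal K^{O(1)}\), since it is bounded by the ratio \(|H|/|\pi_1(H\cap\Gamma'\text{-part})|\). Quotienting by this intersection and choosing a complement turns \(c+H\) into the graph of an affine map \(x\mapsto Mx+v\) on an affine subspace of \(\mathbb F_2^m\). The map agrees with \(f\) on a \(\mathcal K^{-O(1)}\) fraction of \(\Gamma'\). This is \eqref{eq:pfr-rh-agreement-v2} with \(P_{\rm RH}\) polynomial.

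\emph{Main obstacle.} The hard part is algorithmic: finding \(H\) within the query bound \(\mathcal K^{O(\log\mathcal K)}\operatorname{poly}(m+r)\), as opposed to merely proving that \(H\) exists. I would not reprove this. I would invoke the quasipolynomial Bogolyubov/sampling machinery of \cite{CSBADG26}, which is the source of \(F_{\rm RH}\). What remains is bookkeeping, done in Appendix~\ref{app:rh-hypotheses-v2}. First, the energy normalization \(2^{3m}/\mathcal K\) must match that lemma's hypothesis. Second, its oracle model must match the one here, namely membership in \(S'\) together with value queries to \(f\) (in our use, \(g_\pi\) via Lemma~\ref{lem:pfr-image-oracles-v2}). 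Third, its constant success probability \(0.7\) is the one quoted. The output must be an explicit pair \((M,v)\), so that the fresh validation stage can test agreement afterwards.
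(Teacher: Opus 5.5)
Your proposal matches the paper: the theorem is taken directly from \cite[Lemma~5.10]{CSBADG26}, and the only work is the hypothesis check of Appendix~\ref{app:rh-hypotheses-v2}. That check covers the ambient normalization \(2^{3m}/\mathcal K\) with agreement counted on the scale \(2^m\), the membership-plus-value oracle model with a support test before each value query, the \(0.7\) success probability, and the transferred \(\mathcal K^{O(\log\mathcal K)}\operatorname{poly}(m+r)\) complexity; your BSG/PFR-on-the-graph sketch is heuristic motivation only and is not needed, since the paper does not reprove the imported lemma.
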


On a Freiman-good model, \(f_\pi\) is the true inverse of
\(\pi|_{A_0}\), so every additive quadruple in \(S_\pi\) is also
respected by \(g_\pi\).  The number of such quadruples is
\(E(A_0)\), and therefore
\[
E(A_0)
\geq\frac{|A_0|^4}{|A_0+A_0|}
\geq\frac{|A_0|^3}{L}.
\]

\begin{lemma}[The image model satisfies the RH premise]
\label{lem:pfr-rh-premise-v2}
Assume that the model is Freiman-good and that
\(2^m\leq C_{\rm mod}L^4|A_0|\).  Then
\((S_\pi,g_\pi|_{S_\pi})\) satisfies the premise of
Theorem~\ref{thm:pfr-rh-primitive-v2} with
\begin{equation}
\mathcal K_\star(L):=
C_{\rm RH}L^{13},
\label{eq:pfr-rh-parameter-v2}
\end{equation}
where \(C_{\rm RH}:=C_{\rm mod}^3\).
\end{lemma}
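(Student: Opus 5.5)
The plan is to exhibit an explicit injection from the additive quadruples of \(A_0\) into the quadruples counted in the premise of Theorem~\ref{thm:pfr-rh-primitive-v2}. We then lower-bound \(E(A_0)\) by Cauchy--Schwarz and convert \(|A_0|\) into \(2^m\) using the model-size bound.

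\emph{Step 1: \(\pi\) is injective on \(A_0\).} Since \(0=a+a\in A_0+A_0\), we have \(A_0+A_0\subseteq 4A_0\). For \(a\neq b\) in \(A_0\), the nonzero element \(a+b\) therefore lies in \(4A_0\). Freiman-goodness then gives \(\pi(a+b)\neq0\), so \(\pi(a)\neq\pi(b)\). Consequently \(f_\pi\) is the genuine inverse of \(\pi|_{A_0}\) on \(S_\pi\), and \(g_\pi=\chi_U\circ(\pi|_{A_0})^{-1}\) there.

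\emph{Step 2: quadruples correspond exactly.} Send \((a_1,a_2,a_3,a_4)\in A_0^4\) to \((\pi(a_1),\dots,\pi(a_4))\in S_\pi^4\); by Step 1 this is a bijection \(A_0^4\to S_\pi^4\). The image satisfies \(x_1+x_2=x_3+x_4\) if and only if \(\pi(a_1+a_2+a_3+a_4)=0\). Because \(a_1+a_2+a_3+a_4\in4A_0\), Freiman-goodness says this happens if and only if \(a_1+a_2=a_3+a_4\). For such a quadruple, \(g_\pi(x_i)=\chi_U(a_i)\), and \(\chi_U\) is linear, so \(g_\pi(x_1)+g_\pi(x_2)=g_\pi(x_3)+g_\pi(x_4)\). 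Hence the count in the premise is at least, and in fact equal to, \(E(A_0)\).

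\emph{Step 3: numerics.} By \eqref{eq:model-energy-CS-v2} and the localization bound \(|A_0+A_0|\le L|A_0|\), we get \(E(A_0)\ge|A_0|^3/L\). The hypothesis \(2^m\le C_{\rm mod}L^4|A_0|\) gives \(|A_0|\ge 2^m/(C_{\rm mod}L^4)\). Combining,
\[
E(A_0)\ \ge\ \frac{2^{3m}}{C_{\rm mod}^3L^{12}\cdot L}\ =\ \frac{2^{3m}}{C_{\rm RH}L^{13}},
\]
which is exactly the premise with \(\mathcal K_\star(L)\).

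The only point needing care is Step 2: the relevant kernel element lies in \(4A_0\), not merely in \(2A_0\), which is precisely why Freiman-goodness was defined with \(4A_0\). The remaining steps are routine.
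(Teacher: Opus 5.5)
Your proof is correct and follows the paper's argument. The paper likewise identifies the premise count with \(E(A_0)\) via Freiman-goodness; you spell out the injectivity on \(A_0\) and the quadruple bijection, which the paper states in one line and in Appendix~\ref{app:rh-hypotheses-v2}. The paper then combines \(E(A_0)\ge|A_0|^3/L\), which uses the localization bound as yours does, with the model-size bound to get \(L(2^m/|A_0|)^3\le C_{\rm mod}^3L^{13}\).
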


\begin{proof}
The required quadruple count equals \(E(A_0)\), and
\[
\frac{|A_0|^3}{L}
=
\frac{2^{3m}}{
L(2^m/|A_0|)^3}.
\]
The model-size bound gives
\[
L(2^m/|A_0|)^3
\leq C_{\rm mod}^3L^{13}.
\]
\end{proof}

Replace \(P_{\rm RH}\) by a nondecreasing polynomial upper envelope
with \(P_{\rm RH}\geq1\), and set
\[
\beta_L:=
\frac1{2P_{\rm RH}(\mathcal K_\star(L))}.
\]
For a returned affine candidate \((M,v)\), define its absolute
agreement by
\[
p_\pi(M,v):=
\Pr_{X\sim U_{\mathbb F_2^m}}
[X\in S_\pi\ \text{and}\ g_\pi(X)=MX+v].
\]
The RH conclusion gives \(p_\pi(M,v)\geq2\beta_L\) on a successful
good-model invocation.

Validate the candidate using fresh uniform points of
\(\mathbb F_2^m\), and accept when the empirical agreement is at least
\(3\beta_L/2\).

\begin{lemma}[Fresh agreement validation]
\label{lem:pfr-agreement-validation-v2}
For every conditional error budget \(\zeta\in(0,1)\), a fresh block
of
\[
m_{\rm agr}
=
\left\lceil
16\beta_L^{-1}\log\frac2\zeta
\right\rceil
\]
points is sufficient for the following statements, conditional on
the complete past.

If \(p_\pi(M,v)\geq2\beta_L\), the candidate is accepted except with
probability \(\zeta\).  If \(p_\pi(M,v)\leq\beta_L\), it is accepted
with probability at most \(\zeta\).
\end{lemma}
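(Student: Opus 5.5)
The plan is to condition on the complete past, which fixes the model \(\pi\), the image set \(S_\pi\), the map \(g_\pi\), and the candidate \((M,v)\). Then \(p:=p_\pi(M,v)\) is a deterministic number. The fresh validation points \(X_1,\ldots,X_{m_{\rm agr}}\sim U_{\mathbb F_2^m}\) are independent of this past. The indicators
\[
I_j:=\mathbf 1[X_j\in S_\pi\ \text{and}\ g_\pi(X_j)=MX_j+v]
\]
are therefore conditionally i.i.d.\ Bernoulli\((p)\). Each indicator is computable by Lemma~\ref{lem:pfr-image-oracles-v2}, and this computation uses no further randomness. The test accepts when \(\widehat p:=m_{\rm agr}^{-1}\sum_jI_j\geq3\beta_L/2\). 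Both claims then become one-sided multiplicative Chernoff bounds for a binomial sum, with threshold \(3\beta_L/2\) sitting between \(\beta_L\) and \(2\beta_L\). The general forms of these bounds are presumably those in Lemma~\ref{lem:multiplicative-gap-test-v2} of the appendix; I will check the constants directly.

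\textbf{Completeness.} Suppose \(p\geq2\beta_L\). By monotonicity (stochastic domination of the binomial in \(p\)), it suffices to treat the worst case. Acceptance fails only if \(\widehat p<(1-\tfrac14)p\), since \(3\beta_L/2\leq\tfrac34p\). The lower-tail Chernoff bound gives probability at most
\[
\exp\bigl(-\tfrac{(1/4)^2}{2}\,p\,m_{\rm agr}\bigr)\leq\exp\bigl(-\tfrac{1}{16}\beta_L m_{\rm agr}\bigr)\leq\zeta/2
\]
for \(m_{\rm agr}\geq16\beta_L^{-1}\log(2/\zeta)\). Here \(\log\) is base two, so \(\ln(2/\zeta)\leq\log(2/\zeta)\) and the inequality only improves.

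\textbf{Soundness.} Suppose \(p\leq\beta_L\). Domination again lets me assume \(p=\beta_L\), because increasing \(p\) only increases the acceptance probability. Acceptance then means \(\widehat p\geq(1+\tfrac12)\beta_L\). The upper-tail bound \(\exp(-\delta^2\mu/(2+\delta))\) with \(\delta=1/2\) and \(\mu=\beta_L m_{\rm agr}\) gives \(\exp(-\beta_L m_{\rm agr}/10)\leq\zeta\), using \(16/10\geq1\) and \(\ln(2/\zeta)\geq\ln(1/\zeta)\). If \(p=0\), acceptance is impossible, since \(\beta_L>0\).

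\textbf{Main obstacle.} The only delicate point is the conditioning. I must check that \(S_\pi\), \(g_\pi\) and \((M,v)\) are measurable with respect to the past, so that the conditional law of the block is exactly product uniform. This holds because the validation block is drawn afresh from the explicitly known space \(\mathbb F_2^m\), not from \(A\). Once that is settled, the remaining step is the Chernoff constant bookkeeping, which has ample slack.
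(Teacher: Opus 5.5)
Your proposal is correct and follows essentially the paper's argument. The paper simply invokes its multiplicative gap test (Lemma~\ref{lem:multiplicative-gap-test-v2}) with \(q=\beta_L\) and \((a,\theta,b)=(1,\tfrac32,2)\), whose constant \(\max\{2b/(b-\theta)^2,(a+\theta)/(\theta-a)^2\}=\max\{16,10\}=16\) amounts to the one-sided Chernoff bookkeeping you carry out by hand, including the conditioning on the past and the base-two logarithm.
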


\begin{proof}
Apply the multiplicative gap test of
Appendix~\ref{app:probability-tools-v2} with
\[
q=\beta_L,\qquad
(a,\theta,b)=\left(1,\frac32,2\right).
\]
The required constant is
\[
\max\left\{
\frac{2b}{(b-\theta)^2},
\frac{a+\theta}{(\theta-a)^2}
\right\}
=
16.
\]
\end{proof}

\subsection{Agreement, covering, and truncation}
\label{subsec:pfr-output-path-v2}

For an accepted candidate, define
\[
H:=\chi_U^{-1}(\operatorname{Im}M)\leq U.
\]
Let
\[
C:=
\{f_\pi(x):x\in S_\pi,\ g_\pi(x)=Mx+v\}.
\]
Distinct support points lie in distinct fibers, so
\[
|C|=p_\pi(M,v)2^m.
\]
Moreover, \(C\) lies in one affine coset of \(H\).

\begin{lemma}[Validated agreement gives a cover]
\label{lem:pfr-agreement-cover-v2}
Assume the localization and size-scale events.  If
\(p_\pi(M,v)>\beta_L\), then
\[
|C|>\frac{\beta_L}{2}|A|
\]
and
\begin{equation}
\mathcal N_H(A)<\frac{L}{\beta_L}.
\label{eq:pfr-H-cover-v2}
\end{equation}
This conclusion does not require the accepted model to be
Freiman-good.
\end{lemma}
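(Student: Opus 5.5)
The plan is to prove the two claims in turn: first the size of \(C\), then the cover via Ruzsa's covering lemma applied with \(C\) as the base set.

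\textbf{Size of \(C\).} By definition \(|C|=p_\pi(M,v)2^m>\beta_L 2^m\). Every fiber of \(\pi\) meets \(A_0\) in at most \(|\pi^{-1}(x)\cap U|\) points, and \(S_\pi=\pi(A_0)\subseteq\mathbb F_2^m\). Hence \(2^m\geq|S_\pi|\). This alone does not compare \(2^m\) with \(|A|\), so I will use the dimension choice instead. On the localization and size-scale events, Lemma~\ref{lem:pfr-safe-model-v2} gives \(m=m_0\geq\log_2\widehat N\). Therefore \(2^m\geq\widehat N\geq|A_0|\geq\frac12|A|\), by \eqref{eq:pfr-size-scale-sandwich-v2} and Lemma~\ref{lem:pfr-localization-v2}. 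This yields \(|C|>\beta_L 2^m\geq\frac{\beta_L}{2}|A|\). Only the size events are used here; Freiman-goodness of \(\pi\) is never invoked.

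\textbf{The cover.} Since \(C\subseteq c_0+H\) for a single point \(c_0\) and \(H\) is a subspace, we have \(C+C\subseteq H\). Also \(C\subseteq A_0\subseteq A\), so \(A+C\subseteq A+A\), and therefore
\[
|A+C|\leq|A+A|\leq K|A|<\frac{2K}{\beta_L}|C|=\frac{L}{\beta_L}|C|.
\]
Ruzsa's covering lemma \eqref{eq:model-Ruzsa-covering-v2}, applied with \(X=A\) and base set \(C\), then produces \(T\) with \(|T|\leq |A+C|/|C|<L/\beta_L\) and \(A\subseteq T+C+C\subseteq T+H\). This gives \eqref{eq:pfr-H-cover-v2}. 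To get the strict inequality, I will state the covering lemma in its sharp form \(|T|\leq|X+C|/|C|\), which is what the standard maximal-disjoint-translates proof gives.

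\textbf{Main obstacle.} The only delicate point is the claim that \(C\) lies in one coset of \(H\). Here \(\chi_U(f_\pi(x))=g_\pi(x)=Mx+v\in v+\operatorname{Im}M\), so \(f_\pi(x)\in\chi_U^{-1}(v)+H\), using linearity of \(\chi_U\). I also need \(|C|=p_\pi 2^m\). This holds because distinct \(x\in S_\pi\) give distinct \(f_\pi(x)\): they lie in distinct fibers \(\pi^{-1}(x)\), and this needs no Freiman-goodness. With these two facts checked, the argument above is routine.
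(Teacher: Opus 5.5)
Your proof is correct and follows the paper's argument essentially line for line: you get the size bound from \(|C|>\beta_L2^m\geq\beta_L\widehat N\geq\beta_L|A_0|\geq\frac{\beta_L}{2}|A|\), and the cover from Ruzsa's covering lemma with base set \(C\subseteq c_0+H\), so that \(C+C\subseteq H\). Two small remarks: \(m\geq m_0\geq\log_2\widehat N\) already follows from \eqref{eq:pfr-safe-model-dimension-v2}, so you do not need Lemma~\ref{lem:pfr-safe-model-v2}; and your explicit use of \(|T|\leq|A+C|/|C|\) to obtain the strict bound is a detail the paper states without comment.
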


\begin{proof}
Since \(m\geq m_0\geq\log_2\widehat N\),
\[
|C|>\beta_L2^m
\geq\beta_L\widehat N
\geq\beta_L|A_0|
\geq\frac{\beta_L}{2}|A|.
\]
Also \(C\subseteq c_0+H\) for some \(c_0\), so
\(C+C\subseteq H\).  Because \(C\subseteq A\),
\[
|A+C|\leq|A+A|\leq K|A|
<\frac{2K}{\beta_L}|C|.
\]
Ruzsa's covering lemma gives a set \(Z\) of size below
\(2K/\beta_L=L/\beta_L\) with
\[
A\subseteq Z+C+C\subseteq Z+H.
\]
\end{proof}

The candidate \(H\) may be larger than \(A\).  Let
\[
d_V:=
\min\{\dim H,\lfloor\log_2\underline N\rfloor\},
\]
and choose any \(d_V\)-dimensional subspace \(V\leq H\).

\begin{lemma}[Safe truncation]
\label{lem:pfr-truncation-v2}
Assume the localization and size-scale events and the model-size bound
\(2^m\leq C_{\rm mod}L^4|A_0|\).  Then
\[
|V|\leq\underline N\leq|A_0|\leq|A|
\]
and
\[
[H:V]\leq C_{\rm tr}L^4
\]
for an absolute constant \(C_{\rm tr}\).  Consequently,
\[
\mathcal N_V(A)
\leq C_{\rm tr}L^4\mathcal N_H(A).
\]
\end{lemma}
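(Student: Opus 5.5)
The plan is to establish the size bound and the index bound separately, and then to obtain the covering bound from the index bound by a standard subcoset splitting.

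\emph{Size bound.} Since \(V\leq H\) and \(\dim V=d_V\leq\lfloor\log_2\underline N\rfloor\), we have
\[
|V|=2^{d_V}\leq 2^{\log_2\underline N}=\underline N.
\]
Lemma~\ref{lem:pfr-size-scales-v2} then gives \(\underline N\leq|A_0|\), and \(A_0\subseteq A\) gives \(|A_0|\leq|A|\). This part uses only the size-scale event.

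\emph{Index bound.} The index satisfies \([H:V]=2^{\dim H-d_V}\). If \(d_V=\dim H\), the index is \(1\) and there is nothing to prove. Otherwise \(d_V=\lfloor\log_2\underline N\rfloor\), and then
\[
[H:V]\leq \frac{2|H|}{\underline N}.
\]
Two estimates remain.
\begin{enumerate}
\item \emph{Upper bound on \(|H|\).} We have \(H=\chi_U^{-1}(\operatorname{Im}M)\) with \(M:\mathbb F_2^m\to\mathbb F_2^r\), so \(\dim H\leq m\). Lemma~\ref{lem:pfr-safe-model-v2} gives \(m=m_0\) on these events, and the model-size hypothesis gives \(|H|\leq 2^m\leq C_{\rm mod}L^4|A_0|\).
\item \emph{Lower bound on \(\underline N\).} If \(|A_0|\geq4\), Lemma~\ref{lem:pfr-size-scales-v2} gives \(\underline N\geq|A_0|/4\). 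If \(|A_0|\leq3\), then \(\underline N\geq1\geq|A_0|/3\).
\end{enumerate}
Combining these, \(\underline N\geq|A_0|/4\) in all cases, and therefore
\[
[H:V]\leq 8C_{\rm mod}L^4 .
\]
So we may take \(C_{\rm tr}:=8C_{\rm mod}\).

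\emph{Covering bound.} Each coset of \(H\) is the disjoint union of \([H:V]\) cosets of \(V\). Hence any cover of \(A\) by \(\mathcal N_H(A)\) cosets of \(H\) refines to a cover by \([H:V]\,\mathcal N_H(A)\) cosets of \(V\). This gives
\[
\mathcal N_V(A)\leq C_{\rm tr}L^4\,\mathcal N_H(A).
\]

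The only step needing care is the bound \(\dim H\leq m\) together with the identification \(m=m_0\). This requires the safeguard in \eqref{eq:pfr-safe-model-dimension-v2} to be inactive, which is exactly what Lemma~\ref{lem:pfr-safe-model-v2} supplies on the localization and size-scale events. The rest is elementary rounding.
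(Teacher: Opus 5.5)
Your proof is correct and follows the paper's argument: the size bound comes from the definition of \(d_V\), the index estimate \([H:V]<2|H|/\underline N\) uses \(|H|\leq 2^m\leq C_{\rm mod}L^4|A_0|\), and coset refinement gives the covering bound. The differences are cosmetic. You absorb the case \(|A_0|<4\) into the bound \(\underline N\geq|A_0|/4\) (via \(\underline N\geq1\)), whereas the paper bounds \(|H|\) directly in that case. Your appeal to \(m=m_0\) is unnecessary, since the model-size bound is already a hypothesis of the lemma.
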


\begin{proof}
The size statement is immediate from the definition of \(d_V\).
If \(V\neq H\), then \(|V|>\underline N/2\), and hence
\[
[H:V]<\frac{2|H|}{\underline N}.
\]
Also \(|H|\leq2^m\).  If \(|A_0|\geq4\), then
\(\underline N\geq|A_0|/4\), giving
\[
[H:V]\leq1+8C_{\rm mod}L^4.
\]
If \(|A_0|<4\), use \(|V|\geq1\) and
\(|H|\leq2^m<4C_{\rm mod}L^4\).
Finally, refining each \(H\)-coset into \([H:V]\) many \(V\)-cosets
gives the covering statement.
\end{proof}

\begin{lemma}[Every accepted candidate gives a valid output]
\label{lem:pfr-output-path-v2}
Assume the localization and size-scale events and suppose that every
accepted affine candidate has true agreement greater than
\(\beta_L\).  Then every nonfailure output of
Algorithm~\ref{alg:size-oblivious-pfr-v2} satisfies
\[
|V|\leq|A|,
\qquad
\mathcal N_V(A)
\leq
2C_{\rm tr}L^5
P_{\rm RH}(C_{\rm RH}L^{13}).
\]
\end{lemma}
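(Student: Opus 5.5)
The plan is to trace one nonfailure output back to the accepted affine candidate $(M,v)$ that produced it, and then chain the covering lemma with the truncation lemma. In Algorithm~\ref{alg:size-oblivious-pfr-v2}, a nonfailure output is always built the same way: some trial returned $(M,v)$, validation accepted it, the algorithm formed $H=\chi_U^{-1}(\operatorname{Im}M)$, and it output a $d_V$-dimensional subspace $V\leq H$. By hypothesis this accepted candidate satisfies $p_\pi(M,v)>\beta_L$. Note that we do not assume the model in that trial is Freiman-good.

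First, the size bound. The localization and size-scale events are assumed. Under them, Lemma~\ref{lem:pfr-safe-model-v2} gives $m=m_0$ and the model-size bound $2^m\leq C_{\rm mod}L^4|A_0|$. This bound holds for every model the algorithm produces, since it depends only on $m$, which is fixed before any trial; it does not depend on whether the trial was Freiman-good. So the hypotheses of Lemma~\ref{lem:pfr-truncation-v2} are satisfied, and it yields $|V|\leq\underline N\leq|A_0|\leq|A|$ together with $\mathcal N_V(A)\leq C_{\rm tr}L^4\,\mathcal N_H(A)$.

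Second, the cover of $A$ by $H$. Lemma~\ref{lem:pfr-agreement-cover-v2} applies to the accepted candidate because $p_\pi(M,v)>\beta_L$, and that lemma explicitly needs no Freiman-goodness. It gives $\mathcal N_H(A)<L/\beta_L$. Substituting $\beta_L^{-1}=2P_{\rm RH}(\mathcal K_\star(L))$ with $\mathcal K_\star(L)=C_{\rm RH}L^{13}$ from \eqref{eq:pfr-rh-parameter-v2}, we get
\[
\mathcal N_V(A)\leq C_{\rm tr}L^4\cdot\frac{L}{\beta_L}=2C_{\rm tr}L^5P_{\rm RH}(C_{\rm RH}L^{13}),
\]
which is the claimed bound.

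The only delicate point is checking that no step secretly relies on the accepted trial having a Freiman-good model or a successful invocation of Theorem~\ref{thm:pfr-rh-primitive-v2}. Here is why neither is needed. Freiman-goodness enters only through completeness, via Lemma~\ref{lem:pfr-rh-premise-v2}. The agreement-to-cover argument uses only three facts, each valid for any model: $f_\pi$ selects points of $A_0$ from distinct fibres, $C$ lies in a single coset of $H$, and $m\geq\log_2\widehat N$. Likewise, the truncation bound uses only $|H|\leq 2^m$ and the size scales. So the argument goes through for arbitrary accepted candidates.
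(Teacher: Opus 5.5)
Your proposal is correct and is exactly the paper's argument: combine Lemma~\ref{lem:pfr-agreement-cover-v2} (which gives \(\mathcal N_H(A)<L/\beta_L\)) with Lemma~\ref{lem:pfr-truncation-v2}, then substitute \(\beta_L^{-1}=2P_{\rm RH}(C_{\rm RH}L^{13})\). You also check explicitly that the truncation lemma's model-size hypothesis \(2^m\leq C_{\rm mod}L^4|A_0|\) follows from Lemma~\ref{lem:pfr-safe-model-v2} on every trial, whether or not that trial's model is Freiman-good; the paper leaves this step implicit.
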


\begin{proof}
Combine Lemmas~\ref{lem:pfr-agreement-cover-v2} and
\ref{lem:pfr-truncation-v2}, and use
\[
\beta_L^{-1}
=
2P_{\rm RH}(C_{\rm RH}L^{13}).
\]
\end{proof}

\subsection{Correctness, adaptive invocation, and complexity}
\label{subsec:pfr-correctness-v2}

We prove the theorem using the parameter choices and resource count
recorded in Appendix C.

\begin{proof}[Proof of Theorem~\ref{thm:size-oblivious-pfr-v2}]
Allocate
\[
\rho_{\rm loc}=\frac\rho4,
\qquad
\rho_{\rm size}=\frac\rho4.
\]
Let
\[
R_{\rm trial}:=
\left\lceil
\frac{\log(4/\rho)}{\log(5/3)}
\right\rceil,
\qquad
\zeta:=\frac{\rho}{8R_{\rm trial}}.
\]

\paragraph{One good trial.}
Condition on successful localization and size estimation.  In one
fresh trial, bounded-time model generation succeeds with probability
at least \(31/32\).  Conditional on that event, the model is
Freiman-good with probability at least \(3/4\).
The restricted-homomorphism routine succeeds with probability at
least \(7/10\), and fresh validation accepts its good candidate with
probability at least \(1-\zeta\geq7/8\).  Thus one trial succeeds with
probability at least
\begin{equation}
\frac{31}{32}\cdot\frac34\cdot\frac7{10}\cdot\frac78
=
\frac{4557}{10240}
>
\frac25.
\label{eq:pfr-one-trial-success-v2}
\end{equation}
The probability that all \(R_{\rm trial}\) trials miss is therefore
at most
\[
(3/5)^{R_{\rm trial}}\leq\frac\rho4.
\]

\paragraph{Validation soundness.}
A candidate may depend on the complete previous execution, but its
validation block is fresh.  By sequential conditional failure
accounting, the probability that any accepted candidate has true
agreement at most \(\beta_L\) is at most
\[
R_{\rm trial}\zeta=\frac\rho8.
\]

\paragraph{Structural conclusion.}
On the localization, size-scale, and simultaneous validation events,
Lemma~\ref{lem:pfr-output-path-v2} proves that every returned subspace
is valid.  The total failure probability is at most
\[
\frac\rho4+\frac\rho4+\frac\rho4+\frac\rho8
=
\frac{7\rho}{8}
<\rho.
\]
Taking
\[
P_{\rm PFR}(K):=
2C_{\rm tr}(2K)^5
P_{\rm RH}\!\left(C_{\rm RH}(2K)^{13}\right)
\]
gives \eqref{eq:pfr-output-v2}.

\paragraph{Resources.}
Only the localization stage requests samples from \(A\), so the
input-sample budget is
\(O(n+\log(1/\rho))\) and is known in advance.  The size estimate
uses uniform samples from the explicit subspace \(U\).  Every
image-model query costs \(2^{O(K)}\) membership queries to \(A\).
There are \(O(\log(1/\rho))\) model trials.  The imported routine has
parameter \(C_{\rm RH}(2K)^{13}\), and its query/time dependence is
\(K^{O(\log K)}\).  Fresh validation has polynomial dependence on
\(K\).  The detailed count in Appendix C gives
\[
F_{\rm PFR}(K)=2^{O(K)}K^{O(\log K)}.
\]
\end{proof}

Later sections use the following conditional form.

\begin{corollary}[Adaptive invocation]
\label{cor:pfr-adaptive-invocation-v2}
Let \(\mathcal F\) describe an arbitrary past transcript.
Conditional on \(\mathcal F\), suppose that \(A\) is a fixed nonempty
set with a fixed deterministic membership predicate and
\(|A+A|\leq K|A|\).  Supply a fresh block with law
\[
U_A^{\otimes s_{\rm PFR}(n,\rho)}
\]
and use fresh internal randomness.  Then, almost surely, the
conditional probability of failing to output a subspace satisfying
\eqref{eq:pfr-output-v2} is at most \(\rho\).
\end{corollary}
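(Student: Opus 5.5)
The plan is to re-run the proof of Theorem~\ref{thm:size-oblivious-pfr-v2} with every probability conditioned on \(\mathcal F\). The key observation is that that proof never uses unconditional independence beyond three facts. First, the input samples are i.i.d.\ uniform on a fixed set. Second, the internal coins are independent of them. Third, the membership predicate is a fixed function. After conditioning on \(\mathcal F\), all three hold by hypothesis. To make this rigorous, I would use a regular conditional distribution given \(\mathcal F\), or, since everything is discrete and finite, simply condition on each atom of \(\mathcal F\) having positive probability. On such an atom \(F\), the set \(A=A_F\) and its predicate are deterministic. The supplied block has law \(U_{A_F}^{\otimes s_{\rm PFR}(n,\rho)}\) under \(\Pr[\,\cdot\mid F]\). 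The internal randomness is independent of the block and uniform.

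Fix such an atom. The algorithm's output is a measurable function of three things: the block, the internal coins, and the oracle answers, which are deterministic functions of the queried points via the fixed predicate. Hence the conditional law of the whole execution given \(F\) coincides with the law of an unconditional run of Algorithm~\ref{alg:size-oblivious-pfr-v2} on the fixed set \(A_F\). That set satisfies \(|A_F+A_F|\le K|A_F|\). Theorem~\ref{thm:size-oblivious-pfr-v2} therefore bounds the failure probability of this run by \(\rho\). So \(\Pr[\text{fail}\mid F]\le\rho\) for every atom of positive probability. This yields the almost-sure statement, since atoms of probability zero form a null set.

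The one place needing care is the sample budget. The theorem presupposes that the number of samples \(s_{\rm PFR}(n,\rho)\) is fixed before drawing. Here it must be determined by \(n\) and \(\rho\) alone, which may themselves be \(\mathcal F\)-measurable. Otherwise the block supplied on an atom might not match what the algorithm reads. This is exactly guaranteed by the theorem's statement that the sample count is fixed in advance, so no rejection or adaptive stopping interacts with the past.

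I expect the main obstacle to be this measure-theoretic bookkeeping rather than any additive combinatorics: showing that the internal probability lemmas used inside the theorem hold verbatim under \(\Pr[\,\cdot\mid F]\). These are the localization Chernoff bound, the size estimate, exact surjection sampling, the imported restricted-homomorphism routine with success probability \(0.7\), and fresh validation. I would handle this by noting that each lemma is a statement about a finite product measure of fresh samples and coins. Those measures are reproduced exactly on each atom.
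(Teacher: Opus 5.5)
Your proposal is correct and matches the paper's proof, which simply observes that, after conditioning on \(\mathcal F\), the execution has the same law as a standalone run of Theorem~\ref{thm:size-oblivious-pfr-v2} on the fixed set \(A\). Your atom-wise conditioning, and your remark that the sample budget \(s_{\rm PFR}(n,\rho)\) is fixed in advance, spell out that one-line reduction in more detail.
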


\begin{proof}
After conditioning on \(\mathcal F\), the execution has the same law
as the standalone algorithm in
Theorem~\ref{thm:size-oblivious-pfr-v2}.
\end{proof}

The sample count \(s_{\rm PFR}(n,\rho)\) is fixed before the block is
requested.  Thus the finite exact-sampling interface of
Proposition~\ref{prop:output-exact-sampling-v2} is sufficient for the
later composition.

\section{Robust lifting from a persistent core}
\label{sec:one-core-robust-pfr-v2}

Let \(A\subseteq\mathbb F_2^n\) be a hidden nonempty set with
\[
|A+A|\leq K|A|,
\]
and let \(B\subseteq\mathbb F_2^n\) be the accessible observation.
The algorithm can query and sample from \(B\), but it never sees
\(A\).  The aim of this section is to combine the persistent BSG
routine with the size-oblivious PFR routine and return one subspace
serving the hidden set.

The composition has a simple shape.  Persistent BSG extracts a fixed
set \(Y\subseteq B\).  Size-oblivious PFR finds a subspace covering
\(Y\).  The only remaining question is whether \(Y\) contains enough
points of \(A\) to transfer that covering conclusion back to \(A\).
The first two subsections answer this deterministically.

\subsection{Energy and clean mass under contamination}
\label{subsec:robust-contamination-estimates-v2}

Suppose
\begin{equation}
|A\triangle B|\leq\eta|A|,
\qquad
0\leq\eta<1.
\label{eq:robust-contamination-v2}
\end{equation}
Then
\begin{equation}
(1-\eta)|A|
\leq
|B|
\leq
(1+\eta)|A|.
\label{eq:robust-size-comparison-v2}
\end{equation}

\begin{lemma}[Observed energy under contamination]
\label{lem:robust-observed-energy-v2}
If
\[
|A+A|\leq K|A|
\]
and \eqref{eq:robust-contamination-v2} holds, then
\begin{equation}
\frac{E(B)}{|B|^3}
\geq
\frac{(1-\eta)^4}{(1+\eta)^3K}.
\label{eq:robust-observed-energy-v2}
\end{equation}
In particular, if \(\eta\leq1/16\), then
\begin{equation}
E(B)\geq\frac1{2K}|B|^3.
\label{eq:robust-simple-energy-v2}
\end{equation}
\end{lemma}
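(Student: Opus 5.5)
The plan is to pass to the clean part \(C:=A\cap B\) and compare energies through monotonicity. Energy counts additive quadruples, so it is monotone under inclusion, and \(C\subseteq B\) gives \(E(B)\geq E(C)\). It therefore suffices to bound \(E(C)\) from below and \(|B|\) from above in terms of \(|A|\).

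First, for the size of \(C\): the points of \(A\setminus B\) lie in \(A\triangle B\), so \(|C|\geq(1-\eta)|A|>0\); in particular \(C\) is nonempty. Second, \(C+C\subseteq A+A\) gives \(|C+C|\leq K|A|\). The Cauchy--Schwarz bound \eqref{eq:model-energy-CS-v2}, applied to the nonempty set \(C\), then yields
\[
E(C)\geq\frac{|C|^4}{|C+C|}\geq\frac{(1-\eta)^4|A|^4}{K|A|}=\frac{(1-\eta)^4}{K}|A|^3 .
\]
Third, by the upper half of \eqref{eq:robust-size-comparison-v2} we have \(|B|\leq(1+\eta)|A|\). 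Hence \(|A|^3\geq|B|^3/(1+\eta)^3\), and chaining the three bounds gives \eqref{eq:robust-observed-energy-v2}.

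For the special case, observe that \((1-\eta)^4/(1+\eta)^3\) is decreasing in \(\eta\). At \(\eta=1/16\) it equals \((15/16)^4(16/17)^3=15^4/(16\cdot17^3)\). Numerically this is \(50625/78608\approx0.644\), which exceeds \(1/2\), and this gives \eqref{eq:robust-simple-energy-v2}.

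There is no real obstacle here. The only points needing care are that \(C\) is nonempty, so Cauchy--Schwarz applies, and that the final numerical check of the constant is correct.
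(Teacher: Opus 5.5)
Your proposal is correct and follows essentially the same route as the paper: restrict to \(C=A\cap B\), use \(E(B)\geq E(C)\geq|C|^4/|C+C|\) with \(C+C\subseteq A+A\) and \(|C|\geq(1-\eta)|A|\), then divide by \(|B|^3\leq(1+\eta)^3|A|^3\). For the special case, the paper only checks the constant at \(\eta=1/16\), and your monotonicity remark supplies the reduction to that endpoint.
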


\begin{proof}
Put \(C_0:=A\cap B\).  Then
\[
|C_0|\geq(1-\eta)|A|,
\qquad
C_0+C_0\subseteq A+A.
\]
By Cauchy--Schwarz,
\[
E(B)
\geq E(C_0)
\geq\frac{|C_0|^4}{|C_0+C_0|}
\geq
\frac{(1-\eta)^4}{K}|A|^3.
\]
Using \(|B|\leq(1+\eta)|A|\) gives
\eqref{eq:robust-observed-energy-v2}.  Finally,
\[
\frac{(15/16)^4}{(17/16)^3}>\frac12,
\]
which proves \eqref{eq:robust-simple-energy-v2}.
\end{proof}

For \(\gamma\in[0,1]\) and \(\eta\geq0\), define
\begin{equation}
\Delta(\gamma,\eta)
:=
\gamma-\max\{\gamma,1-\gamma\}\eta.
\label{eq:robust-purity-margin-v2}
\end{equation}

\begin{lemma}[Clean mass inside an observed core]
\label{lem:robust-clean-mass-v2}
Let \(Y\subseteq B\), and suppose that
\[
|Y|\geq\gamma|B|
\qquad\text{and}\qquad
|A\triangle B|\leq\eta|A|.
\]
Then
\begin{equation}
|A\cap Y|
\geq
\max\{\Delta(\gamma,\eta),0\}|A|.
\label{eq:robust-clean-mass-v2}
\end{equation}
When \(0<\gamma\leq1/2\),
\begin{equation}
\Delta(\gamma,\eta)
=
\gamma-(1-\gamma)\eta.
\label{eq:robust-small-gamma-margin-v2}
\end{equation}
\end{lemma}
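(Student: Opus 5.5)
The plan is to split \(Y\) into its clean and corrupted parts and bound the corrupted part in two independent ways. Write \(Y\setminus A\subseteq B\setminus A\). Then
\[
|A\cap Y|=|Y|-|Y\setminus A|\geq|Y|-|B\setminus A|.
\]
We need upper bounds on \(|B\setminus A|\) and a lower bound on \(|Y|\), both in units of \(|A|\). Put \(a:=|A\setminus B|\) and \(b:=|B\setminus A|\), so that \(a+b\leq\eta|A|\) and \(|B|=|A|-a+b\).

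\textbf{First estimate.} Use \(|Y|\geq\gamma|B|=\gamma(|A|-a+b)\). This gives
\[
|A\cap Y|\geq\gamma|A|-\gamma a+\gamma b-b=\gamma|A|-\gamma a-(1-\gamma)b.
\]
The right side is linear in \((a,b)\), and \(\gamma,1-\gamma\geq0\). Minimizing over the constraint \(a+b\leq\eta|A|\), \(a,b\geq0\), gives
\[
|A\cap Y|\geq\bigl(\gamma-\max\{\gamma,1-\gamma\}\eta\bigr)|A|=\Delta(\gamma,\eta)|A|.
\]
Combining this with the trivial bound \(|A\cap Y|\geq0\) yields \eqref{eq:robust-clean-mass-v2}.

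\textbf{Small \(\gamma\).} For \(0<\gamma\leq1/2\) we have \(1-\gamma\geq\gamma\). Hence \(\max\{\gamma,1-\gamma\}=1-\gamma\), which is exactly \eqref{eq:robust-small-gamma-margin-v2}.

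\textbf{Where the care goes.} There is no real obstacle; the only step requiring attention is the sign bookkeeping when \(|B|\) is expanded. One must check that the worst case places all discrepancy on a single side: deletions when \(\gamma\geq1/2\), insertions otherwise.
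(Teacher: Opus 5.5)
Your proposal is correct and follows essentially the same argument as the paper: you bound $|A\cap Y|\geq|Y|-|B\setminus A|$, expand $|B|$ in terms of the deletion and insertion counts, and bound the resulting linear expression by $\max\{\gamma,1-\gamma\}\eta|A|$. The paper does the same with normalized quantities $u,v$, and it obtains the small-$\gamma$ formula the same way.
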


\begin{proof}
Write
\[
u:=\frac{|A\setminus B|}{|A|},
\qquad
v:=\frac{|B\setminus A|}{|A|}.
\]
Then \(u,v\geq0\), \(u+v\leq\eta\), and
\[
|B|=(1-u+v)|A|.
\]
Since \(Y\subseteq B\),
\begin{align*}
|A\cap Y|
&\geq
|Y|-|B\setminus A|\\
&\geq
\gamma|B|-v|A|\\
&=
\bigl(\gamma-\gamma u-(1-\gamma)v\bigr)|A|\\
&\geq
\Delta(\gamma,\eta)|A|.
\end{align*}
Together with the trivial lower bound \(0\), this proves
\eqref{eq:robust-clean-mass-v2}.  Formula
\eqref{eq:robust-small-gamma-margin-v2} follows because
\(1-\gamma\geq\gamma\).
\end{proof}

\subsection{One-core lifting}
\label{subsec:one-core-lifting-v2}

The deterministic step is the following.  Unless \(H=\{0\}\), its
algorithmic part simply deletes one vector from a basis of \(H\).

\begin{theorem}[One-core robust lifting]
\label{thm:one-core-lifting-v2}
Let
\[
A,B,Y\subseteq\mathbb F_2^n,
\qquad
A\neq\varnothing,
\qquad
Y\subseteq B,
\]
and let \(H\leq\mathbb F_2^n\).  Suppose that
\begin{align}
|A+A|
&\leq K|A|,
\label{eq:one-core-lifting-doubling-v2}\\
|A\triangle B|
&\leq\eta|A|,
\qquad 0\leq\eta<1,
\label{eq:one-core-lifting-contamination-v2}\\
|Y|
&\geq\gamma|B|,
\qquad 0<\gamma\leq1,
\label{eq:one-core-lifting-retention-v2}\\
|H|
&\leq|Y|,
\label{eq:one-core-lifting-H-size-v2}\\
\mathcal N_H(Y)
&\leq P.
\label{eq:one-core-lifting-H-cover-v2}
\end{align}
If
\[
\Delta(\gamma,\eta)>0,
\]
then from any basis of \(H\) one can construct a subspace
\(V\leq H\) such that
\begin{equation}
|V|\leq|A|
\label{eq:one-core-lifting-output-size-v2}
\end{equation}
and
\begin{equation}
\mathcal N_V(A)
\leq
\frac{2KP}{\Delta(\gamma,\eta)}.
\label{eq:one-core-lifting-output-cover-v2}
\end{equation}
The construction does not use \(A\), \(|A|\), or \(\eta\).
\end{theorem}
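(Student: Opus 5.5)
The plan is to find one coset of \(H\) that captures a large piece of the hidden set, push the \(H\)-cover of \(Y\) over to \(A\) by Ruzsa covering, and then halve \(H\) once to get the size condition. Only the last step is algorithmic, and it does not depend on \(A\).

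First, Lemma~\ref{lem:robust-clean-mass-v2} applies because \(Y\subseteq B\), \(|Y|\geq\gamma|B|\) and \(|A\triangle B|\leq\eta|A|\). It gives \(|A\cap Y|\geq\Delta(\gamma,\eta)|A|>0\). By \eqref{eq:one-core-lifting-H-cover-v2}, \(Y\) is covered by at most \(P\) cosets of \(H\). Pigeonholing, some coset \(x+H\) satisfies
\[
|C|\geq\frac{\Delta(\gamma,\eta)}{P}|A|,
\qquad
C:=A\cap Y\cap(x+H).
\]
This set is nonempty, and \(C+C\subseteq H\) because we work in characteristic two. Since \(C\subseteq A\), we get
\[
|A+C|\leq|A+A|\leq K|A|\leq\frac{KP}{\Delta(\gamma,\eta)}|C|.
\]
Ruzsa's covering lemma \eqref{eq:model-Ruzsa-covering-v2} then yields \(\mathcal N_H(A)\leq KP/\Delta(\gamma,\eta)\).

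For the size condition, \eqref{eq:one-core-lifting-H-size-v2} and \eqref{eq:robust-size-comparison-v2} give
\[
|H|\leq|Y|\leq|B|\leq(1+\eta)|A|<2|A|.
\]
So \(H\) may exceed \(|A|\) by less than a factor two, and we cannot test this without knowing \(|A|\). The construction is blind to this and splits into two cases.

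If \(H=\{0\}\), take \(V:=H\). Then \(|V|=1\leq|A|\), and the cover bound already holds without the factor two.

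Otherwise, delete one vector from the given basis of \(H\) to obtain a codimension-one subspace \(V\leq H\). Then \(|V|=|H|/2<|A|\). Each \(H\)-coset splits into two \(V\)-cosets, so
\[
\mathcal N_V(A)\leq2\,\mathcal N_H(A)\leq\frac{2KP}{\Delta(\gamma,\eta)},
\]
which is \eqref{eq:one-core-lifting-output-cover-v2}. This construction uses only the basis of \(H\), not \(A\), \(|A|\) or \(\eta\).

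There is no serious obstacle. The one point needing care is the pigeonhole step: it must be applied to the clean part \(A\cap Y\), not to \(Y\) itself, so that \(C\subseteq A\). That inclusion is what lets \(|A+A|\) control \(|A+C|\). The positivity of \(\Delta(\gamma,\eta)\) is used exactly there, to guarantee that \(C\) is nonempty and of proportional size.
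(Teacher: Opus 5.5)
Your proposal is correct and follows essentially the same route as the paper's proof: you obtain the clean mass $|A\cap Y|\geq\Delta(\gamma,\eta)|A|$, pigeonhole it into one $H$-coset, transfer the cover to $A$ by Ruzsa covering, and then take a codimension-one subspace of $H$ (or keep $H=\{0\}$). The size step, using $|H|\leq|Y|\leq|B|\leq(1+\eta)|A|<2|A|$, is handled the same way, and the final bound $2KP/\Delta(\gamma,\eta)$ comes out identically.
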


\begin{proof}
Let
\[
C:=A\cap Y.
\]
By Lemma~\ref{lem:robust-clean-mass-v2},
\[
|C|\geq\Delta(\gamma,\eta)|A|>0.
\]

Let \(m:=\mathcal N_H(Y)\), and cover \(Y\) by \(m\) distinct
\(H\)-cosets.  One of them, say \(x+H\), contains a set
\[
C':=C\cap(x+H)
\]
of size at least \(|C|/m\).  Since \(C'\subseteq A\),
\[
|A+C'|
\leq
|A+A|
\leq
K|A|
\leq
\frac{Km}{\Delta(\gamma,\eta)}|C'|.
\]
Ruzsa's covering lemma therefore gives
\[
\mathcal N_H(A)
\leq
\frac{Km}{\Delta(\gamma,\eta)}
\leq
\frac{KP}{\Delta(\gamma,\eta)}.
\]

It remains to enforce the size bound.  The assumptions give
\[
|H|
\leq|Y|
\leq|B|
\leq(1+\eta)|A|
<2|A|.
\]
If \(H=\{0\}\), set \(V:=H\).  Otherwise, write a basis of \(H\) as
\(h_1,\ldots,h_r\) and set
\[
V:=\operatorname{span}(h_1,\ldots,h_{r-1}).
\]
Then \(|V|=|H|/2\leq|A|\), and every \(H\)-coset is the union of two
\(V\)-cosets.  Hence
\[
\mathcal N_V(A)
\leq2\mathcal N_H(A)
\leq
\frac{2KP}{\Delta(\gamma,\eta)}.
\]
\end{proof}

The positivity of \(\Delta(\gamma,\eta)\) is the exact condition for
this one-core, size-only transfer.  It is not claimed to be a general
threshold for robust PFR.

\subsection{Energy-parameterized composition}
\label{subsec:energy-robust-composition-v2}

Fix \(\alpha\in(0,1]\), and define
\begin{equation}
\gamma_\alpha
:=
c_{\rm BSG}\sqrt\alpha,
\qquad
L_\alpha
:=
C_{\rm BSG}\alpha^{-4},
\qquad
P_\alpha
:=
P_{\rm PFR}(L_\alpha).
\label{eq:energy-robust-parameters-v2}
\end{equation}

\begin{algorithm}[t]
\caption{\textsc{OneCoreRobustPFR}}
\label{alg:one-core-robust-pfr-v2}
\begin{algorithmic}[1]
\Require membership access and exact uniform samples from \(B\);
         \(\alpha\in(0,1]\); failure parameter \(\rho\)
\State set
\(\rho_{\rm snd}=\rho_{\rm cmp}=\rho_{\rm samp}
=\rho_{\rm PFR}:=\rho/4\)
\State run \textsc{PersistentBSG} on \(B\) with
       \((\alpha,\rho_{\rm snd},\rho_{\rm cmp})\)
\If{the result is \(\mathsf{FAIL}\)}
    \State \Return \(\mathsf{FAIL}\)
\EndIf
\State let \(D\) be the returned record and put \(Y:=Y_D\)
\State set \(t:=s_{\rm PFR}(n,\rho_{\rm PFR})\)
\State generate \(t\) exact samples from \(Y\) using the fixed blocks
       of Proposition~\ref{prop:output-exact-sampling-v2}, with total
       failure budget \(\rho_{\rm samp}\)
\If{some block is exhausted}
    \State \Return \(\mathsf{FAIL}\)
\EndIf
\State invoke Corollary~\ref{cor:pfr-adaptive-invocation-v2} on \(Y\)
       with doubling parameter \(L_\alpha\) and failure budget
       \(\rho_{\rm PFR}\)
\If{the invocation fails}
    \State \Return \(\mathsf{FAIL}\)
\EndIf
\State let \(H\) be the returned subspace
\If{\(H=\{0\}\)}
    \State \Return \(H\)
\Else
    \State delete one vector from a basis of \(H\), and return the
           resulting codimension-one subspace
\EndIf
\end{algorithmic}
\end{algorithm}

\begin{theorem}[Energy-parameterized robust PFR]
\label{thm:energy-robust-pfr-v2}
Let
\[
A,B\subseteq\mathbb F_2^n,
\qquad
A\neq\varnothing,
\]
and let \(K\geq1\), \(\eta\in[0,1)\), and
\(\alpha\in(0,1]\).  Suppose that
\begin{align}
|A+A|
&\leq K|A|,
\label{eq:energy-robust-hidden-doubling-v2}\\
|A\triangle B|
&\leq\eta|A|,
\label{eq:energy-robust-hidden-contamination-v2}\\
E(B)
&\geq\alpha|B|^3,
\label{eq:energy-robust-observed-energy-v2}\\
\Delta(\gamma_\alpha,\eta)
&>0.
\label{eq:energy-robust-positive-margin-v2}
\end{align}

Given \(\alpha\), \(\rho\in(0,1)\), membership access to \(B\), and
independent exact uniform samples from \(B\),
Algorithm~\ref{alg:one-core-robust-pfr-v2} outputs, with probability
at least \(1-\rho\), a basis for a subspace \(V\) satisfying
\begin{equation}
|V|\leq|A|
\label{eq:energy-robust-output-size-v2}
\end{equation}
and
\begin{equation}
\mathcal N_V(A)
\leq
\frac{
2K P_{\rm PFR}(C_{\rm BSG}\alpha^{-4})
}{
\Delta(c_{\rm BSG}\sqrt\alpha,\eta)
}.
\label{eq:energy-robust-output-cover-v2}
\end{equation}

The number of samples from \(B\) is polynomial in
\[
\alpha^{-1},\quad n,\quad\log\frac1\rho,
\]
and the number of membership queries and the running time are bounded
by
\[
F_{\rm Erob}(\alpha^{-1})
\operatorname{poly}\!\left(n,\log\frac1\rho\right)
\]
for a nondecreasing computable function \(F_{\rm Erob}\).
\end{theorem}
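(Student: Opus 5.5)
The proof composes the three ingredients along the execution of Algorithm~\ref{alg:one-core-robust-pfr-v2} and allots $\rho/4$ to each of four failure sources. The order is: persistent BSG soundness and completeness, exact sampling from the core, the adaptive PFR invocation, and finally the deterministic lifting of Theorem~\ref{thm:one-core-lifting-v2}.

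\textbf{Core extraction.} First run \textsc{PersistentBSG} on $S=B$. This is legitimate because $B\neq\varnothing$: indeed $|B|\geq(1-\eta)|A|>0$. Hypothesis \eqref{eq:energy-robust-observed-energy-v2} is exactly the energy premise \eqref{eq:persistent-bsg-energy-v2}. So Theorem~\ref{thm:persistent-bsg-v2} gives $\Pr[D=\mathsf{FAIL}]\leq\rho/4$. Soundness \eqref{eq:persistent-bsg-soundness-v2} gives, except with probability $\rho/4$, that every nonfailure record satisfies
\[
|Y|\geq\gamma_\alpha|B|,
\qquad
|Y+Y|\leq L_\alpha|Y|.
\]
Call this the structural-good event.

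\textbf{Sampling and PFR.} On this event, condition on the fixed record $D$; the set $Y=Y_D$ and its predicate are then frozen. Since $t=s_{\rm PFR}(n,\rho/4)$ is determined before any sampling, Proposition~\ref{prop:output-exact-sampling-v2} applies. It fails with probability at most $\rho/4$, and conditioned on success it delivers a block with law $U_Y^{\otimes t}$. The PFR invocation then uses fresh internal randomness, so Corollary~\ref{cor:pfr-adaptive-invocation-v2} applies with doubling parameter $L_\alpha$. Its conditional probability of not producing $H$ with
\[
|H|\leq|Y|,
\qquad
\mathcal N_H(Y)\leq P_{\rm PFR}(L_\alpha)
\]
is at most $\rho/4$.

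The main point of care is this conditioning step. The event ``sampling blocks succeed'' must leave the product law intact, so that the corollary's hypothesis holds almost surely given the past. I would handle it by invoking the block-local independence already stated in Proposition~\ref{prop:output-exact-sampling-v2} and the sequential failure accounting of Appendix~\ref{app:probability-tools-v2}. A union bound then gives total failure at most $\rho$.

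\textbf{Lifting.} On the success event, apply Theorem~\ref{thm:one-core-lifting-v2} with
\[
\gamma=\gamma_\alpha\in(0,1],
\qquad
P=P_{\rm PFR}(C_{\rm BSG}\alpha^{-4}),
\]
using the positivity assumption \eqref{eq:energy-robust-positive-margin-v2}. Its construction deletes one basis vector unless $H=\{0\}$, which is exactly the last step of the algorithm. It therefore yields $|V|\leq|A|$ and the bound \eqref{eq:energy-robust-output-cover-v2}.

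One check is that $\gamma_\alpha\leq1$, which holds because $c_{\rm BSG}<1$ and $\alpha\leq1$. Also, the algorithm never uses $A$, $K$ or $\eta$; they enter only through the analysis.

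\textbf{Complexity.} The resource bounds are a sum of three contributions:
\begin{itemize}
\item $O(\alpha^{-2}\Lambda^4)$ samples and $O(\alpha^{-4}\Lambda^5)$ queries for BSG;
\item $tT=O\bigl((n+\log(1/\rho))\,\alpha^{-1/2}\log(t/\rho)\bigr)$ samples from $B$ for core sampling, with the query cost stated in Proposition~\ref{prop:output-exact-sampling-v2};
\item $F_{\rm PFR}(L_\alpha)\operatorname{poly}(n,\log(1/\rho))$ membership queries to $Y$ for PFR, each costing $O(\alpha^{-4}\Lambda^3)$ queries to $B$.
\end{itemize}
Here $\Lambda=O(1+n+\log(1/\alpha)+\log(1/\rho))$. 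Absorbing the powers of $\log(1/\alpha)$ into $F_{\rm Erob}(\alpha^{-1})$ gives the FPT form. The sample count is polynomial in $\alpha^{-1}$, $n$ and $\log(1/\rho)$, because PFR consumes only the $t$ block samples.
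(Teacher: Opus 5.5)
Your proposal is correct and follows the paper's proof essentially step for step. It uses the same four $\rho/4$ failure charges (BSG soundness, BSG completeness, the exact sampling bridge, the adaptive PFR invocation), conditions on the frozen record before sampling, applies one-core lifting with $\gamma=\gamma_\alpha$ and $P=P_{\rm PFR}(L_\alpha)$, and has the same resource count routed through the per-query cost of $Y_D$.
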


\begin{proof}
Let \(\mathcal E_{\rm BSG}\) be the event that the compiler returns a
record \(D\) whose set \(Y=Y_D\) satisfies
\begin{equation}
Y\subseteq B,
\qquad
|Y|\geq\gamma_\alpha|B|,
\qquad
|Y+Y|\leq L_\alpha|Y|.
\label{eq:energy-robust-good-core-v2}
\end{equation}
By the soundness and completeness clauses of
Theorem~\ref{thm:persistent-bsg-v2},
\[
\Pr[\mathcal E_{\rm BSG}^{\mathrm c}]
\leq
\rho_{\rm snd}+\rho_{\rm cmp}.
\]

Condition on the complete compiler transcript and on
\(\mathcal E_{\rm BSG}\).  The record \(D\), the set \(Y\), and its
membership predicate are now fixed.  Proposition
\ref{prop:output-exact-sampling-v2} fails with conditional
probability at most \(\rho_{\rm samp}\); conditioned on success, the
sample block has law
\[
U_Y^{\otimes t}.
\]
The adaptive PFR corollary then returns, except with conditional
probability \(\rho_{\rm PFR}\), a subspace \(H\) satisfying
\[
|H|\leq|Y|,
\qquad
\mathcal N_H(Y)\leq P_\alpha.
\]

Theorem~\ref{thm:one-core-lifting-v2}, with
\(\gamma=\gamma_\alpha\) and \(P=P_\alpha\), now gives the stated
subspace \(V\).  The four sequential failure charges sum to
\[
\rho_{\rm snd}+\rho_{\rm cmp}
+\rho_{\rm samp}+\rho_{\rm PFR}
=
\rho.
\]

For the resource bound, write
\[
\Lambda_\alpha
:=
1+n+\log\frac1\alpha+\log\frac1\rho.
\]
The compiler uses polynomially many samples and queries in
\(\alpha^{-1}\) and \(\Lambda_\alpha\).  The sampling bridge requests
\[
t=s_{\rm PFR}(n,\rho/4)
\]
blocks, each of length
\[
T=
O\!\left(
\alpha^{-1/2}
\log\frac{t}{\rho}
\right).
\]
It therefore makes at most \(tT\) membership queries to \(Y\).
The PFR routine makes at most
\[
F_{\rm PFR}(L_\alpha)
\operatorname{poly}\!\left(n,\log\frac1\rho\right)
\]
additional membership queries to \(Y\).  Every such query is
implemented through the fixed record \(D\).  Substituting
\(L_\alpha=C_{\rm BSG}\alpha^{-4}\), together with the per-query
bound from Theorem~\ref{thm:persistent-bsg-v2}, and taking a
nondecreasing upper envelope gives the stated complexity.
\end{proof}

\subsection{The one-core positivity condition}
\label{subsec:one-core-frontier-v2}

When an upper bound \(\eta\) is supplied, the natural energy parameter
from Lemma~\ref{lem:robust-observed-energy-v2} is
\begin{equation}
\alpha_{K,\eta}
:=
\frac{(1-\eta)^4}{(1+\eta)^3K}.
\label{eq:one-core-exact-alpha-v2}
\end{equation}
Set
\[
\gamma_{K,\eta}
:=
c_{\rm BSG}\sqrt{\alpha_{K,\eta}}
=
\frac{
c_{\rm BSG}(1-\eta)^2
}{
\sqrt K(1+\eta)^{3/2}
}.
\]
Since \(c_{\rm BSG}<1/2\), one has
\[
\Delta(\gamma_{K,\eta},\eta)
=
\frac{
c_{\rm BSG}(1-\eta)^2
-
\eta\sqrt{K(1+\eta)}
}{
\sqrt{K(1+\eta)}
}.
\]
Thus the one-core lifting condition is exactly
\begin{equation}
c_{\rm BSG}(1-\eta)^2
>
\eta\sqrt{K(1+\eta)}.
\label{eq:one-core-frontier-v2}
\end{equation}

\begin{corollary}[Known-contamination one-core theorem]
\label{cor:known-contamination-one-core-v2}
Assume \eqref{eq:one-core-frontier-v2}.  Given \(K\), \(\eta\),
\(\rho\), and exact sample-and-query access to \(B\), run
Algorithm~\ref{alg:one-core-robust-pfr-v2} with
\(\alpha=\alpha_{K,\eta}\).  If
\[
|A+A|\leq K|A|
\qquad\text{and}\qquad
|A\triangle B|\leq\eta|A|,
\]
then with probability at least \(1-\rho\) the output satisfies
\[
|V|\leq|A|
\]
and
\[
\mathcal N_V(A)
\leq
\frac{
2K
P_{\rm PFR}(C_{\rm BSG}\alpha_{K,\eta}^{-4})
}{
\Delta(\gamma_{K,\eta},\eta)
}.
\]
\end{corollary}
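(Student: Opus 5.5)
This corollary is a direct specialization of Theorem~\ref{thm:energy-robust-pfr-v2}, with \(\alpha=\alpha_{K,\eta}\). The plan is to check the four hypotheses \eqref{eq:energy-robust-hidden-doubling-v2}--\eqref{eq:energy-robust-positive-margin-v2} for this choice of \(\alpha\) and then read off the conclusion.

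\textbf{Admissibility of \(\alpha\).} First confirm that \(\alpha_{K,\eta}\in(0,1]\). This holds since \(K\geq1\) and \(0\leq\eta<1\), which give \((1-\eta)^4\leq1\leq(1+\eta)^3K\). The corollary is stated with \(\eta<1\) implicitly through \eqref{eq:one-core-frontier-v2}: if \(\eta\geq1\), the left side of \eqref{eq:one-core-frontier-v2} would be at most \(0<\eta\sqrt{K(1+\eta)}\), so the assumption itself forces \(\eta<1\).

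\textbf{Hypotheses of Theorem~\ref{thm:energy-robust-pfr-v2}.}
\begin{enumerate}
\item The doubling condition \eqref{eq:energy-robust-hidden-doubling-v2} and the contamination condition \eqref{eq:energy-robust-hidden-contamination-v2} are given directly.
\item The energy condition \eqref{eq:energy-robust-observed-energy-v2} is exactly Lemma~\ref{lem:robust-observed-energy-v2}, which gives \(E(B)\geq\alpha_{K,\eta}|B|^3\). Here \(B\neq\varnothing\) because \(|A\cap B|\geq(1-\eta)|A|>0\).
\item For the margin condition \eqref{eq:energy-robust-positive-margin-v2}, note that \(\gamma_{\alpha_{K,\eta}}=\gamma_{K,\eta}\leq c_{\rm BSG}<1/2\). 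Formula \eqref{eq:robust-small-gamma-margin-v2} then gives \(\Delta(\gamma_{K,\eta},\eta)=\gamma_{K,\eta}-(1-\gamma_{K,\eta})\eta\). Multiplying through by \(\sqrt{K(1+\eta)}\,(1+\eta)\)-type factors yields the displayed closed form. Since the denominator of that form is positive, \(\Delta>0\) is equivalent to \eqref{eq:one-core-frontier-v2}.
\end{enumerate}

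The only step needing real care is this algebraic identity for \(\Delta\), because the paper's displayed formula seems to drop the \(-\gamma\eta\) correction term. If the identity is not literally exact, I would instead argue the inequality directly:
\[
\Delta(\gamma_{K,\eta},\eta)\;\geq\;\gamma_{K,\eta}-\eta .
\]
Positivity of \(\gamma_{K,\eta}-\eta\) is precisely \(c_{\rm BSG}(1-\eta)^2>\eta\sqrt K(1+\eta)^{3/2}=\eta\sqrt{K(1+\eta)}\,(1+\eta)\). So \eqref{eq:one-core-frontier-v2}, read with the paper's normalization, suffices for positivity either way, and that is all the corollary needs.

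\textbf{Conclusion.} With all four hypotheses verified, Theorem~\ref{thm:energy-robust-pfr-v2} applies with \(\alpha=\alpha_{K,\eta}\). Since \(\gamma_{\alpha_{K,\eta}}=\gamma_{K,\eta}\), its bound \eqref{eq:energy-robust-output-cover-v2} becomes the stated covering estimate verbatim, and \eqref{eq:energy-robust-output-size-v2} gives \(|V|\leq|A|\). Both hold with probability at least \(1-\rho\).
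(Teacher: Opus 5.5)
Your route is the paper's own: Lemma~\ref{lem:robust-observed-energy-v2} supplies $E(B)\geq\alpha_{K,\eta}|B|^3$, \eqref{eq:one-core-frontier-v2} supplies $\Delta(\gamma_{K,\eta},\eta)>0$, and Theorem~\ref{thm:energy-robust-pfr-v2} does the rest. Two of your side remarks are wrong, however, and should be removed rather than kept as hedges.

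First, the displayed closed form for $\Delta(\gamma_{K,\eta},\eta)$ is exact; it does not drop the $\gamma\eta$ term. With $\gamma=\gamma_{K,\eta}<1/2$ one has $\Delta=\gamma-(1-\gamma)\eta=\gamma(1+\eta)-\eta$, and
\[
\gamma(1+\eta)=\frac{c_{\rm BSG}(1-\eta)^2(1+\eta)}{\sqrt K\,(1+\eta)^{3/2}}=\frac{c_{\rm BSG}(1-\eta)^2}{\sqrt{K(1+\eta)}}.
\]
So the $+\gamma\eta$ contribution is exactly what lowers the power of $1+\eta$ from $3/2$ to $1/2$. Writing this line makes your primary argument complete.

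Your fallback does not work. The condition $\gamma-\eta>0$ is equivalent to $c_{\rm BSG}(1-\eta)^2>\eta(1+\eta)\sqrt{K(1+\eta)}$, which is strictly stronger than \eqref{eq:one-core-frontier-v2} whenever $\eta>0$. For example, take $K=1$ and $\eta=0.115$. Then $c_{\rm BSG}(1-\eta)^2\approx0.1285$, which exceeds $\eta\sqrt{1+\eta}\approx0.1214$, so the hypothesis holds. But it falls below $\eta(1+\eta)^{3/2}\approx0.1354$, so $\gamma-\eta<0$. Hence the claim that \eqref{eq:one-core-frontier-v2} ``suffices for positivity either way'' is false. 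Only the exact identity gives the equivalence.

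Second, \eqref{eq:one-core-frontier-v2} does not force $\eta<1$. For $\eta>1$ its left side $c_{\rm BSG}(1-\eta)^2$ is positive, not at most $0$; for instance, $K=1$ and $\eta=100$ satisfy the inequality ($\approx1608>\approx1005$). The restriction $0\leq\eta<1$ comes instead from the standing assumption \eqref{eq:robust-contamination-v2} of this section, which you should cite. It is genuinely needed: Lemma~\ref{lem:robust-observed-energy-v2} and Theorem~\ref{thm:energy-robust-pfr-v2} both assume it, and it is what guarantees $\alpha_{K,\eta}\in(0,1]$ and $\gamma_{K,\eta}\leq c_{\rm BSG}$.
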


\begin{proof}
Lemma~\ref{lem:robust-observed-energy-v2} supplies the energy promise,
and \eqref{eq:one-core-frontier-v2} supplies the positive purity
margin.  Apply Theorem~\ref{thm:energy-robust-pfr-v2}.
\end{proof}

\subsection{Unknown contamination}
\label{subsec:robust-hidden-set-v2}

The main single-instance theorem does not give \(\eta\) to the
algorithm.  Set
\begin{equation}
c_{\rm rob}
:=
\min\left\{
\frac1{16},
\frac{c_{\rm BSG}}{2\sqrt2}
\right\},
\qquad
\alpha_K:=\frac1{2K}.
\label{eq:robust-main-parameters-v2}
\end{equation}

\begin{proof}[Proof of Theorem~\ref{thm:main-robust-hidden-set-pfr-v2}]
Let
\[
\eta_0:=\frac{|A\triangle B|}{|A|}.
\]
Then
\[
\eta_0
\leq
\frac{c_{\rm rob}}{\sqrt K}
\leq
\frac1{16}.
\]
Lemma~\ref{lem:robust-observed-energy-v2} gives
\[
E(B)\geq\frac1{2K}|B|^3
=
\alpha_K|B|^3.
\]

Put
\[
\gamma_K
:=
c_{\rm BSG}\sqrt{\alpha_K}
=
\frac{c_{\rm BSG}}{\sqrt{2K}}.
\]
Since \(\gamma_K\leq1/2\),
\[
\Delta(\gamma_K,\eta_0)
=
\gamma_K-(1-\gamma_K)\eta_0
\geq
\gamma_K-\eta_0.
\]
The definition of \(c_{\rm rob}\) gives
\[
\eta_0\leq\frac{\gamma_K}{2},
\qquad
\Delta(\gamma_K,\eta_0)
\geq
\frac{\gamma_K}{2}
=
\frac{c_{\rm BSG}}{2\sqrt{2K}}.
\]
Theorem~\ref{thm:energy-robust-pfr-v2} now yields
\begin{align*}
\mathcal N_V(A)
&\leq
\frac{
2K P_{\rm PFR}(C_{\rm BSG}(2K)^4)
}{
\Delta(\gamma_K,\eta_0)
}\\
&\leq
\frac{4\sqrt2}{c_{\rm BSG}}
K^{3/2}
P_{\rm PFR}(C_{\rm BSG}(2K)^4).
\end{align*}
Thus one may take
\begin{equation}
P_{\rm rob}(K)
:=
\frac{4\sqrt2}{c_{\rm BSG}}
K^2
P_{\rm PFR}(C_{\rm BSG}(2K)^4).
\label{eq:robust-cover-polynomial-v2}
\end{equation}
Finally, \(\alpha_K^{-1}=2K\), so the complexity follows from
Theorem~\ref{thm:energy-robust-pfr-v2} after taking a nondecreasing
upper envelope.
\end{proof}

\section{Iterated cores for ambiguous observations}
\label{sec:list-compatible-pfr-v2}

Section~\ref{sec:one-core-robust-pfr-v2} treats one hidden set when a
single persistent core is known to contain sufficiently many of its
points.  For a larger compatibility radius, however, the same
observation may be consistent with several hidden sets, and no single
core need serve all of them.  We therefore peel several disjoint cores
and return a common list of subspaces.

Recall from \eqref{eq:model-compatibility-class-v2} that
\(
\mathfrak A(B;K,\eta)
\)
denotes the class of all nonempty sets
\(A\subseteq\mathbb F_2^n\) satisfying
\(
|A+A|\leq K|A|
\ \text{and}\
|A\triangle B|\leq\eta|A|.
\)
The output list depends only on \(B,K,\eta\) and the randomness of the
algorithm.  Its guarantee has the quantifier order
\begin{equation}
\forall A\in\mathfrak A(B;K,\eta)
\qquad
\exists i=i(A).
\label{eq:list-quantifier-order-v2}
\end{equation}
The serving index may depend on the hidden set and need not be
identified by the algorithm.

\subsection{Compatible hidden sets}
\label{subsec:list-compatible-hidden-sets-v2}

Write
\begin{equation}
\varepsilon:=1-\eta,
\qquad
b:=1+\eta.
\label{eq:list-epsilon-b-v2}
\end{equation}
Every \(A\in\mathfrak A(B;K,\eta)\) satisfies
\begin{equation}
|A\cap B|\geq\varepsilon|A|,
\qquad
|B|\leq b|A|.
\label{eq:list-basic-compatibility-v2}
\end{equation}
In particular, a nonempty compatibility class forces \(B\neq\varnothing\).

Set
\begin{equation}
q:=\frac{2\varepsilon}{3},
\qquad
\alpha_\star:=\frac{q^4}{Kb^3}
=
\frac{16\varepsilon^4}{81Kb^3},
\label{eq:list-alpha-star-v2}
\end{equation}
and define
\begin{equation}
\gamma_\star
:=
c_{\rm BSG}\sqrt{\alpha_\star}
=
\frac{4c_{\rm BSG}\varepsilon^2}
{9\sqrt K\,b^{3/2}},
\qquad
L_\star
:=
C_{\rm BSG}\alpha_\star^{-4},
\qquad
P_\star:=P_{\rm PFR}(L_\star).
\label{eq:list-structural-parameters-v2}
\end{equation}
The maximum number of successful peels is
\begin{equation}
M_\star
:=
\left\lceil
\frac{\log(b/q)}
{-\log(1-\gamma_\star)}
\right\rceil.
\label{eq:list-max-peels-v2}
\end{equation}
Since \(0<\gamma_\star<1\), this is well defined.  Moreover, for
absolute constants \(C_M,C_L>0\),
\begin{equation}
M_\star
\leq
C_M\sqrt K\,\varepsilon^{-2}
\log\frac3\varepsilon,
\qquad
L_\star
\leq
C_LK^4\varepsilon^{-16}.
\label{eq:list-parameter-bounds-v2}
\end{equation}
Indeed, \(-\log(1-x)\geq x\), \(b<2\), and
\(\log(b/q)<\log(3/\varepsilon)\).

We now prove Theorem~\ref{thm:main-list-compatible-pfr-v2}.
The construction and its structural analysis are separated below.

\subsection{Residual energy and normal stopping}
\label{subsec:list-residual-energy-v2}

The first observation gives a stopping rule that is valid for all
compatible hidden sets at once.

\begin{lemma}[Compatible mass forces residual energy]
\label{lem:list-residual-energy-v2}
Let \(R\subseteq B\).  If some
\(A\in\mathfrak A(B;K,\eta)\) satisfies
\begin{equation}
|A\cap R|\geq q|A|,
\label{eq:list-residual-clean-mass-v2}
\end{equation}
then
\begin{equation}
E(R)\geq\alpha_\star|R|^3.
\label{eq:list-residual-energy-v2}
\end{equation}
\end{lemma}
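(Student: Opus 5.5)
The plan mirrors the proof of Lemma~\ref{lem:robust-observed-energy-v2}, with the clean part \(A\cap R\) in the role of \(A\cap B\). Put \(C_R:=A\cap R\). Since \(R\subseteq B\), every additive quadruple of \(C_R\) is an additive quadruple of \(R\), so \(E(R)\geq E(C_R)\). By hypothesis \eqref{eq:list-residual-clean-mass-v2}, \(C_R\) is nonempty, because \(q|A|>0\). Moreover \(C_R\subseteq A\) gives \(C_R+C_R\subseteq A+A\), and hence \(|C_R+C_R|\leq K|A|\).

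Next I apply the Cauchy--Schwarz bound \eqref{eq:model-energy-CS-v2} to \(C_R\):
\[
E(R)\geq E(C_R)\geq\frac{|C_R|^4}{|C_R+C_R|}\geq\frac{(q|A|)^4}{K|A|}=\frac{q^4}{K}|A|^3 .
\]
It then remains to convert \(|A|^3\) into \(|R|^3\). From \(R\subseteq B\) and \eqref{eq:list-basic-compatibility-v2}, we have \(|R|\leq|B|\leq b|A|\), so \(|A|^3\geq|R|^3/b^3\). This yields
\[
E(R)\geq\frac{q^4}{Kb^3}|R|^3=\alpha_\star|R|^3,
\]
which is exactly the definition \eqref{eq:list-alpha-star-v2}.

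There is no real obstacle here. The only points to check are that the normalization uses \(|R|\) rather than \(|B|\), which needs only the monotone inequality \(|R|\leq|B|\), and that \(R\) is nonempty, which is supplied by \(C_R\neq\varnothing\). The compatibility bound \(|B|\leq b|A|\) is where the specific hidden set \(A\) enters. However, the resulting energy threshold \(\alpha_\star\) depends only on \(K\) and \(\eta\), and this is what makes the stopping rule uniform over the compatibility class.
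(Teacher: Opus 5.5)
Your proof is correct and follows the paper's argument essentially verbatim: bound \(E(R)\geq E(A\cap R)\), apply Cauchy--Schwarz using \((A\cap R)+(A\cap R)\subseteq A+A\), then convert \(|A|^3\) to \(|R|^3\) via \(|R|\leq|B|\leq b|A|\). One small wording slip: \(E(R)\geq E(C_R)\) holds because \(C_R\subseteq R\), not because \(R\subseteq B\); the inclusion \(R\subseteq B\) is only needed in the final normalization step.
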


\begin{proof}
Put \(C:=A\cap R\).  Since \(C+C\subseteq A+A\),
\[
E(R)
\geq E(C)
\geq\frac{|C|^4}{|C+C|}
\geq\frac{q^4|A|^4}{|A+A|}
\geq\frac{q^4}{K}|A|^3.
\]
Also \(R\subseteq B\) and \(|B|\leq b|A|\), so
\[
|A|^3\geq b^{-3}|R|^3.
\]
This gives \eqref{eq:list-residual-energy-v2}.
\end{proof}

Consequently, if
\[
E(R)<\alpha_\star|R|^3,
\]
then
\begin{equation}
|A\cap R|<q|A|
\qquad
\text{for every }A\in\mathfrak A(B;K,\eta).
\label{eq:list-low-energy-small-mass-v2}
\end{equation}
This is why a \(\mathsf{FAIL}\) return from the BSG routine can be a
normal stopping outcome: on its completeness event, such a return
certifies that the residual has low energy.

We also stop when the residual is small relative to \(B\).  The next
proposition packages the finite test and the exact sample block needed
by the BSG routine.  Its explicit block lengths are recorded in
Appendix~\ref{app:list-implementation-v2}.

\begin{proposition}[Residual test and exact input block]
\label{prop:list-residual-access-v2}
Fix a past transcript that determines a residual \(R\subseteq B\), a
required block length \(s\), and a conditional error budget
\(\rho_0\in(0,1)\).  There is a bounded procedure using fresh samples
from \(B\) which either returns \(\mathsf{STOP}\), returns
\(\mathsf{FAIL}\), or supplies a block of \(s\) points in \(R\).

There is a density event of conditional probability at least
\(1-\rho_0\) on which a \(\mathsf{STOP}\) return implies
\begin{equation}
|A\cap R|<q|A|
\qquad
\text{for every }A\in\mathfrak A(B;K,\eta),
\label{eq:list-density-stop-v2}
\end{equation}
while a continue decision implies
\begin{equation}
\frac{|R|}{|B|}
>
\beta_{\rm res},
\qquad
\beta_{\rm res}:=\frac{5q}{8b}.
\label{eq:list-residual-density-lower-v2}
\end{equation}
Conditional on the density event and a continue decision, the
probability that the finite rejection blocks return
\(\mathsf{FAIL}\) is at most \(\rho_0\); conditioned additionally on
their success, the output block has law
\begin{equation}
U_R^{\otimes s}.
\label{eq:list-residual-product-law-v2}
\end{equation}
\end{proposition}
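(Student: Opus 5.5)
The procedure is a density estimate for \(R\) inside \(B\), followed by rejection sampling. Membership in \(R\) is a fixed deterministic predicate once the past transcript is fixed. Draw a fresh block of \(m_{\rm den}\) uniform samples from \(B\) and let \(\widehat\beta\) be the empirical fraction lying in \(R\). Return \(\mathsf{STOP}\) when \(\widehat\beta<\theta\), with threshold \(\theta:=\tfrac{3q}{4b}\), and continue otherwise. Apply the multiplicative gap test, Lemma~\ref{lem:multiplicative-gap-test-v2}, with base scale \(q/b\), lower point \(a=5/8\), middle point \(\theta=3/4\), and upper point \(1\). With \(m_{\rm den}=O((b/q)\log(1/\rho_0))\) it yields a density event of conditional probability at least \(1-\rho_0\) on which:
\begin{itemize}
\item if \(|R|/|B|\geq q/b\), the test does not stop;
\item if \(|R|/|B|\leq 5q/(8b)=\beta_{\rm res}\), the test stops.
\end{itemize}
These samples are fresh given the past, so conditioning on the transcript is harmless.

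On the density event, a \(\mathsf{STOP}\) return forces \(|R|<(q/b)|B|\). For any \(A\in\mathfrak A(B;K,\eta)\), \eqref{eq:list-basic-compatibility-v2} gives \(|B|\leq b|A|\). Hence
\[
|A\cap R|\leq|R|<\frac qb\,|B|\leq q|A|,
\]
which is \eqref{eq:list-density-stop-v2}. Conversely, a continue decision on the density event rules out \(|R|/|B|\leq\beta_{\rm res}\), which gives \eqref{eq:list-residual-density-lower-v2}.

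For the block, use \(s\) independent rejection blocks. Each block takes \(T:=\lceil\beta_{\rm res}^{-1}\ln(s/\rho_0)\rceil\) fresh samples from \(B\) and keeps the first one in \(R\). The procedure returns \(\mathsf{FAIL}\) if any block is exhausted. Conditional on the density event and continuation, each draw lands in \(R\) with probability exceeding \(\beta_{\rm res}\). A single block is therefore exhausted with probability at most \(e^{-\beta_{\rm res}T}\leq\rho_0/s\), and a union bound over the \(s\) blocks gives failure probability at most \(\rho_0\). Conditioned on success, each block outputs an exactly uniform point of \(R\), and distinct blocks are independent given success, by Lemma~\ref{lem:fixed-block-first-success-v2}. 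This gives the law \(U_R^{\otimes s}\), exactly as in Proposition~\ref{prop:output-exact-sampling-v2}.

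The main obstacle is the conditioning bookkeeping rather than any inequality. The sampling blocks must be independent of the density-test samples, so that conditioning on the density event and the continue decision leaves the rejection blocks fresh uniform draws from \(B\). This holds if fresh blocks are drawn after the test. Also, \(R\) is fixed by the past transcript and does not depend on any of these samples.
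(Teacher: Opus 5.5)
Your proof is correct and follows the paper's construction: a fresh density test at threshold $3q/(4b)$, followed by $s$ independent fixed-length rejection blocks of length $\lceil\beta_{\rm res}^{-1}\log(s/\rho_0)\rceil$, a union bound, and Lemma~\ref{lem:fixed-block-first-success-v2} for the product law. The one difference is in the density event: the paper uses a two-sided Hoeffding accuracy event $|\widehat r-r|\le q/(8b)$, so a stop gives $|R|\le\frac{7q}{8b}|B|$, whereas you use a casewise event from the multiplicative gap test; yours is equally valid and needs only $O((b/q)\log(1/\rho_0))$ density samples instead of $O((b/q)^2\log(1/\rho_0))$.
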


\subsection{Peeling fixed cores}
\label{subsec:list-peeling-v2}

Let
\[
\rho_0:=\frac{\rho}{10M_\star}.
\]
At round \(j\), the already constructed cores determine
\begin{equation}
R_0:=B,
\qquad
R_j
:=
B\setminus\bigcup_{i<j}Y_i
\quad(j\geq1).
\label{eq:list-residual-definition-v2}
\end{equation}
Conditional on the past, this is one fixed set with membership
predicate
\begin{equation}
\mathbf 1_{R_j}(x)
=
\mathbf 1_B(x)
\prod_{i<j}\bigl(1-\mathbf 1_{Y_i}(x)\bigr).
\label{eq:list-residual-predicate-v2}
\end{equation}

Let
\[
s_{\rm BSG}^\star
\]
be the deterministic sample budget of
Theorem~\ref{thm:persistent-bsg-v2} at parameters
\((\alpha_\star,\rho_0,\rho_0)\), and put
\[
t_\star:=s_{\rm PFR}(n,\rho_0).
\]
Every random block in the following procedure is fresh.

\begin{algorithm}[htbp]
\caption{\textsc{ListPFR}}
\label{alg:list-pfr-v2}
\begin{algorithmic}[H]
\Require membership access and exact uniform samples from \(B\);
         \(K\geq1\), \(\eta\in[0,1)\), \(\rho\in(0,1)\)
\State set \(R_0:=B\), and initialize empty core and subspace lists
\For{\(j=0,\ldots,M_\star-1\)}
    \State run Proposition~\ref{prop:list-residual-access-v2} on \(R_j\)
           with \(s=s_{\rm BSG}^\star\) and error budget \(\rho_0\)
    \If{the result is \(\mathsf{STOP}\)}
        \State \textbf{break}
    \ElsIf{the result is \(\mathsf{FAIL}\)}
        \State \Return \(\mathsf{FAIL}\)
    \EndIf
    \State run \textsc{PersistentBSG} on \(R_j\) with parameter
           \(\alpha_\star\), error budgets \((\rho_0,\rho_0)\), and the
           returned exact input block
    \If{the result is \(\mathsf{FAIL}\)}
        \State \textbf{break}
    \EndIf
    \State let \(Y_j\subseteq R_j\) be the fixed output set
    \State generate \(t_\star\) exact samples from \(Y_j\) using fresh
           fixed-length rejection blocks from \(B\)
    \If{some core-sampling block is exhausted}
        \State \Return \(\mathsf{FAIL}\)
    \EndIf
    \State invoke Corollary~\ref{cor:pfr-adaptive-invocation-v2} on \(Y_j\)
           with doubling parameter \(L_\star\) and error budget \(\rho_0\)
    \If{the invocation fails}
        \State \Return \(\mathsf{FAIL}\)
    \EndIf
    \State let \(H_j\) be the returned subspace
    \If{\(H_j=\{0\}\)}
        \State set \(V_j:=H_j\)
    \Else
        \State delete one vector from a basis of \(H_j\), and let \(V_j\)
               be the span of the remaining basis
    \EndIf
    \State append \(Y_j,V_j\) to their lists and set
           \(R_{j+1}:=R_j\setminus Y_j\)
\EndFor
\If{the subspace list is empty}
    \State \Return \(\mathsf{FAIL}\)
\Else
    \State \Return the stored bases
\EndIf
\end{algorithmic}
\end{algorithm}

On a valid BSG output,
\begin{equation}
Y_j\subseteq R_j,
\qquad
|Y_j|\geq\gamma_\star|R_j|,
\qquad
|Y_j+Y_j|\leq L_\star|Y_j|.
\label{eq:list-valid-core-v2}
\end{equation}
Combining the first two inequalities with
\eqref{eq:list-residual-density-lower-v2} gives
\begin{equation}
\frac{|Y_j|}{|B|}
>
\beta_Y,
\qquad
\beta_Y
:=
\gamma_\star\beta_{\rm res}
=
\frac{5\gamma_\star q}{8b}.
\label{eq:list-core-density-v2}
\end{equation}
Thus the samples for the Section~5 invocation can be drawn directly
from \(B\), rather than through a second rejection layer inside
\(R_j\).  Appendix~\ref{app:list-implementation-v2} gives the finite
block length and proves the conditional product law.

\begin{lemma}[Geometric peeling]
\label{lem:list-geometric-peeling-v2}
Suppose that the first \(m\) BSG outputs satisfy
\eqref{eq:list-valid-core-v2}.  Then the sets
\(Y_0,\ldots,Y_{m-1}\) are pairwise disjoint and
\begin{equation}
|R_m|
\leq
(1-\gamma_\star)^m|B|.
\label{eq:list-geometric-residual-v2}
\end{equation}
In particular,
\begin{equation}
|R_{M_\star}|
\leq
\frac qb|B|.
\label{eq:list-terminal-size-after-max-v2}
\end{equation}
\end{lemma}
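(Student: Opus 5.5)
The argument is an induction on the number of peels, using only the recursive definition \eqref{eq:list-residual-definition-v2} and the first two conditions in \eqref{eq:list-valid-core-v2}. The doubling bound plays no role here.

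For disjointness, observe that \eqref{eq:list-residual-definition-v2} gives \(R_j=B\setminus\bigcup_{i<j}Y_i\), and this agrees with the update \(R_{j+1}=R_j\setminus Y_j\) in Algorithm~\ref{alg:list-pfr-v2}. Validity gives \(Y_j\subseteq R_j\), so \(Y_j\) is disjoint from every \(Y_i\) with \(i<j\). Since this holds for each \(j<m\), the sets \(Y_0,\ldots,Y_{m-1}\) are pairwise disjoint.

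For the geometric bound, I argue by induction on \(j\le m\), with base case \(R_0=B\). Because \(Y_j\subseteq R_j\), we have the exact identity
\[
|R_{j+1}|=|R_j|-|Y_j|.
\]
Combining it with \(|Y_j|\geq\gamma_\star|R_j|\) gives
\[
|R_{j+1}|\leq(1-\gamma_\star)|R_j|.
\]
Iterating this inequality yields \eqref{eq:list-geometric-residual-v2}.

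For the terminal statement \eqref{eq:list-terminal-size-after-max-v2}, which is read under the hypothesis that \(M_\star\) valid peels occurred, apply \eqref{eq:list-geometric-residual-v2} with \(m=M_\star\). It then suffices to check \((1-\gamma_\star)^{M_\star}\leq q/b\). Since \(0<\gamma_\star<1\), the quantity \(-\log(1-\gamma_\star)\) is positive. The definition \eqref{eq:list-max-peels-v2} gives
\[
M_\star\cdot\bigl(-\log(1-\gamma_\star)\bigr)\geq\log(b/q).
\]
Exponentiating, with the base of the logarithm immaterial, gives exactly the required inequality. Note that \(b/q\ge 3/2>1\), so the ceiling is at least one and the argument is not vacuous.

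There is no real obstacle here. The only point needing care is that the residual used in round \(j\) coincides with the one appearing in \eqref{eq:list-valid-core-v2}, so that the retention inequality applies to \(R_j\) itself and not to \(B\).
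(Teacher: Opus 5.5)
Your proof is correct and takes essentially the same approach as the paper. Disjointness follows from \(Y_j\subseteq R_j=B\setminus\bigcup_{i<j}Y_i\); the identity \(|R_{j+1}|=|R_j|-|Y_j|\le(1-\gamma_\star)|R_j|\), iterated from \(R_0=B\), gives the geometric bound; and \((1-\gamma_\star)^{M_\star}\le q/b\) follows from the definition of \(M_\star\). Your additional remarks are accurate details that the paper leaves implicit: the terminal bound is read with \(m=M_\star\) valid peels, and \(b/q\ge 3/2\) makes the ceiling at least one.
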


\begin{proof}
Since \(Y_j\subseteq R_j\),
\[
R_{j+1}=R_j\setminus Y_j
\]
and the cores are pairwise disjoint.  Also
\[
|R_{j+1}|
=
|R_j|-|Y_j|
\leq
(1-\gamma_\star)|R_j|.
\]
Iteration gives \eqref{eq:list-geometric-residual-v2}.  The definition
of \(M_\star\) gives
\[
(1-\gamma_\star)^{M_\star}
\leq
\frac qb,
\]
which proves \eqref{eq:list-terminal-size-after-max-v2}.
\end{proof}

\subsection{Terminal residuals and clean-mass averaging}
\label{subsec:list-terminal-mass-v2}

We now isolate the deterministic consequence of the three normal
stopping mechanisms.

\begin{lemma}[Terminal residual]
\label{lem:list-terminal-residual-v2}
Assume that every completed residual test has the guarantee in
Proposition~\ref{prop:list-residual-access-v2}, every nonfailure BSG
output satisfies \eqref{eq:list-valid-core-v2}, and every BSG
\(\mathsf{FAIL}\) return obeys its completeness alternative.  Suppose
also that the algorithm does not terminate through a sample-block or
PFR failure.

Let \(R_{\rm fin}\) be the residual when the peeling stops.  Then,
simultaneously for every \(A\in\mathfrak A(B;K,\eta)\),
\begin{equation}
|A\cap R_{\rm fin}|\leq q|A|.
\label{eq:list-terminal-clean-mass-v2}
\end{equation}
\end{lemma}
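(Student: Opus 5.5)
The argument is a case analysis on how the peeling loop in Algorithm~\ref{alg:list-pfr-v2} terminates. Under the standing assumptions, no sample-block or PFR failure occurs, so only three normal exits remain. The loop may \textbf{break} on a $\mathsf{STOP}$ from the residual test. It may \textbf{break} on a $\mathsf{FAIL}$ from \textsc{PersistentBSG}. Or it may complete all $M_\star$ rounds, with every round producing a valid core. In each case $R_{\rm fin}$ is the residual at the moment of stopping, namely $R_j$ for the round $j$ in which the break occurs, or $R_{M_\star}$ after the last round. For each case I will verify \eqref{eq:list-terminal-clean-mass-v2} simultaneously for all $A\in\mathfrak A(B;K,\eta)$. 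Each case actually gives the strict inequality, except possibly the third, where only $\leq$ is needed.

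\emph{Case 1: the residual test returns $\mathsf{STOP}$ at round $j$.} By hypothesis the completed residual test has the guarantee of Proposition~\ref{prop:list-residual-access-v2}, i.e.\ we are on its density event. Then \eqref{eq:list-density-stop-v2} gives $|A\cap R_j|<q|A|$ for every compatible $A$ directly.

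\emph{Case 2: the test continues but \textsc{PersistentBSG} returns $\mathsf{FAIL}$ on $R_j$.} The input block supplied to \textsc{PersistentBSG} has law $U_{R_j}^{\otimes s}$ by \eqref{eq:list-residual-product-law-v2}, so the routine is run on genuine exact access to $R_j$. The hypothesis that the $\mathsf{FAIL}$ return ``obeys its completeness alternative'' means the following. By the contrapositive of \eqref{eq:persistent-bsg-completeness-v2}, on this event a failure occurs only if $E(R_j)<\alpha_\star|R_j|^3$. I will state this interpretation explicitly, since it is exactly what the assumption is meant to encode. The contrapositive of Lemma~\ref{lem:list-residual-energy-v2}, recorded as \eqref{eq:list-low-energy-small-mass-v2}, then yields $|A\cap R_j|<q|A|$ for all compatible $A$.

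\emph{Case 3: all $M_\star$ rounds produce valid cores.} Every BSG output satisfies \eqref{eq:list-valid-core-v2}, so Lemma~\ref{lem:list-geometric-peeling-v2} applies with $m=M_\star$ and gives $|R_{M_\star}|\leq (q/b)|B|$. Combining this with $|B|\leq b|A|$ from \eqref{eq:list-basic-compatibility-v2} gives $|A\cap R_{\rm fin}|\leq|R_{M_\star}|\leq q|A|$.

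The only step needing care is Case~2. There I must make sure that the conditional exactness of the input block, which holds conditional on the density event and the success of the rejection blocks, is what licenses reading a BSG failure as certified low energy for the particular fixed set $R_j$. Everything else is a direct citation of earlier results.
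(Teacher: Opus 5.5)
Your proposal is correct and follows the paper's proof essentially verbatim. It uses the same three-way case split: a residual-test \(\mathsf{STOP}\) via \eqref{eq:list-density-stop-v2}; a BSG \(\mathsf{FAIL}\) read through its completeness alternative together with \eqref{eq:list-low-energy-small-mass-v2}; and \(M_\star\) valid peels, handled by Lemma~\ref{lem:list-geometric-peeling-v2} combined with \(|B|\leq b|A|\). The extra care you flag in Case~2 is not needed inside this deterministic lemma, since the low-energy reading of a \(\mathsf{FAIL}\) return is exactly the stated hypothesis; the probabilistic justification for that hypothesis lives in Lemma~\ref{lem:list-global-good-v2}.
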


\begin{proof}
There are three cases.

If the residual test returns \(\mathsf{STOP}\), then
\eqref{eq:list-density-stop-v2} gives the conclusion.

If the BSG routine returns \(\mathsf{FAIL}\), its completeness
alternative gives
\[
E(R_{\rm fin})<\alpha_\star|R_{\rm fin}|^3.
\]
Equation~\eqref{eq:list-low-energy-small-mass-v2} then applies.

Finally, if all \(M_\star\) slots produce valid cores, then
Lemma~\ref{lem:list-geometric-peeling-v2} gives
\[
|R_{\rm fin}|
\leq
\frac qb|B|
\leq
q|A|
\]
for every compatible \(A\), using \(|B|\leq b|A|\).
\end{proof}

\begin{lemma}[A serving core exists]
\label{lem:list-clean-mass-averaging-v2}
Under the assumptions of
Lemma~\ref{lem:list-terminal-residual-v2}, suppose that the algorithm
has produced \(m\) cores, where \(1\leq m\leq M_\star\).  Then,
simultaneously for every \(A\in\mathfrak A(B;K,\eta)\),
\begin{equation}
\sum_{i=0}^{m-1}|A\cap Y_i|
\geq
(\varepsilon-q)|A|
=
\frac{\varepsilon}{3}|A|.
\label{eq:list-total-clean-mass-v2}
\end{equation}
Consequently, for every such \(A\), some index \(i=i(A)\) satisfies
\begin{equation}
|A\cap Y_i|
\geq
\frac{\varepsilon}{3M_\star}|A|.
\label{eq:list-serving-core-mass-v2}
\end{equation}
\end{lemma}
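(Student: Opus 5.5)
The plan is a direct disjoint-decomposition count, followed by pigeonholing. Fix an arbitrary $A\in\mathfrak A(B;K,\eta)$. The argument treats each compatible set separately, but the cores $Y_0,\ldots,Y_{m-1}$ and the final residual are fixed objects determined by the execution. Any conclusion proved for one arbitrary $A$ therefore holds simultaneously for all of them; no union over hidden sets is needed.

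The first step is the decomposition. By \eqref{eq:list-residual-definition-v2} and Lemma~\ref{lem:list-geometric-peeling-v2}, the cores are pairwise disjoint subsets of $B$. The terminal residual is $R_{\rm fin}=R_m=B\setminus\bigcup_{i<m}Y_i$. Hence $B$ is the disjoint union
\[
B=R_{\rm fin}\sqcup Y_0\sqcup\cdots\sqcup Y_{m-1}.
\]
One point needs care: the last iteration may end through a STOP or a BSG FAIL before a new core is appended. In that case $R_{\rm fin}$ is the residual $R_m$ on which the test ran, so the identity still holds. If all $M_\star$ slots succeed, then $m=M_\star$ and $R_{\rm fin}=R_{M_\star}$.

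Intersecting the decomposition with $A$ gives
\[
|A\cap B|=|A\cap R_{\rm fin}|+\sum_{i<m}|A\cap Y_i|.
\]
Now insert two bounds. The lower bound $|A\cap B|\ge\varepsilon|A|$ comes from \eqref{eq:list-basic-compatibility-v2}. The upper bound $|A\cap R_{\rm fin}|\le q|A|$ comes from Lemma~\ref{lem:list-terminal-residual-v2}, whose hypotheses are exactly the ones assumed here. Together they yield $\sum_{i<m}|A\cap Y_i|\ge(\varepsilon-q)|A|$. Since $q=2\varepsilon/3$, this equals $(\varepsilon/3)|A|$, which is \eqref{eq:list-total-clean-mass-v2}.

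For \eqref{eq:list-serving-core-mass-v2}, average over the $m\ge1$ summands. Some index $i$ satisfies $|A\cap Y_i|\ge\varepsilon|A|/(3m)$, and since $m\le M_\star$ this is at least $\varepsilon|A|/(3M_\star)$.

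The only step I expect to require care is the bookkeeping of which residual counts as "final" under each of the three stopping modes. The check is that the mass bound of Lemma~\ref{lem:list-terminal-residual-v2} applies to precisely the complement of the union of the recorded cores. A BSG FAIL at round $j$ does not append a core and leaves $R_j$ as the residual, so the count is consistent in every case. The hypothesis $m\ge1$ is used only to make the averaging meaningful.
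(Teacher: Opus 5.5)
Your proposal is correct and matches the paper's proof: both write $B\setminus R_{\rm fin}$ as the disjoint union of the recorded cores, subtract the terminal-residual bound $|A\cap R_{\rm fin}|\le q|A|$ from $|A\cap B|\ge\varepsilon|A|$, and then pigeonhole over the $m\le M_\star$ summands. Your extra bookkeeping on which residual counts as final under each stopping mode spells out what the paper leaves implicit, and it is accurate.
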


\begin{proof}
By pairwise disjointness,
\[
B\setminus R_{\rm fin}
=
\bigsqcup_{i=0}^{m-1}Y_i.
\]
Therefore
\begin{align*}
\sum_{i=0}^{m-1}|A\cap Y_i|
&=
|A\cap(B\setminus R_{\rm fin})|\\
&=
|A\cap B|-|A\cap R_{\rm fin}|\\
&\geq
\varepsilon|A|-q|A|.
\end{align*}
This proves \eqref{eq:list-total-clean-mass-v2}.  At least one of the
\(m\leq M_\star\) summands is then at least the quantity in
\eqref{eq:list-serving-core-mass-v2}.
\end{proof}

If the compatibility class is nonempty, the global good execution
cannot stop before producing a core.  Indeed, for any compatible
\(A\),
\[
|A\cap R_0|
=
|A\cap B|
\geq
\varepsilon|A|
>
q|A|,
\]
whereas Lemma~\ref{lem:list-terminal-residual-v2} would give the
opposite inequality for a normal stop at \(R_0\).

\subsection{From a serving core to a serving subspace}
\label{subsec:list-serving-entry-v2}

The clean mass in
\eqref{eq:list-serving-core-mass-v2} comes from averaging across the
list, not from the observed size of one preselected core.  The
following lifting statement is therefore phrased directly in terms
of \(|A\cap Y|\).

\begin{lemma}[A clean core serves through its PFR subspace]
\label{lem:list-serving-entry-v2}
Let \(A,Y\subseteq\mathbb F_2^n\), let
\(H\leq\mathbb F_2^n\), and suppose that
\begin{equation}
|A+A|\leq K|A|,
\qquad
|A\cap Y|\geq\delta|A|,
\qquad
|H|\leq|Y|,
\qquad
\mathcal N_H(Y)\leq P
\label{eq:list-serving-entry-hypotheses-v2}
\end{equation}
for some \(\delta>0\).  Assume also that
\[
Y\subseteq B,
\qquad
|B|<2|A|.
\]
Let \(V:=H\) when \(H=\{0\}\), and otherwise let \(V\) be obtained by
deleting one vector from a basis of \(H\).  Then
\begin{equation}
|V|\leq|A|,
\qquad
\mathcal N_V(A)
\leq
\frac{2KP}{\delta}.
\label{eq:list-serving-entry-conclusion-v2}
\end{equation}
\end{lemma}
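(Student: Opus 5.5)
The plan is to repeat the proof of Theorem~\ref{thm:one-core-lifting-v2}, with the purity bound of Lemma~\ref{lem:robust-clean-mass-v2} replaced by the direct hypothesis \(|A\cap Y|\geq\delta|A|\). Put \(C:=A\cap Y\), so that \(|C|\geq\delta|A|>0\). Let \(m:=\mathcal N_H(Y)\leq P\) and fix a cover of \(Y\) by \(m\) cosets of \(H\). Since \(C\subseteq Y\), pigeonhole gives one coset \(x+H\) with \(C':=C\cap(x+H)\) satisfying \(|C'|\geq|C|/m\geq(\delta/P)|A|\). Then \(C'+C'\subseteq H\).

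Next apply Ruzsa covering with \(X=A\) and the nonempty set \(C'\subseteq A\). We have \(|A+C'|\leq|A+A|\leq K|A|\leq (KP/\delta)|C'|\). By \eqref{eq:model-Ruzsa-covering-v2} this gives a set \(T\) with \(|T|\leq KP/\delta\) and \(A\subseteq T+C'+C'\subseteq T+H\). Hence \(\mathcal N_H(A)\leq KP/\delta\).

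For the size condition, \(|H|\leq|Y|\leq|B|<2|A|\). If \(H=\{0\}\), then \(V=H\) has \(|V|=1\leq|A|\), and \(\mathcal N_V(A)=\mathcal N_H(A)\) is within the bound. Otherwise \(V\) has codimension one in \(H\), so \(|V|=|H|/2<|A|\). Each \(H\)-coset splits into two \(V\)-cosets, so \(\mathcal N_V(A)\leq2\mathcal N_H(A)\leq 2KP/\delta\).

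None of these steps is a real obstacle, since the argument is a direct transcription. The only points to check are that \(C'\) is nonempty, which follows from \(\delta>0\) and \(m\geq1\), so that Ruzsa covering applies. The other is that the strict inequality \(|B|<2|A|\) is exactly what makes the single codimension-one deletion enough.
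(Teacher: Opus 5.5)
Your proposal is correct and follows the paper's proof essentially line for line. You pigeonhole a large piece \(C'\) of \(A\cap Y\) into one \(H\)-coset, apply Ruzsa covering to get \(\mathcal N_H(A)\leq KP/\delta\), and use \(|H|\leq|Y|\leq|B|<2|A|\) to justify the single codimension-one deletion; the only nit is that \(|A|>0\), which you need for \(|C|>0\) and for \(|V|\leq|A|\) when \(H=\{0\}\), comes from \(|B|<2|A|\) rather than from \(\delta>0\) and \(m\geq1\) alone.
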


\begin{proof}
Put \(C:=A\cap Y\).  Cover \(Y\) by
\(m:=\mathcal N_H(Y)\) distinct \(H\)-cosets.  One of them contains a
set \(C'\subseteq C\) of size at least \(|C|/m\).  Hence
\[
|A+C'|
\leq
|A+A|
\leq
K|A|
\leq
\frac{Km}{\delta}|C'|.
\]
Ruzsa's covering lemma gives
\[
\mathcal N_H(A)
\leq
\frac{Km}{\delta}
\leq
\frac{KP}{\delta}.
\]

Moreover,
\[
|H|
\leq|Y|
\leq|B|
<2|A|.
\]
If \(H=\{0\}\), the size conclusion is immediate.  Otherwise,
\[
|V|=\frac{|H|}{2}<|A|,
\]
and every \(H\)-coset is the union of two \(V\)-cosets.  Thus
\[
\mathcal N_V(A)
\leq
2\mathcal N_H(A)
\leq
\frac{2KP}{\delta}.
\]
\end{proof}

On a successful Section~5 invocation for \(Y_i\),
\[
|H_i|\leq|Y_i|,
\qquad
\mathcal N_{H_i}(Y_i)\leq P_\star.
\]
For a serving core from
Lemma~\ref{lem:list-clean-mass-averaging-v2}, apply
Lemma~\ref{lem:list-serving-entry-v2} with
\[
\delta=\frac{\varepsilon}{3M_\star}.
\]
Since \(A\) is compatible with \(B\), one obtains
\begin{equation}
|V_i|\leq|A|,
\qquad
\mathcal N_{V_i}(A)
\leq
\frac{6KP_\star M_\star}{\varepsilon}.
\label{eq:list-cover-before-substitution-v2}
\end{equation}

\subsection{Probability and resource bounds}
\label{subsec:list-probability-resources-v2}

The probability argument uses the chronological order of the random
blocks.  Conditional on the past at the beginning of a slot, the
residual is fixed.  Conditional on bridge success, the BSG routine
receives an exact product block from that residual.  Conditional on a
valid core and the next bridge success, the PFR routine receives an
exact product block from the fixed core.

\begin{lemma}[Global good execution]
\label{lem:list-global-good-v2}
With probability at least \(1-\rho\), all of the following hold in
every reached slot:

\begin{enumerate}
\item the residual-density conclusion in
      Proposition~\ref{prop:list-residual-access-v2};
\item success of the residual sample bridge;
\item the soundness conclusion of
      Theorem~\ref{thm:persistent-bsg-v2};
\item the completeness conclusion of
      Theorem~\ref{thm:persistent-bsg-v2};
\item success of the direct core sample bridge;
\item a valid output from the adaptive Section~5 invocation.
\end{enumerate}
\end{lemma}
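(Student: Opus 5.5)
The plan is a sequential conditional failure count over the at most \(M_\star\) slots, in chronological order of the fresh random blocks, in the style of Corollary~\ref{cor:adaptive-slot-accounting-v2}. Each reached slot \(j\) contains six potential failure sources, matching items 1--6. Since \(\rho_0=\rho/(10M_\star)\), it suffices to show that each source, conditional on the complete past at the moment its fresh block is drawn and on the earlier items of the same slot, fails with conditional probability at most \(\rho_0\) (or a small multiple of it). Summing over six sources and \(M_\star\) slots then gives at most \(6M_\star\rho_0<\rho\), with slack left for any source charged \(2\rho_0\).

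The per-slot charges come from results already stated.

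\begin{enumerate}
\item At the start of slot \(j\), the past fixes \(R_j\) through the predicate \eqref{eq:list-residual-predicate-v2}. Proposition~\ref{prop:list-residual-access-v2} then supplies the density event with conditional probability at least \(1-\rho_0\). This is item 1.
\item On that event and a continue decision, the same proposition bounds bridge failure by \(\rho_0\). This is item 2.
\item Conditioned on bridge success, the block has law \(U_{R_j}^{\otimes s_{\rm BSG}^\star}\), so \textsc{PersistentBSG} runs exactly as its standalone version on the fixed set \(R_j\) with fresh internal randomness. Theorem~\ref{thm:persistent-bsg-v2} therefore bounds the soundness failure by \(\rho_0\), and this requires no energy promise. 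This is item 3.
\item Item 4 is read as the implication ``if \(E(R_j)\geq\alpha_\star|R_j|^3\) then the result is not \(\mathsf{FAIL}\).'' It is vacuous when the energy hypothesis fails and is charged \(\rho_0\) otherwise.
\item On a valid core, \eqref{eq:list-core-density-v2} holds, and the direct rejection bridge from \(B\) (Appendix~\ref{app:list-implementation-v2}) fails with probability at most \(\rho_0\). Conditioned on its success, it yields \(U_{Y_j}^{\otimes t_\star}\). This is item 5.
\item Corollary~\ref{cor:pfr-adaptive-invocation-v2}, applied with \(|Y_j+Y_j|\leq L_\star|Y_j|\) and budget \(\rho_0\), gives item 6.
\end{enumerate}

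The step needing most care is the conditioning structure. The number of slots reached and the sets \(R_j,Y_j\) are random and adaptive. Each item's guarantee is conditional on earlier items holding; for example, item 5 needs the core to be valid, and item 6 needs both validity and an exact product block. I would handle this by defining, for every potential slot \(j<M_\star\) and every source \(k\), a bad event \(F_{j,k}\). This event is the occurrence of failure \(k\) in slot \(j\) while all earlier sources in chronological order succeeded and the slot was reached. By the tower property, each \(\Pr[F_{j,k}]\leq\rho_0\), since the conditional bound holds almost surely given the past up to that block.

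The complement of the global good event is contained in \(\bigcup_{j,k}F_{j,k}\). To see this, take the first failing item in the first slot where an item fails. Up to that point everything succeeded, so that failure is exactly some \(F_{j,k}\). Slots that are not reached contribute nothing, because their events are restricted to reached slots. A union bound over at most \(6M_\star\) events gives probability at most \(6M_\star\rho_0=\tfrac{3}{5}\rho\le\rho\). This proves the lemma.
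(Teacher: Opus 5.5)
Your proposal is correct and is essentially the paper's own argument. The paper also uses six conditional charges of \(\rho_0\) per slot, with BSG soundness and completeness charged separately, exposes the blocks in the chronological order of Appendix~\ref{app:list-implementation-v2}, declares unreached events good, and applies sequential conditional failure accounting to get \(6M_\star\rho_0=3\rho/5<\rho\); your first-failure events \(F_{j,k}\) simply unpack the proof of Lemma~\ref{lem:sequential-conditional-failure-v2} in this setting.
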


\begin{proof}
Each item has conditional failure probability at most \(\rho_0\),
except that soundness and completeness are listed separately and each
receives that same budget.  Unreached events are declared good.
Sequential conditional failure accounting over the at most
\(M_\star\) slots gives total failure probability at most
\[
6M_\star\rho_0
=
\frac{3\rho}{5}
<
\rho.
\]
The exact conditional laws used here are proved in
Appendix~\ref{app:list-implementation-v2}.
\end{proof}

\begin{proof}[Proof of Theorem~\ref{thm:main-list-compatible-pfr-v2}]
Work on the event from
Lemma~\ref{lem:list-global-good-v2}.  The algorithm can then terminate
only through one of the normal stopping rules or after
\(M_\star\) successful peels.  Since the compatibility class is
nonempty, the output list is nonempty.

Every extracted core satisfies
\eqref{eq:list-valid-core-v2}, and every Section~5 invocation produces
\(H_i\) with
\[
|H_i|\leq|Y_i|,
\qquad
\mathcal N_{H_i}(Y_i)\leq P_\star.
\]
For each \(A\in\mathfrak A(B;K,\eta)\),
Lemma~\ref{lem:list-clean-mass-averaging-v2} supplies a serving index,
and \eqref{eq:list-cover-before-substitution-v2} gives the size and
covering conclusions.  The list length is at most \(M_\star\).
Substituting \eqref{eq:list-parameter-bounds-v2} and using the
monotonicity of \(P_{\rm PFR}\) proves
\eqref{eq:model-list-length-v2} and
\eqref{eq:model-list-cover-v2}.

Appendix~\ref{app:list-implementation-v2} proves the polynomial sample
bound.  It also shows that the recursively defined residual
predicates give the query and time envelope
\[
G_{\rm list}(K,\varepsilon^{-1})
\left(
1+n+\log\frac1\rho
\right)^{O(M_\star)}.
\]
Using \eqref{eq:list-parameter-bounds-v2} gives
\eqref{eq:model-list-XP-v2}.
\end{proof}

\section{Limits of one-core lifting}
\label{sec:one-core-limits-v2}

The one-core theorem in
Section~\ref{sec:one-core-robust-pfr-v2} combines two numerical
statements.  A high-energy set contains one small-doubling core of a
certain relative size, and the size of that core is then used to
lower-bound its intersection with the hidden set.  The examples below
identify the limitations of these two numerical steps: the purity
bound is exact, while the universally retained core size has the
\(\sqrt{\alpha}\) scale along an explicit sequence of energy
parameters.

The conclusion is deliberately narrow.  We do not construct one
corrupted instance that is simultaneously extremal for every step,
and we do not prove an oracle or information-theoretic lower bound.
The obstruction concerns the mechanism
\[
\text{one universally retained core}
\quad+\quad
\text{clean mass inferred only from its size}.
\]

\subsection{Exact size-only purity}
\label{subsec:one-core-purity-sharpness-v2}

For a real number \(x\), write
\[
[x]_+:=\max\{x,0\}.
\]
Lemma~\ref{lem:robust-clean-mass-v2} gives, for
\(0<\gamma\leq1/2\),
\begin{equation}
|A\cap Y|
\geq
\bigl[\gamma-(1-\gamma)\eta\bigr]_+|A|
\label{eq:purity-lower-bound-v2}
\end{equation}
whenever
\[
Y\subseteq B,
\qquad
|Y|\geq\gamma|B|,
\qquad
|A\triangle B|\leq\eta|A|.
\]
The next proposition shows that no stronger universal conclusion can
be obtained from these three facts alone.

\begin{proposition}[Extremality of the size-only purity bound]
\label{prop:purity-bound-sharp-v2}
Fix
\[
0<\gamma\leq\frac12,
\qquad
\eta\geq0.
\]
There are finite examples with \(|A|\) arbitrarily large and
\(Y\subseteq B\) satisfying
\[
|A\triangle B|\leq\eta|A|,
\qquad
|Y|\geq\gamma|B|,
\]
for which
\begin{equation}
\frac{|A\cap Y|}{|A|}
=
\bigl[\gamma-(1-\gamma)\eta\bigr]_+
+
o_{|A|\to\infty}(1).
\label{eq:purity-sharpness-v2}
\end{equation}
\end{proposition}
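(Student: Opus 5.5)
The plan is to make the inequalities in the proof of Lemma~\ref{lem:robust-clean-mass-v2} tight simultaneously. That proof used \(|A\cap Y|\geq|Y|-|B\setminus A|\), \(|Y|\geq\gamma|B|\), and \(u+v\leq\eta\), and the optimal choice puts all contamination into insertions. So I would take \(u=0\) and \(v\approx\eta\). For the ``contains'' direction we need \(Y\supseteq B\setminus A\), with \(|Y|\) as small as allowed. No additive structure is required, since the proposition constrains only sizes; all sets may be arbitrary finite subsets of \(\mathbb F_2^n\) with \(n\) large.

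Concretely, fix a large integer \(N\) and choose \(n\) with \(2^n\geq 3N+\lceil\eta N\rceil\). Let \(A\) be any set of size \(N\), let \(I\) be a set of \(\lfloor\eta N\rfloor\) points disjoint from \(A\), and put \(B:=A\sqcup I\). Then \(|A\triangle B|=|I|\leq\eta|A|\) and \(|B|=N+\lfloor\eta N\rfloor\).

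The construction splits into two cases.

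If \(\gamma|B|\leq|I|\), take \(Y\) to be any subset of \(I\) of size \(\lceil\gamma|B|\rceil\leq|I|\). Then \(A\cap Y=\varnothing\). In this case \(\gamma(1+\eta)\leq\eta+O(1/N)\), which means \(\gamma-(1-\gamma)\eta\leq O(1/N)\), so the right side of \eqref{eq:purity-sharpness-v2} is \(o(1)\).

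Otherwise, take \(Y:=I\sqcup A'\), where \(A'\subseteq A\) has size \(\lceil\gamma|B|\rceil-|I|\). This size is at most \(N\) because \(\gamma\leq1/2\) and \(\eta\) is fixed (so \(\gamma(1+\eta)N\leq N\) for \(\eta\leq1\); for larger \(\eta\) the first case applies). Then
\[
\frac{|A\cap Y|}{|A|}=\frac{\lceil\gamma(N+\lfloor\eta N\rfloor)\rceil-\lfloor\eta N\rfloor}{N}=\gamma(1+\eta)-\eta+O(1/N)=\gamma-(1-\gamma)\eta+O(1/N).
\]
This quantity is nonnegative up to \(O(1/N)\), so it equals the positive part up to \(o(1)\).

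The only real care point is the rounding bookkeeping and the case split at \(\gamma(1+\eta)\approx\eta\), so that the floors and ceilings stay consistent with \(|Y|\geq\gamma|B|\), \(|A'|\leq|A|\), and \(|I|\leq\eta|A|\). Combined with the lower bound \eqref{eq:purity-lower-bound-v2}, this shows the ratio converges exactly to \([\gamma-(1-\gamma)\eta]_+\). No step is a substantial obstacle.
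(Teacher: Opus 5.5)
Your construction is the paper's: all corruption is a disjoint insertion set of size $\lfloor\eta N\rfloor$, and $Y$ of size $\lceil\gamma|B|\rceil$ is filled with inserted points before clean ones, so $|A\cap Y|=[\lceil\gamma|B|\rceil-\lfloor\eta N\rfloor]_+$. Your two-case split just unfolds this positive part, and your rounding is correct. One simplification: $\lceil\gamma|B|\rceil\leq|B|$ already gives $|A'|\leq|A|$, so you do not need the side argument about $\eta\leq 1$.
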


\begin{proof}
Let \(|A|=N\), choose a set \(D\) disjoint from \(A\) with
\[
|D|=\lfloor\eta N\rfloor,
\]
and put
\[
B:=A\cup D.
\]
Thus all corruption consists of insertions and
\[
|A\triangle B|=|D|\leq\eta|A|.
\]

Choose \(Y\subseteq B\) of cardinality
\(\lceil\gamma|B|\rceil\), using points of \(D\) before points of
\(A\).  Then
\[
|A\cap Y|
=
\bigl[
\lceil\gamma|B|\rceil-|D|
\bigr]_+.
\]
Since
\[
|B|=(1+\eta+o(1))N,
\]
we obtain
\begin{align*}
\frac{|A\cap Y|}{N}
&=
\bigl[
\gamma(1+\eta)-\eta
\bigr]_+
+o(1)\\
&=
\bigl[
\gamma-(1-\gamma)\eta
\bigr]_+
+o(1).
\end{align*}
\end{proof}

It follows that the size-only lower bound is positive precisely when
\begin{equation}
\eta
<
\frac{\gamma}{1-\gamma}.
\label{eq:purity-positive-threshold-v2}
\end{equation}
This is the exact positivity condition for a preselected core when
its clean content is inferred only from its retained fraction.

\subsection{An affine-block example}
\label{subsec:affine-block-example-v2}

We next construct high-energy sets for which every small-doubling
subset is essentially confined to one block.

Let \(r\geq2\) be an integer and let \(m\geq2\) be a power of two.
Choose vector spaces
\[
E,W_1,\ldots,W_r
\]
over \(\mathbb F_2\) such that
\[
\dim E=r,
\qquad
|W_i|=m.
\]
Form the direct sum
\begin{equation}
G
:=
E\oplus W_1\oplus\cdots\oplus W_r.
\label{eq:affine-block-ambient-v2}
\end{equation}
Let \(e_1,\ldots,e_r\) be a basis of \(E\), and define
\begin{equation}
S_i:=e_i+W_i,
\qquad
S:=\bigcup_{i=1}^r S_i.
\label{eq:affine-block-definition-v2}
\end{equation}
The blocks are pairwise disjoint and
\begin{equation}
|S|=rm.
\label{eq:affine-block-size-v2}
\end{equation}

For \(i\neq j\),
\[
S_i+S_j
=
e_i+e_j+W_i+W_j
\]
has cardinality \(m^2\).  These cross-block sumsets are pairwise
disjoint as the unordered pair \(\{i,j\}\) varies, and they are
disjoint from the within-block sumsets
\[
S_i+S_i=W_i.
\]

\begin{lemma}[Energy of the affine-block set]
\label{lem:affine-block-energy-v2}
The set \(S\) satisfies
\begin{equation}
E(S)
=
rm^3+3r(r-1)m^2.
\label{eq:affine-block-energy-v2}
\end{equation}
In particular,
\begin{equation}
E(S)
\geq
\frac1{r^2}|S|^3.
\label{eq:affine-block-normalized-energy-v2}
\end{equation}
\end{lemma}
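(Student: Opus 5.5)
We compute \(E(S)\) through the representation function. For \(s\in G\), let \(r_S(s)\) be the number of ordered pairs \((x,y)\in S^2\) with \(x+y=s\). Then
\[
E(S)=\sum_s r_S(s)^2.
\]
So the task reduces to finding \(r_S(s)\) exactly for every \(s\in S+S\). The preceding paragraph splits \(S+S\) into the within-block sumsets \(W_i\) and the cross-block sumsets \(e_i+e_j+W_i+W_j\) for \(i<j\). The cross-block sets are pairwise disjoint and disjoint from every \(W_k\). The \(W_i\) themselves meet only at \(0\), because the sum \eqref{eq:affine-block-ambient-v2} is direct.

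First, the within-block sums. A pair \((x,y)\) has \(x+y\in W_i\) exactly when both points lie in \(S_i\), since the \(E\)-component of \(x+y\) must vanish, which forces \(x\) and \(y\) into the same block \(S_k\), and then \(x+y\in W_k\). So for \(s\in W_i\setminus\{0\}\) the representations are the pairs \((x,x+s)\) with \(x\in S_i\), giving \(r_S(s)=m\). The element \(s=0\) collects all diagonal pairs, so \(r_S(0)=|S|=rm\). This part of the sum contributes
\[
(rm)^2+r(m-1)m^2.
\]

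Second, the cross-block sums. Fix \(i\neq j\) and \(s=e_i+e_j+w_i+w_j\) with \(w_i\in W_i\) and \(w_j\in W_j\). The \(E\)-component \(e_i+e_j\) forces one summand into \(S_i\) and the other into \(S_j\). Directness of \(W_i\oplus W_j\) then determines both points uniquely, namely \(e_i+w_i\) and \(e_j+w_j\). Hence \(r_S(s)=2\), counting the two orders. There are \(\binom r2\) unordered pairs \(\{i,j\}\), each giving \(m^2\) such sums, so this part contributes
\[
\tbinom r2 m^2\cdot 4=2r(r-1)m^2.
\]

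Adding the two contributions,
\[
E(S)=r^2m^2+rm^3-rm^2+2r(r-1)m^2=rm^3+3r(r-1)m^2,
\]
which is \eqref{eq:affine-block-energy-v2}. For \eqref{eq:affine-block-normalized-energy-v2}, use \eqref{eq:affine-block-size-v2} to write
\[
\frac{|S|^3}{r^2}=\frac{r^3m^3}{r^2}=rm^3\leq E(S).
\]
The only delicate point in the argument is the exactness of the representation counts. Both counts rest on the directness of \(E\oplus W_1\oplus\cdots\oplus W_r\): the \(E\)-component fixes which blocks the two summands come from, and the \(W\)-components then fix the points themselves. No step involves estimation; everything is a direct count.
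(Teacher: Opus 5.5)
Your proposal is correct and follows essentially the same route as the paper. Both compute the representation function exactly: it is \(rm\) at \(0\), \(m\) on each \(W_i\setminus\{0\}\), and \(2\) on each cross-block sumset. Both then sum the squares to get \(rm^3+3r(r-1)m^2\) and observe that \(rm^3=|S|^3/r^2\).
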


\begin{proof}
For a nonzero \(d\in W_i\), there are exactly \(m\) ordered
representations of \(d\) using two points of \(S_i\), and no
representations using any other pair of blocks.  Thus
\[
r_S(d)=m
\qquad
(d\in W_i\setminus\{0\}).
\]
At \(d=0\),
\[
r_S(0)=|S|=rm.
\]
For each \(i<j\) and each \(d\in S_i+S_j\), direct-sum uniqueness
gives one representation in \(S_i\times S_j\) and one in
\(S_j\times S_i\), so
\[
r_S(d)=2.
\]
The corresponding supports are otherwise disjoint.  Therefore
\begin{align*}
E(S)
&=
r^2m^2
+
r(m-1)m^2
+
4\binom r2m^2\\
&=
rm^3+3r(r-1)m^2.
\end{align*}
Since \(|S|=rm\),
\[
rm^3
=
\frac1{r^2}|S|^3,
\]
which proves the second assertion.
\end{proof}

\begin{lemma}[Cross-block expansion]
\label{lem:affine-block-expansion-v2}
Let \(Y\subseteq S\), and put
\[
a_i:=|Y\cap S_i|,
\qquad
M:=|Y|=\sum_{i=1}^r a_i.
\]
Then
\begin{equation}
|Y+Y|
\geq
\sum_{1\leq i<j\leq r}a_ia_j
=
\frac12
\left(
M^2-\sum_{i=1}^r a_i^2
\right)
\geq
\frac12M(M-m).
\label{eq:affine-block-expansion-v2}
\end{equation}
Consequently, if \(Y\neq\varnothing\) and
\[
|Y+Y|\leq L|Y|,
\]
then
\begin{equation}
|Y|\leq m+2L
\label{eq:affine-block-small-doubling-size-v2}
\end{equation}
and
\begin{equation}
\frac{|Y|}{|S|}
\leq
\frac1r
\left(
1+\frac{2L}{m}
\right).
\label{eq:affine-block-relative-size-v2}
\end{equation}
\end{lemma}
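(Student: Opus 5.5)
The plan is to exhibit many distinct sums directly from cross-block pairs. For $i\neq j$, the sums $y+y'$ with $y\in Y\cap S_i$ and $y'\in Y\cap S_j$ all lie in $S_i+S_j=e_i+e_j+W_i+W_j$. By direct-sum uniqueness the map $(y,y')\mapsto y+y'$ is injective on $(Y\cap S_i)\times(Y\cap S_j)$. Indeed, write $y=e_i+w_i$ and $y'=e_j+w_j$; the sum $e_i+e_j+w_i+w_j$ then determines $w_i$ and $w_j$ by projecting to the summands $W_i$ and $W_j$. Hence $|(Y\cap S_i)+(Y\cap S_j)|=a_ia_j$.

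As recorded just before Lemma~\ref{lem:affine-block-energy-v2}, the cross-block sumsets $S_i+S_j$ are pairwise disjoint as $\{i,j\}$ varies. This holds because the $E$-component $e_i+e_j$ determines the unordered pair. Summing over $i<j$ therefore gives $|Y+Y|\geq\sum_{i<j}a_ia_j$. The identity $\sum_{i<j}a_ia_j=\frac12(M^2-\sum_i a_i^2)$ is algebra. Since $a_i\leq|S_i|=m$, we have $\sum_i a_i^2\leq m\sum_i a_i=mM$, which yields $\frac12M(M-m)$ and completes \eqref{eq:affine-block-expansion-v2}.

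For the consequences, assume $Y\neq\varnothing$ and $|Y+Y|\leq LM$. Combining this with \eqref{eq:affine-block-expansion-v2} gives $\frac12M(M-m)\leq LM$. Dividing by $M>0$ yields $M\leq m+2L$, which is \eqref{eq:affine-block-small-doubling-size-v2}. Dividing by $|S|=rm$ from \eqref{eq:affine-block-size-v2} then gives \eqref{eq:affine-block-relative-size-v2}.

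No step is a real obstacle. The only point needing care is the injectivity and disjointness of the cross-block sums, and this is immediate from the direct-sum decomposition \eqref{eq:affine-block-ambient-v2} and the linear independence of $e_1,\ldots,e_r$.
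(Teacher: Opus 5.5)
Your proposal is correct and follows the paper's proof essentially step for step. You use the same ingredients: injectivity of the cross-block addition maps via the direct sum, disjointness of the cross-block sumsets via the \(E\)-component, the bound \(\sum_i a_i^2\leq mM\), and division by \(M>0\) and then by \(|S|=rm\).
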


\begin{proof}
For \(i<j\), the addition map
\[
(Y\cap S_i)\times(Y\cap S_j)
\longrightarrow
S_i+S_j
\]
is injective.  Hence
\[
|(Y\cap S_i)+(Y\cap S_j)|=a_ia_j.
\]
The cross-block sumsets for different unordered pairs are disjoint,
so their union gives the first lower bound in
\eqref{eq:affine-block-expansion-v2}.  Since \(0\leq a_i\leq m\),
\[
\sum_i a_i^2
\leq
m\sum_i a_i
=
mM,
\]
which gives the final lower bound.

If \(|Y+Y|\leq LM\), then
\[
\frac12M(M-m)\leq LM.
\]
Since \(M>0\), this implies \(M\leq m+2L\), and division by
\(|S|=rm\) proves \eqref{eq:affine-block-relative-size-v2}.
\end{proof}

Each full block satisfies
\[
|S_i|=\frac{|S|}{r},
\qquad
S_i+S_i=W_i,
\qquad
|S_i+S_i|=|S_i|.
\]
Thus the one-block scale in
\eqref{eq:affine-block-relative-size-v2} is attained.

\subsection{The \texorpdfstring{\(\sqrt{\alpha}\)}{sqrt(alpha)}
core-size scale}
\label{subsec:sqrt-alpha-obstruction-v2}

\begin{theorem}[Affine-block obstruction at the
\texorpdfstring{\(\sqrt{\alpha}\)}{sqrt(alpha)} scale]
\label{thm:sqrt-alpha-core-obstruction-v2}
Let \(r\geq2\) be an integer and set
\[
\alpha_r:=\frac1{r^2}.
\]
For every prescribed doubling budget \(L_r\geq1\) and every
\(\delta>0\), there is a finite set \(S\) satisfying
\begin{equation}
E(S)\geq\alpha_r|S|^3
\label{eq:sqrt-alpha-energy-v2}
\end{equation}
such that every nonempty \(Y\subseteq S\) with
\[
|Y+Y|\leq L_r|Y|
\]
satisfies
\begin{equation}
|Y|
\leq
(1+\delta)\sqrt{\alpha_r}\,|S|.
\label{eq:sqrt-alpha-upper-v2}
\end{equation}
At the same time, \(S\) contains a subset of cardinality
\(\sqrt{\alpha_r}|S|\) with doubling one.
\end{theorem}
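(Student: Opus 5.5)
The plan is to take \(S\) to be the affine-block set of Section~\ref{subsec:affine-block-example-v2} for the given \(r\), with the block size \(m\) (a power of two) chosen large in terms of \(L_r\) and \(\delta\). The theorem then splits into three claims, each reduced to one of the two preceding lemmas.

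\emph{Energy.} Lemma~\ref{lem:affine-block-energy-v2} gives \(E(S)\geq r^{-2}|S|^3=\alpha_r|S|^3\) for every \(m\). So \eqref{eq:sqrt-alpha-energy-v2} holds regardless of how \(m\) is chosen.

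\emph{Upper bound on small-doubling subsets.} Let \(Y\subseteq S\) be nonempty with \(|Y+Y|\leq L_r|Y|\). Lemma~\ref{lem:affine-block-expansion-v2}, applied with \(L=L_r\), gives
\[
\frac{|Y|}{|S|}\leq\frac1r\Bigl(1+\frac{2L_r}{m}\Bigr).
\]
Since \(\sqrt{\alpha_r}=1/r\), it suffices to have \(2L_r/m\leq\delta\). I therefore choose \(m\) to be a power of two with \(m\geq\max\{2,2L_r/\delta\}\). This yields \eqref{eq:sqrt-alpha-upper-v2}.

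\emph{The doubling-one subset.} Any full block \(S_i=e_i+W_i\) has \(|S_i|=m=|S|/r=\sqrt{\alpha_r}|S|\). Because \(W_i\) is a subspace, \(S_i+S_i=W_i\), so \(|S_i+S_i|=|S_i|\). This matches the remark following Lemma~\ref{lem:affine-block-expansion-v2}.

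There is no real obstacle. The only point requiring care is the order of quantifiers: \(m\) must be chosen after \(L_r\) and \(\delta\) are fixed, and it must be a power of two so that \(W_i\) can be a vector space of size \(m\). The ambient group is the direct sum \eqref{eq:affine-block-ambient-v2}, which is isomorphic to \(\mathbb F_2^n\) with \(n=r+r\log_2 m\). This is consistent with the standing convention that all sets lie in some \(\mathbb F_2^n\).
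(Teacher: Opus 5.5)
Your proposal is correct and is essentially the paper's own proof: take the affine-block set with a power of two \(m\geq 2L_r/\delta\), get the energy bound from Lemma~\ref{lem:affine-block-energy-v2} and the bound \(\frac1r\bigl(1+2L_r/m\bigr)\leq(1+\delta)/r\) from Lemma~\ref{lem:affine-block-expansion-v2}, and use a full block \(S_i\) as the doubling-one subset. Your explicit requirement \(m\geq 2\) only states openly a condition the paper already builds into the construction.
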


\begin{proof}
Use the affine-block construction with \(r\) blocks, choosing the
power of two \(m\) so that
\[
m\geq\frac{2L_r}{\delta}.
\]
Lemma~\ref{lem:affine-block-energy-v2} gives
\[
E(S)\geq r^{-2}|S|^3.
\]
Lemma~\ref{lem:affine-block-expansion-v2} gives, for every admissible
\(Y\),
\[
\frac{|Y|}{|S|}
\leq
\frac1r
\left(
1+\frac{2L_r}{m}
\right)
\leq
\frac{1+\delta}{r}.
\]
Since \(1/r=\sqrt{\alpha_r}\), this proves
\eqref{eq:sqrt-alpha-upper-v2}.  Any full block \(S_i\) has the
claimed size and doubling.
\end{proof}

The theorem concerns the largest core size that can be guaranteed
uniformly from normalized energy.  It does not claim that the
doubling loss \(O(\alpha^{-4})\) in
Theorem~\ref{thm:persistent-bsg-v2} is optimal.

\subsection{The one-core barrier}
\label{subsec:one-core-barrier-v2}

Let
\[
L:[1,\infty)\longrightarrow[1,\infty)
\]
be a prescribed small-doubling budget.  Define
\begin{equation}
\Gamma_L(K)
:=
\inf_{\substack{
n\geq1,\ \varnothing\neq S\subseteq\mathbb F_2^n\\
E(S)\geq K^{-1}|S|^3
}}
\ 
\max_{\substack{
\varnothing\neq Y\subseteq S\\
|Y+Y|\leq L(K)|Y|
}}
\frac{|Y|}{|S|}.
\label{eq:universal-core-fraction-v2}
\end{equation}
The inner maximum is well defined because a singleton has doubling
one.  Thus \(\Gamma_L(K)\) is the largest relative size of one
\(L(K)\)-doubling core that can be guaranteed for every set of
normalized energy at least \(1/K\).

\begin{theorem}[Limit of single-core size-only lifting]
\label{thm:one-core-barrier-v2}
For every prescribed \(L\) and every integer \(r\geq2\), setting
\[
K:=r^2
\]
gives
\begin{equation}
\Gamma_L(K)
\leq
\frac1r
=
K^{-1/2}.
\label{eq:universal-core-fraction-upper-v2}
\end{equation}

Consequently, suppose a universal single-core theorem assigns a
retained-fraction function
\[
\gamma:[1,\infty)\longrightarrow[0,1]
\]
such that every nonempty \(B\subseteq\mathbb F_2^n\) with
\[
E(B)\geq\frac1K|B|^3
\]
contains a nonempty \(Y\subseteq B\) satisfying
\[
|Y+Y|\leq L(K)|Y|,
\qquad
|Y|\geq\gamma(K)|B|.
\]
Then, along \(K=r^2\),
\begin{equation}
\gamma(K)\leq K^{-1/2}.
\label{eq:universal-retained-fraction-v2}
\end{equation}
If the clean content of this same core is subsequently inferred only
from
\[
Y\subseteq B,
\qquad
|Y|\geq\gamma(K)|B|,
\qquad
|A\triangle B|\leq\eta|A|,
\]
then a positive universal lower bound for \(|A\cap Y|/|A|\) can hold
only when
\begin{equation}
\eta
<
\frac{\gamma(K)}{1-\gamma(K)}
\leq
\frac1{r-1}
=
\frac1{\sqrt K-1}.
\label{eq:one-core-contamination-barrier-v2}
\end{equation}
In particular, the contamination scale certified by this mechanism
is at most
\[
(1+o_{K\to\infty}(1))K^{-1/2}
\]
along the sequence \(K=r^2\).
\end{theorem}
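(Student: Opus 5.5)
The proof has three steps. First, the bound on $\Gamma_L(K)$ comes from Theorem~\ref{thm:sqrt-alpha-core-obstruction-v2}. Second, that bound transfers immediately to any retained-fraction function $\gamma$. Third, the contamination threshold comes from the exactness of size-only purity in Proposition~\ref{prop:purity-bound-sharp-v2}.

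\emph{Bounding $\Gamma_L(K)$.} Fix $L$, an integer $r\geq2$, and $K=r^2$, so that $\alpha_r=1/r^2=K^{-1}$. For any $\delta>0$, apply Theorem~\ref{thm:sqrt-alpha-core-obstruction-v2} with budget $L_r:=L(K)$. This gives a finite set $S$ with $E(S)\geq K^{-1}|S|^3$. By the direct-sum construction \eqref{eq:affine-block-ambient-v2}, $S$ lives in some $\mathbb F_2^{n}$. Every nonempty $Y\subseteq S$ with $|Y+Y|\leq L(K)|Y|$ then has $|Y|/|S|\leq(1+\delta)/r$. So $S$ is admissible in the infimum \eqref{eq:universal-core-fraction-v2}, and its inner maximum is at most $(1+\delta)/r$. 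Letting $\delta\to0$ gives $\Gamma_L(K)\leq 1/r$.

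\emph{Bounding $\gamma(K)$.} If a universal theorem guarantees retained fraction $\gamma(K)$, then by definition $\gamma(K)\leq\Gamma_L(K)$. Concretely, apply the guarantee to each obstruction set $S=B$. The resulting core has $\gamma(K)\leq|Y|/|S|\leq(1+\delta)/r$ for every $\delta>0$. This proves \eqref{eq:universal-retained-fraction-v2}.

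\emph{The contamination threshold.} Since $r\geq2$, we have $\gamma(K)\leq1/2$, so Proposition~\ref{prop:purity-bound-sharp-v2} applies with $\gamma=\gamma(K)$. The three facts $Y\subseteq B$, $|Y|\geq\gamma|B|$, $|A\triangle B|\leq\eta|A|$ admit examples where $|A\cap Y|/|A|$ tends to $[\gamma-(1-\gamma)\eta]_+$. Any universal lower bound inferred only from these facts is therefore at most this quantity. It can be positive only if $\gamma-(1-\gamma)\eta>0$, that is, $\eta<\gamma/(1-\gamma)$, as in \eqref{eq:purity-positive-threshold-v2}. The case $\gamma(K)=0$ needs separate treatment: then $Y$ may be disjoint from $A$ (insert points, take $Y$ among them), and no positive bound holds at all.

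Since $x\mapsto x/(1-x)$ is increasing on $[0,1)$, $\gamma(K)\leq1/r$ gives $\gamma/(1-\gamma)\leq(1/r)/(1-1/r)=1/(r-1)=1/(\sqrt K-1)$. This is \eqref{eq:one-core-contamination-barrier-v2}. Finally, $1/(\sqrt K-1)=(1+o(1))K^{-1/2}$ as $K\to\infty$.

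The step needing care is the interpretive one. We must make precise that "inferred only from size" means a bound valid for \emph{all} configurations satisfying the three facts. Then the extremal examples of Proposition~\ref{prop:purity-bound-sharp-v2}, with $|A|$ arbitrarily large, genuinely cap it. The $o(1)$ term is handled by taking $|A|\to\infty$. In those examples $A$ is an arbitrary set, which is harmless, since the statement concerns the size-only inference and makes no small-doubling requirement on $A$ at that stage.
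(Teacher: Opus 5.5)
Your proposal is correct and follows the paper's proof essentially step for step. The paper likewise uses the affine-block obstruction with $L_r=L(r^2)$ and $\delta\downarrow0$ to get $\Gamma_L(K)\leq 1/r$, universality to get $\gamma(K)\leq\Gamma_L(K)$, and the extremal purity examples plus monotonicity of $x\mapsto x/(1-x)$ to get the threshold $1/(r-1)$.

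Your separate remark on $\gamma(K)=0$ covers a case the paper's proof passes over, since its purity proposition assumes $\gamma>0$. One small correction there: at $\eta=0$ no points can be inserted, so take a singleton $Y\subseteq A=B$ with $|A|\to\infty$ instead.
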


\begin{proof}
Fix \(r\geq2\), put \(K=r^2\), and apply
Theorem~\ref{thm:sqrt-alpha-core-obstruction-v2} with
\[
L_r:=L(r^2).
\]
For every \(\delta>0\), it gives a set \(S\) with normalized energy at
least \(1/K\) such that
\[
\max_{\substack{
\varnothing\neq Y\subseteq S\\
|Y+Y|\leq L(K)|Y|
}}
\frac{|Y|}{|S|}
\leq
\frac{1+\delta}{r}.
\]
Letting \(\delta\downarrow0\) proves
\eqref{eq:universal-core-fraction-upper-v2}.  Any universally valid
retained fraction is at most \(\Gamma_L(K)\), giving
\eqref{eq:universal-retained-fraction-v2}.

Since \(r\geq2\), one has \(\gamma(K)\leq1/r\leq1/2\).
Proposition~\ref{prop:purity-bound-sharp-v2} and
\eqref{eq:purity-positive-threshold-v2} show that size-only
information gives a positive clean-core coefficient only below
\[
\eta<\frac{\gamma(K)}{1-\gamma(K)}.
\]
The function \(x\mapsto x/(1-x)\) is increasing on \([0,1)\), so
\[
\frac{\gamma(K)}{1-\gamma(K)}
\leq
\frac{1/r}{1-1/r}
=
\frac1{r-1}.
\]
This proves \eqref{eq:one-core-contamination-barrier-v2}.
\end{proof}

\begin{corollary}[The one-core size-only contamination scale]
\label{cor:one-core-mechanism-optimality-v2}
The contamination scale
\[
\eta=O(K^{-1/2})
\]
in Theorem~\ref{thm:main-robust-hidden-set-pfr-v2} cannot be improved
uniformly, up to absolute constants, by arguments that:

\begin{enumerate}
\item extract one general small-doubling core using only normalized
      energy and a universal retained-fraction guarantee; and
\item lower-bound the clean content of that same core using only its
      size relative to the observation and the symmetric-difference
      budget.
\end{enumerate}
\end{corollary}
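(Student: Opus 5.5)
The corollary is a repackaging of Theorem~\ref{thm:one-core-barrier-v2}, so I would argue by contradiction. Suppose a proof of the type described certifies robust recovery at contamination level \(\eta(K)\geq \omega(K)K^{-1/2}\) with \(\omega(K)\to\infty\) along some subsequence. Step~1 of such a proof consists of a universal single-core theorem with some retained fraction \(\gamma(K)\) and doubling budget \(L(K)\). Step~2 lower-bounds \(|A\cap Y|/|A|\) using only \(Y\subseteq B\), \(|Y|\geq\gamma(K)|B|\) and \(|A\triangle B|\leq\eta|A|\). The argument needs a positive universal clean fraction, since otherwise Ruzsa covering as in Theorem~\ref{thm:one-core-lifting-v2} gives no bound.

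First, I would record that the mechanism must be applicable to every observation whose normalized energy is at least \(1/K\). The input to step~1 is only the normalized energy, and Lemma~\ref{lem:robust-observed-energy-v2} shows that observations in the robust setting have energy of order \(1/K\). I would also note that a rescaling \(K\mapsto cK\) of the energy parameter changes \(K^{-1/2}\) only by constants, which the corollary explicitly absorbs.

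Second, I would apply Theorem~\ref{thm:one-core-barrier-v2} along \(K=r^2\). It gives \(\gamma(K)\leq K^{-1/2}\). By Proposition~\ref{prop:purity-bound-sharp-v2}, a positive size-only clean coefficient requires \(\eta<\gamma/(1-\gamma)\leq 1/(\sqrt K-1)\leq 2K^{-1/2}\). This contradicts \(\eta(K)\geq\omega(K)K^{-1/2}\) for large \(r\) in the subsequence.

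Third, I would handle \(K\) not of the form \(r^2\) by monotonicity. For general \(K\), set \(r=\lceil\sqrt K\rceil\). Energy at least \(1/r^2\) implies energy at least \(1/(4K)\)? Rather, I would use the affine-block example with \(r=\lfloor\sqrt K\rfloor\). Its energy is at least \(1/r^2\geq 1/K\), and its cores have relative size at most \((1+\delta)/r\leq 2(1+\delta)K^{-1/2}\), so the same bound holds up to an absolute constant.

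The only delicate point is making precise what ``arguments of this type'' means. I would formalize it as the two hypotheses listed in the statement and then check that the sharpness example of Proposition~\ref{prop:purity-bound-sharp-v2} fits within step~2's permitted information. That example can be combined with any \(\gamma\leq1/2\), so the two steps compose without needing a single simultaneously extremal instance.
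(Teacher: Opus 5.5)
Your proposal is correct and follows the paper's route. Like the paper, you read the corollary off Theorem~\ref{thm:one-core-barrier-v2}: the affine-block bound $\gamma(K)\le K^{-1/2}$ along $K=r^2$, combined with the size-only purity extremizer of Proposition~\ref{prop:purity-bound-sharp-v2}, gives $\eta<1/(\sqrt K-1)$. Your two additions are valid refinements that the paper does not need, since it states the barrier only along $K=r^2$: the treatment of non-square $K$ via $r=\lfloor\sqrt K\rfloor$ (energy at least $r^{-2}$ implies energy at least $K^{-1}$), and the remark that an energy threshold $\alpha\asymp 1/K$ only changes constants.
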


The theorem composes two separately extremal numerical statements.
It does not assert that the affine-block set and the purity extremizer
form one joint corrupted hidden-set instance.  It also leaves open
algorithms using several cores, their relative positions, the full
compatibility class, or information beyond the cardinality of one
preselected core.

\subsection{Why lists escape the barrier}
\label{subsec:list-bypasses-one-core-barrier-v2}

The list theorem in
Section~\ref{sec:list-compatible-pfr-v2} does not improve
\(\Gamma_L(K)\), nor does it strengthen the purity estimate for one
preselected core.  Instead, it changes the output and the order of
the quantifiers.

The peeling procedure produces pairwise disjoint cores
\[
Y_0,\ldots,Y_{m-1}\subseteq B.
\]
For
\[
\varepsilon:=1-\eta,
\qquad
q=\frac{2\varepsilon}{3},
\]
Lemma~\ref{lem:list-clean-mass-averaging-v2} gives, on the global
good event,
\begin{equation}
\sum_{i=0}^{m-1}|A\cap Y_i|
\geq
(\varepsilon-q)|A|
=
\frac{\varepsilon}{3}|A|
\label{eq:list-total-mass-recalled-v2}
\end{equation}
for every
\(A\in\mathfrak A(B;K,\eta)\).  Hence some index \(i=i(A)\) satisfies
\[
|A\cap Y_i|
\geq
\frac{\varepsilon}{3m}|A|.
\]
The clean mass of the serving core is therefore obtained by
disjointness, residual exhaustion, and averaging across the whole
list.  It is not inferred from \(|Y_i|/|B|\) alone.

Thus the list theorem has the pattern
\[
\forall A\in\mathfrak A(B;K,\eta)
\qquad
\exists i=i(A),
\]
rather than requiring one index satisfying
\[
\exists i
\qquad
\forall A\in\mathfrak A(B;K,\eta).
\]
This change in output architecture is exactly what allows
Section~\ref{sec:list-compatible-pfr-v2} to operate for every supplied
\(\eta<1\) without contradicting
Theorem~\ref{thm:one-core-barrier-v2}.

\section{Common structure across compatible hidden sets}
\label{sec:common-compatible-structure-v2}

Section~\ref{sec:list-compatible-pfr-v2} constructs, from the
observation alone, a list with the guarantee
\[
\forall A\in\mathfrak A(B;K,\eta)
\qquad
\exists i=i(A).
\]
A different question is whether one subspace can serve the entire
compatibility class.  The answer is yes at the level of existence:
every nonempty class admits one common subspace for every
\(\eta<1\).  The proof is nonconstructive, and a two-subspace example
shows that the common covering cost must deteriorate as
\(\eta\) approaches one.

Throughout this section,
\(\mathfrak A(B;K,\eta)\) is the class defined in
\eqref{eq:model-compatibility-class-v2}, and
\begin{equation}
P_0(K):=P_{\rm PFR}(K).
\label{eq:common-P0-v2}
\end{equation}

\subsection{Overlap forced by a common observation}
\label{subsec:compatible-overlap-v2}

The common observation prevents two compatible hidden sets from
being nearly disjoint.

\begin{lemma}[Overlap of compatible hidden sets]
\label{lem:compatible-overlap-v2}
Let \(A_0,A_1\subseteq\mathbb F_2^n\) be nonempty and suppose that
\[
|A_i\triangle B|
\leq
\eta|A_i|
\qquad
(i\in\{0,1\})
\]
for some \(\eta\in[0,1)\).  Then
\begin{equation}
|A_0\cap A_1\cap B|
\geq
\frac{1-\eta}{2}
\left(
|A_0|+|A_1|
\right).
\label{eq:compatible-overlap-v2}
\end{equation}
In particular,
\[
|A_0\cap A_1|
\geq
\frac{1-\eta}{2}
\left(
|A_0|+|A_1|
\right).
\]
\end{lemma}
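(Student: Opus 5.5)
The plan is a direct inclusion–exclusion count inside \(B\). Write \(C_i:=A_i\cap B\) for \(i\in\{0,1\}\). These two sets satisfy \(C_0\cap C_1=A_0\cap A_1\cap B\), and both lie in \(B\), so
\[
|A_0\cap A_1\cap B|
=
|C_0|+|C_1|-|C_0\cup C_1|
\geq
|C_0|+|C_1|-|B|.
\]
It therefore suffices to bound \(|C_i|\) from below and \(|B|\) from above, and to check that the three bounds combine to the stated constant.

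For the lower bounds, \(A_i\setminus B\subseteq A_i\triangle B\), so \(|C_i|\geq(1-\eta)|A_i|\) for each \(i\).

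For \(|B|\), use \(B\setminus A_i\subseteq A_i\triangle B\), which gives \(|B|\leq|C_i|+|B\setminus A_i|\) for each \(i\). The one point needing care is the choice of how to control \(|B|\).

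\begin{enumerate}
\item The crude bound \(|B|\leq(1+\eta)|A_i|\) from \eqref{eq:robust-size-comparison-v2} is too weak. It yields only \((1-\eta)(|A_0|+|A_1|)-(1+\eta)\min_i|A_i|\), which does not reach \(\tfrac{1-\eta}{2}(|A_0|+|A_1|)\).
\item Instead, average the two bounds on \(|B|\):
\[
|B|\leq\tfrac12\bigl(|C_0|+|C_1|\bigr)+\tfrac12\bigl(|B\setminus A_0|+|B\setminus A_1|\bigr).
\]
\item Substituting this into the inclusion–exclusion bound gives
\[
|A_0\cap A_1\cap B|\geq\tfrac12\sum_i\bigl(|C_i|-|B\setminus A_i|\bigr).
\]
\item For each \(i\), write \(u_i:=|A_i\setminus B|\) and \(v_i:=|B\setminus A_i|\), so that \(u_i+v_i\leq\eta|A_i|\). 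Then
\[
|C_i|-v_i=|A_i|-u_i-v_i\geq(1-\eta)|A_i|.
\]
\end{enumerate}

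Summing the bound in step 4 over \(i\) and halving gives exactly \eqref{eq:compatible-overlap-v2}. The final assertion follows from the trivial inclusion \(A_0\cap A_1\cap B\subseteq A_0\cap A_1\).

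The only obstacle is recognising that \(|B|\) must be paired symmetrically, via the averaging in step 2, rather than bounded through \(|B|\leq(1+\eta)|A_i|\). After that choice, each remaining step is a line of arithmetic.
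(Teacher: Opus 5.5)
Your proof is correct and is essentially the paper's argument. The paper uses the same inclusion--exclusion bound \(|A_0\cap A_1\cap B|\geq|A_0\cap B|+|A_1\cap B|-|B|\) together with the identity \(|A_i\cap B|=\tfrac12\bigl(|A_i|+|B|-|A_i\triangle B|\bigr)\), which is your symmetric splitting of \(|B|\) in step 2 written in one line, so the \(|B|\) terms cancel in exactly the same way.
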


\begin{proof}
For \(i\in\{0,1\}\),
\[
|A_i\cap B|
=
\frac{|A_i|+|B|-|A_i\triangle B|}{2}
\geq
\frac{(1-\eta)|A_i|+|B|}{2}.
\]
Since \(A_0\cap B\) and \(A_1\cap B\) are both subsets of \(B\),
\begin{align*}
|A_0\cap A_1\cap B|
&\geq
|A_0\cap B|+|A_1\cap B|-|B|\\
&\geq
\frac{1-\eta}{2}
\left(
|A_0|+|A_1|
\right).
\end{align*}
\end{proof}

\subsection{A common subspace}
\label{subsec:common-subspace-v2}

Theorem~\ref{thm:size-oblivious-pfr-v2}, with any fixed failure
probability below one, implies the following existential form:
every nonempty \(A\subseteq\mathbb F_2^n\) with
\[
|A+A|\leq K|A|
\]
admits a subspace \(V\) satisfying
\begin{equation}
|V|\leq|A|,
\qquad
\mathcal N_V(A)\leq P_0(K).
\label{eq:existential-clean-pfr-v2}
\end{equation}
We apply this statement to a smallest member of the compatibility
class.

\begin{proof}[Proof of Theorem~\ref{thm:main-common-compatible-subspace-v2}]
The ambient group is finite, so the nonempty compatibility class has
a minimum-cardinality member.  Choose one and denote it by
\(A_\star\).  Apply \eqref{eq:existential-clean-pfr-v2} to
\(A_\star\), obtaining a subspace \(V_{B,K,\eta}\) such that
\begin{equation}
|V_{B,K,\eta}|
\leq
|A_\star|,
\qquad
\mathcal N_{V_{B,K,\eta}}(A_\star)
\leq
P_0(K).
\label{eq:common-reference-cover-v2}
\end{equation}

Fix \(A\in\mathfrak A(B;K,\eta)\), and put
\[
C:=A\cap A_\star\cap B.
\]
Lemma~\ref{lem:compatible-overlap-v2} gives
\begin{equation}
|C|
\geq
\frac{1-\eta}{2}
\left(
|A|+|A_\star|
\right)
\geq
\frac{1-\eta}{2}|A|.
\label{eq:common-overlap-size-v2}
\end{equation}
In particular, \(C\neq\varnothing\).

Let
\[
m:=\mathcal N_{V_{B,K,\eta}}(A_\star).
\]
Cover \(A_\star\) by \(m\) distinct
\(V_{B,K,\eta}\)-cosets.  One of them contains a set
\(C'\subseteq C\) of size at least \(|C|/m\).  Since
\(C'\subseteq A\),
\begin{align*}
|A+C'|
&\leq
|A+A|\\
&\leq
K|A|\\
&\leq
\frac{2Km}{1-\eta}|C'|.
\end{align*}
Ruzsa's covering lemma now covers \(A\) by at most
\(2Km/(1-\eta)\) translates of \(C'+C'\).  The set \(C'\) lies in
one \(V_{B,K,\eta}\)-coset, so, over \(\mathbb F_2^n\),
\[
C'+C'
\subseteq
V_{B,K,\eta}.
\]
Consequently,
\[
\mathcal N_{V_{B,K,\eta}}(A)
\leq
\frac{2Km}{1-\eta}
\leq
\frac{2K}{1-\eta}P_0(K).
\]

Finally, the choice of \(A_\star\) gives
\[
|V_{B,K,\eta}|
\leq
|A_\star|
\leq
|A|,
\]
which proves the common size bound.
\end{proof}

The proof uses a hidden minimum-cardinality set \(A_\star\); it does
not provide a procedure that finds either \(A_\star\) or
\(V_{B,K,\eta}\) from access to \(B\).  Its quantifier order is
\[
\exists V_{B,K,\eta}
\qquad
\forall A\in\mathfrak A(B;K,\eta).
\]

\begin{corollary}[Inverse-polynomial distance from one]
\label{cor:common-subspace-inverse-polynomial-v2}
Fix \(c\geq0\).  If
\[
(1-\eta)^{-1}\leq K^c,
\]
then every nonempty compatibility class admits one subspace
\(V_{B,K,\eta}\) satisfying
\[
|V_{B,K,\eta}|\leq|A|
\]
and
\begin{equation}
\mathcal N_{V_{B,K,\eta}}(A)
\leq
2K^{c+1}P_0(K)
=
K^{O_c(1)}
\label{eq:common-polynomial-cover-v2}
\end{equation}
simultaneously for every compatible \(A\).
\end{corollary}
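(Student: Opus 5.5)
This follows directly from Theorem~\ref{thm:main-common-compatible-subspace-v2}; I would give no new construction, only substitute the hypothesis into the bound. Given a nonempty class \(\mathfrak A(B;K,\eta)\), I take the subspace \(V_{B,K,\eta}\) supplied by that theorem. It already satisfies \(|V_{B,K,\eta}|\leq|A|\) for every compatible \(A\), so the size clause needs no further argument.

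For the covering clause, the theorem gives, for every compatible \(A\),
\[
\mathcal N_{V_{B,K,\eta}}(A)\leq\frac{2K}{1-\eta}P_0(K).
\]
Inserting \((1-\eta)^{-1}\leq K^c\) yields the bound \(2K^{c+1}P_0(K)\) in \eqref{eq:common-polynomial-cover-v2}. This bound holds uniformly over the class, because the subspace was chosen before \(A\) was fixed.

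It remains to justify \(2K^{c+1}P_0(K)=K^{O_c(1)}\). Here \(P_0=P_{\rm PFR}\) is a fixed nondecreasing polynomial with \(P_{\rm PFR}\geq1\) on \([1,\infty)\), by Theorem~\ref{thm:size-oblivious-pfr-v2}. A polynomial \(P\) of degree \(d\) with coefficients bounded by \(C\) satisfies \(P(K)\leq (d+1)CK^d\) for \(K\geq1\). Hence \(2K^{c+1}P_0(K)\leq C'K^{c+1+d}\) for an absolute constant \(C'\).

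This is \(K^{O_c(1)}\) in the usual sense, where an absolute multiplicative constant is allowed. If a pure power of \(K\) is wanted, the constant can be absorbed for \(K\geq2\), and the case \(K<2\) is bounded by a constant. The only point that needs care is this reading of the \(O_c(1)\) notation; there is no real obstacle in the argument.
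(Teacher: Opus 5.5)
Your proposal is correct and matches the paper. The paper states this corollary without a separate proof, as an immediate consequence of Theorem~\ref{thm:main-common-compatible-subspace-v2}: take the common subspace \(V_{B,K,\eta}\) it supplies, and substitute \((1-\eta)^{-1}\leq K^c\) into the bound \(\frac{2K}{1-\eta}P_0(K)\). The \(K^{O_c(1)}\) form then follows because \(P_0=P_{\rm PFR}\) is a fixed polynomial.
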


\subsection{Exact two-subspace ambiguity}
\label{subsec:two-subspace-ambiguity-v2}

The factor \((1-\eta)^{-1}\) in
Theorem~\ref{thm:main-common-compatible-subspace-v2} leaves room for
improvement.  The next example shows that any such improvement must
still allow a square-root loss.

Let \(t\geq1\), and choose subspaces
\[
U,X,Y\leq\mathbb F_2^n
\]
whose sum \(U\oplus X\oplus Y\) is internal, with
\[
\dim X=\dim Y=t.
\]
Define
\begin{equation}
H_0:=U\oplus X,
\qquad
H_1:=U\oplus Y,
\qquad
B:=U.
\label{eq:ambiguity-construction-v2}
\end{equation}
Both hidden sets have doubling one, and
\begin{equation}
|H_i\triangle B|
=
|H_i|-|U|
=
\left(
1-2^{-t}
\right)|H_i|.
\label{eq:ambiguity-contamination-v2}
\end{equation}
Thus \(H_0,H_1\in\mathfrak A(B;1,\eta_t)\), where
\begin{equation}
\eta_t:=1-2^{-t}.
\label{eq:ambiguity-radius-v2}
\end{equation}

For subspaces \(H,V\leq\mathbb F_2^n\),
\begin{equation}
\mathcal N_V(H)
=
[H:H\cap V]
=
2^{\dim H-\dim(H\cap V)}.
\label{eq:subspace-cover-formula-v2}
\end{equation}

\begin{proof}[Proof of Theorem~\ref{thm:main-two-subspace-ambiguity-v2}]
Write \(d:=\dim U\), so
\[
\dim H_0=\dim H_1=d+t.
\]
Let \(V\) satisfy \(\dim V\leq d+t\), and define
\[
c_i
:=
\dim H_i-\dim(H_i\cap V)
=
\log_2\mathcal N_V(H_i).
\]
Since \(H_0\cap H_1=U\),
\begin{align*}
\dim V
&\geq
\dim\bigl((H_0\cap V)+(H_1\cap V)\bigr)\\
&=
\dim(H_0\cap V)
+
\dim(H_1\cap V)
-
\dim(U\cap V)\\
&\geq
(d+t-c_0)+(d+t-c_1)-d\\
&=
d+2t-c_0-c_1.
\end{align*}
The upper bound \(\dim V\leq d+t\) therefore implies
\[
c_0+c_1\geq t.
\]
Hence
\[
\max\{c_0,c_1\}
\geq
\left\lceil\frac t2\right\rceil,
\]
which proves the lower bound in
\eqref{eq:model-ambiguity-tradeoff-v2}.

For the matching construction, choose
\[
X'\leq X,
\qquad
Y'\leq Y
\]
with
\[
\dim X'=\left\lfloor\frac t2\right\rfloor,
\qquad
\dim Y'=\left\lceil\frac t2\right\rceil,
\]
and set
\[
V:=U\oplus X'\oplus Y'.
\]
Then \(\dim V=d+t\), while
\[
H_0\cap V=U\oplus X',
\qquad
H_1\cap V=U\oplus Y'.
\]
It follows that
\[
\mathcal N_V(H_0)
=
2^{\lceil t/2\rceil},
\qquad
\mathcal N_V(H_1)
=
2^{\lfloor t/2\rfloor}.
\]
\end{proof}

Since \(1-\eta_t=2^{-t}\),
\begin{equation}
(1-\eta_t)^{-1/2}
\leq
2^{\lceil t/2\rceil}
\leq
\sqrt2\,(1-\eta_t)^{-1/2}.
\label{eq:ambiguity-square-root-v2}
\end{equation}
Thus a common subspace can require
\(\Theta((1-\eta)^{-1/2})\) cosets even when both hidden sets are
subspaces.

The same construction gives an observation-only lower bound at any
prescribed covering budget.

\begin{corollary}[Indistinguishability at a prescribed budget]
\label{cor:ambiguity-indistinguishability-v2}
Let \(M\geq1\), and choose an integer
\[
t>2\log_2 M.
\]
Consider any possibly randomized algorithm which receives only
membership and exact uniform-sampling access to the common observation
\(B=U\), and which outputs either a subspace or \(\mathsf{FAIL}\).
For \(i\in\{0,1\}\), let \(\mathcal S_i\) be the event that its output
\(\widehat V\) satisfies
\[
|\widehat V|\leq|H_i|,
\qquad
\mathcal N_{\widehat V}(H_i)\leq M.
\]
Then
\begin{equation}
\min\{
\Pr[\mathcal S_0],
\Pr[\mathcal S_1]
\}
\leq
\frac12.
\label{eq:ambiguity-half-v2}
\end{equation}
\end{corollary}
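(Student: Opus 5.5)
The argument is a two-instance indistinguishability argument. The algorithm's entire interaction is with the oracle for \(B=U\), and this oracle is literally the same object whether the hidden set is \(H_0\) or \(H_1\). Hence the output \(\widehat V\) has one fixed distribution \(\mathcal D\) on \(\{\text{subspaces}\}\cup\{\mathsf{FAIL}\}\), independent of \(i\). I will then show the two success events are disjoint as sets of outcomes: no single subspace \(V\) satisfies both \(|V|\leq|H_0|=|H_1|\) and \(\max\{\mathcal N_V(H_0),\mathcal N_V(H_1)\}\leq M\). Once that holds, \(\Pr[\mathcal S_0]+\Pr[\mathcal S_1]=\mathcal D(\mathcal S_0)+\mathcal D(\mathcal S_1)\leq1\), so the minimum is at most \(1/2\), which is \eqref{eq:ambiguity-half-v2}.

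The steps, in order, are as follows.

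First, fix the construction \eqref{eq:ambiguity-construction-v2} with the chosen \(t\). Here \(|H_0|=|H_1|=2^{d+t}\).

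Second, formalize the indistinguishability. For any randomized algorithm, its internal randomness, the membership answers \(\mathbf 1_U(\cdot)\), and the sample laws \(U_U\) do not depend on \(i\). So the joint law of the transcript and output is identical in both scenarios, and \(\mathcal S_i=\{\widehat V\in\mathcal V_i\}\) for deterministic sets \(\mathcal V_i\) of subspaces.

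Third, show \(\mathcal V_0\cap\mathcal V_1=\varnothing\). Suppose \(V\in\mathcal V_0\cap\mathcal V_1\). Then \(\dim V\leq d+t\), and Theorem~\ref{thm:main-two-subspace-ambiguity-v2}, or more directly the dimension count \(c_0+c_1\geq t\) from its proof, gives
\[
\max\{\mathcal N_V(H_0),\mathcal N_V(H_1)\}\geq2^{\lceil t/2\rceil}\geq2^{t/2}>M .
\]
The last inequality is exactly the choice \(t>2\log_2M\). This contradicts membership in both \(\mathcal V_i\).

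Finally, sum the two probabilities under the common law.

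There is essentially no hard step. The only point needing care is the second step: making precise that "same oracle" yields the same output distribution even for adaptive algorithms with internal randomness. This is immediate once the algorithm is viewed as a measurable function of its random seed and the oracle responses, both of which have \(i\)-independent laws.
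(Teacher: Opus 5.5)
Your proposal is correct and follows essentially the same route as the paper: both instances expose the same oracle for \(B=U\), so the output has one common law. Theorem~\ref{thm:main-two-subspace-ambiguity-v2} (via \(2^{\lceil t/2\rceil}\geq 2^{t/2}>M\)) makes the two success sets disjoint, hence \(\Pr[\mathcal S_0]+\Pr[\mathcal S_1]\leq 1\).
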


\begin{proof}
The two hidden-set instances induce the same oracle distribution,
because the observed set is \(B=U\) in both cases.  By
Theorem~\ref{thm:main-two-subspace-ambiguity-v2}, no subspace obeying
the size constraint can cover both \(H_0\) and \(H_1\) by \(M\)
cosets.  Hence the sets of successful outputs for the two instances
are disjoint.  Their probabilities sum to at most one.
\end{proof}

\begin{corollary}[Ambiguity near contamination one]
\label{cor:ambiguity-near-one-v2}
Let
\[
M:[1,\infty)\longrightarrow[1,\infty)
\]
be any prescribed covering-budget function.  For every \(K\geq1\),
there are two doubling-one hidden sets and one common observation,
with contamination radius \(\eta\), such that no subspace satisfying
the common size bound covers both hidden sets by \(M(K)\) cosets, and
\begin{equation}
\frac{1}{4M(K)^2}
<
1-\eta
\leq
\frac{1}{2M(K)^2}.
\label{eq:ambiguity-distance-budget-v2}
\end{equation}
For any randomized observation-only algorithm, at least one of the
two instances has success probability at most \(1/2\) at that budget.
\end{corollary}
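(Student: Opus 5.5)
The plan is to instantiate the two-subspace construction of \eqref{eq:ambiguity-construction-v2} with a parameter \(t\) chosen from the budget \(M(K)\), and then read off both conclusions from Theorem~\ref{thm:main-two-subspace-ambiguity-v2} and Corollary~\ref{cor:ambiguity-indistinguishability-v2}. Doubling one trivially satisfies \(|H_i+H_i|\leq K|H_i|\) for every \(K\geq1\), so \(H_0,H_1\in\mathfrak A(B;K,\eta_t)\) with \(B=U\) and \(\eta_t=1-2^{-t}\). This holds whatever \(K\) is; \(K\) enters only through the number \(M(K)\).

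\emph{Choice of \(t\).} The window \eqref{eq:ambiguity-distance-budget-v2} reads \(2M(K)^2\leq 2^t<4M(K)^2\). Equivalently,
\[
1+2\log_2M(K)\leq t<2+2\log_2M(K).
\]
This half-open interval has length one, so it contains exactly one integer. We take \(t:=\lceil 1+2\log_2M(K)\rceil\), which satisfies \(t\geq1\) because \(M(K)\geq1\). This gives \eqref{eq:ambiguity-distance-budget-v2} with \(\eta:=\eta_t\).

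\emph{No common cover.} Since \(t\geq1+2\log_2M(K)>2\log_2M(K)\), we have \(\lceil t/2\rceil\geq t/2>\log_2M(K)\), so \(2^{\lceil t/2\rceil}>M(K)\). Now let \(V\) satisfy \(|V|\leq|H_0|=|H_1|\). By \eqref{eq:model-ambiguity-tradeoff-v2}, \(\max\{\mathcal N_V(H_0),\mathcal N_V(H_1)\}\geq 2^{\lceil t/2\rceil}>M(K)\). Hence no such \(V\) covers both hidden sets with \(M(K)\) cosets.

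\emph{Randomized statement.} Because \(t>2\log_2M(K)\), the hypothesis of Corollary~\ref{cor:ambiguity-indistinguishability-v2} holds with \(M=M(K)\). The corollary then gives \(\min\{\Pr[\mathcal S_0],\Pr[\mathcal S_1]\}\leq1/2\) for every observation-only algorithm.

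The only delicate step is checking that a single integer \(t\) meets both requirements at once: the two-sided window on \(1-\eta\) and the strict inequality \(t>2\log_2M(K)\). The lower endpoint \(1+2\log_2M(K)\) of the window already exceeds \(2\log_2M(K)\), so both hold.
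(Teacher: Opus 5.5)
Your proposal is correct and follows the paper's proof: your \(t=\lceil 1+2\log_2M(K)\rceil\) is the paper's \(t=\lceil 2\log_2M(K)\rceil+1\). As in the paper, the window \eqref{eq:ambiguity-distance-budget-v2} follows from \(1-\eta=2^{-t}\), and the two conclusions come from Theorem~\ref{thm:main-two-subspace-ambiguity-v2} and Corollary~\ref{cor:ambiguity-indistinguishability-v2} via \(t>2\log_2M(K)\).
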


\begin{proof}
Take
\[
t
:=
\left\lceil
2\log_2M(K)
\right\rceil+1
\]
in the construction above.  Then
\[
t>2\log_2M(K),
\]
so the common covering conclusion follows from
Theorem~\ref{thm:main-two-subspace-ambiguity-v2}, and the randomized
statement follows from
Corollary~\ref{cor:ambiguity-indistinguishability-v2}.  Moreover,
\[
2\log_2M(K)+1
\leq
t
<
2\log_2M(K)+2.
\]
Exponentiating and using \(1-\eta=2^{-t}\) gives
\eqref{eq:ambiguity-distance-budget-v2}.
\end{proof}

\subsection{The remaining list-to-single gap}
\label{subsec:list-to-single-gap-v2}

The preceding results separate three levels of conclusion.

Section~\ref{sec:one-core-robust-pfr-v2} gives an observation-only
algorithm returning one subspace when
\[
\eta=O(K^{-1/2}).
\]
Section~\ref{sec:list-compatible-pfr-v2} gives, for every supplied
\(\eta<1\), an observation-only list satisfying
\[
\forall A\in\mathfrak A(B;K,\eta)
\qquad
\exists i=i(A).
\]
Theorem~\ref{thm:main-common-compatible-subspace-v2} gives, for the same
range \(\eta<1\), one common subspace satisfying
\[
\exists V
\qquad
\forall A\in\mathfrak A(B;K,\eta),
\]
but only as an existence statement.

There is no automatic way to collapse the list from
Section~\ref{sec:list-compatible-pfr-v2}.  The intersection of its
members preserves the size bound but may increase covering numbers
by an ambient-dimension-dependent index.  Their span can only improve
covering numbers but may violate the size bound.  Moreover, the
serving list entry may depend on the hidden set and is not certified
by an observable property of \(B\).

Two gaps remain.  Quantitatively, the common-subspace upper bound has
dependence
\[
O((1-\eta)^{-1}),
\]
whereas
\eqref{eq:ambiguity-square-root-v2} forces
\[
\Omega((1-\eta)^{-1/2})
\]
along an infinite sequence of radii. 
Algorithmically, the polynomial list produced here is not converted
into one common subspace using only access to the observation.  In
particular, in the regime
\[
(1-\eta)^{-1}\leq K^{O(1)},
\]
the existence theorem gives a \(K^{O(1)}\) common covering budget,
but the arguments in this paper do not provide a sample-and-query
algorithm for recovering such a subspace from \(B\).

\addtocontents{toc}{\protect\setcounter{tocdepth}{1}}
\appendix

\section{Conditional concentration and exact finite sampling}
\label{app:probability-tools-v2}

The randomized constructions in this paper reveal information in
stages.  A candidate may depend on all earlier random choices, but
the sample used to test that candidate is drawn only after the
candidate has been fixed.  This appendix records the elementary
probability tools used in that setting.

Throughout, \(\mathcal F\) denotes the sigma-field generated by the
past.  A random variable or set described as fixed below may be
\(\mathcal F\)-measurable.  All probability estimates are therefore
valid after conditioning on an adaptively generated past transcript.

\subsection{Conditional concentration}
\label{subsec:app-conditional-concentration-v2}

\begin{lemma}[Conditional Chernoff and Hoeffding bounds]
\label{lem:conditional-concentration-v2}
Conditional on \(\mathcal F\), let
\(X_1,\ldots,X_m\) be independent Bernoulli variables with common
mean \(p\), where \(p\) is \(\mathcal F\)-measurable.  Put
\(X:=\sum_{i=1}^mX_i\).  Then, almost surely,
\begin{align}
\Pr\!\left[
X\leq(1-\delta)mp\mid\mathcal F
\right]
&\leq
\exp\!\left(-\frac{\delta^2mp}{2}\right)
&& (0<\delta<1),
\label{eq:app-conditional-lower-chernoff-v2}\\
\Pr\!\left[
X\geq(1+\delta)mp\mid\mathcal F
\right]
&\leq
\exp\!\left(-\frac{\delta^2mp}{2+\delta}\right)
&& (\delta>0),
\label{eq:app-conditional-upper-chernoff-v2}\\
\Pr\!\left[
\left|\frac Xm-p\right|\geq\varepsilon
\,\middle|\,\mathcal F
\right]
&\leq
2e^{-2m\varepsilon^2}
&& (\varepsilon>0).
\label{eq:app-conditional-hoeffding-v2}
\end{align}
\end{lemma}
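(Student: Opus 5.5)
The plan is to reduce all three inequalities to the classical unconditional Chernoff and Hoeffding bounds, applied separately to each regular conditional law. Fix a version of the conditional distribution of \((X_1,\ldots,X_m)\) given \(\mathcal F\). By hypothesis, for almost every outcome \(\omega\) this law is the product of \(m\) Bernoulli\((p(\omega))\) laws. Here \(p(\omega)\) is a number determined by \(\omega\), because \(p\) is \(\mathcal F\)-measurable.

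For each such \(\omega\), the conditional probability of each tail event is just the corresponding probability for i.i.d. Bernoulli\((p(\omega))\) variables. The thresholds \((1\pm\delta)mp\) and \(p\pm\varepsilon\) are themselves \(\mathcal F\)-measurable, so they are frozen at their values \(p(\omega)\).

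\textbf{Lower tail.} I will prove \eqref{eq:app-conditional-lower-chernoff-v2} by the exponential moment method. For \(\lambda>0\), Markov's inequality applied to \(e^{-\lambda X}\) and independence give
\[
\mathbb E[e^{-\lambda X}\mid\mathcal F]=(1-p+pe^{-\lambda})^m\leq\exp\bigl(mp(e^{-\lambda}-1)\bigr).
\]
Choose \(\lambda=-\ln(1-\delta)\). This yields the bound \(\bigl(e^{-\delta}/(1-\delta)^{1-\delta}\bigr)^{mp}\), and the elementary inequality \((1-\delta)\ln(1-\delta)\geq-\delta+\delta^2/2\) converts it to \(e^{-\delta^2mp/2}\).

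\textbf{Upper tail.} The same argument with \(\lambda=\ln(1+\delta)\) gives \(\bigl(e^\delta/(1+\delta)^{1+\delta}\bigr)^{mp}\). Then \((1+\delta)\ln(1+\delta)-\delta\geq\delta^2/(2+\delta)\) gives \eqref{eq:app-conditional-upper-chernoff-v2}. This inequality follows because its difference has value zero and nonnegative derivative at \(\delta=0\); alternatively, one can use \(\ln(1+\delta)\geq 2\delta/(2+\delta)\) and integrate.

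\textbf{Hoeffding bound.} Inequality \eqref{eq:app-conditional-hoeffding-v2} follows from Hoeffding's lemma, \(\mathbb E[e^{\lambda(X_i-p)}\mid\mathcal F]\leq e^{\lambda^2/8}\). Taking \(\lambda=4\varepsilon\) bounds each one-sided tail by \(e^{-2m\varepsilon^2}\), and a union bound over the two sides gives the factor \(2\).

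\textbf{Degenerate cases.} If \(p=0\) or \(p=1\), the bounds are trivial or follow directly. For example, when \(p=0\) the lower-tail event \(X\leq 0\) has probability one, while the right-hand side equals \(1\).

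The only genuinely delicate point is the measure-theoretic passage from the pointwise bound to the almost-sure conditional statement. To handle it, I will write the conditional probability as \(\Phi(p)\) evaluated at \(p=p(\omega)\), where \(\Phi(p)\) denotes the unconditional tail probability for i.i.d. Bernoulli\((p)\) variables and the threshold evaluated at \(p\). This representation holds by the standard disintegration lemma for an \(\mathcal F\)-measurable parameter together with variables that are conditionally i.i.d. given \(\mathcal F\). The pointwise inequality for \(\Phi\) then holds for every \(p\in[0,1]\), so the conditional bound holds almost surely.
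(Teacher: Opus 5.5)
Your proposal is correct and follows the paper's route: the paper's proof only observes that conditioning on \(\mathcal F\) fixes \(p\) and the law of the fresh block, and then invokes the standard Chernoff and Hoeffding inequalities on each conditional probability space. Your disintegration argument and exponential-moment derivations make those standard steps explicit. One small precision: for the upper-tail inequality, the derivative of the difference must be nonnegative on all of \(\delta\geq0\), not merely at \(\delta=0\). Your alternative argument via \(\ln(1+\delta)\geq 2\delta/(2+\delta)\) and integration supplies exactly this.
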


\begin{proof}
After conditioning on \(\mathcal F\), the mean \(p\) and the
distribution of the fresh block are fixed.  The usual Chernoff and
Hoeffding inequalities apply on each conditional probability space.
\end{proof}

The next form is used when one stored sample defines a predicate on
many possible query points.

\begin{lemma}[Uniform concentration for a finite family]
\label{lem:uniform-finite-family-v2}
Let \(\mathcal H\) be an \(\mathcal F\)-measurable finite family of
functions \(f:\Omega\to[0,1]\), with
\(1\leq|\mathcal H|\leq M\).  Conditional on \(\mathcal F\), let
\(Z_1,\ldots,Z_m\) be independent samples from a fixed distribution
\(\mu\).  Write
\[
\widehat\mu(f):=\frac1m\sum_{i=1}^m f(Z_i),
\qquad
\mu(f):=\mathbb E_{Z\sim\mu}f(Z).
\]
Then
\begin{equation}
\Pr\!\left[
\sup_{f\in\mathcal H}
|\widehat\mu(f)-\mu(f)|>\varepsilon
\,\middle|\,\mathcal F
\right]
\leq
2M e^{-2m\varepsilon^2}.
\label{eq:app-uniform-finite-family-v2}
\end{equation}
In particular, simultaneous error at most \(\varepsilon\) holds with
conditional probability at least \(1-\rho\) whenever
\begin{equation}
m\geq
\frac{1}{2\varepsilon^2}
\log\frac{2M}{\rho}.
\label{eq:app-uniform-family-size-v2}
\end{equation}
\end{lemma}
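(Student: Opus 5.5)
The plan is to combine a conditional Hoeffding bound for each single function with a union bound over the finite family, with all estimates made conditionally on \(\mathcal F\).

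Fix a realization of the past, or formally work on the conditional probability space given \(\mathcal F\). On this space the family \(\mathcal H\) is deterministic, with at most \(M\) members. The samples \(Z_1,\ldots,Z_m\) are independent with law \(\mu\), and this law does not depend on which \(f\in\mathcal H\) is examined. Fix one \(f\in\mathcal H\). The variables \(f(Z_1),\ldots,f(Z_m)\) are conditionally independent, take values in \([0,1]\), and have common conditional mean \(\mu(f)\).

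Hoeffding's inequality for bounded variables then gives
\[
\Pr\bigl[|\widehat\mu(f)-\mu(f)|>\varepsilon\mid\mathcal F\bigr]\leq 2e^{-2m\varepsilon^2}.
\]
Lemma~\ref{lem:conditional-concentration-v2} is stated only for Bernoulli summands, so it cannot be cited verbatim here. Its proof nevertheless applies unchanged: once we condition on \(\mathcal F\), the classical Hoeffding bound for \([0,1]\)-valued independent variables holds on each conditional probability space.

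Next, the event that the supremum exceeds \(\varepsilon\) is the union over the at most \(M\) members \(f\in\mathcal H\) of the corresponding single-function events. A union bound therefore gives \eqref{eq:app-uniform-finite-family-v2}. The "in particular" clause follows by solving \(2Me^{-2m\varepsilon^2}\leq\rho\) for \(m\), which yields exactly \eqref{eq:app-uniform-family-size-v2}.

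The only point needing care is measurability: the family must be chosen before the \(Z_i\) are drawn. The hypothesis that \(\mathcal H\) is \(\mathcal F\)-measurable while the samples are fresh given \(\mathcal F\) ensures this, so the union bound is applied to a fixed finite family. The bound then holds almost surely in the conditional sense. I do not expect a genuine obstacle beyond stating this conditioning correctly.
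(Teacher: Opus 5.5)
Your proposal is correct and takes the same approach as the paper's proof: condition on \(\mathcal F\), apply the conditional Hoeffding bound to each \(f\in\mathcal H\), and take a union bound. You are also right that Lemma~\ref{lem:conditional-concentration-v2} covers only Bernoulli summands, so the paper's citation needs Hoeffding for \([0,1]\)-valued summands, and the argument is otherwise unchanged.

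One small correction: the paper's \(\log\) is base two, so \eqref{eq:app-uniform-family-size-v2} is a sufficient rather than exact inversion of \(2Me^{-2m\varepsilon^2}\leq\rho\). This is harmless, because \(\log_2 x\geq\ln x\) for \(x\geq1\).
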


\begin{proof}
Condition on \(\mathcal F\), apply
\eqref{eq:app-conditional-hoeffding-v2} to each \(f\), and take a
union bound.
\end{proof}

Once the sample \(Z_1,\ldots,Z_m\) has been stored, every empirical
value \(\widehat\mu(f)\) is a deterministic function of that same
sample.  This is the only meaning of a fixed or persistent sketch in
the later algorithms.

\subsection{Sequentially chosen tests}
\label{subsec:app-sequential-failures-v2}

\begin{lemma}[Sequential conditional failure accounting]
\label{lem:sequential-conditional-failure-v2}
Let
\[
\mathcal F_0\subseteq\mathcal F_1\subseteq\cdots\subseteq\mathcal F_R
\]
be a filtration.  For \(1\leq j\leq R\), let
\(\mathcal B_j\in\mathcal F_j\) be a bad event and suppose that
\[
\Pr[\mathcal B_j\mid\mathcal F_{j-1}]
\leq\rho_j
\qquad\text{almost surely}.
\]
Then
\begin{equation}
\Pr\!\left[\bigcup_{j=1}^R\mathcal B_j\right]
\leq
\sum_{j=1}^R\rho_j.
\label{eq:app-sequential-failure-sum-v2}
\end{equation}
\end{lemma}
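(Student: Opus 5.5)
The plan is to reduce the union to a sum of expectations of conditional probabilities by telescoping over the first bad event. Define the disjoint events
\[
\mathcal D_j:=\mathcal B_j\setminus\bigcup_{i<j}\mathcal B_i
\qquad(1\leq j\leq R),
\]
so that \(\bigcup_j\mathcal B_j=\bigsqcup_j\mathcal D_j\) and \(\mathcal D_j\subseteq\mathcal B_j\). Then
\[
\Pr\Bigl[\bigcup_{j=1}^R\mathcal B_j\Bigr]=\sum_{j=1}^R\Pr[\mathcal D_j]\leq\sum_{j=1}^R\Pr[\mathcal B_j].
\]
It is therefore enough to bound each unconditional probability \(\Pr[\mathcal B_j]\) by \(\rho_j\).

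For a fixed \(j\), use the tower property. The conditional probability \(\Pr[\mathcal B_j\mid\mathcal F_{j-1}]\) is an \(\mathcal F_{j-1}\)-measurable random variable, and
\[
\Pr[\mathcal B_j]=\mathbb E\bigl[\Pr[\mathcal B_j\mid\mathcal F_{j-1}]\bigr]\leq\rho_j,
\]
because the integrand is at most \(\rho_j\) almost surely. Summing over \(j\) gives \eqref{eq:app-sequential-failure-sum-v2}.

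The one point to address is the case where \(\rho_j\) is a constant (a budget fixed in advance), as in every application in the paper. If instead one wanted \(\rho_j\) to be \(\mathcal F_{j-1}\)-measurable, the same argument gives the bound \(\sum_j\mathbb E\rho_j\). I would state the constant-budget version, as the lemma does.

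There is no real obstacle; the only subtlety is interpretive. In the applications, events are "declared good when unreached" and slots depend adaptively on the past. So the plan also notes that \(\mathcal B_j\) may be taken to be the intersection of "slot \(j\) is reached" with the failure event. This event is still in \(\mathcal F_j\), and its conditional probability is at most \(\rho_j\) on the reached part and \(0\) elsewhere, so the hypothesis holds and the bound is unaffected by adaptivity.
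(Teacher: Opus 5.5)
Your proof is correct and matches the paper's: both split the union by the first failing stage and bound each term with the tower property, $\mathbb E[\Pr(\mathcal B_j\mid\mathcal F_{j-1})]\le\rho_j$. The only cosmetic difference is that the paper keeps the $\mathcal F_{j-1}$-measurable indicator $\mathbf 1_{\cap_{i<j}\mathcal B_i^{\mathrm c}}$ inside the expectation before bounding it by one, whereas you drop it immediately and use the plain union bound.
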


\begin{proof}
Let
\[
\mathcal H_j:=
\mathcal B_j\cap\bigcap_{i<j}\mathcal B_i^{\mathrm c}
\]
be the event that the first failure occurs at stage \(j\).  These
events are disjoint, and the tower property gives
\[
\Pr[\mathcal H_j]
=
\mathbb E\!\left[
\mathbf 1_{\cap_{i<j}\mathcal B_i^{\mathrm c}}
\Pr(\mathcal B_j\mid\mathcal F_{j-1})
\right]
\leq\rho_j.
\]
Summing over \(j\) proves the claim.
\end{proof}

\begin{corollary}[Adaptive slots]
\label{cor:adaptive-slot-accounting-v2}
Suppose an algorithm has at most \(R\) possible test slots.  Whether
slot \(j\) is reached, and the candidate tested there, may depend on
all earlier randomness.  If the test uses a fresh block and has
conditional error at most \(\rho\) whenever it is reached, then the
probability that any reached test fails is at most \(R\rho\).
\end{corollary}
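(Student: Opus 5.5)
The plan is to reduce the corollary to Lemma~\ref{lem:sequential-conditional-failure-v2} by building a filtration indexed by the slots, with one bad event per slot.

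\textbf{Filtration.} Enumerate the at most \(R\) possible slots in the chronological order in which they could be reached; since every loop and block has deterministic finite length, such an order exists. Let \(\mathcal F_{j-1}\) be the sigma-field generated by all randomness revealed before the fresh test block of slot \(j\) is drawn. This includes whether slot \(j\) is reached and which candidate it tests. Let \(\mathcal F_j\) additionally contain that block, together with any randomness consumed up to the start of slot \(j+1\). This gives a filtration \(\mathcal F_0\subseteq\cdots\subseteq\mathcal F_R\).

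\textbf{Bad events.} Define \(\mathcal B_j\) as the event that slot \(j\) is reached and its test errs. Then \(\mathcal B_j\in\mathcal F_j\). Write \(\mathcal R_j\) for the reaching event, which is \(\mathcal F_{j-1}\)-measurable. Then
\[
\Pr[\mathcal B_j\mid\mathcal F_{j-1}]=\mathbf 1_{\mathcal R_j}\Pr[\text{test errs}\mid\mathcal F_{j-1}]\leq\rho
\]
almost surely. On \(\mathcal R_j\) the bound holds because the hypothesis gives conditional error at most \(\rho\) for a fresh block, given the fixed past. Off \(\mathcal R_j\) the left side is zero; this is the convention that unreached slots are declared good.

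\textbf{Conclusion.} Applying \eqref{eq:app-sequential-failure-sum-v2} with \(\rho_j=\rho\) gives
\[
\Pr\Bigl[\bigcup_j\mathcal B_j\Bigr]\leq R\rho.
\]

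\textbf{Main obstacle.} The only delicate point is measurability. The adaptive choice of whether a slot is reached, and of the candidate it tests, must be determined before that slot's fresh block is drawn. This is exactly what ``fresh block'' means, so the step is routine once the slots are ordered chronologically. If the number of slots actually reached is random, pad the sequence with trivial empty events up to \(R\).
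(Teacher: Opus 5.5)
Your proposal is correct and follows the paper's proof: the paper likewise pads the execution to \(R\) slots, declares the bad event empty at unreached slots, and applies Lemma~\ref{lem:sequential-conditional-failure-v2}. Your explicit slot filtration and the identity \(\Pr[\mathcal B_j\mid\mathcal F_{j-1}]=\mathbf 1_{\mathcal R_j}\Pr[\text{test errs}\mid\mathcal F_{j-1}]\) just spell out details the paper leaves implicit.
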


\begin{proof}
Pad the execution to \(R\) slots and declare the bad event empty when
a slot is not reached.  Apply
Lemma~\ref{lem:sequential-conditional-failure-v2}.
\end{proof}

\subsection{Gap tests}
\label{subsec:app-gap-tests-v2}

The validation steps use two forms of hypothesis testing.  Degree
tests have a gap proportional to an observable scale \(q\), while
bad-pair tests use a fixed additive gap.

\begin{lemma}[Multiplicative gap test]
\label{lem:multiplicative-gap-test-v2}
Fix constants \(0<a<\theta<b\).  Conditional on \(\mathcal F\), let
\(q\in(0,1]\), \(p\in[0,1]\), and the positive integer \(m\) be fixed,
with \(bq\leq1\).  Draw \(m\) fresh independent Bernoulli variables
of mean \(p\), let \(\widehat p\) be their empirical mean, and accept
when \(\widehat p\geq\theta q\).  Then
\begin{align}
p\geq bq
&\quad\Longrightarrow\quad
\Pr[\mathrm{reject}\mid\mathcal F]
\leq
\exp\!\left(
-\frac{(b-\theta)^2}{2b}\,mq
\right),
\label{eq:app-multiplicative-gap-completeness-v2}\\
p\leq aq
&\quad\Longrightarrow\quad
\Pr[\mathrm{accept}\mid\mathcal F]
\leq
\exp\!\left(
-\frac{(\theta-a)^2}{a+\theta}\,mq
\right).
\label{eq:app-multiplicative-gap-soundness-v2}
\end{align}
Consequently both errors are at most \(\rho\) if
\begin{equation}
m\geq
C(a,\theta,b)\,q^{-1}\log\frac{2}{\rho},
\qquad
C(a,\theta,b):=
\max\left\{
\frac{2b}{(b-\theta)^2},
\frac{a+\theta}{(\theta-a)^2}
\right\}.
\label{eq:app-multiplicative-gap-size-v2}
\end{equation}
\end{lemma}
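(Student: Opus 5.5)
The statement to prove is the multiplicative gap test (Lemma~\ref{lem:multiplicative-gap-test-v2}). I will derive both tail bounds from the conditional Chernoff inequalities of Lemma~\ref{lem:conditional-concentration-v2}, applied to \(X=m\widehat p\). First condition on \(\mathcal F\), so that \(q,p,m\) are fixed and the block is a fresh i.i.d.\ Bernoulli\((p)\) sample. Monotonicity in \(p\) will let me reduce each bound to its boundary case. The cleanest way to do this is stochastic domination: a Bernoulli\((p)\) sum with \(p\geq bq\) dominates a Bernoulli\((bq)\) sum, which is admissible since \(bq\leq1\). Likewise, a sum with \(p\leq aq\) is dominated by a Bernoulli\((aq)\) sum. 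Rejection is a down-set event and acceptance an up-set event, so it suffices to treat \(p=bq\) and \(p=aq\).

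For completeness, take \(p=bq\) and write the rejection event \(\widehat p<\theta q\) as \(X<(1-\delta)mp\) with \(\delta=(b-\theta)/b\in(0,1)\). The lower-tail bound \eqref{eq:app-conditional-lower-chernoff-v2} gives
\[
\exp\!\left(-\frac{\delta^2 mp}{2}\right)=\exp\!\left(-\frac{(b-\theta)^2}{2b}\,mq\right),
\]
which is \eqref{eq:app-multiplicative-gap-completeness-v2}. For soundness, take \(p=aq\) and write acceptance as \(X\geq(1+\delta)mp\) with \(\delta=(\theta-a)/a\). The upper-tail bound \eqref{eq:app-conditional-upper-chernoff-v2} gives the exponent
\[
\frac{\delta^2mp}{2+\delta}=\frac{(\theta-a)^2}{a^2}\cdot\frac{a}{(a+\theta)/a}\,mq=\frac{(\theta-a)^2}{a+\theta}\,mq,
\]
which is \eqref{eq:app-multiplicative-gap-soundness-v2}. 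If \(p=0\) the acceptance probability is zero, so that degenerate case is trivial.

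For the sample-size consequence, the condition \eqref{eq:app-multiplicative-gap-size-v2} makes each exponent at least \(\log(2/\rho)\), so each error is at most \(\rho/2\leq\rho\). The only point needing care is the monotone-coupling reduction, which I would state explicitly through the standard coupling \(\mathbf 1[U_i\leq p]\) with \(U_i\) uniform on \([0,1]\). Everything else is routine algebra.
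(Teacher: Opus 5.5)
Your proof is correct and follows the paper's argument: both apply the conditional lower- and upper-tail Chernoff bounds with \(\delta\) chosen to put the threshold at \(\theta q\), and both get the sample-size consequence by making each exponent at least \(\log(2/\rho)\). The only difference is how you handle general \(p\). You reduce to the boundary cases \(p=bq\) and \(p=aq\) by a monotone coupling. The paper instead applies Chernoff at general \(p\) and notes that the exponents \((p-\theta q)^2/p\) and \((\theta q-p)^2/(p+\theta q)\) are monotone in \(p\) on the relevant ranges. Either route works.
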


\begin{proof}
If \(p\geq bq\), then \(\theta q\leq(\theta/b)p\).
The lower-tail bound in
Lemma~\ref{lem:conditional-concentration-v2} gives
\[
\Pr[\widehat p<\theta q\mid\mathcal F]
\leq
\exp\!\left(
-\frac{(p-\theta q)^2}{2p}\,m
\right).
\]
The function \((p-\theta q)^2/p\) is increasing for
\(p>\theta q\), so the right side is largest at \(p=bq\), giving
\eqref{eq:app-multiplicative-gap-completeness-v2}.

If \(0<p\leq aq\), apply the upper-tail Chernoff bound with
\(1+\delta=\theta q/p\).  This gives
\[
\Pr[\widehat p\geq\theta q\mid\mathcal F]
\leq
\exp\!\left(
-m\frac{(\theta q-p)^2}{p+\theta q}
\right),
\]
whose exponent is at least
\(mq(\theta-a)^2/(a+\theta)\).  If \(p=0\), acceptance is impossible.
\end{proof}

\begin{lemma}[Additive gap test]
\label{lem:additive-gap-test-v2}
Fix \(0\leq a<\theta<b\leq1\).  Conditional on \(\mathcal F\), draw
\(m\) fresh independent Bernoulli variables of mean \(p\), and accept
when their empirical mean is at least \(\theta\).  Then
\begin{align}
p\geq b
&\quad\Longrightarrow\quad
\Pr[\mathrm{reject}\mid\mathcal F]
\leq
e^{-2m(b-\theta)^2},
\label{eq:app-additive-gap-completeness-v2}\\
p\leq a
&\quad\Longrightarrow\quad
\Pr[\mathrm{accept}\mid\mathcal F]
\leq
e^{-2m(\theta-a)^2}.
\label{eq:app-additive-gap-soundness-v2}
\end{align}
Thus both errors are at most \(\rho\) whenever
\begin{equation}
m\geq
\frac{1}{
2\min\{(b-\theta)^2,(\theta-a)^2\}
}
\log\frac{2}{\rho}.
\label{eq:app-additive-gap-size-v2}
\end{equation}
\end{lemma}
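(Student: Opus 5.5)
This is a direct application of Hoeffding's inequality in the conditional form of Lemma~\ref{lem:conditional-concentration-v2}. Condition on \(\mathcal F\), so that \(p\) and the thresholds are fixed, and the \(m\) fresh Bernoulli variables are independent with mean \(p\). Let \(\widehat p\) denote their empirical mean. I would treat the two implications separately and then combine them.

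For completeness, assume \(p\geq b\). Rejection means \(\widehat p<\theta\). This forces \(p-\widehat p>p-\theta\geq b-\theta>0\). I would use the one-sided Hoeffding bound \(\Pr[\widehat p-p\leq-s\mid\mathcal F]\leq e^{-2ms^2}\) with \(s=b-\theta\). This gives \eqref{eq:app-additive-gap-completeness-v2}. The only subtlety is that the paper records the two-sided form \eqref{eq:app-conditional-hoeffding-v2} with a factor \(2\). The one-sided inequality is the standard intermediate step of that proof and holds equally after conditioning, so I would cite it directly. This is why the bound in the statement carries no leading \(2\).

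For soundness, assume \(p\leq a\). Acceptance means \(\widehat p\geq\theta\), so \(\widehat p-p\geq\theta-a>0\). The upper one-sided Hoeffding bound gives \(e^{-2m(\theta-a)^2}\), which is \eqref{eq:app-additive-gap-soundness-v2}. The degenerate cases \(p=0\) or \(p=1\) need no separate treatment, since Hoeffding covers all \(p\in[0,1]\).

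Finally, suppose \(m\) satisfies \eqref{eq:app-additive-gap-size-v2}. Then each exponent \(2m\min\{(b-\theta)^2,(\theta-a)^2\}\) is at least \(\log(2/\rho)\). Each error is therefore at most \(\rho/2\leq\rho\). There is no real obstacle here; the only thing to watch is using the one-sided rather than the two-sided Hoeffding constant.
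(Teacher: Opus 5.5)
Your proof is correct and follows the paper's argument, which simply invokes the conditional Hoeffding bound for both implications. Your point about needing the one-sided Hoeffding inequality is right: citing the two-sided form \eqref{eq:app-conditional-hoeffding-v2}, as the paper does, would literally add a factor $2$ to the displayed bounds, although the final sample-size conclusion \eqref{eq:app-additive-gap-size-v2} would still hold with that factor.
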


\begin{proof}
Both conclusions are direct applications of the conditional
Hoeffding bound
\eqref{eq:app-conditional-hoeffding-v2}.
\end{proof}

\begin{corollary}[Constants used by persistent BSG]
\label{cor:persistent-bsg-gap-constants-v2}
The following choices are sufficient for the tests in
Section~\ref{sec:persistent-bsg-v2}:
\begin{align}
m_{\rm deg,H}
&=
\left\lceil
128h^{-1}\log\frac{2}{\zeta}
\right\rceil,
&
m_{\rm deg,L}(\delta_0)
&=
\left\lceil
256\delta_0^{-1}\log\frac{2}{\zeta}
\right\rceil,
\label{eq:app-bsg-degree-constants-v2}\\
m_{\rm pair}
&=
\left\lceil
512\log\frac{2}{\zeta}
\right\rceil.
\label{eq:app-bsg-pair-constant-v2}
\end{align}
\end{corollary}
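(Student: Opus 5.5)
The plan is to instantiate Lemma~\ref{lem:multiplicative-gap-test-v2} for the two degree tests and Lemma~\ref{lem:additive-gap-test-v2} for the bad-pair test. For each, I will read off the scale, the threshold and the gap used in Lemma~\ref{lem:candidate-validation-v2}, and then check that the resulting constant \(C(a,\theta,b)\), or its additive analogue, is at most the displayed leading constant.

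\emph{High degree test.} The observed Bernoulli mean is \(p=|X_d|/N\), with acceptance threshold \(7h/8\). I take \(q=h/8\) with \((a,\theta,b)=(6,7,8)\); the requirement \(bq=h\leq1\) holds. Completeness is needed when \(p\geq h=bq\). Soundness means that acceptance forces \(p>3h/4=aq\) except with probability \(\zeta\). Then
\[
C(6,7,8)=\max\{16/1,13/1\}=16,
\]
and \(m\geq16q^{-1}\log(2/\zeta)=128h^{-1}\log(2/\zeta)\), which matches \(m_{\rm deg,H}\).

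\emph{Low degree test.} Here \(p=|X_a|/N\), with threshold \(7\delta_0/16\), completeness at \(\delta_0/2\) and soundness at \(3\delta_0/8\). I take \(q=\delta_0/16\) with \((a,\theta,b)=(6,7,8)\). We need \(bq=\delta_0/2\leq1\); this holds because \(\delta_0\leq\delta^{\rm true}_\tau\leq1\). Again \(C=16\), so \(m\geq256\delta_0^{-1}\log(2/\zeta)\), which is \(m_{\rm deg,L}\).

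\emph{Bad-pair test.} I apply Lemma~\ref{lem:additive-gap-test-v2} to the good-pair indicator \(1-\mathsf{Bad}\). This has mean \(1-\omega_X\). We accept when the empirical good rate is at least \(1-3/32=29/32\). Completeness is needed when \(\omega_X\leq1/16\), that is \(b=15/16\). Soundness is needed when \(\omega_X\geq1/8\), that is \(a=7/8\). Both gaps equal \(1/32\), so the required size is \(\tfrac{1}{2}\cdot32^2\log(2/\zeta)=512\log(2/\zeta)\). This is \(m_{\rm pair}\).

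\emph{Where care is needed.} These tests are run on exact samples from \(X\), obtained by rejection from \(S\). I must make sure the Bernoulli draws are fresh and conditionally i.i.d. given the past, conditional on block success. Fixed-block first-success sampling gives exactly this, so the gap lemmas apply conditionally. The main obstacle is bookkeeping rather than mathematics: confirming that the thresholds stated in Section~\ref{sec:persistent-bsg-v2} are exactly \(7h/8\), \(7\delta_0/16\) and \(3/32\), so that the gaps are symmetric in the way used above. Everything else is arithmetic.
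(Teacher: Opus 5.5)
Your proposal is correct and is essentially the paper's proof: the same two gap lemmas, the same thresholds $7h/8$, $7\delta_0/16$, $3/32$, and the same additive triple $(7/8,29/32,15/16)$ for the complementary good-pair indicator. Your choice $q=h/8$ (respectively $q=\delta_0/16$) with $(a,\theta,b)=(6,7,8)$ is just a rescaling of the paper's $q=h$ with $(3/4,7/8,1)$ (respectively $q=\delta_0$ with $(3/8,7/16,1/2)$). Since $C(\lambda a,\lambda\theta,\lambda b)=C(a,\theta,b)/\lambda$, it yields the same leading constants $128$ and $256$.
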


\begin{proof}
For the high degree test, substitute
\[
(a,\theta,b)=\left(\frac34,\frac78,1\right)
\]
in Lemma~\ref{lem:multiplicative-gap-test-v2}; the two constants in
\eqref{eq:app-multiplicative-gap-size-v2} are \(104\) and \(128\).
For the low degree test, substitute
\[
(a,\theta,b)=\left(\frac38,\frac7{16},\frac12\right);
\]
the two constants are \(208\) and \(256\).  For the bad-pair test, apply
Lemma~\ref{lem:additive-gap-test-v2} to the complementary Bernoulli
variable \(1-\mathsf{Bad}\), with
\[
(a,\theta,b)
=
\left(\frac78,\frac{29}{32},\frac{15}{16}\right).
\]
Acceptance of the complementary test is equivalent to an empirical
bad-pair rate at most \(3/32\), and both gaps equal \(1/32\).
\end{proof}

\subsection{Fixed-block exact sampling}
\label{subsec:app-fixed-block-sampling-v2}

Unbounded rejection sampling has finite expected running time but no
worst-case bound.  The following fixed-block form preserves the exact
conditional output law while making every loop finite.

\begin{lemma}[First accepted value in a fixed block]
\label{lem:fixed-block-first-success-v2}
Conditional on \(\mathcal F\), let \(V_1,\ldots,V_T\) be independent
trials taking values in \(\Omega\cup\{\bot\}\).  Suppose there is an
\(\mathcal F\)-measurable number \(p\in[0,1]\) and an
\(\mathcal F\)-measurable probability distribution \(\mu\) on
\(\Omega\) such that
\[
\Pr[V_j\in A\mid\mathcal F]=p\,\mu(A)
\]
for every measurable \(A\subseteq\Omega\) and every \(j\).  Return the
first \(V_j\neq\bot\), or return \(\mathsf{FAIL}\) if all trials reject.
Then
\begin{equation}
\Pr[\mathsf{FAIL}\mid\mathcal F]=(1-p)^T\leq e^{-pT},
\label{eq:app-fixed-block-failure-v2}
\end{equation}
and, conditioned on success and on \(\mathcal F\), the returned value
has law \(\mu\).

More generally, if \(t\) requests use conditionally independent
blocks, then conditioned on all blocks succeeding, their outputs have
the product of their individual conditional laws.
\end{lemma}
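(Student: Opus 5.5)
The plan is a direct computation by conditioning on \(\mathcal F\) throughout, so that \(p\), \(\mu\) and the law of each trial are fixed.

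\textbf{Failure probability.} For the first claim, the event \(\mathsf{FAIL}\) is \(\{V_1=\bot,\ldots,V_T=\bot\}\). Taking \(A=\Omega\) in the hypothesis gives \(\Pr[V_j\neq\bot\mid\mathcal F]=p\). Conditional independence of the trials then gives \(\Pr[\mathsf{FAIL}\mid\mathcal F]=(1-p)^T\), and \(1-p\leq e^{-p}\) yields the exponential bound.

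\textbf{Law of the returned value.} For the conditional law, I will decompose on the index of the first success. For measurable \(A\subseteq\Omega\), the returned value lies in \(A\) with first success at index \(j\) exactly when \(V_1=\cdots=V_{j-1}=\bot\) and \(V_j\in A\). By independence this event has conditional probability \((1-p)^{j-1}p\,\mu(A)\). Summing over \(j\le T\) gives
\[
\Pr[\text{return}\in A\mid\mathcal F]=\mu(A)\bigl(1-(1-p)^T\bigr).
\]
Dividing by the success probability \(1-(1-p)^T\), which is positive when \(p>0\), leaves \(\mu(A)\). When \(p=0\), success has probability zero and the conditional statement is vacuous, or holds under any convention.

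\textbf{Product of blocks.} For the product statement, let block \(k\) return \(W_k\) with success event \(\mathcal S_k\). Conditional independence of the blocks factorizes
\[
\Pr\Bigl[\textstyle\bigcap_k\{W_k\in A_k\}\cap\bigcap_k\mathcal S_k\,\Big|\,\mathcal F\Bigr]
\]
into \(\prod_k\Pr[W_k\in A_k,\mathcal S_k\mid\mathcal F]=\prod_k\mu_k(A_k)\Pr[\mathcal S_k\mid\mathcal F]\). Dividing by \(\prod_k\Pr[\mathcal S_k\mid\mathcal F]\) gives the product law on measurable rectangles, which determine the measure.

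\textbf{Main subtlety.} There is no real obstacle here. The one point requiring care is measure-theoretic: the hypothesis must be read as a statement about regular conditional laws given \(\mathcal F\). Since \(\Omega\) is finite in every application in this paper, all conditional probabilities are elementary, and I would simply note this.
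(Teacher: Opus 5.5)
Your proposal is correct and follows essentially the same argument as the paper: decompose on the index of the first accepted trial, note that the resulting geometric factor \(\sum_{j=1}^T(1-p)^{j-1}\) does not depend on \(A\), and obtain the product law from conditional independence of the blocks and their block-local success events. You also spell out the failure probability and the degenerate case \(p=0\), which the paper leaves implicit.
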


\begin{proof}
For \(A\subseteq\Omega\), the probability that the first accepted
value lies in \(A\) is
\[
p\mu(A)\sum_{j=1}^T(1-p)^{j-1}.
\]
The geometric factor is independent of \(A\), so conditioning on
success leaves the law \(\mu\).  The product statement follows from
conditional independence of the blocks and their block-local success
events.
\end{proof}

\begin{corollary}[Exact rejection sampling from an adaptively fixed set]
\label{cor:adaptive-exact-rejection-v2}
Conditional on \(\mathcal F\), let \(Y\subseteq S\) be fixed,
\(S\neq\varnothing\), and suppose that
\[
|Y|\geq\beta|S|
\]
for an \(\mathcal F\)-measurable \(\beta\in(0,1]\).  Draw fresh
independent uniform samples from \(S\), and in each block return the
first point lying in \(Y\).

For \(t\) requested samples and failure budget \(\rho\in(0,1)\), use
independent blocks of length
\begin{equation}
T:=
\left\lceil
\beta^{-1}\log\frac{t}{\rho}
\right\rceil.
\label{eq:app-adaptive-rejection-length-v2}
\end{equation}
Then, conditional on \(\mathcal F\), the probability that some block
is exhausted is at most \(\rho\).  Conditioned additionally on all
blocks succeeding, the returned tuple has law
\begin{equation}
U_Y^{\otimes t}.
\label{eq:app-adaptive-rejection-product-v2}
\end{equation}
\end{corollary}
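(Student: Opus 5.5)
The plan is to reduce the corollary to Lemma~\ref{lem:fixed-block-first-success-v2}, applied once per requested sample, and then take a union bound over the \(t\) blocks. Throughout I condition on \(\mathcal F\), so that \(Y\), \(S\), \(\beta\), and \(t\) are all fixed.

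First I check the trial structure. Each trial draws a fresh \(Z\sim U_S\). It returns \(Z\) if \(Z\in Y\), which is decided by the fixed membership predicate, and returns \(\bot\) otherwise. For every \(A\subseteq Y\),
\[
\Pr[V_j\in A\mid\mathcal F]=\frac{|A|}{|S|}=p\,\mu(A),
\qquad
p:=\frac{|Y|}{|S|},\quad \mu:=U_Y.
\]
Here \(p\) and \(\mu\) are \(\mathcal F\)-measurable, and by hypothesis \(p\geq\beta>0\), so \(Y\) is nonempty and \(\mu\) is well defined. The trials within a block, and the blocks themselves, use disjoint fresh samples. They are therefore conditionally independent given \(\mathcal F\).

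Next comes the failure bound. By \eqref{eq:app-fixed-block-failure-v2}, a single block is exhausted with conditional probability
\[
(1-p)^T\leq e^{-pT}\leq e^{-\beta T}\leq \frac{\rho}{t},
\]
where the last step uses the choice \eqref{eq:app-adaptive-rejection-length-v2}. A union bound over the \(t\) blocks then gives total conditional failure probability at most \(\rho\).

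Finally, the product law. Each block outputs law \(U_Y\) conditioned on its own success, by the first part of Lemma~\ref{lem:fixed-block-first-success-v2}. The second part of that lemma, which relies on conditional independence of the blocks and the block-local nature of their success events, then gives law \(U_Y^{\otimes t}\) conditioned on all blocks succeeding.

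The only point that needs care is this conditioning on joint success. Since each success event depends only on its own block, conditioning on the intersection factorizes, and independence survives. This is exactly what the cited lemma provides, so I do not expect a genuine obstacle here.
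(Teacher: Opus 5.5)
Your proposal is correct and is essentially the paper's proof. You check that each trial has sub-law \(p\,U_Y\) with \(p=|Y|/|S|\geq\beta\), apply Lemma~\ref{lem:fixed-block-first-success-v2} blockwise for both the exact conditional law and the product statement, and take a union bound over the \(t\) exhaustion events; the only unstated detail is that the paper's \(\log\) is base two, so your step \(e^{-\beta T}\leq\rho/t\) rests on \(\beta T\geq\log_2(t/\rho)\geq\ln(t/\rho)\), which holds since \(t/\rho>1\).
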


\begin{proof}
One trial accepts with probability \(|Y|/|S|\geq\beta\), and
conditioned on acceptance its output is exactly uniform on \(Y\).
Apply Lemma~\ref{lem:fixed-block-first-success-v2} to each block and
take a union bound over the \(t\) exhaustion events.
\end{proof}
\section{Persistent BSG implementation and resource bounds}
\label{app:persistent-bsg-details-v2}

This appendix supplies the numerical block lengths, error allocation,
and worst-case resource bounds used in
Section~\ref{sec:persistent-bsg-v2}.  The main text separates the
mathematical roles of approximation, proposal, validation, and
cleaning; here we record the implementation details needed to make
all loops finite and all probability bounds explicit.

Throughout, \(N:=|S|\), \(h:=\sqrt\alpha\), and
\[
\Lambda:=
1+n+\log\frac1\alpha+
\log\frac1{\rho_{\rm snd}}+
\log\frac1{\rho_{\rm cmp}}.
\]

\subsection{Fixed-length rejection sampling}
\label{subsec:app-fixed-rejection-v2}

All rejection procedures use
Lemma~\ref{lem:fixed-block-first-success-v2}.  In particular, when a
fixed set \(X\subseteq S\) has density at least \(\beta\), assigning
each of \(m\) requested samples an independent block of length
\begin{equation}
T_X(\beta,m,\varepsilon):=
\left\lceil
\beta^{-1}\log\frac{m}{\varepsilon}
\right\rceil
\label{eq:app-generic-rejection-length-v2}
\end{equation}
makes the probability that any request fails at most \(\varepsilon\).
Conditioned on all blocks succeeding, the returned tuple has the exact
product law \(U_X^{\otimes m}\), by
Corollary~\ref{cor:adaptive-exact-rejection-v2}.

\subsection{Parameter and error allocation}
\label{subsec:app-parameter-allocation-v2}

Let
\[
J_h:=|\mathcal T_h|
=
1+\left\lceil\log_2\frac{2^8}{h}\right\rceil.
\]
We use
\begin{equation}
R_{\rm H}:=
\left\lceil
64\alpha^{-1}\log\frac{40}{\rho_{\rm cmp}}
\right\rceil,
\qquad
R_{\rm L}:=
\left\lceil
3\log\frac{40J_h}{\rho_{\rm cmp}}
\right\rceil,
\qquad
R_\star:=R_{\rm H}+J_hR_{\rm L}.
\label{eq:app-candidate-counts-v2}
\end{equation}
The constants \(64\) and \(3\) dominate respectively
\(2^8/7\) and \(128/63\), the reciprocals of the proposal masses in
Lemma~\ref{lem:compiled-good-mass-v2} after removing the factors
\(\alpha\) and \(1\).

Put
\[
\rho_\wedge:=\min\{\rho_{\rm snd},\rho_{\rm cmp}\}.
\]
The global and slotwise error parameters are
\begin{align}
\rho_{\rm rep}
&:=\frac{\rho_\wedge}{40},
&
\rho_{\rm den}
&:=\frac{\rho_\wedge}{40},
&
\rho_{\rm code}
&:=\frac{\rho_\wedge}{40J_h},
\label{eq:app-global-errors-v2}\\
\zeta
&:=\frac{\rho_\wedge}{100R_\star},
&
\rho_{\rm clean}
&:=\frac{\rho_\wedge}{100R_\star},
\label{eq:app-slot-errors-v2}\\
\varepsilon_{\rm prop}
&:=\frac{\rho_{\rm cmp}}{100R_\star},
&
\varepsilon_{\rm pair}
&:=\frac{\rho_{\rm cmp}}{100R_\star},
&
\varepsilon_{\rm clean}
&:=\frac{\rho_{\rm cmp}}{100R_\star}.
\label{eq:app-exhaustion-errors-v2}
\end{align}

These choices give the soundness sum
\[
\rho_{\rm rep}+\rho_{\rm den}
+J_h\rho_{\rm code}
+2R_\star\zeta
+R_\star\rho_{\rm clean}
=
\frac{21}{200}\rho_\wedge.
\]
They also give the validation-completeness contribution
\(2R_\star\zeta=\rho_\wedge/50\), which is displayed separately in
the completeness table of the main proof.

\subsection{Global representation, density, and codegree blocks}
\label{subsec:app-global-sketches-v2}

Set
\[
\xi_{\rm rep}:=\frac{h^2}{2^{16}},
\qquad
\xi_{\rm den}:=\frac{h}{2^7}.
\]
It is enough to take
\begin{align}
m_{\rm rep}
&:=
\left\lceil
\frac{1}{2\xi_{\rm rep}^2}
\left(n\log 2+\log\frac{2}{\rho_{\rm rep}}\right)
\right\rceil,
\label{eq:app-mrep-v2}\\
m_{\rm den}
&:=
\left\lceil
\frac{1}{2\xi_{\rm den}^2}
\log\frac{2J_h}{\rho_{\rm den}}
\right\rceil.
\label{eq:app-mden-v2}
\end{align}
For every retained low layer, with
\(\kappa_0:=\delta_0^2/2^{12}\), take the layer-specific length
\begin{equation}
m_{\rm code}:=
\left\lceil
\frac{8}{\kappa_0^2}
\left(2n\log2+\log\frac{2}{\rho_{\rm code}}\right)
\right\rceil.
\label{eq:app-mcode-v2}
\end{equation}
Hoeffding's inequality gives respectively the uniform errors
\(\xi_{\rm rep}\), \(\xi_{\rm den}\), and \(\kappa_0/4\).
In the resource bounds, \(m_{\rm code}\) denotes the maximum of
these layer-specific lengths over retained layers; this maximum is
deterministically bounded because every retained layer has
\(\delta_0\geq7h/8\).

Every retained layer has \(\delta_0\geq7h/8\), so
\(\kappa_0\geq(49/64)h^2/2^{12}\).  Consequently,
\begin{equation}
m_{\rm rep}=O(\alpha^{-2}\Lambda),
\qquad
m_{\rm den}=O(\alpha^{-1}\Lambda),
\qquad
m_{\rm code}=O(\alpha^{-2}\Lambda).
\label{eq:app-global-block-asymptotics-v2}
\end{equation}

A membership query to any compiled difference set evaluates
\(\widehat p(q)\) and therefore uses
\begin{equation}
q_Q:=m_{\rm rep}=O(\alpha^{-2}\Lambda)
\label{eq:app-qQ-v2}
\end{equation}
membership queries to \(S\).  One adjacency query to a low-branch
graph evaluates two difference predicates for every codegree sample,
and hence uses
\begin{equation}
q_{\mathcal G}:=2m_{\rm code}q_Q
=O(\alpha^{-4}\Lambda^2)
\label{eq:app-qG-v2}
\end{equation}
membership queries to \(S\).

\subsection{Candidate proposal and local sampling blocks}
\label{subsec:app-candidate-blocks-v2}

Under the energy promise, a high proposal trial accepts with
probability at least \(\alpha\).  For every retained low layer,
\(\delta_\tau^{\rm true}\geq\delta_0\geq7h/8\), so a low proposal
trial accepts with probability at least
\((7h/8)^2=49\alpha/64\).  We therefore use the common lower bound
\[
\beta_{\rm prop}:=\frac{49}{64}\alpha
\]
and assign every proposal request a block of length
\begin{equation}
T_{\rm prop}:=
\left\lceil
\beta_{\rm prop}^{-1}
\log\frac1{\varepsilon_{\rm prop}}
\right\rceil.
\label{eq:app-Tprop-v2}
\end{equation}
On an arbitrary input this block still has deterministic length and,
when successful, returns the exact conditional proposal law.  The
lower bound on its success probability is used only in the
completeness proof.

A candidate that reaches pair validation or cleaning is sampled from
its associated set \(X\) through independent rejection blocks.  We use
the common nominal density
\[
\beta_X:=\frac h4.
\]
For a bad-pair test requesting \(2m_{\rm pair}\) points, give each
request an independent block of length
\begin{equation}
T_{\rm pair}:=
\left\lceil
\beta_X^{-1}
\log\frac{2m_{\rm pair}}{\varepsilon_{\rm pair}}
\right\rceil.
\label{eq:app-Tpair-v2}
\end{equation}
For the \(m_{\rm clean}\) cleaning points, use
\begin{equation}
T_{\rm clean}:=
\left\lceil
\beta_X^{-1}
\log\frac{m_{\rm clean}}{\varepsilon_{\rm clean}}
\right\rceil.
\label{eq:app-Tclean-v2}
\end{equation}

For an ideal high candidate, \(|X|\geq hN\); for an ideal low
candidate, \(|X|\geq\delta_0N/2\geq7hN/16\).  Thus
\(\beta_X=h/4\) is valid in the completeness proof.  In the
soundness proof, block exhaustion merely discards a candidate.  On the
degree-validation soundness event, every candidate that proceeds
beyond the degree test also has density greater than \(\beta_X\).

The resulting asymptotic bounds are
\begin{equation}
T_{\rm prop}=O(\alpha^{-1}\Lambda),
\qquad
T_{\rm pair},T_{\rm clean}
=O(\alpha^{-1/2}\Lambda).
\label{eq:app-rejection-asymptotics-v2}
\end{equation}

\subsection{Validation and cleaning block sizes}
\label{subsec:app-validation-v2}

Corollary~\ref{cor:persistent-bsg-gap-constants-v2} gives the
explicit choices
\begin{align}
m_{\rm deg,H}
&:=
\left\lceil
128h^{-1}\log\frac{2}{\zeta}
\right\rceil,
\label{eq:app-mdegH-v2}\\
m_{\rm deg,L}(\delta_0)
&:=
\left\lceil
256\delta_0^{-1}\log\frac{2}{\zeta}
\right\rceil,
\label{eq:app-mdegL-v2}\\
m_{\rm pair}
&:=
\left\lceil
512\log\frac{2}{\zeta}
\right\rceil.
\label{eq:app-mpair-v2}
\end{align}

For cleaning, a uniform additive error \(1/16\) over the \(2^n\)
possible query points is obtained with
\begin{equation}
m_{\rm clean}:=
\left\lceil
128\left(
n\log2+\log\frac{2}{\rho_{\rm clean}}
\right)
\right\rceil.
\label{eq:app-mclean-v2}
\end{equation}
Since every retained layer has \(\delta_0\geq7h/8\),
\begin{equation}
m_{\rm deg,H},m_{\rm deg,L}
=O(\alpha^{-1/2}\Lambda),
\qquad
m_{\rm pair},m_{\rm clean}=O(\Lambda).
\label{eq:app-validation-asymptotics-v2}
\end{equation}

In the resource table below, \(m_{\rm deg,L}\) denotes the maximum
of \(m_{\rm deg,L}(\delta_0)\) over retained layers.  Since
\(\delta_0\geq7h/8\), this maximum is deterministic.

All validation statements are conditional on the randomness revealed
before the current candidate is fixed.  The conditional error bound is
uniform over that past, so summing over the deterministic number
\(R_\star\) of possible slots is valid.

\subsection{Failure arithmetic}
\label{subsec:app-failure-arithmetic-v2}

The soundness contribution is
\begin{align*}
&\rho_{\rm rep}
+\rho_{\rm den}
+J_h\rho_{\rm code}
+2R_\star\zeta
+R_\star\rho_{\rm clean}\\
&\qquad=
\frac{\rho_\wedge}{40}
+\frac{\rho_\wedge}{40}
+\frac{\rho_\wedge}{40}
+\frac{\rho_\wedge}{50}
+\frac{\rho_\wedge}{100}
=
\frac{21}{200}\rho_\wedge.
\end{align*}

For completeness, the candidate counts in
\eqref{eq:app-candidate-counts-v2} give
\begin{align*}
\Pr[\text{miss every good high candidate}]
&\leq
\exp\!\left(-\frac{7\alpha}{2^8}R_{\rm H}\right)
\leq\frac{\rho_{\rm cmp}}{40},\\
\Pr[\text{miss every good centre in the visible layer}]
&\leq
\exp\!\left(-\frac{63}{128}R_{\rm L}\right)
\leq\frac{\rho_{\rm cmp}}{40J_h}.
\end{align*}
The exhaustion contributions follow from
Lemma~\ref{lem:fixed-block-first-success-v2} and
\eqref{eq:app-exhaustion-errors-v2}:
\[
R_\star\varepsilon_{\rm prop}
=
R_\star\varepsilon_{\rm pair}
=
R_\star\varepsilon_{\rm clean}
=
\frac{\rho_{\rm cmp}}{100}.
\]
Finally,
\[
2R_\star\zeta
=
\frac{\rho_\wedge}{50}
\leq\frac{\rho_{\rm cmp}}{50}.
\]
Thus the full completeness sum is bounded by
\[
\frac{21}{200}
+\frac1{40}
+\frac1{100}
+\frac1{100}
+\frac1{100}
+\frac1{50}
=
\frac9{50}.
\]

\subsection{Worst-case resource accounting}
\label{subsec:app-resource-accounting-v2}

The candidate and layer counts satisfy
\begin{equation}
J_h=O(\Lambda),\qquad
R_{\rm H}=O(\alpha^{-1}\Lambda),\qquad
R_{\rm L}=O(\Lambda),\qquad
R_\star=O(\alpha^{-1}\Lambda^2).
\label{eq:app-count-asymptotics-v2}
\end{equation}

The following table charges every possible slot, even when the
algorithm halts earlier.  A raw sample means one call to the uniform
sampler for \(S\).

\begin{table}[t]
\centering
\small
\caption{Worst-case resources for persistent BSG compilation.}
\label{tab:app-persistent-bsg-resources-v2}
\renewcommand{\arraystretch}{1.15}
\setlength{\tabcolsep}{3pt}
\begin{tabular}{
  >{\raggedright\arraybackslash}p{0.23\textwidth}
  >{\raggedright\arraybackslash}p{0.14\textwidth}
  >{\raggedright\arraybackslash}p{0.20\textwidth}
  >{\raggedright\arraybackslash}p{0.3\textwidth}
}
\toprule
stage & number of slots & raw \(S\)-samples & membership queries to \(S\)\\
\midrule
representation sample
& \(1\)
& \(m_{\rm rep}\)
& \(0\)
\\
layer-density estimates
& \(J_h\)
& \(2J_hm_{\rm den}\)
& \(J_hm_{\rm den}q_Q\)
\\
codegree samples
& at most \(J_h\)
& \(J_hm_{\rm code}\)
& \(0\)
\\
high proposals
& \(R_{\rm H}\)
& \(3R_{\rm H}T_{\rm prop}\)
& \(R_{\rm H}T_{\rm prop}\)
\\
low proposals
& \(J_hR_{\rm L}\)
& \(3J_hR_{\rm L}T_{\rm prop}\)
& \(2J_hR_{\rm L}T_{\rm prop}q_Q\)
\\
degree tests
& at most \(R_\star\)
& \(R_{\rm H}m_{\rm deg,H}
+J_hR_{\rm L}m_{\rm deg,L}\)
& \(R_{\rm H}m_{\rm deg,H}
+J_hR_{\rm L}m_{\rm deg,L}q_Q\)
\\
candidate pair blocks
& at most \(R_\star\)
& \(2R_\star m_{\rm pair}T_{\rm pair}\)
& \(2m_{\rm pair}T_{\rm pair}
(R_{\rm H}+J_hR_{\rm L}q_Q)\)
\\
bad-pair evaluation
& at most \(R_\star\)
& \(0\)
& \(R_{\rm H}m_{\rm pair}q_Q
+J_hR_{\rm L}m_{\rm pair}q_{\mathcal G}\)
\\
cleaning blocks
& at most \(R_\star\)
& \(R_\star m_{\rm clean}T_{\rm clean}\)
& \(m_{\rm clean}T_{\rm clean}
(R_{\rm H}+J_hR_{\rm L}q_Q)\)
\\
\bottomrule
\end{tabular}
\end{table}

Summing the raw-sample column and substituting
\eqref{eq:app-global-block-asymptotics-v2},
\eqref{eq:app-rejection-asymptotics-v2},
\eqref{eq:app-validation-asymptotics-v2}, and
\eqref{eq:app-count-asymptotics-v2} gives
\begin{equation}
S_{\rm compile}=O(\alpha^{-2}\Lambda^4).
\label{eq:app-total-samples-v2}
\end{equation}
The membership-query column is dominated by low-branch bad-pair
evaluation,
\[
J_hR_{\rm L}m_{\rm pair}q_{\mathcal G}
=
O(\alpha^{-4}\Lambda^5),
\]
and every other term is no larger.  Hence
\begin{equation}
Q_{\rm compile}=O(\alpha^{-4}\Lambda^5).
\label{eq:app-total-queries-v2}
\end{equation}

A high-output membership query evaluates at most
\(m_{\rm clean}\) compiled difference predicates, so its cost is
\(O(m_{\rm clean}q_Q)=O(\alpha^{-2}\Lambda^2)\).
A low-output query evaluates one compiled difference predicate and at
most \(m_{\rm clean}\) graph predicates, giving
\begin{equation}
q_{Y_D}
=
O(q_Q+m_{\rm clean}q_{\mathcal G})
=
O(\alpha^{-4}\Lambda^3).
\label{eq:app-output-query-v2}
\end{equation}

A high record stores one representation sample, one difference, and
one cleaning sample.  A low record additionally stores one threshold,
one certified density, one codegree sample, and one centre.  The
certified density is the rational number obtained from an empirical
count with denominator \(m_{\rm den}\), so it requires only
\(O(\log m_{\rm den})\) bits; the threshold is stored by its dyadic
index.  The representation and codegree samples are the dominant
components, and
\begin{equation}
|D|_{\rm bits}
=
O(nm_{\rm rep})
=
O(n\alpha^{-2}\Lambda).
\label{eq:app-record-size-v2}
\end{equation}
Accounting for vector arithmetic and predicate evaluation, the
ordinary running time is
\[
O\!\left(
\alpha^{-4}\Lambda^5\operatorname{poly}(n)
\right).
\]
All bounds in this appendix are deterministic upper bounds over every
random execution.
\section{Implementation details for the size-oblivious PFR routine}
\label{app:pfr-details-v2}

This appendix records the exact error allocation and worst-case
resource bounds used in
Section~\ref{sec:size-oblivious-pfr-v2}.  The mathematical flow is
kept in the main chapter.

Throughout, \(L:=2K\) and
\[
\mathcal K_\star(L):=C_{\rm RH}L^{13},
\qquad
\beta_L:=
\frac1{2P_{\rm RH}(\mathcal K_\star(L))}.
\]

\subsection{Global parameters}
\label{subsec:app-pfr-global-parameters-v2}

Set
\[
\rho_{\rm loc}:=\frac\rho4,
\qquad
\rho_{\rm size}:=\frac\rho4,
\]
\[
R_{\rm trial}:=
\left\lceil
\frac{\log(4/\rho)}{\log(5/3)}
\right\rceil,
\qquad
\zeta:=\frac{\rho}{8R_{\rm trial}}.
\]
The localization sample count is
\[
t_{\rm loc}:=
\left\lceil
8\left(n+\log\frac4\rho\right)
\right\rceil.
\]
This is the complete sample budget requested from the input set \(A\).

The localized density satisfies
\(\theta_0\geq\theta_{\min}:=2^{-2L}\) on the localization event.
It is enough to use
\[
m_{\rm size}:=
\left\lceil
48\theta_{\min}^{-1}
\log\frac{2}{\rho_{\rm size}}
\right\rceil
\]
uniform samples from \(U\).  A multiplicative Chernoff bound then
gives relative error at most \(1/4\).

For every affine candidate, use
\[
m_{\rm agr}:=
\left\lceil
16\beta_L^{-1}\log\frac2\zeta
\right\rceil
\]
fresh uniform points of \(\mathbb F_2^m\).

\subsection{The five-matrix sampler}
\label{subsec:app-pfr-matrix-sampler-v2}

When \(m<r\), each model trial draws exactly five uniformly random
matrices in \(\mathbb F_2^{m\times r}\), checks their ranks in order,
and uses the first one of rank \(m\).  Rank is computed by Gaussian
elimination.  Thus the model-generation stage has deterministic
worst-case time
\[
O(mr\min\{m,r\})
\]
up to an absolute factor five.

For one random matrix,
\[
p_{\rm rank}
=
\prod_{i=0}^{m-1}(1-2^{i-r})
>
\frac12.
\]
Therefore
\[
\Pr[\text{all five matrices fail}]
=
(1-p_{\rm rank})^5
<
\frac1{32}.
\]
Conditional on success, the selected matrix is uniform among all
full-row-rank matrices, as proved in
Lemma~\ref{lem:pfr-bounded-surjection-v2}.

The failure of the five-matrix block is local to the current model
trial.  It is included in the one-trial success probability and is not
charged as a separate global failure event.

\subsection{One-trial probability and amplification}
\label{subsec:app-pfr-amplification-v2}

On the localization and size-scale events, one trial has the following
success factors:
\[
\begin{array}{l|c}
\text{stage} & \text{conditional success probability}\\ \hline
\text{bounded-time model generation} & 31/32\\
\text{Freiman-good model} & 3/4\\
\text{restricted-homomorphism routine} & 7/10\\
\text{fresh validation of a good candidate} & 7/8.
\end{array}
\]
Their product is
\[
q_0:=
\frac{31}{32}\frac34\frac7{10}\frac78
=
\frac{4557}{10240}
>
\frac25.
\]
Hence
\[
\Pr[\text{no good trial}]
\leq
(1-q_0)^{R_{\rm trial}}
<
(3/5)^{R_{\rm trial}}
\leq
\frac\rho4.
\]

The simultaneous validation-soundness contribution is
\[
R_{\rm trial}\zeta=\frac\rho8.
\]
The full probability ledger is therefore
\[
\begin{array}{l|c}
\text{failure source} & \text{upper bound}\\ \hline
\text{localization} & \rho/4\\
\text{localized size estimate} & \rho/4\\
\text{no good trial} & \rho/4\\
\text{some false validation acceptance} & \rho/8\\ \hline
\text{total} & 7\rho/8.
\end{array}
\]

\subsection{Image-model oracle costs}
\label{subsec:app-pfr-image-costs-v2}

The safeguard gives
\[
\dim\ker\pi\leq b_L:=\lceil2L\rceil+4.
\]
After computing a basis for \(\ker\pi\) and one right inverse on
\(\operatorname{Im}\pi\), a support or value query enumerates at most
\[
2^{b_L}=2^{O(K)}
\]
points.  It therefore uses at most \(2^{b_L}\) membership queries to
\(A\) and
\(2^{O(K)}\operatorname{poly}(n)\) ordinary time.

Define
\[
\Phi_{\rm RH}(K):=
F_{\rm RH}(C_{\rm RH}(2K)^{13}),
\qquad
\Phi_{\rm agr}(K):=
P_{\rm RH}(C_{\rm RH}(2K)^{13}).
\]
One restricted-homomorphism invocation uses
\[
\Phi_{\rm RH}(K)\operatorname{poly}(n)
\]
image-model oracle operations.  One validation block uses
\[
O\!\left(
\Phi_{\rm agr}(K)
\log\frac{R_{\rm trial}}{\rho}
\right)
\]
such operations.

\subsection{Worst-case resource summary}
\label{subsec:app-pfr-resource-summary-v2}

The stages have the following worst-case costs:
\begin{center}
\small
\refstepcounter{table}
\label{tab:app-pfr-resource-summary-v2}
\textbf{Table \thetable.}
Worst-case resources for the size-oblivious PFR wrapper.
\par\smallskip
\renewcommand{\arraystretch}{1.15}
\setlength{\tabcolsep}{4pt}
\begin{tabular}{
p{0.25\textwidth}
p{0.16\textwidth}
p{0.42\textwidth}}
\toprule
stage & number of stages & principal cost\\
\midrule
localization
&
\(1\)
&
\(t_{\rm loc}\) exact samples from \(A\)
\\
size estimation
&
\(1\)
&
\(m_{\rm size}\) membership queries to \(A\)
\\
matrix generation
&
\(R_{\rm trial}\)
&
\(5R_{\rm trial}\) rank computations
\\
RH calls
&
\(R_{\rm trial}\)
&
\(R_{\rm trial}\Phi_{\rm RH}(K)\operatorname{poly}(n)\)
image-model operations
\\
agreement validation
&
\(R_{\rm trial}\)
&
\(R_{\rm trial}m_{\rm agr}\) image-model operations
\\
\bottomrule
\end{tabular}
\end{center}

Multiplying image-model operations by the fiber-enumeration cost gives
\[
Q_{\rm PFR}(K,n,\rho)
\leq
F_{\rm PFR}(K)
\operatorname{poly}\!\left(n,\log\frac1\rho\right),
\]
and the same form holds for ordinary running time, where
\[
F_{\rm PFR}(K):=
2^{C K}
\left(
1+\Phi_{\rm RH}(K)+\Phi_{\rm agr}(K)
\right)
\]
for a sufficiently large absolute constant \(C\).  Since
\(P_{\rm RH}\) is polynomial and
\(F_{\rm RH}(u)=u^{O(\log(2u))}\),
\[
F_{\rm PFR}(K)
=
2^{O(K)}K^{O(\log K)}.
\]

The algorithm outputs only a basis for \(V\).  It does not enumerate
the coset representatives witnessing
\(\mathcal N_V(A)\leq P_{\rm PFR}(K)\).
\section{Verification of the restricted-homomorphism hypotheses}
\label{app:rh-hypotheses-v2}

This appendix checks the single external algorithmic ingredient used
in Section~\ref{sec:size-oblivious-pfr-v2}.  The source is
\cite[Lemma~5.10]{CSBADG26}, in arXiv version~1.  We keep the source
normalization and query model unchanged; only the names of the
dimensions and parameters are different.

\subsection{The source statement}
\label{subsec:app-rh-source-v2}

The source lemma has the following input.  There is a set
\[
S\subseteq\mathbb F_2^m
\]
and a function
\[
f:S\longrightarrow\mathbb F_2^n.
\]
For a parameter \(\mathcal K\geq1\), it assumes
\begin{equation}
\left|
\left\{
(x_1,x_2,x_3,x_4)\in S^4:
\begin{array}{l}
x_1+x_2=x_3+x_4,\\
f(x_1)+f(x_2)=f(x_3)+f(x_4)
\end{array}
\right\}
\right|
\geq
\frac{2^{3m}}{\mathcal K}.
\label{eq:app-rh-source-premise-v2}
\end{equation}
Given queries to the set \(S\) and to the function \(f\), the
algorithm returns, with probability at least \(0.7\), a matrix
\(M\in\mathbb F_2^{n\times m}\) and a vector
\(v\in\mathbb F_2^n\) such that
\begin{equation}
|\{x\in S:f(x)=Mx+v\}|
\geq
\frac{2^m}{P_2'(\mathcal K)}
\label{eq:app-rh-source-conclusion-v2}
\end{equation}
for a fixed polynomial \(P_2'\).  The source bounds the number of
queries by
\begin{equation}
\mathcal K^{O(\log\mathcal K)}
(m+n)^2\log(m+n)
\label{eq:app-rh-source-query-v2}
\end{equation}
and the running time by
\begin{equation}
\mathcal K^{O(\log\mathcal K)}
(m+n)^3\log(m+n).
\label{eq:app-rh-source-time-v2}
\end{equation}

Two features matter here.  First, the quadruple count is normalized
by the ambient quantity \(2^{3m}\), not by \(|S|^3\).  Second, the
agreement conclusion is an absolute count on the ambient scale
\(2^m\), not merely a proportion of \(S\).

In Section~\ref{sec:size-oblivious-pfr-v2}, we rename the source
polynomial \(P_2'\) as \(P_{\rm RH}\), after replacing it by a
nondecreasing polynomial upper envelope.  This changes no conclusion.

\subsection{Notation map}
\label{subsec:app-rh-notation-map-v2}

The source objects and the objects used in Chapter~5 are related as
follows.

\begin{center}
\small
\refstepcounter{table}
\label{tab:app-rh-notation-v2}
\textbf{Table \thetable.}
Restricted-homomorphism notation in the source and in this paper.
\par\smallskip
\renewcommand{\arraystretch}{1.15}
\setlength{\tabcolsep}{5pt}
\begin{tabular}{
p{0.25\textwidth}
p{0.29\textwidth}
p{0.34\textwidth}}
\toprule
source object
&
object in Chapter~5
&
comment
\\
\midrule
\(S\subseteq\mathbb F_2^m\)
&
\(S_\pi=\pi(A_0)\)
&
the ambient dimension is unchanged
\\
\(f:S\to\mathbb F_2^n\)
&
\(g_\pi|_{S_\pi}:S_\pi\to\mathbb F_2^r\)
&
the source codomain dimension \(n\) is renamed \(r\)
\\
source parameter \(\mathcal K\)
&
\(\mathcal K_\star(L)=C_{\rm RH}L^{13}\)
&
a known upper bound on the parameter supplied by the model
\\
\(M\in\mathbb F_2^{n\times m}\)
&
\(M:\mathbb F_2^m\to\mathbb F_2^r\)
&
the same linear map in coordinate-free notation
\\
\(v\in\mathbb F_2^n\)
&
\(v\in\mathbb F_2^r\)
&
the translation part of the affine map
\\
\(P_2'\)
&
\(P_{\rm RH}\)
&
a nondecreasing polynomial upper envelope
\\
\bottomrule
\end{tabular}
\end{center}

No sample oracle for \(S_\pi\) is required by the source lemma.
Chapter~5 supplies only membership and value queries.

\subsection{The quadruple premise}
\label{subsec:app-rh-premise-v2}

Assume that the current model
\[
\pi:U\longrightarrow\mathbb F_2^m
\]
is Freiman-good for \(A_0\).  Then
\(\pi|_{A_0}\) is a bijection from \(A_0\) to \(S_\pi\), and the
canonical inverse is the true inverse:
\[
g_\pi|_{S_\pi}
=
\chi_U\circ(\pi|_{A_0})^{-1}.
\]
Consequently, every additive quadruple in \(S_\pi\) is respected by
\(g_\pi\), and the number on the left-hand side of
\eqref{eq:app-rh-source-premise-v2} is exactly
\begin{equation}
E(S_\pi)=E(A_0).
\label{eq:app-rh-energy-identification-v2}
\end{equation}

On the localization event,
\[
|A_0+A_0|\leq L|A_0|.
\]
Cauchy--Schwarz therefore gives
\begin{equation}
E(A_0)
\geq
\frac{|A_0|^4}{|A_0+A_0|}
\geq
\frac{|A_0|^3}{L}.
\label{eq:app-rh-energy-lower-v2}
\end{equation}
Rewriting the last expression on the ambient model scale,
\begin{equation}
\frac{|A_0|^3}{L}
=
\frac{2^{3m}}{
L(2^m/|A_0|)^3}.
\label{eq:app-rh-ambient-rewrite-v2}
\end{equation}
The safe-model lemma gives
\[
2^m\leq C_{\rm mod}L^4|A_0|.
\]
Hence
\begin{equation}
L(2^m/|A_0|)^3
\leq
C_{\rm mod}^3L^{13}
=
C_{\rm RH}L^{13}
=
\mathcal K_\star(L).
\label{eq:app-rh-parameter-bound-v2}
\end{equation}
Combining
\eqref{eq:app-rh-energy-identification-v2}--%
\eqref{eq:app-rh-parameter-bound-v2} proves
\[
E(S_\pi)
\geq
\frac{2^{3m}}{\mathcal K_\star(L)}.
\]
Thus the normalization used in the source lemma is matched exactly.

It is harmless that \(\mathcal K_\star(L)\) may be larger than the
smallest valid source parameter.  A larger parameter weakens the
required lower bound and only enlarges the stated polynomial losses.

\subsection{Membership and value queries}
\label{subsec:app-rh-oracles-v2}

Fix a model \(\pi\), and preprocess it by computing:

\begin{enumerate}
\item a basis for \(\ker\pi\);
\item a basis for \(\operatorname{Im}\pi\);
\item a deterministic method for finding one solution of
      \(\pi(u)=x\) whenever \(x\in\operatorname{Im}\pi\).
\end{enumerate}

The safeguard in
\eqref{eq:pfr-safe-model-dimension-v2} gives
\[
\dim\ker\pi\leq b_L,
\qquad
|\ker\pi|\leq2^{b_L}=2^{O(K)}
\]
on every transcript.

Given \(x\in\mathbb F_2^m\), first test whether
\(x\in\operatorname{Im}\pi\).  If not, then
\(x\notin S_\pi\).  Otherwise, find one solution \(u_x\in U\) and
enumerate the fiber
\[
u_x+\ker\pi.
\]
The support query returns one precisely when this fiber contains a
point of \(A\).  Since every point of the fiber already lies in \(U\),
this is equivalent to membership in \(A_0=A\cap U\).

If the support query is positive, choose the first accepted point in
a fixed total order and return its coordinate vector under \(\chi_U\).
This is exactly \(g_\pi(x)\).  Thus one source membership query and
one source value query are simulated using at most
\[
2^{b_L}
\]
membership queries to \(A\), together with
\(2^{O(K)}\operatorname{poly}(n)\) ordinary computation.

The source defines \(f\) only on \(S\).  Our simulation therefore
tests support before returning a value.  The total extension
\(g_\pi(x)=0\) outside \(S_\pi\), used in Chapter~5, is only a
convenience for writing deterministic predicates; it does not alter
the partial function supplied to the source lemma.

\subsection{Interpreting and validating the output}
\label{subsec:app-rh-output-v2}

When the source lemma succeeds with parameter
\(\mathcal K_\star(L)\), it returns \(M,v\) satisfying
\begin{equation}
|\{x\in S_\pi:g_\pi(x)=Mx+v\}|
\geq
\frac{2^m}{P_{\rm RH}(\mathcal K_\star(L))}.
\label{eq:app-rh-wrapper-agreement-v2}
\end{equation}
With
\[
\beta_L:=
\frac1{2P_{\rm RH}(\mathcal K_\star(L))},
\]
this is the statement
\[
p_\pi(M,v)\geq2\beta_L.
\]

The source success guarantee is probabilistic, and on a failed
invocation it may return an arbitrary pair.  Moreover, the wrapper
also runs trials whose model need not be Freiman-good.  For these two
reasons, Chapter~5 does not use the returned pair directly.  A fresh
block estimates the absolute agreement
\[
p_\pi(M,v)
=
2^{-m}
|\{x\in S_\pi:g_\pi(x)=Mx+v\}|.
\]
The gap test accepts good source outputs with high probability and,
except for its allocated error, accepts only pairs with
\[
p_\pi(M,v)>\beta_L.
\]

For any such accepted pair, define
\[
H:=\chi_U^{-1}(\operatorname{Im}M).
\]
The set of canonical preimages on which agreement holds lies in the
single affine coset
\[
\chi_U^{-1}(v)+H.
\]
This implication uses only the definition of \(g_\pi\); it does not
require the accepted model to be Freiman-good.  Freiman-goodness is
needed only to show that the source lemma receives a valid input in a
constant fraction of the trials.

\subsection{Transferred complexity}
\label{subsec:app-rh-complexity-v2}

The source query bound
\eqref{eq:app-rh-source-query-v2}, after the substitutions in
Table~\ref{tab:app-rh-notation-v2}, is
\[
\mathcal K_\star(L)^{O(\log\mathcal K_\star(L))}
(m+r)^2\log(m+r).
\]
Since
\[
\mathcal K_\star(L)=C_{\rm RH}L^{13},
\]
the parameter factor is
\[
L^{O(\log L)}=K^{O(\log K)}.
\]
Multiplying by the fiber-enumeration cost gives
\[
2^{O(K)}K^{O(\log K)}
\operatorname{poly}(n)
\]
membership queries to \(A\) per invocation, up to logarithmic factors
in the dimensions.  The same substitution in
\eqref{eq:app-rh-source-time-v2}, together with the linear-algebra and
fiber-enumeration work, gives the running-time envelope used in
Appendix~\ref{app:pfr-details-v2}.

The source lemma itself does not provide:

\begin{enumerate}
\item localization when \(|A|\) is unknown;
\item a bounded-time exact sampler for the random linear model;
\item validation of a returned affine pair;
\item the final size truncation;
\item conditional composition after an adaptive past.
\end{enumerate}

These are parts of the wrapper proved in Chapter~5 and
Appendix~\ref{app:pfr-details-v2}, not claims imported from the
source.

\section{Implementation of the iterated-core list construction}
\label{app:list-implementation-v2}

This appendix supplies the finite block lengths, conditional laws,
and recursive resource bounds used in
Section~\ref{sec:list-compatible-pfr-v2}.  We retain the notation
\[
q,\ \alpha_\star,\ \gamma_\star,\ L_\star,\ P_\star,\ M_\star
\]
from that section and write
\[
\rho_0:=\frac{\rho}{10M_\star}.
\]

\subsection{Residual predicates and the density test}
\label{subsec:app-list-residual-test-v2}

After \(Y_0,\ldots,Y_{j-1}\) have been constructed, the residual
\[
R_j
=
B\setminus\bigcup_{i<j}Y_i
\]
is fixed conditional on the past transcript, and its membership
predicate is
\[
\mathbf 1_{R_j}(x)
=
\mathbf 1_B(x)
\prod_{i<j}\bigl(1-\mathbf 1_{Y_i}(x)\bigr).
\]
Thus all later residual queries refer to the same realized set.

Put
\begin{equation}
\xi_{\rm res}:=\frac{q}{8b},
\qquad
\theta_{\rm res}:=\frac{3q}{4b},
\qquad
\beta_{\rm res}:=\frac{5q}{8b},
\label{eq:app-list-density-parameters-v2}
\end{equation}
and take
\begin{equation}
m_{\rm res}
:=
\left\lceil
\frac{1}{2\xi_{\rm res}^2}
\log\frac{2}{\rho_0}
\right\rceil.
\label{eq:app-list-density-samples-v2}
\end{equation}
For fresh \(Z_1,\ldots,Z_{m_{\rm res}}\sim U_B\), set
\[
\widehat r_j
:=
\frac1{m_{\rm res}}
\sum_{\ell=1}^{m_{\rm res}}\mathbf 1_{R_j}(Z_\ell),
\qquad
r_j:=\frac{|R_j|}{|B|}.
\]
The test returns \(\mathsf{STOP}\) when
\(\widehat r_j\leq\theta_{\rm res}\).

Conditional on the past, Hoeffding's inequality gives
\begin{equation}
\Pr\!\left[
|\widehat r_j-r_j|>\xi_{\rm res}
\mid\text{past}
\right]
\leq\rho_0.
\label{eq:app-list-density-error-v2}
\end{equation}
On the complementary event, a stop implies
\[
r_j
\leq
\theta_{\rm res}+\xi_{\rm res}
=
\frac{7q}{8b}.
\]
Consequently, for every compatible \(A\),
\[
|A\cap R_j|
\leq
|R_j|
\leq
\frac{7q}{8b}|B|
\leq
\frac{7q}{8}|A|
<
q|A|.
\]
If the test continues, then
\[
r_j
>
\theta_{\rm res}-\xi_{\rm res}
=
\beta_{\rm res}.
\]
This proves the density part of
Proposition~\ref{prop:list-residual-access-v2}.

\subsection{Finite exact sample bridges}
\label{subsec:app-list-sample-bridges-v2}

Let \(s_{\rm BSG}^\star\) be a deterministic worst-case sample
budget for one invocation of the persistent BSG algorithm at
parameters
\[
(\alpha_\star,\rho_0,\rho_0).
\]
This budget is fixed before the residual rejection blocks are drawn;
an invocation that uses fewer samples simply discards the unused
tail of the block.  Theorem~\ref{thm:persistent-bsg-v2} and
Appendix~\ref{app:persistent-bsg-details-v2} give
\begin{equation}
s_{\rm BSG}^\star
=
O\!\left(
\alpha_\star^{-2}\Lambda_\star^4
\right),
\qquad
\Lambda_\star
:=
1+n+\log\frac1{\alpha_\star}
+\log\frac{10M_\star}{\rho}.
\label{eq:app-list-BSG-sample-budget-v2}
\end{equation}

For every requested residual sample, use an independent block of
\begin{equation}
T_{\rm res}
:=
\left\lceil
\beta_{\rm res}^{-1}
\log\frac{s_{\rm BSG}^\star}{\rho_0}
\right\rceil
\label{eq:app-list-residual-block-v2}
\end{equation}
fresh samples from \(B\), and return the first point in \(R_j\).
When the density test continues on its good event,
\(|R_j|/|B|>\beta_{\rm res}\).  Lemma~\ref{lem:fixed-block-first-success-v2}
and a union bound therefore show that the probability that any of the
\(s_{\rm BSG}^\star\) blocks is exhausted is at most \(\rho_0\).
Conditioned on success of every block, the output tuple has law
\[
U_{R_j}^{\otimes s_{\rm BSG}^\star}.
\]
This completes the proof of
Proposition~\ref{prop:list-residual-access-v2}.

For a valid core, define
\begin{equation}
\beta_Y
:=
\gamma_\star\beta_{\rm res}
=
\frac{5\gamma_\star q}{8b}.
\label{eq:app-list-core-density-v2}
\end{equation}
On the continue branch,
\[
\frac{|Y_j|}{|B|}
=
\frac{|Y_j|}{|R_j|}
\frac{|R_j|}{|B|}
>
\beta_Y.
\]
Let
\[
t_\star:=s_{\rm PFR}(n,\rho_0)
\]
and use an independent block of
\begin{equation}
T_Y
:=
\left\lceil
\beta_Y^{-1}
\log\frac{t_\star}{\rho_0}
\right\rceil
\label{eq:app-list-core-block-v2}
\end{equation}
fresh \(B\)-samples for each requested core sample.  By
Lemma~\ref{lem:fixed-block-first-success-v2}, the probability that some
block is exhausted is at most \(\rho_0\), and conditioned on success
the returned tuple has law
\begin{equation}
U_{Y_j}^{\otimes t_\star}.
\label{eq:app-list-core-product-law-v2}
\end{equation}
The blocks are independent of all later randomness used by the
Chapter~5 invocation.

\subsection{Sequential failure accounting}
\label{subsec:app-list-failure-accounting-v2}

For each deterministic slot \(j<M_\star\), expose the random choices
in the following order:

\begin{enumerate}
\item the residual-density block;
\item the residual rejection blocks;
\item the persistent BSG invocation;
\item the direct core rejection blocks;
\item the size-oblivious PFR invocation.
\end{enumerate}

Conditional on the transcript before the first stage, \(R_j\) is
fixed.  Conditional on residual-bridge success, the BSG input block
has exact law
\[
U_{R_j}^{\otimes s_{\rm BSG}^\star}.
\]
The theorem's soundness and completeness guarantees therefore apply
conditionally with failure budgets \(\rho_0\) and \(\rho_0\).
Conditional on a valid nonfailure output, \(Y_j\) is fixed.
Equation~\eqref{eq:app-list-core-product-law-v2} then gives the exact
input law required by
Corollary~\ref{cor:pfr-adaptive-invocation-v2}.

Declare every event in an unreached stage to hold automatically.  The
conditional charges in one slot are
\[
\rho_0,\quad
\rho_0,\quad
\rho_0,\quad
\rho_0,\quad
\rho_0,\quad
\rho_0,
\]
where the third and fourth entries are BSG soundness and completeness.
Lemma~\ref{lem:sequential-conditional-failure-v2}, applied in this
chronological order over all slots, gives
\[
6M_\star\rho_0
=
\frac{3\rho}{5}
<
\rho.
\]

\subsection{Recursive membership costs}
\label{subsec:app-list-recursive-cost-v2}

Let \(D_\star\) be a worst-case upper bound on the number of
\(R_j\)-membership calls made by one evaluation of the compiled
predicate for \(Y_j\).  Theorem~\ref{thm:persistent-bsg-v2} gives
\begin{equation}
D_\star
=
O\!\left(
\alpha_\star^{-4}\Lambda_\star^3
\right).
\label{eq:app-list-service-factor-v2}
\end{equation}

Let \(u_j\) be the maximum number of \(B\)-membership queries required
by one \(R_j\)-membership query, and let \(v_j\) be the analogous cost
for one \(Y_j\)-membership query.  Then
\[
u_0=1,
\qquad
v_j\leq D_\star u_j,
\]
and
\[
u_{j+1}
\leq
1+\sum_{i=0}^{j}v_i
\leq
1+D_\star\sum_{i=0}^{j}u_i.
\]
Induction gives
\begin{equation}
u_j\leq(1+D_\star)^j,
\qquad
v_j\leq D_\star(1+D_\star)^j.
\label{eq:app-list-recursive-service-v2}
\end{equation}

Up to absolute constants, one round makes
\begin{equation}
A_\star
:=
m_{\rm res}
+s_{\rm BSG}^\star T_{\rm res}
+O(\alpha_\star^{-4}\Lambda_\star^5)
\label{eq:app-list-residual-operations-v2}
\end{equation}
calls to its residual predicate, and
\begin{equation}
B_\star
:=
t_\star T_Y
+
F_{\rm PFR}(L_\star)
\operatorname{poly}
\left(
n,\log\frac{M_\star}{\rho}
\right)
\label{eq:app-list-core-operations-v2}
\end{equation}
calls to its core predicate.  Therefore
\begin{align}
Q_{\rm list}
&\leq
\sum_{j=0}^{M_\star-1}
\left(
A_\star u_j+B_\star v_j
\right)
\nonumber\\
&\leq
M_\star
\left(
A_\star+D_\star B_\star
\right)
(1+D_\star)^{M_\star}.
\label{eq:app-list-query-bound-v2}
\end{align}

All factors in
\(A_\star,B_\star,D_\star\), except \(n\) and \(\log(1/\rho)\), depend
only on \(K\) and \(\varepsilon^{-1}\).  Moreover,
\[
\Lambda_\star
\leq
C(K,\varepsilon^{-1})
\left(
1+n+\log\frac1\rho
\right)
\]
for a structural function \(C\).  Hence
\eqref{eq:app-list-query-bound-v2} has the form
\begin{equation}
G_{\rm list}(K,\varepsilon^{-1})
\left(
1+n+\log\frac1\rho
\right)^{O(M_\star)}.
\label{eq:app-list-query-XP-v2}
\end{equation}
For ordinary running time, let \(\tau_j^R\) and \(\tau_j^Y\) be the
worst-case times for one \(R_j\)- and \(Y_j\)-membership query.
Appendix~\ref{app:persistent-bsg-details-v2} gives a quantity
\[
\widetilde D_\star
=
C(K,\varepsilon^{-1})
\left(
1+n+\log\frac1\rho
\right)^{O(1)}
\]
such that
\[
\tau_j^Y
\leq
\widetilde D_\star(1+\tau_j^R).
\]
The residual predicate gives
\[
\tau_0^R\leq\operatorname{poly}(n),
\qquad
\tau_{j+1}^R
\leq
\operatorname{poly}(n)+\sum_{i=0}^{j}\tau_i^Y.
\]
The same geometric-series induction as above yields
\[
\tau_j^R,\tau_j^Y
\leq
C'(K,\varepsilon^{-1})
(1+\widetilde D_\star)^{O(j)}
\operatorname{poly}
\left(
n,\log\frac1\rho
\right).
\]
Multiplying by the per-round operation counts and summing over
\(j<M_\star\) gives an ordinary running-time bound of the same XP
form as \eqref{eq:app-list-query-XP-v2}.  Taking the larger of the
query and time envelopes defines the function \(G_{\rm list}\) in
Theorem~\ref{thm:main-list-compatible-pfr-v2}.

\subsection{Uniform sample count}
\label{subsec:app-list-sample-count-v2}

Only three stages draw uniform samples from \(B\): the residual-density
test, the residual bridge, and the direct core bridge.  Charging every
slot gives
\begin{equation}
S_{\rm list}
\leq
M_\star
\left(
m_{\rm res}
+s_{\rm BSG}^\star T_{\rm res}
+t_\star T_Y
\right).
\label{eq:app-list-total-samples-v2}
\end{equation}
Here
\[
m_{\rm res}
=
O\!\left(
\frac{b^2}{q^2}
\log\frac{M_\star}{\rho}
\right),
\]
\[
T_{\rm res}
=
O\!\left(
\frac bq
\log
\frac{e\,s_{\rm BSG}^\star M_\star}{\rho}
\right),
\]
and
\[
T_Y
=
O\!\left(
\frac{b}{q\gamma_\star}
\log
\frac{e\,t_\star M_\star}{\rho}
\right).
\]
Since
\[
\alpha_\star^{-1}
=
O(K\varepsilon^{-4}),
\qquad
\gamma_\star^{-1}
=
O(\sqrt K\,\varepsilon^{-2}),
\qquad
M_\star
=
O\!\left(
\sqrt K\,\varepsilon^{-2}\log\frac3\varepsilon
\right),
\]
and
\[
t_\star
=
O\!\left(
n+\log\frac{M_\star}{\rho}
\right),
\]
equation~\eqref{eq:app-list-total-samples-v2} is polynomial in
\[
K,\qquad
\varepsilon^{-1},\qquad
n,\qquad
\log\frac1\rho.
\]

\end{document}